\newtheorem{lemma}{Lemma}
\newtheorem{prop}[lemma]{Proposition}
\newtheorem{cor}[lemma]{Corollary}
\newtheorem{theorem}[lemma]{Theorem}
\newtheorem{example}{Example}
\newtheorem{algorithm}{Algorithm}
\numberwithin{equation}{section}
\numberwithin{figure}{section}
\newcommand{\tpctf}{\operatorname{TP-\mathbb{C}TF}}
\newcommand{\pc}{\mathsf{c}}
\begin{document}

\title{Compactly Supported Tensor Product Complex Tight Framelets with Directionality}

\author{Bin Han}

\address{Department of Mathematical and Statistical Sciences,
University of Alberta, Edmonton,\quad Alberta, Canada T6G 2G1.
\quad {\tt bhan@ualberta.ca}\quad {\tt zzhao7@ualberta.ca}\quad
{\tt http://www.ualberta.ca/$\sim$bhan}
}

\author{Qun Mo}

\address{Department of Mathematics, Zhejiang University,
Hangzhou 310027, P. R. China. \quad {\tt moqun@zju.edu.cn}
}

\author{Zhenpeng Zhao}

\thanks{Research of B. Han and Z. Zhao supported in part by NSERC Canada under Grant RGP 228051. Research of Q. Mo supported in part by the NSF of China under Grants 10971189 and 11271010, and by the fundamental research funds for the Central Universities.}

\makeatletter \@addtoreset{equation}{section} \makeatother
\begin{abstract}
Although tensor product real-valued wavelets have been successfully applied to many high-dimensional problems, they can only capture well edge singularities along the coordinate axis directions. As an alternative and improvement of tensor product real-valued wavelets and dual tree complex wavelet transform, recently tensor product complex tight framelets with increasing directionality have been introduced in \cite{Han:MMNP:2013} and applied to image denoising in \cite{HanZhao}. Despite several desirable properties, the directional tensor product complex tight framelets constructed in \cite{Han:MMNP:2013,HanZhao} are bandlimited and do not have compact support in the space/time domain. Since compactly supported wavelets and framelets are of great interest and importance in both theory and application, it remains as an unsolved problem whether there exist compactly supported tensor product complex tight framelets with directionality. In this paper, we shall satisfactorily answer this question by proving a theoretical result on directionality of tight framelets and by introducing an algorithm to construct compactly supported complex tight framelets with directionality.
Our examples show that compactly supported complex tight framelets with directionality can be easily derived from any given eligible low-pass filters and refinable functions.
Several examples of compactly supported tensor product complex tight framelets with directionality have been presented.
\end{abstract}

\keywords{Finitely supported tight framelet filter banks, complex tight framelets with directionality, tensor product, frequency separation}

\subjclass[2010]{42C40, 42C15, 65T60} \maketitle

\pagenumbering{arabic}

\section{Introduction and Motivations}

Having better directionality and employing tensor product of a correlated pair of one-dimensional orthogonal wavelet filter banks, dual tree complex wavelet transform in \cite{K2001,SBK} has shown superior performance in applications over the commonly adopted tensor product real-valued wavelets. As alternatives and improvements to dual tree complex wavelet transform,
tensor product complex tight framelets with directionality have been introduced in \cite{Han:MMNP:2013}. It has been demonstrated in \cite{HanZhao} that tensor product complex tight framelets with improved directionality significantly perform better, in terms of PSNR (peak signal-to-noise ratio), than dual tree complex wavelet transform in the model problem of image denoising. However, the tensor product complex tight framelets constructed in \cite{Han:MMNP:2013,HanZhao} are only bandlimited, that is, they have compact support in the frequency domain but they are not compactly supported in the space/time domain. Since compactly supported wavelets and framelets are of importance in both theory and application due to their good space-frequency localization and computational efficiency desired in many applications, it is a natural and important problem for us to investigate compactly supported tensor product complex tight framelets with directionality.
In this paper we shall satisfactorily resolve this problem by studying and constructing compactly supported tensor product complex tight framelets in $\dLp{2}$ with directionality.

To explain our motivations, let us first introduce some notation and definitions.
For a function $f: \dR \rightarrow \C$ and a $d\times d$ real-valued matrix $U$, we shall adopt the following notation:
\[
f_{U; k}(x):=\lb U; k\rb f(x):=|\det(U)|^{1/2} f(Ux-k), \qquad x,k\in \dR.
\]
For $\phi, \psi^1, \ldots, \psi^s\in \dLp{2}$ with $s\in \N$, we define an affine system generated by $\phi, \psi^1, \ldots, \psi^s$ as follows:
\[
\AS_0(\phi; \psi^1, \ldots, \psi^s):=\{\phi(\cdot-k) \setsp k\in \dZ\} \cup \{ \psi^\ell_{2^jI_d; k} \setsp j\in \N \cup\{0\}, k\in \dZ, \ell=1, \ldots, s\},
\]
where $I_d$ denotes the $d\times d$ identity matrix. Recall that $\{\phi; \psi^1, \ldots, \psi^s\}$ is a ($d$-dimensional dyadic) tight framelet in $\dLp{2}$ if $\AS_0(\phi; \psi^1, \ldots, \psi^s)$ is a (normalized) tight frame for $\dLp{2}$, that is,
\be \label{def:tf}
\|f\|_{\dLp{2}}^2=\sum_{k\in \dZ} |\la f, \phi(\cdot-k)\ra|^2+\sum_{j=0}^\infty \sum_{\ell=1}^s \sum_{k\in \dZ} |\la f, \psi^\ell_{2^j I_d; k}\ra|^2, \qquad \forall\; f\in \dLp{2}.
\ee
In particular, if $\AS_0(\phi; \psi^1, \ldots, \psi^s)$ is an orthonormal basis for $\dLp{2}$, then we call $\{\phi; \psi^1, \ldots, \psi^s\}$ an orthogonal wavelet in $\dLp{2}$.
If $\{\phi; \psi^1, \ldots, \psi^s\}$ is a tight framelet in $\dLp{2}$, then $\{\psi^1, \ldots, \psi^s\}$ must be a homogeneous tight framelet in $\dLp{2}$ (see \cite{Daub:book,Han:acha:2012}), in other words,
\[
\|f\|_{\dLp{2}}^2=\sum_{j\in \Z} \sum_{\ell=1}^s \sum_{k\in \dZ} |\la f, \psi^\ell_{2^j I_d; k}\ra|^2, \qquad \forall\; f\in \dLp{2}.
\]
Consequently, it follows directly from \eqref{def:tf} and the above identity that
every function $f\in \dLp{2}$ has the following representations:
\be \label{f:tf}
f=\sum_{k\in \dZ} \la f, \phi(\cdot-k)\ra \phi(\cdot-k)+\sum_{j=0}^\infty \sum_{\ell=1}^s \sum_{k\in \dZ} \la f, \psi^\ell_{2^jI_d; k}\ra \psi^\ell_{2^jI_d; k}=
\sum_{j\in \Z} \sum_{\ell=1}^s \sum_{k\in \dZ} \la f, \psi^\ell_{2^jI_d; k}\ra \psi^\ell_{2^jI_d; k}
\ee
with the series converging unconditionally in $\dLp{2}$.

Due to many desirable properties such as sparsity and good space-frequency localization, wavelet representations in \eqref{f:tf} have been used in many applications (\cite{CRSS,Daub:book,K2001,SBK,Shen}). 
To have a fast algorithm to compute the wavelet coefficients in \eqref{f:tf}, wavelets and framelets are often derived from refinable functions and filter banks.
By $\dlp{2}$ we denote the space of all complex-valued sequences $u=\{u(k)\}_{k\in \dZ}: \dZ\rightarrow \C$ such that $\|u\|_{\dlp{2}}:=(\sum_{k\in \dZ} |u(k)|^2)^{1/2}<\infty$. The Fourier series (or symbol) of a sequence $u\in \dlp{2}$ is defined to be $\wh{u}(\xi):=\sum_{k\in \dZ} u(k) e^{-ik\cdot \xi}, \xi\in \dR$, which is a $2\pi\dZ$-periodic measurable function in $\dTLp{2}$ such that $\|\wh{u}\|_{\dTLp{2}}^2:=\frac{1}{(2\pi)^d} \int_{[-\pi,\pi)^d} |\wh{u}(\xi)|^2 d\xi=\|u\|_{\dlp{2}}^2=\sum_{k\in \dZ} |u(k)|^2<\infty$.

By $\dlp{0}$ we denote the set of all finitely supported sequences on $\dZ$.
If $a\in \dlp{0}$ and $\wh{a}(0)=1$, then $\prod_{j=1}^\infty \wh{a}(2^{-j}\xi)$ is convergent for every $\xi\in \dR$ and it is well known (\cite{Daub:book}) that there exists a compactly supported distribution $\phi$ on $\dR$ such that $\wh{\phi}(\xi)=\prod_{j=1}^\infty \wh{a}(2^{-j}\xi), \xi\in \dR$, where the Fourier transform is defined to be $\wh{f}(\xi):=\int_{\dR} f(x) e^{-ix\cdot \xi} dx$ for $f\in \dLp{1}$.
For $b_1, \ldots, b_s\in \dlp{2}$, we define $\psi^1, \ldots, \psi^s$ by
\[
\wh{\psi^\ell}(\xi):=\wh{b_\ell}(\xi/2)\wh{\phi}(\xi/2), \qquad \xi\in \dR,\; \ell=1,\ldots, s.
\]
Then $\{\phi; \psi^1, \ldots, \psi^s\}$ is a tight framelet in $\dLp{2}$ if and only if $\{a; b_1, \ldots, b_s\}$ is a tight framelet filter bank satisfying
\begin{equation} \label{tffb}
|\wh{a}(\xi)|^2+\sum_{\ell=1}^s |\wh{b_\ell}(\xi)|^2=1
\quad \mbox{and} \quad
\wh{a}(\xi)\ol{\wh{a}(\xi+\pi \omega)}+\sum_{\ell=1}^s \wh{b_\ell}(\xi) \ol{\wh{b_\ell}(\xi+\pi \omega)}=0, \qquad \forall\; \omega\in \Omega\bs \{0\}
\end{equation}
for almost every $\xi\in \dR$, where $\Omega:=[0,1]^d \cap \dZ$. See \cite{CRSS,CHS,Daub:book,DGM,DHRS,Han:frame,Han:acha:2012,Han:acha:2013,
HGS,HanMo:SIMAA,MoZhuang:laa,RonShen:twf,S:acha,Shen} and many references therein on tight framelets in $\dLp{2}$ and their applications.

High-dimensional wavelets and framelets are often obtained from one-dimensional wavelets and framelets through tensor product. The main advantages of tensor product wavelets and framelets lie in that they have a simple fast numerical algorithm and the construction of one-dimensional wavelets and framelets is often relatively easy.
To our best knowledge, almost all successful wavelet-based methods in applications have used tensor product real-valued wavelets and framelets, partially due to their simplicity and fast implementation. To illustrate the tensor product method, for simplicity, let us only discuss the particular case of dimension two here. For two one-dimensional functions $f, g: \R \rightarrow \C$, their tensor product $f\otimes g$ in dimension two is defined to be
$(f\otimes g)(x,y):=f(x) g(y)$, $x,y\in \R$.
Similarly, for two sequences $u,v : \Z \rightarrow \C$, their two-dimensional tensor product filter $u\otimes v$ is defined to be
$(u\otimes v)(j,k):=u(j) v(k)$, $j,k\in \Z$.
Let $\{\phi; \psi^1, \ldots, \psi^s\}$ be a tight framelet in $\Lp{2}$ with an underlying tight framelet filter bank $\{a; b_1, \ldots, b_s\}$ such that $\wh{\phi}(2\xi)=\wh{a}(\xi)\wh{\phi}(\xi)$ and $\wh{\psi^\ell}(2\xi)=\wh{b_\ell}(\xi)\wh{\phi}(\xi)$, $\ell=1, \ldots, s$.
Using tensor product, we obtain a tight framelet $\{\phi; \psi^1, \ldots, \psi^s\} \otimes \{\phi; \psi^1, \ldots, \psi^s\}$ in $L_2(\R^2)$ with an underlying tensor product tight framelet filter bank
$\{a; b_1, \ldots, b_s\}\otimes \{a; b_1, \ldots, b_s\}$ for dimension two.
More precisely, define
\[
\Psi:=\{ \phi\otimes \psi^1, \ldots, \phi\otimes \psi^s\}\cup \{\psi^1\otimes \phi, \ldots, \psi^s\otimes \phi\}\cup\{\psi^\ell \otimes \psi^m \setsp \ell, m=1, \ldots, s\},
\]
then we have a two-dimensional tight frame $\AS_0(\phi\otimes \phi; \Psi)$ for $L_2(\R^2)$ satisfying
\[
\|f\|_{L_2(\R^2)}^2=\sum_{k\in \Z^2} |\la f, (\phi\otimes\phi)(\cdot-k)\ra|^2+
\sum_{j=0}^\infty \sum_{\psi\in \Psi} \sum_{k\in \Z^2} |\la f, \psi_{2^jI_2; k}\ra|^2, \qquad \forall\; f\in L_2(\R^2).
\]
Moreover, $\phi\otimes \phi$ satisfies the tensor product refinement equation $\wh{\phi\otimes\phi}(2\xi)=\wh{a\otimes a}(\xi)\wh{\phi\otimes \phi}(\xi)$, a.e. $\xi\in \R^2$ and for each $\psi\in \Psi$, $\wh{\psi}(2\xi)=\wh{b_\psi}(\xi)\wh{\phi\otimes \phi}(\xi)$, where
\[
\{ b_\psi \setsp \psi\in \Psi\}:=\{a\otimes b_1, \ldots, a\otimes b_s\} \cup \{b_1\otimes a, \ldots, b_s\otimes a\}\cup\{b_\ell\otimes b_m \setsp \ell, m=1, \ldots, s\}.
\]
Note that $\{a\otimes a; b_\psi, \psi\in \Psi\}$ is a two-dimensional tensor product tight framelet filter bank.

Though tensor product real-valued wavelets and framelets have been widely used in many applications, they have some shortcomings, for example, lack of directionality in high dimensions.
For two-dimensional data such as images, edge singularities are ubiquitous and play a more fundamental role in image processing than point singularities. As a consequence, tensor product real-valued wavelets are only suboptimal since they can only efficiently capture edge singularities along the coordinate axis directions. For the convenience of the reader, let us explain this point in more detail. When $\phi$ and $\psi^1, \ldots, \psi^s$ are real-valued functions in $\Lp{2}$ such that $\wh{\phi}(0)=1$ and $\wh{\psi^1}(0)=\cdots=\wh{\psi^s}(0)=0$, in general $\wh{\phi}$ concentrates essentially near the origin while $\wh{\psi^1}, \ldots, \wh{\psi^s}$ concentrate largely outside a neighborhood of the origin. Since every $\psi^\ell, \ell=1, \ldots, s$ is real-valued, it is trivial to notice that $\ol{\wh{\psi^\ell}(\xi)}=\wh{\psi^\ell}(-\xi)$ and consequently, the magnitude of the frequency spectrum of $\wh{\psi^\ell}$ is symmetric about the origin.
For dimension two, it is not difficult to see that all $\phi\otimes \psi^\ell$ have horizontal direction while all $\psi^\ell \otimes \phi$ have vertical direction for $\ell=1, \ldots, s$. However, all $\psi^\ell\otimes \psi^{m}$ do not exhibit any directionality for $\ell,m =1, \ldots, s$. The same phenomenon can be said for the associated tight framelet filter bank: all $a\otimes b_\ell$ exhibit horizontal direction, all $b_\ell\otimes a$ exhibit vertical direction, but $b_\ell\otimes b_m$ do not exhibit any directionality for $\ell, m=1, \ldots, s$.
To see this point better, let us look at the particular example of the Haar orthogonal wavelet $\{\phi; \psi\}$ with $\phi=\chi_{[0,1]}$ and $\psi=\chi_{[0, \frac{1}{2}]}-\chi_{[\frac{1}{2},1]}$. Then
$\phi\otimes \phi=\chi_{[0,1]^2}$ and
\[
\phi\otimes \psi=\chi_{[0,1]\times [0, \frac{1}{2}]}-\chi_{[0,1]\times [\frac{1}{2},1]}, \quad
\psi\otimes \phi=\chi_{[0, \frac{1}{2}]\times [0,1]}-\chi_{[\frac{1}{2},1]\times [0,1]}, \quad
\psi\otimes \psi=\chi_{[0,\frac{1}{2}]^2\cup[\frac{1}{2},1]^2}-
\chi_{[0,\frac{1}{2}]\times [\frac{1}{2},1]\cup [\frac{1}{2},1]\times [0, \frac{1}{2}]}.
\]
We can clearly observe that $\phi\otimes \psi$ has horizontal
direction, $\psi\otimes \phi$ has vertical direction, but $\psi\otimes \psi$ does not exhibit any directionality. Note that the above Haar orthogonal wavelet $\{\phi; \psi\}$ has the underlying orthogonal wavelet filter bank $\{a; b\}$ with $a=\{\frac{1}{2}, \frac{1}{2}\}_{[0,1]}$ and $b=\{\frac{1}{2}, -\frac{1}{2}\}_{[0,1]}$. Then
\[
a\otimes a=\left[\begin{matrix} \tfrac{1}{4} &\tfrac{1}{4} \smallskip \\
\tfrac{1}{4} &\tfrac{1}{4}\end{matrix}\right]_{[0,1]^2}, \quad
a\otimes b=\left[\begin{matrix} -\tfrac{1}{4} &-\tfrac{1}{4}\smallskip \\
\tfrac{1}{4} &\tfrac{1}{4}\end{matrix}\right]_{[0,1]^2}, \quad
b\otimes a=\left[\begin{matrix} \tfrac{1}{4} &-\tfrac{1}{4} \smallskip \\
\tfrac{1}{4} &-\tfrac{1}{4}\end{matrix}\right]_{[0,1]^2}, \quad
b\otimes b=\left[\begin{matrix} -\tfrac{1}{4} &\tfrac{1}{4}\smallskip \\
\tfrac{1}{4} &-\tfrac{1}{4}\end{matrix}\right]_{[0,1]^2}.
\]
From above, we observe that $a\otimes b$ has horizontal direction, $b\otimes a$ has vertical direction, but $b\otimes b$ does not exhibit any directionality.

As one of the most popular and successful approaches to enhance the performance of tensor product real-valued wavelets, the dual tree complex wavelet transform proposed in \cite{K2001,SBK} uses tensor product of a correlated pair of finitely supported orthogonal wavelet filter banks and offers $6$ directions with impressive performance in many applications. However, horizontal and vertical directions are very common in many two-dimensional data such as images.
It is also difficult to generalize the approach of dual tree complex wavelet transform to have more than $6$ directions by using dyadic orthogonal wavelet filter banks.
To further improve the performance of and to provide alternatives to dual tree complex wavelet transform, recently \cite{Han:MMNP:2013} introduced tensor product complex tight framelets with increasing directionality.
Tensor product complex tight framelets not only offer alternatives to dual tree complex wavelet transform but also have improved directionality. In \cite{Han:MMNP:2013,HanZhao}, a family of tensor product complex tight framelets has been constructed in the frequency domain and their
performance for image denoising has been reported in \cite{HanZhao}.
With more directions and using the tensor product structure, the bandlimited tensor product complex tight framelets constructed in \cite{Han:MMNP:2013,HanZhao} indeed significantly perform better than dual tree complex wavelet transform in the area of image denoising. See \cite{HanZhao,K2001,SBK} and many references therein on dual tree complex wavelet transform, and see \cite{Han:acha:2012,Han:MMNP:2013,HanZhao} for more details on directional complex tight framelets.

This paper is largely motivated by the approach
introduced in \cite{Han:MMNP:2013} using tensor product complex tight framelet filter banks. Let us recall here the tensor product tight framelet filter banks constructed in the frequency domain in \cite{Han:MMNP:2013,HanZhao}.
Let $\pP_{m}(x):=(1-x)^m \sum_{j=0}^{m-1} \binom{m+j-1}{j} x^j$ with $m\in \N$.
Then $\pP_{m}$ satisfies
the identity $\pP_{m}(x)+\pP_{m}(1-x)=1$ (see \cite{Daub:book}).
For $c_L<c_R$ and two positive numbers $\gep_L, \gep_R$ satisfying $\gep_L+\gep_R\le c_R-c_L$, we define a bump function $\chi_{[c_L, c_R]; \gep_L, \gep_R}$ on $\R$ by
\be \label{bump:func}
\chi_{[c_L, c_R]; \gep_L, \gep_R}(\xi):=
\begin{cases} 0, \quad &\xi\le c_L-\gep_L \; \mbox{or}\; \xi \ge c_R+\gep_R, \\
\sin\big(\tfrac{\pi}{2}\pP_{m}(\tfrac{c_L+\gep_L-\xi}{2\gep_L})\big), \quad &c_L-\gep_L<\xi<c_L+\gep_L,\\
1, \quad &c_L+\gep_L\le \xi\le c_R-\gep_R,\\
\sin\big(\tfrac{\pi}{2}\pP_{m}(\tfrac{\xi-c_R+\gep_R}{2\gep_R})\big),
\quad &c_R-\gep_R<\xi<c_R+\gep_R.
\end{cases}
\ee
For simplicity of discussion and presentation, here we only recall a special type of tensor product complex tight framelet filter banks constructed in \cite{Han:MMNP:2013,HanZhao}.
We define a real-valued symmetric low-pass filter $a\in \lp{2}$ and two complex-valued high-pass filters $b^{p}, b^{n}\in \lp{2}$ by
\be \label{tpctf:ab}
\wh{a}:=\chi_{[-c, c]; \gep, \gep} \qquad \mbox{and}\qquad
\wh{b^{p}}:=\chi_{[c, \pi]; \gep, \gep}, \qquad
\wh{b^{n}}:=\chi_{[-\pi,-c]; \gep,\gep},
\ee
where $c$ and $\gep$ are positive numbers satisfying
$0<\gep \le \min(\tfrac{c}{2}, \tfrac{\pi}{2}-c)$.
Then it is easy to directly check that $\{a; b^{p}, b^{n}\}$ is a one-dimensional tight framelet filter bank such that $a$ is real-valued and symmetric about the origin with $\wh{a}(0)=1$.
Define functions $\phi, \psi^{p}, \psi^{n}$ on $\R$ by
\[
\wh{\phi}(\xi):=\prod_{j=1}^\infty \wh{a}(2^{-j}\xi), \qquad
\wh{\psi^{p}}(\xi):=\wh{b^{p}}(\xi/2)\wh{\phi}(\xi/2), \quad
\wh{\psi^{n}}(\xi):=\wh{b^{n}}(\xi/2)\wh{\phi}(\xi/2), \qquad\xi\in \R.
\]
Then $\{\phi; \psi^{p}, \psi^{n}\}$ is a tight framelet in $\Lp{2}$. Note that all the functions $\phi$, $\psi^{p}$, $\psi^{n}$ are bandlimited, that is, their Fourier transforms have compact support.
Moreover, $\phi$ is real-valued and symmetric about the origin.
Note that $\wh{b^n}(\xi)=\ol{\wh{b^p}(-\xi)}$ and therefore, we have $b^n=\ol{b^p}$ and $\psi^n=\ol{\psi^p}$.
More importantly, both functions $\psi^{p}, \psi^{n}$ are complex-valued and enjoy the following frequency separation property:
\be \label{psi:direction}
\wh{\psi^{p}}(\xi)=0, \qquad \forall\; \xi\in (-\infty,0]\qquad  \mbox{and}\qquad \wh{\psi^{n}}(\xi)=0, \qquad \forall\; \xi\in [0, \infty).
\ee
In other words, the frequency spectrum of $\psi^{p}$ vanishes on the negative interval $(-\infty, 0]$ and concentrates only inside the positive interval $[0,\infty)$, while the frequency spectrum of $\psi^{n}$ vanishes on the positive interval $[0,\infty)$ and concentrates only inside the negative interval $(-\infty,0]$.
The property in \eqref{psi:direction} is the key ingredient to produce directionality for tensor product complex tight framelets in high dimensions. To see this point, let us look at the two-dimensional tensor product tight framelets $\{\phi; \psi^{p}, \psi^{n}\} \otimes
\{\phi; \psi^{p}, \psi^{n}\}$, that is,
\be \label{2D:tpctf}
\{\phi\otimes \phi\}\cup \{ \phi\otimes \psi^p, \phi\otimes \psi^n, \psi^p\otimes \phi, \psi^n \otimes \phi, \psi^p\otimes \psi^p, \psi^p\otimes \psi^n, \psi^n\otimes \psi^p, \psi^n\otimes \psi^n\}.
\ee
By \eqref{psi:direction}, for $f, g\in \{\psi^{p}, \psi^{n}\}$, we see that $\wh{f\otimes g}=\wh{f}\otimes \wh{g}$ concentrates on a small rectangle away from the origin. As a consequence, both the real and imaginary parts of $f\otimes g$ exhibit good directions.
We now provide the detail here. For a complex-valued function $f: \dR \rightarrow \C$, we define
\[
f^{[r]}(x):=\re(f(x)) \qquad \mbox{and}\qquad f^{[i]}(x):=\im(f(x)), \qquad x\in \dR.
\]
That is, $f=f^{[r]}+if^{[i]}$ with both $f^{[r]}$ and $f^{[i]}$ being real-valued functions on $\dR$.
Similarly, for $u: \dZ\rightarrow \C$, we can write $u=u^{[r]}+iu^{[i]}$ with both sequences $u^{[r]}$ and $u^{[i]}$ having real coefficients.
Define
\[
\psi^{p, [r]}:=\re(\psi^{p}), \quad \psi^{p, [i]}:=\im(\psi^{p}),\quad
\psi^{n, [r]}:=\re(\psi^{n}), \quad \psi^{n, [i]}:=\im(\psi^{n})
\]
and similarly
\[
b^{p, [r]}:=\re(b^{p}), \quad b^{p, [i]}:=\im(b^{p}),\quad
b^{n, [r]}:=\re(b^{n}), \quad b^{n, [i]}:=\im(b^{n}).
\]
Then all the above functions and filters are real-valued. It is trivial to check that
\be \label{tp:real}
\{\phi; \psi^{p,[r]}, \psi^{n,[r]}, \psi^{p,[i]}, \psi^{n,[i]}\}
\ee
is a real-valued tight framelet in $\Lp{2}$ with the underlying real-valued tight framelet filter bank $\{a; b^{p, [r]}$, $b^{n,[r]}$, $b^{p, [i]}, b^{n, [i]}\}$.
However, we do not apply tensor product to this real-valued one-dimensional tight framelet since it shares the same shortcoming as tensor product real-valued wavelets or framelets.
Instead, we take tensor product of the one-dimensional complex tight framelet first for dimension two as in \eqref{2D:tpctf}, then we separate their real and imaginary parts to derive a real-valued tight framelet in $L_2(\R^2)$.
%
%
%
%
If in addition $b^n=\ol{b^p}$ and consequently, $\psi^n=\ol{\psi^p}$ since $\phi$ is real-valued, then $\{\phi; \sqrt{2} \psi^{p,[r]}, \sqrt{2}\psi^{p, [i]}\}$ is a real-valued tight framelet in $\Lp{2}$ with the underlying tight framelet filter bank $\{a; \sqrt{2} b^{p,[r]}, \sqrt{2}b^{p, [i]}\}$. Moreover,
%
\be \label{tp:ctf}
\begin{split}
&\sqrt{2} \bigl\{ \tfrac{\sqrt{2}}{2} \phi\otimes \phi; \phi \otimes \psi^{p,[r]},
\phi\otimes \psi^{p, [i]}, \psi^{p,[r]}\otimes \phi, \psi^{p,[i]}\otimes \phi, \psi^{p,[r]}\otimes \psi^{p,[r]}-\psi^{p,[i]}\otimes \psi^{p,[i]}, \\
&\qquad \psi^{p,[r]}\otimes \psi^{p,[r]}+\psi^{p,[i]}\otimes \psi^{p,[i]},
\psi^{p,[r]}\otimes \psi^{p,[i]}-\psi^{p,[i]}\otimes \psi^{p,[r]}, \psi^{p,[r]}\otimes \psi^{p,[i]}+\psi^{p,[i]}\otimes \psi^{p,[r]}\bigl\}
\end{split}
\ee
is a two-dimensional real-valued tight framelet in $L_2(\R^2)$ with the following underlying two-dimensional real-valued tight framelet filter bank
\be \label{tp:ctf:fb}
\begin{split}
&\sqrt{2} \bigl\{\tfrac{\sqrt{2}}{2} a\otimes a; a\otimes b^{p,[r]},
a\otimes b^{p,[i]}, b^{p,[r]}\otimes a, b^{p,[i]}\otimes a, b^{p,[r]}\otimes b^{p,[r]}-b^{p,[i]}\otimes b^{p,[i]},\\
 &\qquad b^{p,[r]}\otimes b^{p,[r]}+b^{p,[i]}\otimes b^{p,[i]},
b^{p,[r]}\otimes b^{p,[i]}-b^{p,[i]}\otimes b^{p,[r]},
b^{p,[r]}\otimes b^{p,[i]}-b^{p,[i]}\otimes b^{p,[r]}
\bigl\}.
\end{split}
\ee
Now one can check that the derived two-dimensional real-valued tight framelet exhibits four directions:
\begin{enumerate}
\item $\phi\otimes \psi^{p,[r]}$ and
$\phi\otimes \psi^{p,[i]}$ have horizontal direction along  $0^\circ$;

\item $\psi^{p,[r]}\otimes \phi$ and $\psi^{p,[i]}\otimes \phi$ have vertical direction along $90^\circ$;

\item $\psi^{p,[r]}\otimes \psi^{p,[r]}-\psi^{p,[i]}\otimes \psi^{p,[i]}$ and $\psi^{p,[r]}\otimes \psi^{p,[r]}+\psi^{p,[i]}\otimes \psi^{p,[i]}$ have direction along $45^\circ$;

\item $\psi^{p,[r]}\otimes \psi^{p,[i]}-\psi^{p,[i]}\otimes \psi^{p,[r]}$ and $\psi^{p,[r]}\otimes \psi^{p,[i]}+\psi^{p,[i]}\otimes \psi^{p,[r]}$
have direction along  $-45^\circ$.
\end{enumerate}

As discussed in \cite{Han:MMNP:2013,HanZhao}, more directions can be achieved by using more high-pass filters. For simplicity, we only discuss the particular case $\{\phi; \psi^p, \psi^n\}$ in this paper, which plays a critical role for obtaining general finitely supported tensor product complex tight framelets with increasing directionality.

Although the derived two-dimensional real-valued tight framelet in \eqref{tp:ctf} and its underlying real-valued tight framelet filter bank in \eqref{tp:ctf:fb} no longer have the tensor product structure,
it is not difficult to see that they can be obtained through a simple transform using a constant unitary matrix from $\{\phi; \sqrt{2} \psi^{p,[r]}, \sqrt{2} \psi^{p,[i]}\}
\otimes \{\phi; \sqrt{2} \psi^{p,[r]}, \sqrt{2} \psi^{p,[i]}\}$ and its underlying real-valued tight framelet filter bank $\{a; \sqrt{2} b^{p,[r]}, \sqrt{2} b^{p,[i]}\}
\otimes \{a; \sqrt{2} b^{p,[r]}$, $\sqrt{2} b^{p,[i]}\}$.
Therefore, similar to dual tree complex wavelet transform in \cite{K2001,SBK}, the algorithm using the tensor product complex tight framelets in \eqref{tp:ctf} with their filter banks in \eqref{tp:ctf:fb} can be implemented using the tensor product discrete framelet transform employing the tight framelet filter bank $\{a; \sqrt{2} b^{p,[r]}, \sqrt{2} b^{p,[i]}\}$, followed by simple linear combinations of the wavelet/framelet coefficients.

However, the filters $a, b^p, b^n$ constructed in \eqref{tpctf:ab} (see \cite{Han:MMNP:2013,HanZhao} for more detail) have infinite support in the time domain. Since compactly supported wavelets and framelets have great interest and importance in both theory and application, this naturally leads us to ask the following question:

\begin{enumerate}
\item[Q1:]
Is it possible to construct compactly supported one-dimensional complex tight framelets $\{\phi; \psi^p, \psi^n\}$ with finitely supported tight framelet filter banks $\{a; b^p, b^n\}$ such that $\wh{\psi^p}$ almost vanishes on the negative interval $(-\infty, 0]$ and $\wh{\psi^n}$ almost vanishes on the positive interval $[0, \infty)$?
\end{enumerate}

By $\wh{\psi^p}(2\xi)=\wh{b^p}(\xi)\wh{\phi}(\xi)$ and $\wh{\psi^n}(2\xi)=\wh{b^n}(\xi)\wh{\phi}(\xi)$, since generally $\wh{\phi}\approx \chi_{[-\pi, \pi]}$, to satisfy the condition in \eqref{psi:direction},
it is very natural to require that
$\wh{b^p}$ should be relatively small on the negative interval $[-\pi,0)$ so that $\wh{b^p}$ concentrates largely on the positive interval $[0,\pi)$, while $\wh{b^n}$ should be relatively small on the positive interval $[0, \pi)$ so that $\wh{b^n}$ concentrates largely on the negative interval $[-\pi,0)$. In other words, to achieve directionality for tensor product tight framelets, the two high-pass filters $b^p$ and $b^n$ must have good frequency separation property. A natural quantity to measure the quality of frequency separation (and therefore, the directionality of tensor product tight framelets) is
\be \label{B:bpn}
B_{b^p, b^n}(\xi):=|\wh{b^p}(\xi+\pi)|^2+|\wh{b^n}(\xi)|^2, \qquad \xi\in [0,\pi].
\ee
That is, the smaller the quantity $B_{b^p, b^n}$ on the interval $[0,\pi]$, the better the frequency separation of the two high-pass filters $b^p$ and $b^n$ in the frequency domain and consequently, the better the directionality of their associated high-dimensional tensor product tight framelets. More precisely, if we can construct a tight framelet filter bank $\{a; b^p, b^n\}$ such that the quantity $B_{b^p, b^n}(\xi)$ is relatively small for all $\xi\in [0,\pi]$, then the high-pass filters $b^p$ and $b^n$ have good frequency separation and thus, the resulting tensor product tight framelet filter bank $\{a; b^p, b^n\}\otimes \{a; b^p, b^n\}$ and its associated real-valued tight framelet by separating real and imaginary parts in $\{\phi; \psi^p, \psi^n\}\otimes \{\phi; \psi^p, \psi^n\}$ will have four directions:  $0^\circ$ (horizontal), $\pm 45^\circ$, and $90^\circ$ (vertical) in dimension two.

In addition to Q1, we are interested in the following two problems:

\begin{enumerate}
\item[Q2:] For filters $a, b^p, b^n\in \lp{2}$ such that $\{a; b^p, b^n\}$ is a tight framelet filter bank, can we achieve $B_{b^p, b^n}(\xi)\approx 0$ for all $\xi\in [0, \pi]$? More precisely, given a low-pass filter $a\in \lp{2}$, we want to find a sharp theoretical lower bound which is a function $A: [0,\pi]\rightarrow [0, \infty)$ depending only on the given filter $a$ such that
    (i) $|\wh{b^p}(\xi+\pi)|^2+|\wh{b^n}(\xi)|^2\ge A(\xi)$ a.e. $\xi\in [0,\pi]$ for any tight framelet filter bank $\{a; b^p, b^n\}$. (ii) There exists a tight framelet filter bank $\{a; \mathring{b}^p, \mathring{b^n}\}$ derived from the filter $a$ such that
    $|\wh{\mathring{b}^p}(\xi+\pi)|^2+|\wh{\mathring{b}^n}(\xi)|^2=A(\xi)$ a.e. $\xi\in [0,\pi]$.

\item[Q3:] From every given real-valued low-pass filter $a\in \lp{0}$ such that $1-|\wh{a}(\xi)|^2-|\wh{a}(\xi+\pi)|^2\ge 0$ (which is a necessary condition for building a tight framelet filter bank derived from $a$), can we construct a finitely supported tight framelet filter bank $\{a; b^p, b^n\}$ such that its associated tensor product complex tight framelet exhibit almost best possible directionality? More precisely, is it possible to construct finitely supported high-pass filters $b^p, b^n$ such that $\{a; b^p, b^n\}$ is a tight framelet filter bank and $|\wh{b^p}(\xi+\pi)|^2+|\wh{b^n}(\xi)|^2\approx A(\xi)$ on $[0,\pi]$? Here $A$ is the sharp theoretical lower bound for frequency separation in Q2.
\end{enumerate}

We shall satisfactorily and positively answer all the above questions in this paper. We shall provide a sharp theoretical lower bound for frequency separation using the natural quantity $|\wh{b^p}(\xi+\pi)|^2+|\wh{b^n}(\xi)|^2$. More precisely, we shall prove in Section~2 the following result which completely answers Q2:

\begin{theorem}\label{thm:lowerbound}
Let $a, b^p, b^n\in \lp{2}$ such that $\{a; b^p, b^n\}$ is a tight framelet filter bank. Then
\be \label{lowerbound}
|\wh{b^p}(\xi+\pi)|^2+|\wh{b^n}(\xi)|^2 \ge A(\xi), \qquad a.e.\; \xi\in [0,\pi],
\ee
where the frequency separation function $A$ associated with the filter $a$ is defined to be
\be \label{A}
A(\xi):=\frac{2-|\wh{a}(\xi)|^2-|\wh{a}(\xi+\pi)|^2-
\sqrt{4\big(1-|\wh{a}(\xi)|^2-|\wh{a}(\xi+\pi)|^2\big)
+(|\wh{a}(\xi)|^2-|\wh{a}(\xi+\pi)|^2)^2}}{2}.
\ee
Moreover, the inequality in \eqref{lowerbound} is sharp in the sense that
there exist $\mathring{b}^p, \mathring{b}^n \in \lp{2}$ such that $\{a; \mathring{b}^p, \mathring{b}^n\}$ is a tight framelet filter bank satisfying $|\wh{\mathring{b}^p}(\xi+\pi)|^2+|\wh{\mathring{b}^n}(\xi)|^2 =A(\xi)$ a.e. $\xi\in [0,\pi]$.
If in addition the filter $a$ is real-valued, that is, $\wh{a}(\xi)=\ol{\wh{a}(-\xi)}$ a.e. $\xi\in \R$, then the tight framelet filter bank $\{a; \mathring{b}^p, \mathring{b}^n\}$ can satisfy the additional property: $\wh{\mathring{b}^n}(\xi)=\ol{\wh{\mathring{b}^p}(-\xi)}$ a.e. $\xi\in \R$, that is, $\mathring{b}^n=\ol{\mathring{b}^p}$.
\end{theorem}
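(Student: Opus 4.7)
My plan is to recast the tight framelet filter bank identity \eqref{tffb} in terms of the $2\times 3$ modulation matrix, reduce the desired inequality to a problem in two nonnegative real variables, and then close both the bound and its sharpness with a single Cauchy--Schwarz step. Abbreviate $\alpha:=|\wh{a}(\xi)|^2$, $\beta:=|\wh{a}(\xi+\pi)|^2$, $u:=|\wh{b^p}(\xi+\pi)|^2$, $v:=|\wh{b^n}(\xi)|^2$, and $s:=u+v$. The condition \eqref{tffb} (specialised to $d=1$, $\omega=1$) says that the $2\times 3$ matrix with rows $(\wh{a}(\xi),\wh{b^p}(\xi),\wh{b^n}(\xi))$ and $(\wh{a}(\xi+\pi),\wh{b^p}(\xi+\pi),\wh{b^n}(\xi+\pi))$ has orthonormal rows. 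Its two diagonal identities force $|\wh{b^p}(\xi)|^2=1-\alpha-v$ and $|\wh{b^n}(\xi+\pi)|^2=1-\beta-u$, while Cauchy--Schwarz applied to the off-diagonal identity alone already gives $\alpha\beta\le(1-\alpha)(1-\beta)$, hence $\alpha+\beta\le 1$, so the radicand inside $A(\xi)$ is automatically nonnegative.

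The lower bound then follows from a two-step estimate. The triangle inequality applied to the off-diagonal identity yields
\[
\sqrt{\alpha\beta}=|\wh{a}(\xi)\ol{\wh{a}(\xi+\pi)}| \le \sqrt{(1-\alpha-v)u}+\sqrt{v(1-\beta-u)},
\]
while the elementary Cauchy--Schwarz inequality $(ax+by)^2\le(a^2+b^2)(x^2+y^2)$, applied with $a=\sqrt{u}$, $b=\sqrt{v}$, $x=\sqrt{1-\alpha-v}$, $y=\sqrt{1-\beta-u}$, bounds the same right-hand side above by $\sqrt{s(2-\alpha-\beta-s)}$. Squaring and rearranging produces the quadratic inequality
\[
s^2-(2-\alpha-\beta)s+\alpha\beta\le 0,
\]
whose two roots (nonnegative, since their sum $2-\alpha-\beta$ and product $\alpha\beta$ are both nonnegative) are exactly $A(\xi)$ and $(2-\alpha-\beta)-A(\xi)$. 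Hence $s\ge A(\xi)$, which is \eqref{lowerbound}.

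For the sharpness I reverse both estimates. Equality in the Cauchy--Schwarz step forces $u(1-\beta-u)=v(1-\alpha-v)$; combined with $u+v=A(\xi)$ this yields the unique extremizers
\[
\mathring{u}(\xi)=\frac{A(\xi)\,(1-\alpha-A(\xi))}{\sqrt{4(1-\alpha-\beta)+(\alpha-\beta)^2}},\qquad \mathring{v}(\xi)=\frac{A(\xi)\,(1-\beta-A(\xi))}{\sqrt{4(1-\alpha-\beta)+(\alpha-\beta)^2}},
\]
with the degenerate case where the discriminant vanishes ($\alpha=\beta=\tfrac{1}{2}$) handled by the symmetric choice $\mathring{u}=\mathring{v}=\tfrac14$. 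Define $|\wh{\mathring{b}^p}|^2$ to equal $1-\alpha-\mathring{v}$ on $[0,\pi]$ and $\mathring{u}(\cdot+\pi)$ on $[-\pi,0]$, and $|\wh{\mathring{b}^n}|^2$ symmetrically, extending by $2\pi$-periodicity; the two diagonal identities then hold by construction. The off-diagonal identity reduces to a pointwise phase condition, satisfied by choosing the arguments of $\wh{\mathring{b}^p}(\xi)\ol{\wh{\mathring{b}^p}(\xi+\pi)}$ and $\wh{\mathring{b}^n}(\xi)\ol{\wh{\mathring{b}^n}(\xi+\pi)}$ both equal to $\pi+\arg\bigl(\wh{a}(\xi)\ol{\wh{a}(\xi+\pi)}\bigr)$, saturating the triangle inequality.

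Finally, when $a$ is real-valued the modulus $|\wh{a}|$ is even, so $\alpha(\pi-\xi)=\beta(\xi)$ and $A(\pi-\xi)=A(\xi)$; a direct substitution into the formulas above then gives $\mathring{u}(\xi)=\mathring{v}(\pi-\xi)$, which is precisely the amplitude identity required by the conjugate-symmetry condition $\wh{\mathring{b}^n}(\xi)=\ol{\wh{\mathring{b}^p}(-\xi)}$. The residual freedom in the pointwise phase of $\wh{\mathring{b}^p}$ can then be spent to enforce this symmetry while keeping the triangle inequality tight. The main obstacle I anticipate is exactly this last coordinated (measurable) selection of phases under the symmetry constraint; the lower bound itself, once the right substitution is in place, collapses to the two-line Cauchy--Schwarz argument above.
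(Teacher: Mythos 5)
Your route for the lower bound is exactly the paper's: the same modulation-matrix identities, the same substitution of the two diagonal equations into the off-diagonal one, and the same triangle-inequality-plus-Cauchy--Schwarz step producing the quadratic inequality $s^2-(2-\alpha-\beta)s+\alpha\beta\le 0$ whose smaller root is $A(\xi)$; this half is complete and correct (your preliminary observation that $\alpha+\beta\le 1$ is also fine, the paper gets it from the determinant instead). Your extremizers coincide with the paper's as well: writing $C:=4(1-\alpha-\beta)+(\alpha-\beta)^2$ and using $\sqrt{C}=2-\alpha-\beta-2A$, the paper's values $\tfrac12 A\bigl(1\mp(\alpha-\beta)/\sqrt{C}\bigr)$ are exactly your $A(1-\alpha-A)/\sqrt{C}$ and $A(1-\beta-A)/\sqrt{C}$, and the phase prescription saturating the triangle inequality is the same in substance.

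Two points, however, are left open. First, for your definitions $|\wh{\mathring{b}^p}(\xi)|^2=1-\alpha-\mathring{v}$ and $|\wh{\mathring{b}^n}(\xi+\pi)|^2=1-\beta-\mathring{u}$ to make sense you must verify that these quantities are nonnegative; this is not automatic and is precisely where the paper spends its case analysis (it establishes, e.g., $1-\alpha-\mathring{v}=\mathring{u}\,(2-\alpha-\beta-A)/A$ and its twin, and treats separately the degenerate situations $C(\xi)=0$, $A(\xi)=0$, and $|\alpha-\beta|=\sqrt{C}$), and it is also what guarantees the magnitude identity needed so that your common-phase choice really annihilates the off-diagonal term. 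You assert that the diagonal identities hold by construction, but without this nonnegativity check the construction is not yet well defined. Second, and more significantly, the last claim of the theorem (for real-valued $a$ one can take $\mathring{b}^n=\ol{\mathring{b}^p}$) is not proved: you verify the amplitude symmetry $\mathring{u}(\xi)=\mathring{v}(\pi-\xi)$, but you explicitly defer the coordinated measurable phase selection, which you yourself call the main obstacle. The paper closes this concretely: it takes $\wh{\mathring{b}^p}(\xi+\pi),\wh{\mathring{b}^n}(\xi)\ge 0$ on $[0,\pi]$ and puts all the phase into $\wh{\mathring{b}^p}(\xi)=-e^{i\theta(\xi)}\sqrt{1-\alpha-\mathring{v}}$ and $\wh{\mathring{b}^n}(\xi+\pi)=-e^{-i\theta(\xi)}\sqrt{1-\beta-\mathring{u}}$, where $\theta(\xi)$ is the phase of $\wh{a}(\xi)\ol{\wh{a}(\xi+\pi)}$, and then uses realness of $a$ to get $\wh{a}(\pi-\xi)\ol{\wh{a}(2\pi-\xi)}=\wh{a}(\xi)\ol{\wh{a}(\xi+\pi)}$, hence $e^{i\theta(\pi-\xi)}=e^{i\theta(\xi)}$, which is all that is needed (and measurability is free because the phases are carried entirely by $\theta$). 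Adding that short phase argument and the nonnegativity check would turn your outline into a complete proof.
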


Interestingly, as demonstrated by the following result, the frequency separation function $A$ in \eqref{A} is often very small for most known low-pass filters in the literature.

\begin{theorem}\label{thm:A:Bspline}
Let $A$ be the frequency separation function defined in \eqref{A} associated with a filter $a\in \lp{2}$ satisfying $|\wh{a}(\xi)|^2+|\wh{a}(\xi+\pi)|^2\le 1$ for almost every $\xi\in \R$. Then
\be \label{est:A}
0\le A(\xi)\le \min(|\wh{a}(\xi)|^2, |\wh{a}(\xi+\pi)|^2), \qquad a.e.\;\xi\in \R.
\ee
In particular,
\begin{enumerate}
\item[(i)] $A(\xi)=0$ a.e. $\xi\in [0,\pi]$ if and only if $\wh{a}(\xi)\wh{a}(\xi+\pi)=0$ a.e. $\xi\in \R$.
\item[(ii)] $A(\xi)= \min(|\wh{a}(\xi)|^2, |\wh{a}(\xi+\pi)|^2)$ a.e. $\xi\in [0,\pi]$ if and only if $|\wh{a}(\xi)|^2+|\wh{a}(\xi+\pi)|^2=1$ for almost every $\xi\in \R$ satisfying $\min(|\wh{a}(\xi)|^2, |\wh{a}(\xi+\pi)|^2)\ne 0$. In particular, if $|\wh{a}(\xi)|^2+|\wh{a}(\xi+\pi)|^2=1$ a.e. $\xi\in \R$ (that is, $a$ is an orthogonal filter), then $A(\xi)= \min(|\wh{a}(\xi)|^2, |\wh{a}(\xi+\pi)|^2)$ a.e. $\xi\in [0,\pi]$.
\item[(iii)] If $a$ is the B-spline filter $a^B_m$ of order $m$ given by $\wh{a^B_m}(\xi):=\cos^{2m}(\xi/2)$ with $m\in \N$, then
\be \label{A:aBm}
4^{-m} \sin^m(\xi) \le A(\xi)\le 4^{1-m} \sin^m(\xi), \qquad \forall\; \xi \in [0,\pi].
\ee
\end{enumerate}
\end{theorem}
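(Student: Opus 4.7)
Let $u(\xi) := |\wh{a}(\xi)|^2$ and $v(\xi) := |\wh{a}(\xi+\pi)|^2$, so that the standing hypothesis reads $u+v \le 1$ a.e. The entire argument is driven by the algebraic identity
\[ 4(1-u-v) + (u-v)^2 = (2-u-v)^2 - 4uv, \]
which is a direct expansion. Rationalizing \eqref{A} through this identity produces the equivalent form
\[ A(\xi) = \frac{2uv}{(2-u-v) + \sqrt{(2-u-v)^2 - 4uv}}, \]
and both representations of $A$ will be used throughout.

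I would first establish the two-sided bound \eqref{est:A}. Nonnegativity of the radicand follows from $(2-u-v)^2 \ge 1 \ge (u+v)^2 \ge 4uv$, so $A \ge 0$ is immediate from the rationalized form. For $A \le \min(u,v)$, symmetry lets me assume $u \le v$ and show $A \le u$. The case $u = 0$ is trivial; otherwise the inequality, after clearing denominators and isolating the square root, amounts to $3v + u - 2 \le \sqrt{(2-u-v)^2 - 4uv}$. If the left-hand side is nonpositive there is nothing to check; if it is positive, squaring and simplifying via the identity $(3v+u-2)^2 - (2-u-v)^2 = 4v(2v+u-2)$ collapses the claim to $8v(u+v-1) \le 0$, which holds because $v \ge 0$ and $u+v \le 1$.

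Parts (i) and (ii) then fall out by tracking the equality cases in this chain. For (i), $A = 0$ is equivalent via the rationalized form to $uv = 0$, i.e.\ $\wh{a}(\xi)\wh{a}(\xi+\pi) = 0$; conversely $uv = 0$ plainly forces $A = 0$. For (ii), equality $A = u$ with $u \le v$ and $u > 0$ forces $8v(u+v-1) = 0$ in the argument above, hence $u + v = 1$; the degenerate possibility where the radicand vanishes gives $u = v = 1/2$, which still satisfies $u + v = 1$. The symmetry $v(\xi) = u(\xi+\pi)$ lets me transport conditions on $[0,\pi]$ into the global condition on $\R$ as stated, and the orthogonal filter assertion is the specialization in which $u+v = 1$ holds identically.

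For (iii), I insert $u = \cos^{4m}(\xi/2)$ and $v = \sin^{4m}(\xi/2)$ into the rationalized expression. The two inputs that do the work are the factorization of $uv$ coming from $\cos(\xi/2)\sin(\xi/2) = \tfrac{1}{2}\sin(\xi)$, and the estimate $u+v \le \cos^2(\xi/2) + \sin^2(\xi/2) = 1$, which pins the denominator ingredient $2-u-v$ inside $[1,2]$. Combining these with the explicit behavior of $(2-u-v) + \sqrt{(2-u-v)^2 - 4uv}$ yields matching upper and lower estimates for $A$ of the form in \eqref{A:aBm}. I expect the main obstacle to be the numeric bookkeeping: the crude universal bounds $uv/2 \le A \le 2uv$ capture only the correct order of magnitude, and pinning down the sharp constants $4^{-m}$ and $4^{1-m}$ requires exploiting the precise shape of $\sigma := 2 - \cos^{4m}(\xi/2) - \sin^{4m}(\xi/2)$ rather than the generic range $\sigma \in [1,2]$.
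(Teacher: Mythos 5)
Your handling of \eqref{est:A} and of items (i) and (ii) is correct and is essentially the paper's own argument in a mild variant: where you isolate the square root (reducing $A\le u$ to $3v+u-2\le\sqrt{(2-u-v)^2-4uv}$ and squaring), the paper instead writes $\tfrac12\pA(x,y)=x-\frac{4x(1-x-y)}{g(x,y)}$ with $g(x,y):=2-3x-y+\sqrt{4(1-x-y)+(x-y)^2}\ge 2(1-x-y)\ge 0$ (its \eqref{est:A:detail}) and reads off the inequality and its equality cases from that decomposition. Both routes rest on the same rationalization $4(1-u-v)+(u-v)^2=(2-u-v)^2-4uv$ and reach the same conclusion that, off the zero set of $\min(u,v)$, equality in \eqref{est:A} forces $u+v=1$; so for \eqref{est:A}, (i), (ii) there is nothing to fix.

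The real issue is in (iii), and it is not the one you anticipate. The ``crude'' bounds $uv/2\le A\le 2uv$ that you set aside as capturing only the order of magnitude are in fact the entirety of the paper's proof of (iii): it bounds the denominator $(2-x-y)+\sqrt{(2-x-y)^2-4xy}$ below by $2-x-y\ge 1$ and above by $2(2-x-y)\le 4$, obtaining $xy\le \pA(x,y)\le 4xy$, and makes no finer use of $\sigma=2-x-y$. Moreover, no refinement of $\sigma$ can close your gap, because \eqref{A:aBm} as literally stated is false: with the normalization the paper's computation actually uses, namely $|\wh{a^B_m}(\xi)|^2=\cos^{2m}(\xi/2)$ (the proof writes $4^mxy=\sin^{2m}(\xi)$, i.e.\ it treats $a^B_m$ as the order-$m$ B-spline filter, whereas your literal substitution $u=\cos^{4m}(\xi/2)$, $v=\sin^{4m}(\xi/2)$ from the statement gives $uv=16^{-m}\sin^{4m}(\xi)$ and makes matters worse), one has for $m=1$ that $u+v\equiv 1$, hence by item (ii) $A(\xi)=\min\big(\cos^2(\xi/2),\sin^2(\xi/2)\big)\sim \xi^2/4$ as $\xi\to 0^+$, which lies strictly below the claimed lower bound $4^{-1}\sin\xi\sim \xi/4$. (The stated upper bound does follow, since $\sin^{2m}\xi\le\sin^m\xi$ on $[0,\pi]$, but it is not what the computation yields.) What the argument proves is $\tfrac12\,4^{-m}\sin^{2m}(\xi)\le A(\xi)\le 2\cdot 4^{-m}\sin^{2m}(\xi)$, equivalently $4^{-m}\sin^{2m}(\xi)\le 2A(\xi)\le 4^{1-m}\sin^{2m}(\xi)$. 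You should therefore prove this corrected form (with the order-$m$ normalization $|\wh{a^B_m}(\xi)|^2=\cos^{2m}(\xi/2)$, so $uv=4^{-m}\sin^{2m}\xi$), for which your bounds $uv/2\le A\le 2uv$ already constitute a complete proof, rather than hunt for constants that the inequality cannot have.
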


To answer Q1 and Q3 and to construct tight framelet filter banks with directionality, in Section~3 we shall investigate the structure of all finitely supported tight framelet filter banks $\{a; b^p, b^n\}$ derived from a given filter $a$.
More precisely, from any given finitely supported filter $a\in \lp{0}$, in Theorem~\ref{thm:tffb} and Algorithm~\ref{alg:tffb} we shall construct all possible finitely supported tight framelet filter banks $\{a; b^p, b^n\}$ derived from a given low-pass filter $a$.
For prescribed filter lengths of $b^p$ and $b^n$, such a result enables us to find the best possible complex tight framelet filter bank $\{a; b^p, b^n\}$ having the best possible frequency separation, that is, having the smallest possible $\int_0^\pi \big[|\wh{b^p}(\xi+\pi)|^2+|\wh{b^n}(\xi)|^2\big] d\xi$.
Finally, in Section~4 we shall provide an algorithm for constructing
finitely supported complex tight framelet filter banks $\{a; b^p, b^n\}$ having the smallest possible $\int_0^\pi \big[ |\wh{b^p}(\xi+\pi)|^2+|\wh{b^n}(\xi)|^2\big] d\xi$ among all high-pass filters $b^p$ and $b^n$ with prescribed filter supports.
Several examples will be presented to illustrate the results and algorithms in this paper.

This paper mainly concentrates on the construction of a particular family of finitely supported tensor product complex tight framelet filter banks with directionality (more precisely, in the terminology of \cite{HanZhao}, $\tpctf_3$ having four directions in dimension two). We shall leave the construction of general finitely supported tensor product complex tight framelet filter banks with increasing directionality (that is, tensor product complex tight framelets $\tpctf_n$ with $n\ge 4$) and
their possible applications as a future work.

\section{A Sharp Lower Bound for Directionality of Tight Framelet Filter Banks}

In this section, we shall prove the sharp theoretical lower bound stated in Theorem~\ref{thm:lowerbound} for the best possible frequency separation of a tight framelet filter bank $\{a; b^p, b^n\}$ derived from a given low-pass filter $a$. Then we shall prove Theorem~\ref{thm:A:Bspline} showing that the frequency separation function $A$ in \eqref{A} is often small for many known low-pass filters. As a contrast to the result in Theorem~\ref{thm:lowerbound} for complex-valued tight framelet filter banks, at the end of this section we provide a result showing that all real-valued tight framelet filter banks cannot have good frequency separation.

\begin{proof}[Proof of Theorem~\ref{thm:lowerbound}]
Since $\{a; b^p, b^n\}$ is a tight framelet filter bank, it follows from the definition in \eqref{tffb} with $d=1$ that
\be \label{def:tffb}
\left[ \begin{matrix} \wh{b^p}(\xi) &\wh{b^n}(\xi)\\
\wh{b^p}(\xi+\pi) &\wh{b^n}(\xi+\pi)\end{matrix} \right]
\left[ \begin{matrix} \ol{\wh{b^p}(\xi)} &\ol{\wh{b^p}(\xi+\pi)}\\
\ol{\wh{b^n}(\xi)} &\ol{\wh{b^n}(\xi+\pi)} \end{matrix} \right]=
\left[ \begin{matrix} 1-|\wh{a}(\xi)|^2 &-\wh{a}(\xi)\ol{\wh{a}(\xi+\pi)}\\
-\wh{a}(\xi+\pi)\ol{\wh{a}(\xi)} &1-|\wh{a}(\xi+\pi)|^2\end{matrix} \right].
\ee
Since the determinant of the matrix on the right-hand side of \eqref{def:tffb} is $1-|\wh{a}(\xi)|^2-|\wh{a}(\xi+\pi)|^2$,
it follows directly from \eqref{def:tffb} that we must have $1-|\wh{a}(\xi)|^2-|\wh{a}(\xi+\pi)|^2\ge 0$ a.e. $\xi\in \R$.

We also notice from \eqref{def:tffb} that $\{a; b^p, b^n\}$ is a tight framelet filter bank if and only if for almost every $\xi\in [0,\pi]$, the following three equations hold:
\begin{align}
&|\wh{a}(\xi)|^2+|\wh{b^p}(\xi)|^2+|\wh{b^n}(\xi)|^2=1, \label{eq1}\\
&|\wh{a}(\xi+\pi)|^2+|\wh{b^p}(\xi+\pi)|^2+|\wh{b^n}(\xi+\pi)|^2=1, \label{eq2}\\
&\wh{a}(\xi)\ol{\wh{a}(\xi+\pi)}+\wh{b^p}(\xi)\ol{\wh{b^p}(\xi+\pi)}+\wh{b^n}(\xi)
\ol{\wh{b^n}(\xi+\pi)}=0. \label{eq3}
\end{align}
In the rest of the proof, we always assume $\xi\in [0,\pi]$. Note that \eqref{eq1} and \eqref{eq2} imply
\be \label{eq1:1}
|\wh{b^p}(\xi)|=\sqrt{1-|\wh{a}(\xi)|^2-|\wh{b^n}(\xi)|^2},\qquad
|\wh{b^n}(\xi+\pi)|= \sqrt{1-|\wh{a}(\xi+\pi)|^2-|\wh{b^p}(\xi+\pi)|^2}.
\ee
Using \eqref{eq1:1}, we deduce from \eqref{eq3} that
\begin{align*}
|\wh{a}(\xi)\wh{a}(\xi+\pi)|^2 &\le \Big(
|\wh{b^p}(\xi){\wh{b^p}(\xi+\pi)}|+|\wh{b^n}(\xi){\wh{b^n}(\xi+\pi)}|\Big)^2\\
&=\Big(
|\wh{b^p}(\xi+\pi)|\sqrt{1-|\wh{a}(\xi)|^2-|\wh{b^n}(\xi)|^2}+|\wh{b^n}(\xi)|
\sqrt{1-|\wh{a}(\xi+\pi)|^2-|\wh{b^p}(\xi+\pi)|^2}\Big)^2\\
&\le \Big(|\wh{b^p}(\xi+\pi)|^2+|\wh{b^n}(\xi)|^2\Big) \Big( 2-|\wh{a}(\xi)|^2-|\wh{a}(\xi+\pi)|^2-( |\wh{b^p}(\xi+\pi)|^2+|\wh{b^n}(\xi)|^2)\Big),
\end{align*}
where we used Cauchy-Schwarz inequality in the last inequality.
Define $B(\xi):=|\wh{b^p}(\xi+\pi)|^2+|\wh{b^n}(\xi)|^2$. Then the above inequality can be rewritten as
\be \label{f:func}
f(B(\xi)) \ge 0 \quad \mbox{with}\quad
f(x):=-x^2+\big(2-|\wh{a}(\xi)|^2-|\wh{a}(\xi+\pi)|^2\big)x-|\wh{a}(\xi)\wh{a}(\xi+\pi)|^2.
\ee
Since $f$ is a polynomial of degree two, by calculation, we see that $f$ has two real roots:
\[
A(\xi) \quad \mbox{and}\quad 2-|\wh{a}(\xi)|^2-|\wh{a}(\xi+\pi)|^2-A(\xi),
\]
where $A$ is defined in \eqref{A}.
Rewrite $A(\xi)$ in \eqref{A} as
\be \label{A:1}
A(\xi)=\frac{2-|\wh{a}(\xi)|^2-|\wh{a}(\xi+\pi)|^2-\sqrt{C(\xi)}}{2}
\ee
with
\be \label{C}
C(\xi):=4\big(1-|\wh{a}(\xi)|^2
-|\wh{a}(\xi+\pi)|^2\big)+(|\wh{a}(\xi)|^2-|\wh{a}(\xi+\pi)|^2)^2.
\ee
Note that we can also rewrite the function $C(\xi)$ as follows:
\be \label{C:2}
C(\xi)=(2-|\wh{a}(\xi)|^2-|\wh{a}(\xi+\pi)|^2)^2-4|\wh{a}(\xi)\wh{a}(\xi+\pi)|^2
\le (2-|\wh{a}(\xi)|^2-|\wh{a}(\xi+\pi)|^2)^2.
\ee
From the expression of $A$ in \eqref{A:1} and the above inequality, we see that $A(\xi)\ge 0$ and
\be \label{A>0}
0\le A(\xi)\le 2-|\wh{a}(\xi)|^2-|\wh{a}(\xi+\pi)|^2-A(\xi).
\ee
In particular, we see that $f(x)>0$ if and only if $A(\xi)<x< 2-|\wh{a}(\xi)|^2-|\wh{a}(\xi+\pi)|^2-A(\xi)$. Therefore, since $f(x)<0$ for all $x<A(\xi)$, we conclude from $f(B(\xi))\ge 0$ that $B(\xi)\ge A(\xi)$. Thus, we proved inequality \eqref{lowerbound}.

We now show that the inequality in \eqref{lowerbound} is sharp by
explicitly constructing a tight framelet filter bank $\{a; \mathring{b}^p, \mathring{b}^n\}$ satisfying $|\wh{\mathring{b}^p}(\xi+\pi)|^2
+|\wh{\mathring{b}^n}(\xi)|^2=A(\xi)$ for all $\xi\in [0,\pi]$.
In the following, we shall construct such $2\pi$-periodic measurable functions $\wh{\mathring{b}^p}$ and $\wh{\mathring{b}^n}$ by defining $\wh{\mathring{b}^p}(\xi),\wh{\mathring{b}^p}(\xi+\pi), \wh{\mathring{b}^n}(\xi), \wh{\mathring{b}^n}(\xi+\pi)$ on the interval $\xi\in [0,\pi]$.

For $\xi\in [0,\pi]$, we define
\be \label{mbp}
\wh{\mathring{b}^p}(\xi+\pi)=\begin{cases}
\frac{1}{2}, &\text{if $C(\xi)=0$,}\\
\sqrt{ \frac{1}{2}A(\xi)\left(1-\frac{|\wh{a}(\xi)|^2
-|\wh{a}(\xi+\pi)|^2}{\sqrt{C(\xi)}}\right)}, &\text{otherwise}
\end{cases}
\ee
and
\be \label{mbn}
\wh{\mathring{b}^n}(\xi)=\begin{cases}
\frac{1}{2}, &\text{if $C(\xi)=0$,}\\
\sqrt{ \frac{1}{2}A(\xi)\left(1+\frac{|\wh{a}(\xi)|^2-|\wh{a}(\xi+\pi)|^2}
{\sqrt{C(\xi)}}\right)}, &\text{otherwise.}
\end{cases}
\ee
We first show that both $\wh{\mathring{b}^p}(\xi+\pi)$ and $\wh{\mathring{b}^n}(\xi)$ are well defined nonnegative functions for $\xi\in [0,\pi]$. By the definition of $C(\xi)$ in \eqref{C}, it is straightforward to see that
$\sqrt{C(\xi)}\ge \Big| |\wh{a}(\xi)|^2-|\wh{a}(\xi+\pi)|^2\Big|$ for $\xi\in [0,\pi]$.
Consequently, we have
\[
\left| \frac{ |\wh{a}(\xi)|^2- |\wh{a}(\xi+\pi)|^2}{\sqrt{C(\xi)}}\right| \le 1.
\]
Since $A(\xi)\ge 0$, we now see that both $\wh{\mathring{b}^p}(\xi+\pi)$ in \eqref{mbp} and $\wh{\mathring{b}^n}(\xi)$ in \eqref{mbn} are well defined nonnegative functions for $\xi\in [0,\pi]$.
Let $\beta(\xi)$ denote the phase of $\wh{a}(\xi)\ol{\wh{a}(\xi+\pi)}$, that is, $\beta$ is a real-valued measurable function on $[0,\pi]$ such that
\be \label{beta}
\wh{a}(\xi)\ol{\wh{a}(\xi+\pi)}=e^{i\beta(\xi)} |\wh{a}(\xi)\ol{\wh{a}(\xi+\pi)}|, \qquad \xi\in [0,\pi].
\ee
If $\wh{a}(\xi)\wh{a}(\xi+\pi)=0$, then we simply define $\beta(\xi)=0$.
For $\xi\in [0,\pi]$, we define
\be \label{mbp2}
\wh{\mathring{b}^p}(\xi)=-e^{i\beta(\xi)} \sqrt{1-|\wh{a}(\xi)|^2-|\wh{\mathring{b}^n}(\xi)|^2}
\ee
and
\be \label{mbn2}
\wh{\mathring{b}^n}(\xi+\pi)=-e^{-i\beta(\xi)} \sqrt{1-|\wh{a}(\xi+\pi)|^2-|\wh{\mathring{b}^p}(\xi+\pi)|^2}.
\ee
We now prove that $\wh{\mathring{b}^p}(\xi)$ and $\wh{\mathring{b}^n}(\xi+\pi)$ are well defined by proving that for $\xi\in [0,\pi]$,
\be \label{positive}
1-|\wh{a}(\xi)|^2-|\wh{\mathring{b}^n}(\xi)|^2 \ge 0 \quad \mbox{and} \quad 1-|\wh{a}(\xi+\pi)|^2-|\wh{\mathring{b}^p}(\xi+\pi)|^2 \ge 0
\ee
and
\be \label{magnitude}
|\wh{a}(\xi)\wh{a}(\xi+\pi)|=|\wh{\mathring{b}^p}(\xi) \wh{\mathring{b}^p}(\xi+\pi)|+|\wh{\mathring{b}^n}(\xi)\wh{\mathring{b}^n}(\xi+\pi)|.
\ee
We prove \eqref{positive} and \eqref{magnitude} by considering four cases.

Case 1: $C(\xi)=0$. Since $C(\xi)=0$, it follows from \eqref{mbp} and \eqref{mbn} that $\wh{\mathring{b}^p}(\xi+\pi)=\wh{\mathring{b}^n}(\xi)=\frac{1}{2}$.
By $C(\xi)=0$, it follows from the definition of $C(\xi)$ in \eqref{C} that $1-|\wh{a}(\xi)|^2-|\wh{a}(\xi+\pi)|^2=0$ and $|\wh{a}(\xi)|^2-|\wh{a}(\xi+\pi)|^2=0$. Hence, we must have $|\wh{a}(\xi)|^2=|\wh{a}(\xi+\pi)|^2=\frac{1}{2}$.
Consequently, $1-|\wh{a}(\xi)|^2-|\wh{\mathring{b}^n}(\xi)|^2=1-\frac{1}{2}-\frac{1}{4}=\frac{1}{4}\ge 0$ and
$1-|\wh{a}(\xi+\pi)|^2-|\wh{\mathring{b}^p}(\xi+\pi)|^2=1-\frac{1}{2}-\frac{1}{4}=\frac{1}{4}\ge 0$.
Hence, \eqref{positive} holds. Now by the definition of $\wh{\mathring{b}^p}(\xi)$ in \eqref{mbp2} and $\wh{\mathring{b}^n}(\xi+\pi)$ in \eqref{mbn2}, we have $\wh{\mathring{b}^p}(\xi)=
-e^{i\beta(\xi)}/2$ and
$\wh{\mathring{b}^n}(\xi+\pi)=-e^{-i\beta(\xi)}/2$.
Thus, it is trivial to check that \eqref{magnitude} holds.

Case 2: $C(\xi)\ne 0$ and $A(\xi)=0$. By the definition of  $\wh{\mathring{b}^p}(\xi+\pi)$ in \eqref{mbp} and $\wh{\mathring{b}^n}(\xi)$ in \eqref{mbn}, we have
$\wh{\mathring{b}^p}(\xi+\pi)=\wh{\mathring{b}^n}(\xi)=0$. Clearly, \eqref{positive} holds since $1-|\wh{a}(\xi)|^2-|\wh{a}(\xi+\pi)|^2\ge 0$. It is also easy to see that $A(\xi)=0$ implies $\wh{a}(\xi)\wh{a}(\xi+\pi)=0$. Therefore, \eqref{magnitude} is obviously true.

Case 3: $C(\xi)\ne 0$, $A(\xi)\ne 0$, and $|\wh{a}(\xi)|^2-|\wh{a}(\xi+\pi)|^2=\sqrt{C(\xi)}$ or $-\sqrt{C(\xi)}$. Without loss of any generality, we only consider $|\wh{a}(\xi)|^2-|\wh{a}(\xi+\pi)|^2=\sqrt{C(\xi)}$, from which we deduce that
\[
1-|\wh{a}(\xi)|^2-|\wh{a}(\xi+\pi)|^2=0, \qquad
\wh{\mathring{b}^p}(\xi+\pi)=0, \qquad \wh{\mathring{b}^n}(\xi)=\sqrt{A(\xi)}.
\]
It follows from $1-|\wh{a}(\xi)|^2-|\wh{a}(\xi+\pi)|^2=0$ and the definition of $A(\xi)$ in \eqref{A} that $A(\xi)=\frac{1-|\wh{a}(\xi)|^2+|\wh{a}(\xi+\pi)|^2}{2}
=|\wh{a}(\xi+\pi)|^2$.
Now we see that \eqref{positive} is satisfied, since $1-|\wh{a}(\xi+\pi)|^2-|\wh{\mathring{b}^p}(\xi+\pi)|^2=1-|\wh{a}(\xi+\pi)|^2=|\wh{a}(\xi)|^2\ge 0$ and
\[
1-|\wh{a}(\xi)|^2-|\wh{\mathring{b}^n}(\xi)|^2=
1-|\wh{a}(\xi)|^2-A(\xi)=1-|\wh{a}(\xi)|^2-|\wh{a}(\xi+\pi)|^2=0.
\]
Consequently, we deduce from the above identity and the definition of $\wh{\mathring{b}^p}(\xi)$ in \eqref{mbp2} that $\wh{\mathring{b}^p}(\xi)=0$. Since $\wh{\mathring{b}^p}(\xi+\pi)=0$ and $A(\xi)=|\wh{a}(\xi+\pi)|^2$, from the definition of $\wh{\mathring{b}^n}(\xi+\pi)$ in \eqref{mbn2} we deduce that
\[
|\wh{\mathring{b}^n}(\xi+\pi)|^2=1-|\wh{a}(\xi+\pi)|^2
-|\wh{\mathring{b}^p}(\xi+\pi)|^2=1-|\wh{a}(\xi+\pi)|^2=
|\wh{a}(\xi)|^2.
\]
Therefore, by $\wh{\mathring{b}^p}(\xi)=\wh{\mathring{b}^p}(\xi+\pi)=0$, $\wh{\mathring{b}^n}(\xi)=\sqrt{A(\xi)}$, and $|\wh{\mathring{b}^n}(\xi+\pi)|=|\wh{a}(\xi)|$, we see that
\[
|\wh{\mathring{b}^p}(\xi) \wh{\mathring{b}^p}(\xi+\pi)|+|\wh{\mathring{b}^n}(\xi)\wh{\mathring{b}^n}(\xi+\pi)|
=|\wh{\mathring{b}^n}(\xi) \wh{\mathring{b}^n}(\xi+\pi)|
=\sqrt{A(\xi)} |\wh{a}(\xi)|=|\wh{a}(\xi)\wh{a}(\xi+\pi)|,
\]
where we used the identity $A(\xi)=|\wh{a}(\xi+\pi)|^2$ in the last identity. Hence, \eqref{magnitude} holds.

Case 4: $C(\xi)\ne 0$, $A(\xi)\ne 0$, and $|\wh{a}(\xi)|^2-|\wh{a}(\xi+\pi)|^2\ne \pm \sqrt{C(\xi)}$. Note that the last two conditions imply that $\wh{\mathring{b}^p}(\xi+\pi) \ne 0$ and $\wh{\mathring{b}^n}(\xi)\ne 0$. From the definition of $\wh{\mathring{b}^p}(\xi+\pi)$ in \eqref{mbp} and $\wh{\mathring{b}^n}(\xi)$ in \eqref{mbn}, we see that
\be \label{eq:ratio}
\frac{|\wh{\mathring{b}^p}(\xi+\pi)|^2}{|\wh{\mathring{b}^n}(\xi)|^2}
=\frac{\sqrt{C(\xi)}-(|\wh{a}(\xi)|^2-|\wh{a}(\xi+\pi)|^2)}{
\sqrt{C(\xi)}+(|\wh{a}(\xi)|^2-|\wh{a}(\xi+\pi)|^2)}=
\frac{1-|\wh{a}(\xi)|^2-A(\xi)}{1-|\wh{a}(\xi+\pi)|^2-A(\xi)},
\ee
where we used the relation $\sqrt{C(\xi)}=2-|\wh{a}(\xi)|^2-|\wh{a}(\xi+\pi)|^2-2A(\xi)$ (derived from the definition of $A(\xi)$ in \eqref{A}) in the last identity.
Since  $C(\xi)\ne 0$, we deduce from the definition of  $\wh{\mathring{b}^p}(\xi+\pi)$ in \eqref{mbp} and $\wh{\mathring{b}^n}(\xi)$ in \eqref{mbn}
that $|\wh{\mathring{b}^p}(\xi+\pi)|^2+|\wh{\mathring{b}^n}(\xi)|^2
=A(\xi)$. Now it follows directly from \eqref{eq:ratio} that
\begin{align*}
\frac{|\wh{\mathring{b}^p}(\xi+\pi)|^2}{|\wh{\mathring{b}^n}(\xi)|^2}
&=\frac{1-|\wh{a}(\xi)|^2-A(\xi)}{1-|\wh{a}(\xi+\pi)|^2-A(\xi)}
=\frac{1-|\wh{a}(\xi)|^2-A(\xi)+|\wh{\mathring{b}^p}(\xi+\pi)|^2}
{1-|\wh{a}(\xi+\pi)|^2-A(\xi)+|\wh{\mathring{b}^n}(\xi)|^2}
=\frac{1-|\wh{a}(\xi)|^2-|\wh{\mathring{b}^n}(\xi)|^2}
{1-|\wh{a}(\xi+\pi)|^2-|\wh{\mathring{b}^p}(\xi+\pi)|^2}.
\end{align*}
That is, we proved
\be \label{bpbn}
\frac{|\wh{\mathring{b}^p}(\xi+\pi)|^2}{|\wh{\mathring{b}^n}(\xi)|^2}
=\frac{1-|\wh{a}(\xi)|^2-|\wh{\mathring{b}^n}(\xi)|^2}
{1-|\wh{a}(\xi+\pi)|^2-|\wh{\mathring{b}^p}(\xi+\pi)|^2}.
\ee
From the identity in \eqref{bpbn}, we further deduce that
\begin{align*}
\frac{|\wh{\mathring{b}^p}(\xi+\pi)|^2}{A(\xi)}
&=\frac{|\wh{\mathring{b}^p}(\xi+\pi)|^2}{|\wh{\mathring{b}^p}(\xi+\pi)|^2
+|\wh{\mathring{b}^n}(\xi)|^2}=
\frac{1-|\wh{a}(\xi)|^2-|\wh{\mathring{b}^n}(\xi)|^2}
{(1-|\wh{a}(\xi)|^2-|\wh{\mathring{b}^n}(\xi)|^2)+(1-|\wh{a}(\xi+\pi)|^2-|\wh{\mathring{b}^p}(\xi+\pi)|^2)}\\
&=\frac{1-|\wh{a}(\xi)|^2-|\wh{\mathring{b}^n}(\xi)|^2}
{2-|\wh{a}(\xi)|^2-|\wh{a}(\xi+\pi)|^2-A(\xi)}.
\end{align*}
In other words, we proved
\be \label{bp}
\frac{|\wh{\mathring{b}^p}(\xi+\pi)|^2}{A(\xi)}
=\frac{1-|\wh{a}(\xi)|^2-|\wh{\mathring{b}^n}(\xi)|^2}
{2-|\wh{a}(\xi)|^2-|\wh{a}(\xi+\pi)|^2-A(\xi)}.
\ee
Similarly, we can prove that
\be \label{bn}
\frac{|\wh{\mathring{b}^n}(\xi)|^2}{A(\xi)}
=\frac{1-|\wh{a}(\xi+\pi)|^2-|\wh{\mathring{b}^p}(\xi+\pi)|^2}
{2-|\wh{a}(\xi)|^2-|\wh{a}(\xi+\pi)|^2-A(\xi)}.
\ee
By our assumption $A(\xi)>0$, we see from \eqref{A>0} that $2-|\wh{a}(\xi)|^2-|\wh{a}(\xi+\pi)|^2-A(\xi)\ge A(\xi)>0$.
Since  $\wh{\mathring{b}^p}(\xi+\pi) \ne 0$ and $\wh{\mathring{b}^n}(\xi)\ne 0$, we deduce from \eqref{bp} that we must have $1-|\wh{a}(\xi)|^2-|\wh{\mathring{b}^n}(\xi)|^2>0$.
By the same argument, we deduce from \eqref{bn} that $1-|\wh{a}(\xi+\pi)|^2-|\wh{\mathring{b}^p}(\xi+\pi)|^2>0$. Hence, we proved \eqref{positive}. Therefore, $\wh{\mathring{b}^p}(\xi)$ and $\wh{\mathring{b}^n}(\xi+\pi)$ are well defined. It now follows from \eqref{bpbn} that
\[
\frac{|\wh{\mathring{b}^p}(\xi+\pi)|^2}{|\wh{\mathring{b}^n}(\xi)|^2}
=\frac{1-|\wh{a}(\xi)|^2-|\wh{\mathring{b}^n}(\xi)|^2}
{1-|\wh{a}(\xi+\pi)|^2-|\wh{\mathring{b}^p}(\xi+\pi)|^2}=
\frac{|\wh{\mathring{b}^p}(\xi)|^2}{|\wh{\mathring{b}^n}(\xi+\pi)|^2}
\]
from which we see that the vector $(|\wh{\mathring{b}^p}(\xi+\pi)|, |\wh{\mathring{b}^n}(\xi)|)$ is parallel to the vector
$(|\wh{\mathring{b}^p}(\xi)|, |\wh{\mathring{b}^n}(\xi+\pi)|)$.
Consequently, we must have
\[
|\wh{\mathring{b}^p}(\xi) \wh{\mathring{b}^p}(\xi+\pi)|+|\wh{\mathring{b}^n}(\xi)\wh{\mathring{b}^n}(\xi+\pi)|
=\sqrt{|\wh{\mathring{b}^p}(\xi+\pi)|^2+|\wh{\mathring{b}^n}(\xi)|^2}
\sqrt{|\wh{\mathring{b}^p}(\xi)|^2+|\wh{\mathring{b}^n}(\xi+\pi)|^2}.
\]
By the definition of $\wh{\mathring{b}^p}(\xi+\pi)$ in \eqref{mbp} and
$\wh{\mathring{b}^n}(\xi)$ in \eqref{mbn} and
by the definition of $\wh{\mathring{b}^p}(\xi)$ in \eqref{mbp2} and
$\wh{\mathring{b}^n}(\xi+\pi)$ in \eqref{mbn2},
we conclude that
\begin{align*}
|\wh{\mathring{b}^p}(\xi) \wh{\mathring{b}^p}(\xi+\pi)|+|\wh{\mathring{b}^n}(\xi)\wh{\mathring{b}^n}(\xi+\pi)|
&=\sqrt{|\wh{\mathring{b}^p}(\xi+\pi)|^2+|\wh{\mathring{b}^n}(\xi)|^2}
\sqrt{|\wh{\mathring{b}^p}(\xi)|^2+|\wh{\mathring{b}^n}(\xi+\pi)|^2}\\
&=\sqrt{A(\xi) (2-|\wh{a}(\xi)|^2-|\wh{a}(\xi+\pi)|^2-A(\xi))}
=|\wh{a}(\xi)\wh{a}(\xi+\pi)|,
\end{align*}
where in the last identity we used the fact that $A(\xi)$ and $2-|\wh{a}(\xi)|^2-|\wh{a}(\xi+\pi)|^2-A(\xi)$ are the two roots of $f$ in \eqref{f:func} and $f(0)=-|\wh{a}(\xi)\wh{a}(\xi+\pi)|^2$. Thus, we proved \eqref{magnitude}.

By our construction, it is now trivial to see that $|\wh{\mathring{b}^p}(\xi+\pi)|^2+|\wh{\mathring{b}^n}(\xi)|^2=A(\xi)$ for all $\xi\in [0,\pi]$ such that $C(\xi)\ne 0$. If $C(\xi)=0$, as discussed in Case 1, then we have $A(\xi)=\frac{1}{2}$ and we still have $|\wh{\mathring{b}^p}(\xi+\pi)|^2+|\wh{\mathring{b}^n}(\xi)|^2
=\frac{1}{4}+\frac{1}{4}=\frac{1}{2}=A(\xi)$.
To complete the proof, we now show that $\{a; \mathring{b}^p, \mathring{b}^n\}$ is a tight framelet filter bank. By our construction of $\wh{\mathring{b}^p}$ and $\wh{\mathring{b}^n}$, it is trivial to see that \eqref{eq1} and \eqref{eq2} are satisfied with $b^p$ and $b^n$ being replaced by  $\mathring{b}^p$ and $\mathring{b}^n$, respectively.
To check \eqref{eq3},  we have
\begin{align*}
\wh{a}(\xi)&\ol{\wh{a}(\xi+\pi)}+\wh{\mathring{b}^p}(\xi) \ol{\wh{\mathring{b}^p}(\xi+\pi)}+\wh{\mathring{b}^n}(\xi) \ol{\wh{\mathring{b}^n}(\xi+\pi)}\\
&=
e^{i\beta(\xi)} |\wh{a}(\xi)\wh{a}(\xi+\pi)|-e^{i\beta(\xi)} ( |\wh{\mathring{b}^p}(\xi) \wh{\mathring{b}^p}(\xi+\pi)|+
|\wh{\mathring{b}^n}(\xi) \wh{\mathring{b}^n}(\xi+\pi)|)
=0,
\end{align*}
where in the last identity we used \eqref{magnitude}.
Therefore, $\{a; \mathring{b}^p, \mathring{b}^n\}$ is indeed a tight framelet filter bank.

If the filter $a$ is real-valued, then $\wh{a}(\xi)=\ol{\wh{a}(-\xi)}$ a.e. $\xi \in \R$. Consequently, we have
\be \label{real:a}
|\wh{a}(-\xi)|=|\wh{a}(\xi)| \quad \mbox{and}\quad
C(-\xi)=C(\xi)=C(\pi-\xi), \quad A(-\xi)=A(\xi)=A(\pi-\xi).
\ee
We now prove that $\ol{\wh{\mathring{b}^p}(-\xi)}=\wh{\mathring{b}^n}(\xi)$ a.e. $\xi\in \R$, which is equivalent to verify that
\be \label{mbp:mbn:rel}
\ol{\wh{b^p}(-\xi)}=\wh{\mathring{b}^n}(\xi) \quad \mbox{and}\quad
\ol{\wh{b^p}(\pi-\xi)}=\wh{\mathring{b}^n}(\xi-\pi), \qquad a.e.\; \xi\in [0,\pi].
\ee
By \eqref{real:a} and the definition of $\wh{\mathring{b}^p}(\xi+\pi)$ in \eqref{mbp} and $\wh{\mathring{b}^n}(\xi)$ in \eqref{mbn}, we see that
\[
\ol{\wh{\mathring{b}^p}(-\xi)}=\ol{\wh{\mathring{b}^p}((\pi-\xi)+\pi)}
=\wh{\mathring{b}^p}((\pi-\xi)+\pi)=
\wh{\mathring{b}^n}(\xi), \qquad \xi\in [0,\pi],
\]
which is the first identity in \eqref{mbp:mbn:rel}. Similarly, we have
\begin{align*}
\ol{\wh{b^p}(\pi-\xi)}&=-e^{-i\beta(\pi-\xi)} \sqrt{1-|\wh{a}(\pi-\xi)|^2-|\wh{\mathring{b}^n}(\pi-\xi)|^2}\\
&=-e^{-i\beta(\pi-\xi)} \sqrt{1-|\wh{a}(\xi+\pi)|^2-|\wh{\mathring{b}^p}(\xi+\pi)|^2}
=e^{i(\beta(\xi)-\beta(\pi-\xi))} \wh{\mathring{b}^n}(\xi+\pi),
\end{align*}
where we used \eqref{mbn2} and the first identity in \eqref{mbp:mbn:rel}.
If we can prove that
\be \label{phase:real}
e^{i(\beta(\xi)-\beta(\pi-\xi))}=1,\qquad \xi\in [0,\pi],
\ee
then the second identity in \eqref{mbp:mbn:rel} holds and therefore, we proved $\ol{\wh{b^p}(-\xi)}=\wh{b^n}(\xi)$ a.e. $\xi\in \R$.

We now prove \eqref{phase:real}. Replacing $\xi$ by $\pi-\xi$ in the definition of $\beta(\xi)$ in \eqref{beta} and using \eqref{real:a}, we have
\[
\wh{a}(\pi-\xi)\ol{\wh{a}(2\pi-\xi)}=e^{i\beta(\xi-\pi)}
|\wh{a}(\pi-\xi)\wh{a}(2\pi-\xi)|=e^{i\beta(\xi-\pi)} |\wh{a}(\xi)\wh{a}(\xi+\pi)|.
\]
Since $\wh{a}(\xi)=\ol{\wh{a}(-\xi)}$, we have
\[
\wh{a}(\pi-\xi)\ol{\wh{a}(2\pi-\xi)}=\ol{\wh{a}(\xi-\pi)}\; \ol{\wh{a}(-\xi)}=\ol{\wh{a}(\xi+\pi)} \wh{a}(\xi)=
\wh{a}(\xi)\ol{\wh{a}(\xi+\pi)}.
\]
Consequently, comparing with \eqref{beta}, we conclude that for $\xi\in [0,\pi]$ such that $\wh{a}(\xi)\wh{a}(\xi+\pi)\ne 0$, we must have $e^{i\beta(\pi-\xi)}=e^{i\beta(\xi)}$, which is simply \eqref{phase:real}. For the case that $\wh{a}(\xi)\wh{a}(\xi+\pi)=0$, \eqref{phase:real} is trivially true since $\beta(\xi)=\beta(\pi-\xi)=0$. This completes the proof of Theorem~\ref{thm:lowerbound}.
\end{proof}

As demonstrated by Theorem~\ref{thm:A:Bspline},
the frequency separation function $A$ in \eqref{A} is often small for many known low-pass filters.

\begin{proof}[Proof of Theorem~\ref{thm:A:Bspline}]
Define $x:=|\wh{a}(\xi)|^2$ and $y:=|\wh{a}(\xi+\pi)|^2$. Then $0\le x, y\le 1$ and $0\le x+y\le 1$.
In terms of $x$ and $y$, the function $A(\xi)$ in \eqref{A} can be rewritten as
\be \label{pA}
A(\xi)=\tfrac{1}{2}\pA(x,y) \quad \mbox{with}\quad
\pA(x,y):=2-x-y-\sqrt{4(1-x-y)+(x-y)^2}.
\ee
By a simple direct calculation, we have
\be \label{est:A:detail}
\tfrac{1}{2}\pA(x,y)=x-\frac{4x(1-x-y)}{g(x,y)} \le x,
\ee
where
\[
g(x,y):=2-3x-y+\sqrt{4(1-x-y)+(x-y)^2}\ge 
2-3x-y+(x-y)=2(1-x-y)\ge 0.
\]
If $g(x,y)>0$, by the symmetry between $x$ and $y$ in $\pA(x,y)$, then it follows from \eqref{est:A:detail} that $A(\xi)=\frac{1}{2}\pA(x,y)\le \min(x,y)=\min(|\wh{a}(\xi)|^2, |\wh{a}(\xi+\pi)|^2)$.
Note that $g(x,y)=0$ if and only if $x+y=1$ and $x\ge y$. If $g(x,y)=0$, then we also have $A(\xi)=\frac{1}{2}\pA(x,y)=y=\min(x,y)=\min(|\wh{a}(\xi)|^2,|\wh{a}(\xi+\pi)|^2)$. Therefore, we proved the inequality \eqref{est:A}.

Item (i) follows directly from the definition of $A(\xi)$ and the relation in \eqref{C}.
Item (ii) follows directly from \eqref{est:A:detail}.
For item (iii), by the definition of the function $A$ in \eqref{A} with $a=a^B_m$, we have $A(\xi)=\tfrac{1}{2}\pA(x,y)$ and
$\sin^{2m}(\xi)=2^{2m} \sin^{2m}(\xi/2) \cos^{2m}(\xi/2)=4^m xy$.
%
%
Note that
\[
\pA(x,y)=(2-x-y)-\sqrt{(2-x-y)^2-4xy}
=\frac{4xy}{(2-x-y)+\sqrt{(2-x-y)^2-4xy}}.
\]
Since $0\le x,y \le 1$, we obviously have
$0\le \sqrt{(2-x-y)^2-4xy}\le 2-x-y$.
Therefore, we conclude that
$\frac{2xy}{2-x-y}\le \pA(x,y)\le \frac{4xy}{2-x-y}$.
Consequently, by $0\le x,y\le 1$ and $x+y\le 1$, we deduce that
\[
xy\le \frac{2xy}{2-x-y} \le \pA(x,y)\le \frac{4xy}{2-x-y}\le 4xy.
\]
This completes the proof of \eqref{A:aBm}.
\end{proof}

The following result shows that for a tight framelet filter bank $\{a; b^p, b^n\}$, if the high-pass filters $b^p$ and $b^n$ are real-valued (but the filter $a$ can be complex-valued), then its frequency separation between $b^p$ and $b^n$ cannot be good. Moreover, the best possible frequency separation between two real-valued high-pass filters $b^p$ and $b^n$ in a tight framelet filter bank $\{a; b^p, b^n\}$ is achieved when $a$ is an orthogonal filter. On the other hand, Theorem~\ref{thm:A:Bspline} tells us that the frequency separation between two complex-valued high-pass filters $b^p$ and $b^n$ in a complex-valued tight framelet filter bank $\{a; b^p, b^n\}$ is the worst when $a$ is an orthogonal filter.

\begin{theorem}\label{thm:real}
Let $a, b^p, b^n\in \lp{2}$ such that $\{a; b^p, b^n\}$ is a tight framelet filter bank and the two high-pass filters $b^p$ and $b^n$ are real-valued (but the filter $a$ may be complex-valued). Then
\be \label{lowerbound:real}
\int_0^\pi \big[|\wh{b^p}(\xi+\pi)|^2+|\wh{b^n}(\xi)|^2\big] d\xi =\frac{1}{2} \int_0^\pi \big[2-
|\wh{a}(\xi)|^2-|\wh{a}(\xi+\pi)|^2\big] d\xi\ge \frac{\pi}{2},
\ee
where the equal sign holds if and only if $a$ is an orthogonal filter (that is, $|\wh{a}(\xi)|^2+|\wh{a}(\xi+\pi)|^2=1$ a.e. $\xi\in \R$).
\end{theorem}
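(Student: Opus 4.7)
The plan is to reduce the left-hand integral in \eqref{lowerbound:real} to an integral involving only $\wh{a}$, using the hypothesis that $b^p,b^n$ are real-valued together with the tight framelet identity \eqref{eq1}. The key observation is that realness is equivalent to $\wh{b^p}(-\xi)=\ol{\wh{b^p}(\xi)}$ and $\wh{b^n}(-\xi)=\ol{\wh{b^n}(\xi)}$, so both $|\wh{b^p}|^2$ and $|\wh{b^n}|^2$ are even, $2\pi$-periodic functions. Combining evenness with periodicity gives
\[
\int_0^\pi |\wh{b^p}(\xi+\pi)|^2\,d\xi=\int_\pi^{2\pi}|\wh{b^p}|^2\,d\xi=\int_{-\pi}^0|\wh{b^p}|^2\,d\xi=\int_0^\pi|\wh{b^p}(\xi)|^2\,d\xi,
\]
and likewise for $\wh{b^n}$. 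Therefore the left-hand side of \eqref{lowerbound:real} equals $\int_0^\pi\big(|\wh{b^p}(\xi)|^2+|\wh{b^n}(\xi)|^2\big)d\xi$.

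Next, I would apply \eqref{eq1} pointwise to rewrite this as $\int_0^\pi(1-|\wh{a}(\xi)|^2)\,d\xi=\pi-\int_0^\pi|\wh{a}(\xi)|^2\,d\xi$. To obtain the symmetric form in the statement, I would apply \eqref{eq2} in exactly the same way (again using evenness to shift the argument of $|\wh{b^p}|^2$ and $|\wh{b^n}|^2$ by $\pi$), which yields $\pi-\int_0^\pi|\wh{a}(\xi+\pi)|^2\,d\xi$ as a second expression for the same quantity. Comparing the two expressions automatically gives $\int_0^\pi|\wh{a}(\xi)|^2\,d\xi=\int_0^\pi|\wh{a}(\xi+\pi)|^2\,d\xi$, and averaging the two produces
\[
\int_0^\pi\bigl[|\wh{b^p}(\xi+\pi)|^2+|\wh{b^n}(\xi)|^2\bigr]d\xi=\tfrac{1}{2}\int_0^\pi\bigl[2-|\wh{a}(\xi)|^2-|\wh{a}(\xi+\pi)|^2\bigr]d\xi,
\]
which is the identity in \eqref{lowerbound:real}.

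Finally, for the bound $\ge \pi/2$, I would invoke the necessary condition $|\wh{a}(\xi)|^2+|\wh{a}(\xi+\pi)|^2\le 1$ a.e.\ $\xi\in\R$, which was already derived from \eqref{def:tffb} in the proof of Theorem~\ref{thm:lowerbound} (positivity of the determinant of the right-hand $2\times 2$ matrix). Integrating over $[0,\pi]$ gives $\int_0^\pi\bigl[|\wh{a}(\xi)|^2+|\wh{a}(\xi+\pi)|^2\bigr]d\xi\le\pi$, and substituting into the identity produces the lower bound $\pi/2$. Equality holds precisely when $|\wh{a}(\xi)|^2+|\wh{a}(\xi+\pi)|^2=1$ a.e.\ $\xi\in\R$, i.e., $a$ is an orthogonal filter.

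There is no real obstacle here; the whole argument is essentially bookkeeping. The one subtle point worth flagging is that the pointwise hypothesis of realness of $b^p$ and $b^n$ completely decouples $\xi$ from $\xi+\pi$ in the offending expression $|\wh{b^p}(\xi+\pi)|^2+|\wh{b^n}(\xi)|^2$ after integration over $[0,\pi]$, because evenness of $|\wh{b^p}|^2$ and $|\wh{b^n}|^2$ makes these integrals invariant under the shift $\xi\mapsto\xi+\pi$. This is precisely why no off-diagonal term from \eqref{eq3} ever enters the final computation, and why the bound degenerates to the trivial $\pi/2$ rather than anything smaller.
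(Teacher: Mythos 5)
Your proposal is correct and follows essentially the same route as the paper: both arguments hinge on the fact that realness of $b^p,b^n$ makes $|\wh{b^p}|^2,|\wh{b^n}|^2$ even, so that after combining with the diagonal equations \eqref{eq1}--\eqref{eq2} the arguments $\xi$ and $\xi+\pi$ decouple, and the bound $\pi/2$ then follows from the necessary condition $|\wh{a}(\xi)|^2+|\wh{a}(\xi+\pi)|^2\le 1$. The only cosmetic difference is that the paper symmetrizes pointwise (proving $B(\xi)+B(\pi-\xi)=2-|\wh{a}(\xi)|^2-|\wh{a}(\xi+\pi)|^2$ together with $|\wh{a}(-\xi)|=|\wh{a}(\xi)|$) before integrating, whereas you shift the integrals by a change of variables and average the two expressions coming from \eqref{eq1} and \eqref{eq2}, which yields the same identity.
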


\begin{proof} Define $B(\xi):=|\wh{b^p}(\xi+\pi)|^2+|\wh{b^n}(\xi)|^2$.
Note that a general filter $u$ has real coefficients if and only if $\ol{\wh{u}(\xi)}=\wh{u}(-\xi)$.
Therefore, we have $\wh{b^p}(\xi+\pi)=\wh{b^p}(\xi-\pi)=\ol{\wh{b^p}(\pi-\xi)}$. Hence,
$B(\xi)=|\wh{b^p}(\pi-\xi)|^2+|\wh{b^n}(\xi)|^2$.

By $|\wh{a}(\xi)|^2+|\wh{b^p}(\xi)|^2+|\wh{b^n}(\xi)|^2=1$, we have $|\wh{a}(\pi-\xi)|^2+|\wh{b^p}(\pi-\xi)|^2+|\wh{b^n}(\pi-\xi)|^2=1$. Therefore,
\be \label{B:eq1}
B(\xi)+B(\pi-\xi)=|\wh{b^p}(\pi-\xi)|^2+|\wh{b^n}(\xi)|^2+|\wh{b^p}(\xi)|^2+|\wh{b^n}(\pi-\xi)|^2
=2-|\wh{a}(\xi)|^2-|\wh{a}(\pi-\xi)|^2.
\ee
Note that
\[
1=|\wh{a}(-\xi)|^2+|\wh{b^p}(-\xi)|^2+|\wh{b^n}(-\xi)|^2
=|\wh{a}(-\xi)|^2+|\wh{b^p}(\xi)|^2+|\wh{b^n}(\xi)|^2=1+|\wh{a}(-\xi)|^2-|\wh{a}(\xi)|^2,
\]
from which we must have $|\wh{a}(-\xi)|=|\wh{a}(\xi)|$. Therefore, it follows from \eqref{B:eq1} that
\[
B(\xi)+B(\pi-\xi)=2-|\wh{a}(\xi)|^2-|\wh{a}(\xi+\pi)|^2,
\]
from which we have
\[
\int_0^\pi \big[ 2-|\wh{a}(\xi)|^2-|\wh{a}(\xi+\pi)|^2\big] d\xi=\int_0^\pi \big[ B(\xi)+B(\pi-\xi)\big] d\xi=2\int_0^\pi B(\xi) d\xi.
\]
Since $|\wh{a}(\xi)|^2+|\wh{a}(\xi+\pi)|^2\le 1$ a.e. $\xi\in \R$, we conclude from the above identity that \eqref{lowerbound:real} holds.
\end{proof}

\section{Structure of Finitely Supported Complex Tight Framelet Filter Banks}

In order to design finitely supported complex tight framelet filter banks $\{a; b^p, b^n\}$ with good directionality, we have to investigate the structure of all possible finitely supported complex-valued high-pass filters $b^p, b^n$ such that $\{a; b^p, b^n\}$ is a tight framelet filter bank. More precisely, from any given finitely supported filter $a\in \lp{0}$, we are interesting in finding all possible finitely supported complex tight framelet filter banks $\{a; b^p, b^n\}$ derived from a given low-pass filter $a$.
For prescribed filter lengths of $b^p$ and $b^n$, such a result enables us to find the best possible complex tight framelet filter bank $\{a; b^p, b^n\}$ with the best possible frequency separation, that is, $|\wh{b^p}(\xi+\pi)|^2+|\wh{b^n}(\xi)|^2\approx A(\xi), \xi\in [0,\pi]$.

To construct finitely supported tight framelet filter banks, it is convenient to use Laurent polynomials instead of $2\pi$-periodic trigonometric polynomials. Recall that $\lp{0}$ denotes the linear space of all finitely supported sequences on $\Z$. For a sequence $u=\{u(k)\}_{k\in \Z}\in \lp{0}$, its $z$-transform is a Laurent polynomial defined to be
\be \label{p:u}
\pu(z):=\sum_{k\in \Z} u(k)z^k, \qquad z\in \C \bs \{0\}.
\ee
Let $u: \Z \rightarrow \C^{r\times s}$ be a sequence of $r\times s$ matrices. We define $u^\star$ to be its associated adjoint sequence by $u^\star(k):=\ol{u(-k)}^\tp$, $k\in \Z$.
In terms of Fourier series, we have $\wh{u^\star}(\xi)=\ol{\hu(\xi)}^\tp$ and $\wh{u}(\xi)=\pu(e^{-i\xi})$.
Using Laurent polynomials, we have
\be \label{p:star}
\pu^\star(z):=[\pu(z)]^\star:=\sum_{k\in \Z} \ol{u(k)}^\tp z^{-k}, \qquad z\in \C \bs \{0\}.
\ee
In terms of Laurent polynomials, for $a, b_1, b_2\in \lp{0}$, $\{a; b_1, b_2\}$ is a tight framelet filter bank if
\be \label{def:tffb:lp}
\left[ \begin{matrix}
\pa(z) &\pb_1(z) &\pb_2(z)\\
\pa(-z) &\pb_1(-z) &\pb_2(-z)\end{matrix} \right]
\left[ \begin{matrix}
\pa(z) &\pb_1(z) &\pb_2(z)\\
\pa(-z) &\pb_1(-z) &\pb_2(-z)\end{matrix} \right]^\star=I_2
\ee
for all $z\in \C \bs \{0\}$, where $I_2$ is the $2\times 2$ identity matrix. For a $2\times 2$ matrix $\pU$ of Laurent polynomials, we say that $\pU$ is paraunitary if $\pU(z) \pU^\star(z)=I_2$ for all $z\in \T:=\{\zeta\in \C \setsp |\zeta|=1\}$, or equivalently, $\pU(e^{-i\xi})\ol{\pU(e^{-i\xi})}^\tp=I_2$ for all $\xi\in \R$.

For a Laurent polynomial $\pu$, we shall use the notation $\pu\equiv 0$ to mean that $\pu$ is identically zero, and the notation $\pu \not \equiv 0$ to mean that $\pu$ is not identically zero. We say that $u$ is an orthogonal filter if $\pu(z)\pu^\star(z)+\pu(-z)\pu^\star(-z)=1$ for all $z\in \C \bs \{0\}$.

The main result in this section is as follows:

\begin{theorem}\label{thm:tffb}
Let $a, b_1, b_2, b^p, b^n\in \lp{0}$ such that $\{a; b_1, b_2\}$ is a tight framelet filter bank and the filter $a$ is not identically zero.
Suppose that
\be \label{tffb:cond}
|\pa(z)|^2+|\pa(-z)|^2 \le 1, \qquad \forall\; z\in \T.
\ee
%
Then the following are equivalent:
\begin{enumerate}
\item[(i)] $\{a; b^p, b^n\}$ is a finitely supported tight framelet filter bank and
%
\be \label{det:relation}
\pb^p(z)\pb^n(-z)-\pb^p(-z)\pb^n(z)=\gl z^{2k}[\pb_1(z)\pb_2(-z)-\pb_1(-z)\pb_2(z)]
\ee
for some $k\in \Z$ and $\gl\in \T$. Remove condition \eqref{det:relation} if $a$ is an orthogonal filter.

\item[(ii)] There exists a $2\times 2$ paraunitary matrix $\pU$ of Laurent polynomials such that
\be \label{eqU}
\left[ \begin{matrix} \pb^p(z) &\pb^n(z)\end{matrix} \right]
=\left[ \begin{matrix} \pb_1(z) &\pb_2(z) \end{matrix} \right]
\pU(z^2), \qquad \forall\; z\in \C \bs \{0\}.
\ee
\end{enumerate}
\end{theorem}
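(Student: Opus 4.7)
For $(ii)\Rightarrow(i)$, set $B(z):=\begin{bmatrix}\pb_1(z)&\pb_2(z)\\\pb_1(-z)&\pb_2(-z)\end{bmatrix}$ and, analogously, $B^p(z)$ with $\pb^p,\pb^n$ in place of $\pb_1,\pb_2$. Because $\pU(z^2)$ is unchanged under $z\mapsto -z$, the hypothesis \eqref{eqU} lifts to $B^p(z)=B(z)\pU(z^2)$. Writing $v(z):=(\pa(z),\pa(-z))^{\tp}$, the tight framelet identity for $\{a;b_1,b_2\}$ is $vv^\star+BB^\star=I_2$, and paraunitarity of $\pU$ yields $vv^\star+B^p(B^p)^\star=vv^\star+B\pU\pU^\star B^\star=I_2$, so $\{a;b^p,b^n\}$ is a tight framelet filter bank. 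Taking determinants of $B^p=B\pU(z^2)$ and using that the determinant of a $2\times 2$ paraunitary Laurent polynomial matrix is a unimodular monomial recovers \eqref{det:relation}.

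For $(i)\Rightarrow(ii)$, the tight frame identities give $BB^\star=I-vv^\star=B^p(B^p)^\star$, and the formula $|\det B(z)|^2=1-|\pa(z)|^2-|\pa(-z)|^2$ naturally splits the argument by whether $a$ is orthogonal. If $a$ is orthogonal then $\det B\equiv 0$, and $\pb_0(z):=-z\,\ol{\pa(-z)}$ completes $a$ to an orthogonal wavelet filter bank: the matrix $M_0(z):=\begin{bmatrix}\pa(z)&\pb_0(z)\\\pa(-z)&\pb_0(-z)\end{bmatrix}$ is paraunitary and $M_0^{-1}=M_0^\star$ is a Laurent polynomial matrix. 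The Laurent polynomial matrix $M_0^{-1}(z)M(z)$ is then $2\times 3$ paraunitary with first column $(1,0)^{\tp}$, forcing the form $\begin{bmatrix}1&0&0\\0&\tilde\gl_1(z^2)&\tilde\gl_2(z^2)\end{bmatrix}$ with $|\tilde\gl_1|^2+|\tilde\gl_2|^2\equiv 1$ (dependence on $z^2$ comes from $M_0^{-1}(-z)M(-z)=M_0^{-1}(z)M(z)$). Thus $\pb_j(z)=\pb_0(z)\tilde\gl_j(z^2)$, and similarly $\pb^p=\pb_0\tilde\gl^p(z^2)$, $\pb^n=\pb_0\tilde\gl^n(z^2)$. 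For any unit row-vector $u=(u_1,u_2)$ of Laurent polynomials, the matrix $V_u(w):=\begin{bmatrix}u_1^\star&-u_2\\u_2^\star&u_1\end{bmatrix}$ is paraunitary Laurent polynomial with $u\,V_u=(1,0)$; taking $\pU:=V_{(\tilde\gl_1,\tilde\gl_2)}V_{(\tilde\gl^p,\tilde\gl^n)}^\star$ realizes \eqref{eqU}.

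If $a$ is not orthogonal, $\det B(z)\not\equiv 0$, and I define the rational matrix $\pU_0(z):=B(z)^{-1}B^p(z)$. With $P:=\begin{bmatrix}0&1\\1&0\end{bmatrix}$ one has $B(-z)=PB(z)$ and $B^p(-z)=PB^p(z)$, hence $\pU_0(-z)=\pU_0(z)$, so $\pU_0(z)=\pU(z^2)$ for a rational matrix $\pU(w)$. Paraunitarity on $\T$ follows from $BB^\star=B^p(B^p)^\star$ via $\pU_0\pU_0^\star=B^{-1}B^p(B^p)^\star B^{-\star}=I$, and \eqref{det:relation} forces $\det\pU(w)=\gl w^k$ to be a unimodular monomial.

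The main obstacle is showing $\pU(w)$ has no poles, i.e., is a Laurent polynomial matrix. Passing to the polyphase matrix $\mathcal{B}(w):=\begin{bmatrix}\pb_1^{(0)}(w)&\pb_2^{(0)}(w)\\\pb_1^{(1)}(w)&\pb_2^{(1)}(w)\end{bmatrix}$ (with $\det B(z)=-2z\det\mathcal{B}(z^2)$ and $\mathcal{B}^p$ defined analogously), the formula becomes $\pU=\operatorname{adj}(\mathcal{B})\mathcal{B}^p/\det\mathcal{B}$, so the task reduces to showing $\det\mathcal{B}$ divides each entry of $\operatorname{adj}(\mathcal{B})\mathcal{B}^p$ in the Laurent polynomial ring. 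The key observation is that at every zero $w_0$ of $\det\mathcal{B}$ the columns of $\mathcal{B}^p(w_0)$ lie in the column span of $\mathcal{B}(w_0)$: for $w_0\in\T$ this is the standard fact that two matrices with equal Gram matrices $\mathcal{B}\mathcal{B}^\star(w_0)=\mathcal{B}^p(\mathcal{B}^p)^\star(w_0)$ share a column space, while for off-circle $w_0$ one writes $\mathcal{B}(w_0)=uv^{\tp}$ (and $\mathcal{B}(1/\ol{w_0})=u'v'^{\tp}$ if also singular) as rank-one outer products and argues from the polynomial identity $\mathcal{B}(w_0)\ol{\mathcal{B}(1/\ol{w_0})}^{\tp}=\mathcal{B}^p(w_0)\ol{\mathcal{B}^p(1/\ol{w_0})}^{\tp}$ that the left factor of $\mathcal{B}^p(w_0)$ is proportional to $u$. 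Since $\operatorname{adj}(\mathcal{B}(w_0))$ annihilates the column span of $\mathcal{B}(w_0)$, we deduce $\operatorname{adj}(\mathcal{B})(w_0)\mathcal{B}^p(w_0)=0$; combined with the matching orders of vanishing imposed by \eqref{det:relation}, this yields the required divisibility, so $\pU$ is Laurent polynomial and \eqref{eqU} follows.
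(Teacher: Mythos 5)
Your direction (ii)$\Rightarrow$(i) is fine, and your treatment of the case when $a$ is an orthogonal filter (completing $a$ by $\pb_0(z)=-z\pa^\star(-z)$ and reading off that the $2\times 3$ matrix $M_0^{-1}M$ has the form $\left[\begin{smallmatrix}1&0&0\\ 0&\tilde\gl_1(z^2)&\tilde\gl_2(z^2)\end{smallmatrix}\right]$) is correct and is a clean alternative to the paper's rank-one factorization via Lemma~\ref{lem:det0}. The genuine gap is in the main case, $a$ not orthogonal, at the divisibility step. Your argument is purely pointwise and of first order: showing $\operatorname{adj}(\mathcal{B})(w_0)\mathcal{B}^p(w_0)=0$ at each zero $w_0$ of $\det\mathcal{B}$ only gives that every entry of $\operatorname{adj}(\mathcal{B})\mathcal{B}^p$ vanishes to order one there, whereas divisibility by $\det\mathcal{B}$ requires vanishing to the full multiplicity of $w_0$. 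Multiple zeros really occur in the situations the theorem must cover: since $\df_{b_1,b_2}(w)=-2w\det\mathcal{B}(w)$, the initial filter banks in Examples~\ref{ex3} and \ref{ex4}, where both $\pb_1$ and $\pb_2$ contain the factor $(z-1)^2$, have $\det\mathcal{B}$ with a double zero at $w=1$. The closing appeal to ``matching orders of vanishing imposed by \eqref{det:relation}'' is not a valid inference: \eqref{det:relation} tells you $\det\bigl(\operatorname{adj}(\mathcal{B})\mathcal{B}^p\bigr)=\gl w^k(\det\mathcal{B})^2$ vanishes to order $2m$, but entries vanishing to order $1$ together with the determinant vanishing to order $2m$ does not force the entries to vanish to order $m$ (consider $\left[\begin{smallmatrix}t&t\\ -t&t\end{smallmatrix}\right]$ with $m=2$).

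Moreover, your justification of the column-span containment at off-circle zeros is itself incomplete, because the identity $\mathcal{B}(w_0)\mathcal{B}^\star(w_0)=\mathcal{B}^p(w_0)\mathcal{B}^{p\star}(w_0)$ forces nothing when both products degenerate to the zero matrix. A concrete failure: $\mathcal{B}(w)=\operatorname{diag}(w-a,\,1-\ol{a}w)$ and $\mathcal{B}^p(w)=\operatorname{diag}(1-\ol{a}w,\,w-a)$ with $0<|a|\ne 1$ have equal Gram matrices and equal determinants, yet at $w_0=a$ the column spans are different and $\mathcal{B}^{-1}\mathcal{B}^p$ has poles, so no paraunitary Laurent polynomial $\pU$ exists. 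What rules this out in the actual setting is the special structure of $\cN_a$, namely $\operatorname{trace}(\cN_a)-2\det(\cN_a)=\tfrac12$, so the entries of $\cN_a$ have no common zero; hence $\cN_a(w_0)$ is nonzero of rank one at every zero of $\det\mathcal{B}$, which is what forces the column spans of $\mathcal{B}(w_0)$ and $\mathcal{B}^p(w_0)$ to coincide. You never establish or use this, and even with it the multiplicity problem above remains. The paper resolves both issues simultaneously by exploiting exactly this gcd fact to put $\cN_a$ into Smith normal form $\operatorname{diag}(1,\pd)$ and then invoking the essential uniqueness of the factorization (Theorem~\ref{thm:uni}, built on Proposition~\ref{prop:UV} and Corollary~\ref{cor:UV12}), where divisibility is obtained by polynomial algebra rather than pointwise evaluation; to repair your route you would need a local (order-of-vanishing) version of your span argument, e.g.\ a local Smith form of $\mathcal{B}$ at each zero, at which point you are essentially reconstructing the paper's proof.
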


From \eqref{def:tffb:lp}, we see that $\{a; b_1, b_2\}$ is a tight framelet filter bank if and only if
\be \label{tffb:2}
\left[ \begin{matrix}
\pb_1(z) &\pb_2(z)\\
\pb_1(-z) &\pb_2(-z)\end{matrix} \right]
\left[ \begin{matrix}
\pb_1(z) &\pb_2(z)\\
\pb_1(-z) &\pb_2(-z)\end{matrix} \right]^\star=\cM_{a}(z)
\ee
with
\be \label{Ma}
\cM_{a}(z):=\left [ \begin{matrix} 1-\pa(z) \pa^\star(z) &-\pa(z) \pa^\star(-z)\\
-\pa(-z) \pa^\star(z) &1-\pa(-z) \pa^\star(-z)\end{matrix}\right].
\ee
We define a Laurent polynomial $\df_{b_1, b_2}$ by
\be \label{db}
\df_{b_1, b_2}(z^2):=z [\pb_1(z)\pb_2(-z)-\pb_1(-z)\pb_2(z)].
\ee
Note that $\df_{b_1, b_2}$ is a well-defined Laurent polynomial.
Then it follows from \eqref{tffb:2} that
\be \label{df}
|\df_{b_1, b_2}(z^2)|^2=\det(\cM_a(z))=1-|\pa(z)|^2-|\pa(-z)|^2, \qquad \forall\; z\in \T.
\ee
If $a$ is an orthogonal filter, then we must have $\df_{b_1, b_2}\equiv 0$. For $\df_{b_1, b_2}\not \equiv 0$, by Fej\'er-Riesz lemma, we see that up to a monomial factor there are essentially only finitely many Laurent polynomials $\df_{b_1, b_2}$ satisfying \eqref{df}. As we shall discuss in Section~4, all finitely supported complex-valued tight framelet filter banks $\{a; b_1, b_2\}$ having the shortest possible filter supports can be derived from the low-pass filter $a$ by solving a system of linear equations.
Consequently, Theorem~\ref{thm:tffb} allows us to obtain all finitely supported complex-valued tight framelet filter banks $\{a; b_1, b_2\}$ with the low-pass filter $a$ being given in advance.
Using Theorem~\ref{thm:tffb}, we shall discuss in Section~4 how to find the best possible complex tight framelet filter bank $\{a; b^p, b^n\}$ with the best possible frequency separation for any prescribed filter lengths of the high-pass filters $b^p$ and $b^n$.

To prove Theorem~\ref{thm:tffb}, we need several auxiliary results.
Let us first introduce some definitions.
We say that $\pu$ is a trivial factor if it is a nonzero monomial, that is, $\pu(z)=\gl z^k$ for some $\gl\in \C \bs \{0\}$ and $k\in \Z$.
For two Laurent polynomials $\pu$ and $\pv$, by $\gcd(\pu, \pv)$ we denote the greatest common factor of $\pu$ and $\pv$.
In particular, we use the notation $\gcd(\pu, \pv)=1$ to mean that $\pu$ and $\pv$ do not have a nontrivial common factor.

\begin{lemma}\label{lem:det0}
Let $\pp_1, \pp_2, \pp_3, \pp_4$ be Laurent polynomials. Define
\be \label{P}
\pP(z):=\left [\begin{matrix} \pp_1(z) &\pp_3(z)\\ \pp_2(z) &\pp_4(z) \end{matrix}\right].
\ee
Then the following are equivalent:
\begin{enumerate}
\item $\det (\pP(z))=0$ for all $z\in \C \bs \{0\}$.

\item $\pp_1(z) \pp_4(z)-\pp_2(z)\pp_3(z)=0$ for all $z\in \C \bs \{0\}$.

\item There exist Laurent polynomials $\pq_1, \pq_2, \pq_3, \pq_4$ such that
\be \label{pq}
\pp_1(z)=\pq_1(z) \pq_3(z), \quad \pp_2(z)=\pq_2(z) \pq_3(z), \qquad
 \pp_3(z)=\pq_1(z) \pq_4(z), \quad \pp_4(z)=\pq_2(z) \pq_4(z).\
\ee

\item There exist Laurent polynomials $\pq_1, \pq_2, \pq_3, \pq_4$ such that
\[
\pP(z)=\left [ \begin{matrix} \pq_1(z)\\ \pq_2(z)\end{matrix} \right]
\left[ \begin{matrix} \pq_3(z) &\pq_4(z) \end{matrix}\right].
\]
\end{enumerate}
\end{lemma}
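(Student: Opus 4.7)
The plan is to reduce the lemma to its one nontrivial implication and then handle it by a GCD argument in the Laurent polynomial ring.

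First I would dispose of the easy equivalences. The implication $(1) \Leftrightarrow (2)$ is the definition of the determinant of a $2 \times 2$ matrix, so there is nothing to prove. The implication $(3) \Leftrightarrow (4)$ is immediate from computing the outer product $\bigl[\pq_1\; \pq_2\bigr]^\tp \bigl[\pq_3\;\pq_4\bigr]$, whose entries are exactly $\pq_1\pq_3, \pq_2\pq_3, \pq_1\pq_4, \pq_2\pq_4$. The implication $(3) \Rightarrow (2)$ is a one-line check: $(\pq_1\pq_3)(\pq_2\pq_4) - (\pq_2\pq_3)(\pq_1\pq_4) \equiv 0$. Hence the whole content of the lemma is the implication $(2) \Rightarrow (3)$.

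For $(2) \Rightarrow (3)$, the key tool is that the Laurent polynomial ring $\C[z,z^{-1}]$ is a UFD (indeed a PID, being a localization of $\C[z]$), so GCDs exist and are well defined up to the units $\gl z^k$. I would first dispose of the trivial case where all four $\pp_i$ vanish by taking all $\pq_i \equiv 0$. Otherwise, at least one column of $\pP$ is nontrivial; after possibly swapping columns (which in the conclusion corresponds to interchanging $\pq_3$ and $\pq_4$), I may assume $(\pp_1, \pp_2) \not\equiv (0,0)$. Set
\[
\pq_3 := \gcd(\pp_1, \pp_2), \qquad \pp_1 = \pq_1 \pq_3, \qquad \pp_2 = \pq_2 \pq_3,
\]
so that $\gcd(\pq_1, \pq_2) = 1$. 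Substituting into the hypothesis $\pp_1 \pp_4 = \pp_2 \pp_3$ and cancelling the nonzero factor $\pq_3$, I obtain $\pq_1 \pp_4 = \pq_2 \pp_3$. Since $\gcd(\pq_1, \pq_2) = 1$, it follows that $\pq_1$ divides $\pp_3$, say $\pp_3 = \pq_1 \pq_4$ for some Laurent polynomial $\pq_4$. Plugging back gives $\pq_1 \pp_4 = \pq_2 \pq_1 \pq_4$, hence $\pp_4 = \pq_2 \pq_4$, completing the factorization required by (3).

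The main potential obstacle is clerical rather than conceptual: making sure the GCD machinery is used legitimately in $\C[z,z^{-1}]$ rather than in $\C[z]$, and handling the degenerate case where an entire column (or row) of $\pP$ vanishes. The first is dealt with by invoking the UFD/PID property of the Laurent polynomial ring, or alternatively by multiplying through by a common power $z^N$ so that all four entries become ordinary polynomials in $\C[z]$, applying the argument there, and then dividing $z^N$ back at the end. The degenerate case is handled by the initial column-swap reduction and by allowing any of the $\pq_i$ to be zero when appropriate. Beyond these bookkeeping points, no further ingredients are needed.
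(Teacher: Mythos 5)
Your proposal is correct and takes essentially the same route as the paper: reduce everything to (2)$\Rightarrow$(3) and run the gcd/coprimality argument for Laurent polynomials with $\pq_3:=\gcd(\pp_1,\pp_2)$, $\pq_1\pp_4=\pq_2\pp_3$, and $\pq_1\mid \pp_3$. The one point the paper treats explicitly that you gloss over is the subcase $\pq_1\equiv 0$ (i.e. $\pp_1\equiv 0$ but $\pp_2\not\equiv 0$), where your final cancellation from $\pq_1\pp_4=\pq_2\pq_1\pq_4$ to $\pp_4=\pq_2\pq_4$ is vacuous; there one should instead argue with $\pq_2$ (take $\pq_4:=\pp_4/\pq_2$, noting $\pq_2$ is then a unit), which is exactly the role of the paper's ``without loss of generality $\pq_1\not\equiv 0$'' step.
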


\begin{proof} If $\pP$ is identically zero, then all claims hold obviously. Hence, we assume that at least one of $\pp_1, \pp_2, \pp_3, \pp_4$ is not identically zero.
It is trivial that (1)$\imply$(2) and (3)$\imply$(4)$\imply$(1). To complete the proof, it suffices to prove (2)$\imply$(3).

If both $\pp_1$ and $\pp_2$ are identically zero, then the claim in item (3) obviously holds by taking $\pq_1=\pp_3, \pq_2=\pp_4, \pq_3=0$ and $\pq_4=1$. Now we assume that either $\pp_1\not \equiv 0$ or $\pp_2\not \equiv 0$, that is, at least one of $\pp_1$ and $\pp_2$ is not identically zero. Define
\be \label{q123}
\pq_3:=\gcd(\pp_1, \pp_2)\quad \mbox{and}\quad \pq_1:=\pp_1/\pq_3,\quad \pq_2:=\pp_2/\pq_3.
\ee
Since $\pq_3$ is not identically zero, all $\pq_1, \pq_2, \pq_3$ are well-defined Laurent polynomials and at least one of $\pq_1$ and $\pq_2$ are not identically zero.
Moreover, $\pp_1=\pq_1\pq_3$, $\pp_2=\pq_2\pq_3$, and $\gcd(\pq_1, \pq_2)=1$, which means that $\pq_1$ and $\pq_2$ have no nontrivial common factor. By item (2), we have
\[
0=\pp_1 \pp_4-\pp_2 \pp_3=\pq_3(\pq_1\pp_4-\pq_2\pp_3).
\]
Since $\pq_3$ is not identically zero, from the above identity we must have
$\pq_1\pp_4=\pq_2 \pp_3$.
Because at least one of $\pq_1$ and $\pq_2$ is not identically zero, without loss of generality, we may assume that $\pq_1$ is not identically zero.
By $\gcd(\pq_1, \pq_2)=1$ and $\pq_1\pp_4=\pq_2 \pp_3$, we must have $\pq_1 \mid \pp_3$.
Then we define $\pq_4=\pp_3/\pq_1$, which is a well-defined Laurent polynomial. By $\pq_1\pp_4=\pq_2 \pp_3$, we see that $\pp_4=\pq_2 \pq_4$.
Using \eqref{q123}, now one can directly check that \eqref{pq} holds.
Therefore, we complete the proof of (2)$\imply$(3).
\end{proof}

\begin{prop}\label{prop:UV}
Let $\pQ$ and $\pV$ be $2\times 2$ matrices of Laurent polynomials.
If
\be \label{VQ}
\pV(z) \pQ(z)=\left[ \begin{matrix} \pc(z) &0\\ 0 &\pd(z)\end{matrix} \right],
\ee
%
then there exist Laurent polynomials $\pu_1, \pu_2, \pu_3, \pu_4, \pv_1, \pv_2, \pv_3, \pv_4$ such that
\be \label{VQ:expr}
\pV(z)=\left [ \begin{matrix} \pv_1(z) &0\\ 0 &\pv_2(z)\end{matrix} \right]
\left[ \begin{matrix} \pu_1(z) &-\pu_3(z)\\ \pu_2(z) &\pu_4(z)\end{matrix}\right],
\qquad
\pQ(z)=
\left[ \begin{matrix} \pu_4(z) &\pu_3(z)\\ -\pu_2(z) &\pu_1(z)\end{matrix}\right]
\left [ \begin{matrix} \pv_3(z) &0\\ 0 &\pv_4(z)\end{matrix} \right]
\ee
and
\be \label{VQ:d}
\pc(z)=\pv_1(z)\pv_3(z)\big(\pu_1(z)\pu_4(z)+\pu_2(z)\pu_3(z)\big),\qquad
\pd(z)=\pv_2(z)\pv_4(z)\big(\pu_1(z)\pu_4(z)+\pu_2(z)\pu_3(z)\big).
\ee
If $\pc=1$, then we can particularly take $\pv_1=\pv_3=1$ so that $\pu_1(z)\pu_4(z)+\pu_2(z)\pu_3(z)=1$ and $\pd(z)=\pv_2(z)\pv_4(z)$.
\end{prop}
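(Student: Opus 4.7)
The plan is to interpret the two off-diagonal vanishing conditions in $\pV(z)\pQ(z)=\operatorname{diag}(\pc(z),\pd(z))$ as two \emph{independent} rank-one factorization problems, each handled directly by Lemma~\ref{lem:det0}, and then to reassemble the resulting factors into the claimed block structure.

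Writing $\pV=[\alpha_{ij}]_{i,j=1,2}$ and $\pQ=[\beta_{ij}]_{i,j=1,2}$, vanishing of $(\pV\pQ)_{1,2}$ and $(\pV\pQ)_{2,1}$ yields
\begin{equation*}
\alpha_{11}\beta_{12}+\alpha_{12}\beta_{22}=0,\qquad
\alpha_{21}\beta_{11}+\alpha_{22}\beta_{21}=0.
\end{equation*}
The first equation is equivalent to $\det\!\left[\begin{smallmatrix}\alpha_{11} & \alpha_{12}\\ -\beta_{22} & \beta_{12}\end{smallmatrix}\right]=0$, so Lemma~\ref{lem:det0} provides a rank-one factorization; naming its four resulting factors (with signs absorbed) as $\pv_1,\pu_1,\pu_3,\pv_4$ gives
\begin{equation*}
\alpha_{11}=\pv_1\pu_1,\qquad \alpha_{12}=-\pv_1\pu_3,\qquad \beta_{22}=\pv_4\pu_1,\qquad \beta_{12}=\pv_4\pu_3.
\end{equation*}
Likewise, applying Lemma~\ref{lem:det0} to $\det\!\left[\begin{smallmatrix}\alpha_{22} & \alpha_{21}\\ \beta_{11} & -\beta_{21}\end{smallmatrix}\right]=0$ produces $\pv_2,\pu_4,\pu_2,\pv_3$ with
\begin{equation*}
\alpha_{22}=\pv_2\pu_4,\qquad \alpha_{21}=\pv_2\pu_2,\qquad \beta_{11}=\pv_3\pu_4,\qquad \beta_{21}=-\pv_3\pu_2.
\end{equation*}
Packaging these eight identities into matrix form is exactly the factorization \eqref{VQ:expr}, and a direct computation of the resulting product recovers the diagonal entries displayed in \eqref{VQ:d}.

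For the last claim, if $\pc(z)\equiv 1$, then $\pv_1\pv_3(\pu_1\pu_4+\pu_2\pu_3)=1$, so each of $\pv_1$, $\pv_3$, and $\pu_1\pu_4+\pu_2\pu_3$ must be a unit in the Laurent polynomial ring, i.e.\ a nonzero monomial. One can then absorb the monomial $\pv_1$ into the pair $(\pu_1,\pu_3)$ while dividing $\pv_4$ by $\pv_1$: this replacement leaves both the matrix $\pV$ and the second column of $\pQ$ unchanged, so \eqref{VQ:expr} still holds with $\pv_1$ reduced to $1$. An analogous rescaling absorbs $\pv_3$ into $(\pu_2,\pu_4)$ with a compensating factor in $\pv_2$. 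After these normalizations, we may assume $\pv_1=\pv_3=1$, which forces $\pu_1\pu_4+\pu_2\pu_3=1$ and $\pd=\pv_2\pv_4$, as desired.

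The main subtlety to handle is degeneracy: if a row of $\pV$ or a column of $\pQ$ vanishes identically, Lemma~\ref{lem:det0} still yields valid factors, but some $\pv_i$ may be forced to be $0$, in which case its companion $\pu_j$'s are instead pinned down from the non-vanishing partner block. Because the two applications of Lemma~\ref{lem:det0} act on disjoint sets of entries---row~$1$ of $\pV$ together with column~$2$ of $\pQ$ versus row~$2$ of $\pV$ together with column~$1$ of $\pQ$---the two resulting groups $(\pu_1,\pu_3,\pv_1,\pv_4)$ and $(\pu_2,\pu_4,\pv_2,\pv_3)$ cannot conflict, and the assembled \eqref{VQ:expr}--\eqref{VQ:d} therefore hold unconditionally.
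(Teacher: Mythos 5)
Your proposal is correct and follows essentially the same route as the paper: two applications of Lemma~\ref{lem:det0} to the vanishing off-diagonal entries (acting on row~1 of $\pV$ with column~2 of $\pQ$, and row~2 of $\pV$ with column~1 of $\pQ$), assembly into \eqref{VQ:expr}--\eqref{VQ:d}, and for $\pc=1$ the observation that $\pv_1$, $\pv_3$, $\pu_1\pu_4+\pu_2\pu_3$ are monomials followed by absorbing them into the $\pu_i$'s exactly as in the paper's redefinition step.
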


\begin{proof}
By our assumption in \eqref{VQ}, we have $[\pV(z)\pQ(z)]_{1,2}(z)=\pV_{1,1}(z) \pQ_{1,2}(z)+\pV_{1,2}(z) \pQ_{2,2}(z)=0$ for all $z\in \C \bs \{0\}$.
By Lemma~\ref{lem:det0}, there exist Laurent polynomials $\pu_1, \pu_3, \pv_1, \pv_4$ such that
\[
\left[ \begin{matrix} \pV_{1,1}(z) &\pV_{1,2}(z)\\ -\pQ_{2,2}(z) &\pQ_{1,2}(z)\end{matrix}\right]=\left[
\begin{matrix} \pv_1(z)\\ -\pv_4(z)\end{matrix}\right] \left[ \begin{matrix} \pu_1(z) &-\pu_3(z)\end{matrix}\right].
\]
Similarly, by our assumption in \eqref{VQ}, we have $[\pV(z)\pQ(z)]_{2,1}(z)=\pV_{2,1}(z) \pQ_{1,1}(z)+\pV_{2,2}(z) \pQ_{2,1}(z)=0$ for all $z\in \C \bs \{0\}$.
By Lemma~\ref{lem:det0}, there exist Laurent polynomials $\pu_2, \pu_4, \pv_2, \pv_3$ such that
\[
\left[ \begin{matrix} \pV_{2,1}(z) &\pV_{2,2}(z)\\ -\pQ_{2,1}(z) &\pQ_{1,1}(z)\end{matrix}\right]=\left[
\begin{matrix} \pv_2(z)\\ \pv_3(z)\end{matrix}\right] \left[ \begin{matrix} \pu_2(z) &\pu_4(z)\end{matrix}\right].
\]
Now we can directly check that both \eqref{VQ:expr} and \eqref{VQ:d} are satisfied.

If $\pc=1$, then it follows from \eqref{VQ:d} that all $\pv_1, \pv_3$ and $\pu_1\pu_4+\pu_2\pu_3$ must be monomials. Now it follows directly from \eqref{VQ:expr} that
\[
\pV(z)=\left [ \begin{matrix} 1 &0\\ 0 &\pv_2(z)/\pv_3(z)\end{matrix} \right]
\left[ \begin{matrix} \pu_1(z)\pv_1(z) &-\pu_3(z)\pv_1(z)\\ \pu_2(z)\pv_3(z) &\pu_4(z)\pv_3(z)\end{matrix}\right]
\]
and
\[
\pQ(z)=
\left[ \begin{matrix} \pu_4(z)\pv_3(z) &\pu_3(z)\pv_1(z)\\ -\pu_2(z)\pv_3(z) &\pu_1(z)\pv_1(z)\end{matrix}\right]
\left [ \begin{matrix} 1 &0\\ 0 &\pv_4(z)/\pv_1(z)\end{matrix} \right].
\]
Redefine $\pu_1, \pu_2, \pu_3, \pu_4, \pv_2, \pv_4$ as $\pu_1\pv_1, \pu_2\pv_3, \pu_3\pv_1, \pu_4\pv_3, \pv_2/\pv_3, \pv_4/\pv_1$, respectively. We now see that the claim holds for the particular case of $\pc=1$.
\end{proof}

As a direct consequence of Proposition~\ref{prop:UV}, we have the following two corollaries.

\begin{cor}\label{cor:paraunitary}
Let $\pP$ be a $2\times 2$ matrix of Laurent polynomials defined in \eqref{P}. Then $\pP$ is paraunitary, that is, $\pP(z) \pP^\star(z)=I_2$ for all $z\in \C \bs \{0\}$, if and only if
\be \label{pu:u12}
\pp_3(z)=-\gl z^k \pp_2^\star(z),\quad \pp_4(z)=\gl z^k \pp_1^\star(z), \quad \pp_1(z)\pp_1^\star(z)+\pp_2(z)\pp_2^\star(z)=1 \quad \mbox{with}\quad \gl \in \T, k\in \Z.
\ee
\end{cor}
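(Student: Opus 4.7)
The plan is to read the structural identities \eqref{pu:u12} directly off the classical $2\times 2$ inverse formula, once we force $\det\pP$ to be a unit monomial. The two directions of the `if and only if' split cleanly.

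For the sufficient direction, I would substitute $\pp_3(z)=-\gl z^k \pp_2^\star(z)$ and $\pp_4(z)=\gl z^k \pp_1^\star(z)$ into the product $\pP(z)\pP^\star(z)$. Using $\gl\ol{\gl}=1$ and $z^k(z^k)^\star=1$, each diagonal entry collapses to $\pp_1\pp_1^\star+\pp_2\pp_2^\star$ which equals $1$ by hypothesis, while each off-diagonal entry telescopes to zero. This direction is pure bookkeeping. For the necessary direction, I would start from $\pP\pP^\star=I_2$ and take determinants to obtain $\det(\pP)(z)\cdot(\det\pP)^\star(z)\equiv 1$. A Laurent polynomial $\pu$ satisfying $\pu\pu^\star\equiv 1$ must be a monomial of unit modulus: matching the highest- and lowest-degree coefficients on both sides of $\pu\pu^\star=1$ forces $\pu$ to have a single nonzero coefficient, of modulus one. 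Hence $\det\pP(z)=\gl z^k$ for some $\gl\in\T$ and $k\in\Z$, so $\pP$ is invertible over the ring of Laurent polynomials and the relation $\pP\pP^\star=I_2$ rearranges to $\pP^\star=\pP^{-1}$.

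Now I would invoke Cramer's rule,
\[
\pP^\star(z)=\pP^{-1}(z)=\frac{1}{\gl z^k}\left[\begin{matrix} \pp_4(z) & -\pp_3(z)\\ -\pp_2(z) & \pp_1(z)\end{matrix}\right],
\]
and match entries: the $(1,1)$ and $(1,2)$ positions yield $\pp_4=\gl z^k \pp_1^\star$ and $\pp_3=-\gl z^k \pp_2^\star$, while the other two entries are consistent by adjoining these (using $|\gl|=1$). Finally, substituting these back into the $(1,1)$ entry $\pp_1\pp_1^\star+\pp_3\pp_3^\star=1$ of $\pP\pP^\star=I_2$, and simplifying $\pp_3\pp_3^\star=|\gl|^2 \pp_2^\star\pp_2=\pp_2\pp_2^\star$, gives the third identity $\pp_1\pp_1^\star+\pp_2\pp_2^\star=1$.

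The only step requiring genuine attention is the determinantal claim that a Laurent polynomial $\pu$ with $\pu\pu^\star\equiv 1$ is forced to be a monomial of unit modulus; I expect this to be the main (though small) obstacle, and it is settled by a one-line degree argument. All other steps are routine manipulations with the $2\times 2$ cofactor formula. As an alternative route, one could invoke Proposition~\ref{prop:UV} with $\pV=\pP$, $\pQ=\pP^\star$, $\pc=\pd=1$ to extract the same factorization, but the direct inverse-formula approach is shorter and keeps the monomial $\gl z^k$ explicit throughout.
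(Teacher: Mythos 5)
Your proof is correct, but it takes a different route from the paper. The paper deduces the necessity direction in one stroke from Proposition~\ref{prop:UV}: setting $\pV=\pP$, $\pQ=\pP^\star$, the identity $\pV(z)\pQ(z)=I_2$ is the case $\pc=\pd=1$ of that proposition, and the stated factorization of $\pV$ and $\pQ$ immediately forces the relations \eqref{pu:u12}; the sufficiency direction is the same direct verification you give. You instead argue from scratch: taking determinants in $\pP(z)\pP^\star(z)=I_2$ gives $\det(\pP)(z)\,(\det\pP)^\star(z)\equiv 1$, the top/bottom-degree coefficient comparison shows any Laurent polynomial $\pu$ with $\pu\pu^\star\equiv 1$ is a monomial $\gl z^k$ with $\gl\in\T$, and then the $2\times 2$ adjugate formula for $\pP^{-1}=\pP^\star$ yields $\pp_4=\gl z^k\pp_1^\star$, $\pp_3=-\gl z^k\pp_2^\star$, after which the $(1,1)$ entry of $\pP\pP^\star=I_2$ gives $\pp_1\pp_1^\star+\pp_2\pp_2^\star=1$. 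All of these steps check out (including the implicit identity $\det(\pP^\star)=(\det\pP)^\star$ and the cancellation $\pp_3\pp_3^\star=\pp_2\pp_2^\star$), so the argument is complete. What each approach buys: yours is self-contained and more elementary, and it exhibits the unit $\gl z^k$ concretely as $\det\pP$, which is conceptually clean; the paper's version is shorter on the page because Proposition~\ref{prop:UV} has already been proved and is the workhorse reused in Corollary~\ref{cor:UV12} and Theorem~\ref{thm:uni}, so the corollary there is deliberately kept as a two-line consequence of that machinery.
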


\begin{proof} Let $\pQ$ and $\pV$ be the $2\times 2$ matrix of Laurent polynomials defined by
\[
\pV(z):=\pP(z) \quad \mbox{and} \quad
\pQ(z):=\pP^\star(z)=\left[ \begin{matrix} \pp^\star_1(z) &\pp^\star_2(z)\\ \pp_3^\star(z)
&\pp_4^\star(z)\end{matrix} \right].
\]
If $\pP$ is paraunitary, then $\pV(z)\pQ(z)=I_2$. By Proposition~\ref{prop:UV} with $\pc=1$, we see that \eqref{pu:u12} must hold.

Conversely, if \eqref{pu:u12} is satisfied, then we can directly check that $\pP$ is a paraunitary matrix.
\end{proof}

\begin{cor}\label{cor:UV12}
Let $\pQ, \pV, \mathring{\pQ}, \mathring{\pV}$ be $2\times 2$ matrices of Laurent polynomials. If
\be \label{VQ12:cond1}
\pV(z)\pQ(z)=\left[ \begin{matrix} 1 &0\\ 0 &\pd(z)\end{matrix}\right]
=\mathring{\pV}(z)\mathring{\pQ}(z)
\ee
and
\be \label{VQ12:cond2}
\det( \mathring{\pV}(z))= \gl z^k \det(\pV(z)) \qquad \mbox{for some}\;\; \gl\in \C \bs \{0\}, k\in \Z.
\ee
Then there exists a $2\times 2$ matrix $\pU$ of Laurent polynomials such that $\det (\pU(z))=\gl z^k$ and
\be \label{V12}
\mathring{\pV}(z)=\pV(z) \pU(z).
\ee
\end{cor}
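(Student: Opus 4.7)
The plan is to apply Proposition~\ref{prop:UV} (in its ``$\pc = 1$'' form) to both products $\pV\pQ$ and $\mathring{\pV}\mathring{\pQ}$, and then construct $\pU$ as the formal quotient $\pV^{-1}\mathring{\pV}$, showing after the fact that the determinant hypothesis forces this quotient to have Laurent-polynomial entries. First, applying Proposition~\ref{prop:UV} to $\pV\pQ = \operatorname{diag}(1,\pd)$ yields Laurent polynomials $\pu_1, \pu_2, \pu_3, \pu_4$ with $\pu_1\pu_4 + \pu_2\pu_3 = 1$ and $\pv_2, \pv_4$ with $\pv_2\pv_4 = \pd$, such that
\[
\pV(z) = \left[\begin{matrix} 1 & 0 \\ 0 & \pv_2(z) \end{matrix}\right]\left[\begin{matrix} \pu_1(z) & -\pu_3(z) \\ \pu_2(z) & \pu_4(z) \end{matrix}\right], \qquad \pQ(z) = \left[\begin{matrix} \pu_4(z) & \pu_3(z) \\ -\pu_2(z) & \pu_1(z) \end{matrix}\right]\left[\begin{matrix} 1 & 0 \\ 0 & \pv_4(z) \end{matrix}\right].
\]
Applied analogously to $\mathring{\pV}\mathring{\pQ}$, it produces ``ringed'' analogues $\mathring{\pu_i}$ and $\mathring{\pv_2}, \mathring{\pv_4}$ with the same structure.

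Next, reading determinants off these factorizations gives $\det(\pV) = \pv_2$ and $\det(\mathring{\pV}) = \mathring{\pv_2}$, so the hypothesis \eqref{VQ12:cond2} rephrases as the key identity $\mathring{\pv_2}(z) = \gl z^k\, \pv_2(z)$. Assuming $\pd \not\equiv 0$ (so $\pv_2\not\equiv 0$), formally inverting the factorization of $\pV$ gives
\[
\pV(z)^{-1} = \left[\begin{matrix} \pu_4(z) & \pu_3(z)/\pv_2(z) \\ -\pu_2(z) & \pu_1(z)/\pv_2(z) \end{matrix}\right],
\]
and multiplying into the factorization of $\mathring{\pV}$ while replacing every occurrence of $\mathring{\pv_2}/\pv_2$ with $\gl z^k$ yields
\[
\pU(z) := \pV(z)^{-1}\mathring{\pV}(z) = \left[\begin{matrix} \pu_4\mathring{\pu_1} + \gl z^k\,\pu_3\mathring{\pu_2} & -\pu_4\mathring{\pu_3} + \gl z^k\,\pu_3\mathring{\pu_4} \\ -\pu_2\mathring{\pu_1} + \gl z^k\,\pu_1\mathring{\pu_2} & \pu_2\mathring{\pu_3} + \gl z^k\,\pu_1\mathring{\pu_4} \end{matrix}\right],
\]
which is manifestly a matrix of Laurent polynomials. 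By construction $\mathring{\pV} = \pV\pU$, and $\det(\pU) = \det(\mathring{\pV})/\det(\pV) = \gl z^k$, as required.

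The main obstacle is the one addressed above: showing that the formal rational-function inverse $\pV^{-1}\mathring{\pV}$ in fact has Laurent polynomial entries. This is exactly where the determinant hypothesis enters, collapsing the only potentially non-polynomial factor $\mathring{\pv_2}/\pv_2$ to the monomial $\gl z^k$; without that hypothesis, the quotient need not be polynomial. A secondary, degenerate case is $\pd \equiv 0$, which forces $\pv_2\equiv 0$ or $\pv_4\equiv 0$ and makes $\pV, \mathring{\pV}$ singular; here one bypasses the explicit inversion and instead uses Lemma~\ref{lem:det0} to express each of $\pV$ and $\mathring{\pV}$ as a rank-one outer product, reducing the construction of $\pU$ to solving a single rank-one linear relation between the common column and row factors, which can be carried out directly.
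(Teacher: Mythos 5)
Your main branch is essentially the paper's proof. Both arguments rest on Proposition~\ref{prop:UV} with $\pc=1$: writing $\pV(z)=\operatorname{diag}(1,\pv_2(z))\,\pU_1(z)$ and $\mathring{\pV}(z)=\operatorname{diag}(1,\mathring{\pv}_2(z))\,\pU_2(z)$ with $\det(\pU_1(z))=\det(\pU_2(z))=1$, so that $\pv_2=\det(\pV)$, $\mathring{\pv}_2=\det(\mathring{\pV})$, and \eqref{VQ12:cond2} becomes $\mathring{\pv}_2(z)=\gl z^k \pv_2(z)$; your explicit matrix is exactly $\pU(z)=[\pU_1(z)]^{-1}\operatorname{diag}(1,\gl z^k)\,\pU_2(z)$, which is the paper's choice of $\pU$.

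The defect is in your treatment of the case $\pd\equiv 0$. That case does not make $\pV$ and $\mathring{\pV}$ singular: it only gives $\pv_2\pv_4\equiv 0$, and when $\pv_4\equiv 0$ but $\pv_2\not\equiv 0$ we have $\det(\pV)=\pv_2\not\equiv 0$, so Lemma~\ref{lem:det0} does not provide the rank-one outer-product factorizations your degenerate branch invokes; since your main branch explicitly assumes $\pd\not\equiv 0$, this sub-case is covered by neither branch as written. Moreover, in the truly singular sub-case $\det(\pV)\equiv\det(\mathring{\pV})\equiv 0$, the assertion that the remaining problem ``can be carried out directly'' hides a real step: even granting the rank-one factorizations (which, combined with $\pV\pQ=\operatorname{diag}(1,0)$, force the second rows of $\pV$ and $\mathring{\pV}$ to vanish), one must still meet the determinant constraint $\det(\pU(z))=\gl z^k$, and its solvability requires a B\'ezout argument using the fact that the first-row entries of $\mathring{\pV}$ generate the unit ideal, which follows from $\mathring{\pV}(z)\mathring{\pQ}(z)=\operatorname{diag}(1,0)$ --- none of this appears in your sketch. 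The cleanest repair, and the paper's route, is to drop the case split altogether: since $\det(\pU_1(z))=1$, the matrix $[\pU_1(z)]^{-1}$ is itself a matrix of Laurent polynomials, so the $\pU$ above is polynomial with $\det(\pU(z))=\gl z^k$ unconditionally, and $\pV(z)\pU(z)=\operatorname{diag}(1,\gl z^k\pv_2(z))\,\pU_2(z)=\operatorname{diag}(1,\mathring{\pv}_2(z))\,\pU_2(z)=\mathring{\pV}(z)$ follows by direct multiplication, with no inversion of $\pV$ and no hypothesis on $\pd$ or $\pv_2$.
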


\begin{proof} By Proposition~\ref{prop:UV} with $\pc=1$, we see that
\[
\pV(z)=\left [ \begin{matrix} 1 &0\\ 0 &\det(\pV(z)) \end{matrix} \right]
\pU_1(z), \quad
\mathring{\pV}(z)=\left [ \begin{matrix} 1 &0\\ 0 &\det(\mathring{\pV}(z)) \end{matrix} \right] \pU_2(z),
\]
where $\pU_1, \pU_2$ are $2\times 2$ matrices of Laurent polynomials such that $\det(\pU_1(z))=\det(\pU_2(z))=1$. Therefore, $[\pU_1(z)]^{-1}$ is also a matrix of Laurent polynomials. Define
\[
\pU(z):=[\pU_1(z)]^{-1} \left[ \begin{matrix} 1 &0\\ 0 &\gl z^k \end{matrix}\right] \pU_2(z).
\]
Now it is trivial to check that \eqref{V12} holds and $\det(\pU(z))=\gl z^k$ is a nontrivial monomial.
\end{proof}

Now we have the following result about the essential uniqueness of factorization of a positive semidefinite $2\times 2$ matrix of Laurent polynomials.

\begin{theorem}\label{thm:uni}
Let $\pP$ be a $2\times 2$ matrix of Laurent polynomials given in \eqref{P}
such that $\det(\pP(z)) \not \equiv 0$ (that is, the determinant of $\pP$ is not identically zero) and $\gcd(\pp_1, \pp_2, \pp_3, \pp_4) = 1$.
If $\pV$ and $\mathring{\pV}$ are $2\times 2$ matrices of Laurent polynomials satisfying
\be\label{V:cond1}
\pV(z) \pV^\star(z)= \pP(z)=\mathring{\pV}(z)\mathring{\pV}^\star(z)
\ee
and
\be \label{V:cond2}
\det(\mathring{\pV}(z))=\gl z^k \det(\pV(z)) \qquad \mbox{for some}\;\; \gl\in \C \bs \{0\}, k\in \Z,
\ee
then there exists a $2\times 2$ paraunitary matrix $\pU$ of Laurent polynomials such that $\mathring{\pV}(z)=\pV(z) \pU(z)$, $\det(\pU(z))=\gl z^k$, and $\pU(z)\pU^\star(z)=I_2$ for all $z\in \C \bs \{0\}$.
\end{theorem}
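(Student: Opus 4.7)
The plan is to set $\pU := \pV^{-1}\mathring{\pV}$ as a $2\times 2$ matrix of rational functions in $z$, well defined since $\det\pV \not\equiv 0$, and then to show that it is actually a matrix of Laurent polynomials. The paraunitarity condition is automatic: from $\pV\pV^\star = \mathring{\pV}\mathring{\pV}^\star$ one computes
\[
\pU \pU^\star = \pV^{-1}\mathring{\pV}\mathring{\pV}^\star (\pV^\star)^{-1} = \pV^{-1}\pV\pV^\star(\pV^\star)^{-1} = I_2,
\]
and $\det \pU = \det\mathring{\pV}/\det\pV = \gl z^k$ by the determinant hypothesis. Once the entries of $\pU$ are established to be Laurent polynomials, the identity $\pU\pU^\star = I_2$ (holding a priori only as rational functions) becomes an identity of Laurent polynomials, which is the paraunitarity claim.

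To show $\pU$ is a matrix of Laurent polynomials, I would reduce to a normal form via the Smith normal form over the principal ideal domain $\C[z,z^{-1}]$: write $\pV = E_1 D_1 F_1$ and $\mathring{\pV} = E_2 D_2 F_2$ with $E_i, F_i$ unimodular and $D_i = \mathrm{diag}(d_i^{(1)},d_i^{(2)})$ with $d_i^{(1)}\mid d_i^{(2)}$. Because $d_i^{(1)}$ divides every entry of $D_i F_i F_i^\star D_i^\star$, it also divides every entry of $\pP = E_i (D_i F_i F_i^\star D_i^\star) E_i^\star$, so the hypothesis $\gcd(\pp_1,\pp_2,\pp_3,\pp_4)=1$ forces $d_i^{(1)}$ to be a unit. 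Combined with $\det\mathring{\pV} = \gl z^k \det\pV$, absorbing units into the unimodular factors I may assume $D_1 = D_2 = D := \mathrm{diag}(1, p)$ with $p := \det\pV$. Setting $H := E_1^{-1}E_2$ (still a unimodular matrix of Laurent polynomials), the explicit formula
\[
\pU = F_1^{-1} D^{-1} H D F_2 = F_1^{-1}\begin{pmatrix} h_{11} & p\, h_{12} \\ h_{21}/p & h_{22}\end{pmatrix} F_2
\]
shows that $\pU$ has Laurent-polynomial entries if and only if $p\mid h_{21}$.

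The core of the proof, and the step I expect to be the main obstacle, is verifying $p\mid h_{21}$. Rewriting $\pV\pV^\star = \mathring{\pV}\mathring{\pV}^\star$ through the Smith factorizations and cancelling $E_1,E_2$ yields $A = HBH^\star$ with $A := DG_1D^\star$, $B := DG_2D^\star$, and $G_i := F_i F_i^\star$; since $E_1, E_2$ are unimodular, $\gcd(A_{ij}) = \gcd(B_{ij}) = \gcd(\pp_1,\pp_2,\pp_3,\pp_4) = 1$. Fix any irreducible factor $\pi$ of $p$ in $\C[z, z^{-1}]$. Because $B_{21} = p\,G_{2,21}$ and $B_{22} = pp^\star G_{2,22}$ are both divisible by $\pi$, expanding
\[
A_{ij} = h_{i1}\bigl(B_{11}h_{j1}^\star + B_{12}h_{j2}^\star\bigr) + h_{i2}\bigl(B_{21}h_{j1}^\star + B_{22}h_{j2}^\star\bigr)
\]
and reducing modulo $\pi$ gives $A_{ij} \equiv h_{i1} P_j \pmod{\pi}$, where $P_j := B_{11}h_{j1}^\star + B_{12}h_{j2}^\star$. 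Since $A_{21} = pG_{1,21}$ and $A_{22} = pp^\star G_{1,22}$ are themselves divisible by $\pi$, we obtain $h_{21}P_1 \equiv 0$ and $h_{21}P_2 \equiv 0 \pmod{\pi}$. If $\pi \nmid h_{21}$, then $\pi\mid P_1$ and $\pi\mid P_2$; substituting back into $A_{1j} \equiv h_{11}P_j$ forces $\pi\mid A_{11}$ and $\pi\mid A_{12}$, so $\pi$ divides every entry of $A$, contradicting $\gcd(A_{ij})=1$. Hence $\pi\mid h_{21}$ for every irreducible factor $\pi$ of $p$, which yields $p\mid h_{21}$ and completes the proof.
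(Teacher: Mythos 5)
Your route is genuinely different from the paper's: the paper puts $\pP$ itself into Smith normal form and then produces $\pU$ via its rank-one factorization machinery (Lemma~\ref{lem:det0}, Proposition~\ref{prop:UV}, Corollary~\ref{cor:UV12}), whereas you take Smith forms of the two factors $\pV$ and $\mathring{\pV}$ separately and reduce everything to one divisibility statement about $H=E_1^{-1}E_2$. Up to the last step your reduction is sound: the first invariant factors divide every entry of $\pP$ and hence are units by the $\gcd$ hypothesis; the normalization $D_1=D_2=\mathrm{diag}(1,p)$ with $p=\det\pV$ is legitimate because $\gl z^k$ is a unit of $\C[z,z^{-1}]$; the identity $A=HBH^\star$ with $\gcd$ of the entries of $A$ and of $B$ equal to $1$ is correct; and your congruence argument does prove that every irreducible factor $\pi$ of $p$ divides $h_{21}$. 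Paraunitarity and the determinant value of $\pU$ are indeed automatic once the entries are known to be Laurent polynomials.

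The genuine gap is the final inference ``$\pi\mid h_{21}$ for every irreducible factor $\pi$ of $p$, hence $p\mid h_{21}$.'' That only gives divisibility by the radical of $p$, and nothing in the hypotheses makes $p=\det\pV$ squarefree: for instance $\pV=\mathrm{diag}\big(1,(z-2)^2\big)$ satisfies all assumptions of the theorem, and in the paper's application the relevant determinants routinely have multiple zeros. If $p$ contains $\pi^2$, your argument as written yields only $\pi\mid h_{21}$, which does not make $h_{21}/p$ a Laurent polynomial. The gap is repairable by upgrading the congruence to an induction on $j=1,\dots,m$, where $m$ is the multiplicity of $\pi$ in $p$: since $A_{21},A_{22},B_{21},B_{22}$ are all divisible by $p$, hence by $\pi^m$, one still has $A_{21}\equiv h_{21}P_1$ and $A_{22}\equiv h_{21}P_2 \pmod{\pi^j}$ for $j\le m$; writing $h_{21}=\pi^{j-1}h'$ (induction hypothesis) gives $\pi\mid h'P_1$ and $\pi\mid h'P_2$, and if $\pi\nmid h'$ then $\pi\mid P_1,P_2$, so reducing $A_{11}\equiv h_{11}P_1$ and $A_{12}\equiv h_{11}P_2$ modulo $\pi$ makes $\pi$ a common factor of all entries of $A$, contradicting $\gcd(A_{ij})=1$; hence $\pi^j\mid h_{21}$, and taking $j=m$ and running over all $\pi$ gives $p\mid h_{21}$. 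With that repair your proof is complete. Note that the paper's argument avoids this multiplicity bookkeeping entirely: after diagonalizing $\pP$, Corollary~\ref{cor:UV12} builds $\pU$ from the structural factorization of Proposition~\ref{prop:UV}, which rests on the zero-determinant Lemma~\ref{lem:det0} rather than on counting powers of irreducible factors.
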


\begin{proof}
It is a basic result in linear algebra that there exist
two $2\times 2$ matrices $\pA$ and $\pB$ of Laurent polynomials satisfying
$\det(\pA(z))=\det(\pB(z))= 1$ and
\[
\pA(z) \pP(z) \pB(z) = \left[ \begin{matrix}
\pc(z) & 0 \\ 0 & \pd(z) \end{matrix} \right]
\]
with $\pc, \pd$ being Laurent polynomials satisfying $\pc \mid \pd$.
The above result can be proved using elementary matrix forms and Euclidian division of Laurent polynomials. The diagonal matrix $\mbox{diag}(\pc, \pd)$ is called the Smith normal form of $\pP$ and such Laurent polynomials $\pc, \pd$ are essentially unique.
See \cite{Smith} for a detailed proof of the above result.
Moreover, one can directly verify that $\pc=\gcd(\pp_1, \pp_2, \pp_3, \pp_4)=1$ and $\pd = \det(\pP) /\pc\not \equiv 0$. Consequently, by \eqref{V:cond1}, we have
\[
(\pA(z) \pV(z)) (\pV^\star(z)\pB(z)) = \left[ \begin{matrix}
1 & 0 \\ 0 & \pd(z) \end{matrix}\right]
=(\pA(z)\mathring{\pV}(z)) (\mathring{\pV}^\star(z)\pB(z)).
\]
Note that
\[
\det(\pA(z) \mathring{\pV}(z))= \det(\pA(z)) \det(\mathring{\pV}(z))=\det(\mathring{\pV}(z))
=\gl z^k \det(\pV(z))=\gl z^k \det(\pA(z) \pV(z)).
\]
Consequently, it follows from Corollary~\ref{cor:UV12} that there exists a $2\times 2$ matrix
$\pU$ of Laurent polynomials such that $\det(\pU(z))=\gl z^k$ and
$\pA(z) \mathring{\pV}(z) = \pA(z) \pV(z)\pU(z)$, from which we have $\mathring{\pV}(z)=\pV(z)\pU(z)$ since $\det(\pA(z))=1$.
Therefore, it follows from \eqref{V:cond1} that $\pV(z) \pV^\star(z)=\mathring{\pV}(z)\mathring{\pV}^\star(z)$ which leads to
\[
\pV(z)\big( \pU(z)\pU^\star(z)-I_2\big) \pV^\star(z)=0.
\]
By \eqref{V:cond1}, we have $\det (\pV(z)) \det(\pV^\star(z))=\det(\pP(z))\not \equiv 0$ and therefore, $\det (\pV(z))\not \equiv 0$. Thus, $\pV(z)$ is invertible for all $z$ satisfying $\det (\pV(z))\neq 0$. Now we deduce from the above identity that we must have
$\pU(z)\pU^\star(z)= I_2$ for all $z\in \C \bs \{0\}$.
\end{proof}

We are now ready to prove Theorem~\ref{thm:tffb}.

\begin{proof}[Proof of Theorem~\ref{thm:tffb}]
(ii)$\imply$(i) is trivial. Note that \eqref{eqU} is equivalent to
\be \label{eqU:2}
\left[ \begin{matrix} \pb^p(z) &\pb^n(z)\\
\pb^p(-z) &\pb^n(-z)\end{matrix} \right]
=\left[ \begin{matrix} \pb_1(z) &\pb_2(z)\\
\pb_1(-z) &\pb_2(-z)\end{matrix} \right]
\pU(z^2), \qquad \forall\; z\in \C \bs \{0\}.
\ee
Since $\{a; b_1, b_2\}$ is a tight framelet filter bank and $\pU$ is paraunitary, it follows directly from \eqref{tffb:2} and \eqref{eqU:2} that $\{a; b^p, b^n\}$ is a finitely supported tight framelet filter bank. Moreover, it follows directly from
\eqref{eqU:2} that \eqref{det:relation} holds with $\gl z^{k}:=\det(\pU(z))$.

We now prove (i)$\imply$(ii).
For a sequence $u:\Z \rightarrow \C$ and $\gamma\in \Z$, its coset sequence $u^{[\gamma]}$ is defined to be $u^{[\gamma]}(k):=u(\gamma+2k), k\in \Z$.
Since both $\{a; b_1, b_2\}$ and $\{a; b^p, b^n\}$ are
finitely supported tight framelet filter banks, using coset sequences, we see from \eqref{tffb:2} that
\be \label{PR}
\left[ \begin{matrix}
\pb^{p,[0]}(z) &\pb^{n,[0]}(z)\\ \pb^{p,[1]}(z) &\pb^{n,[1]}(z)\end{matrix} \right]
\left[ \begin{matrix}
\pb^{p,[0]}(z) &\pb^{n,[0]}(z)\\
\pb^{p,[1]}(z) &\pb^{n,[1]}(z)\end{matrix} \right]^\star
=\cN_{a}(z)=
\left[ \begin{matrix}
\pb_1^{[0]}(z) &\pb_2^{[0]}(z)\\
\pb_1^{[1]}(z) &\pb_2^{[1]}(z)\end{matrix} \right]
\left[ \begin{matrix}
\pb_1^{[0]}(z) &\pb_2^{[0]}(z)\\
\pb_1^{[1]}(z) &\pb_2^{[1]}(z)\end{matrix} \right]^\star,
\ee
where
\[
\cN_a(z):=\left [ \begin{matrix} \tfrac{1}{2}-\pa^{[0]}(z) (\pa^{[0]}(z))^\star &-\pa^{[0]}(z) (\pa^{[1]}(z))^\star\\
-(\pa^{[0]}(z))^\star \pa^{[1]}(z) & \tfrac{1}{2}-\pa^{[1]}(z) (\pa^{[1]}(z))^\star\end{matrix} \right].
\]
Define $\pc(z) := \gcd([\cN_a(z)]_{1,1}, [\cN_a(z)]_{1,2},
[\cN_a(z)]_{2,1}, [\cN_a(z)]_{2,2})$.
By direct calculation, we have $2\det(\cN_a(z))=\frac{1}{2}-\pa^{[0]}(z)(\pa^{[0]}(z))^\star-\pa^{[1]}(z)
(\pa^{[1]}(z))^\star$
and $\operatorname{trace}(\cN_a(z)) = 1-\pa^{[0]}(z)(\pa^{[0]}(z))^\star-\pa^{[1]}(z)(\pa^{[1]}(z))^\star$.
Therefore, $\pc$ must be a factor of $\operatorname{trace}(\cN_a(z))
- 2\det(\cN_a(z))=1/2$. Consequently, we conclude that $\pc = 1$.
We now consider two cases. We first consider the case that $a$ is not an orthogonal filter. Then $\det(\cN_a(z))\not \equiv 0$. By Theorem \ref{thm:uni}, there
must exist a $2\times 2$ paraunitary matrix $\pU$ of Laurent polynomials such that
\be \label{b:U}
\left[ \begin{matrix} \pb^{p,[0]}(z) &\pb^{n,[0]}(z)\\
\pb^{p,[1]}(z) &\pb^{n,[1]}(z)\end{matrix} \right]
=\left[ \begin{matrix} \pb_1^{[0]}(z) &\pb_2^{[0]}(z)\\
\pb_1^{[1]}(z) &\pb_2^{[1]}(z)\end{matrix} \right]\pU(z)
\ee
for all $z\in \C \bs \{0\}$. Since $\pu(z)=\pu^{[0]}(z^2)+z\pu^{[1]}(z^2)$ holds for any $u\in \lp{0}$, it is straightforward to deduce from \eqref{b:U} that \eqref{eqU} holds. Hence item (ii) is proved if $a$ is not an orthogonal filter.

We now consider the case that $a$ is an orthogonal filter.
Define a filter $b$ by $\pb(z):=z\pa^\star(-z)$. Then $\{a; b\}$ is a tight framelet filter bank. It suffices to prove item (ii) with $b_1=b$ and $b_2=0$. Since $a$ is an orthogonal filter, we must have
\be \label{orth:eq} \pa^{[0]}(z)(\pa^{[0]}(z))^\star+\pa^{[1]}(z)(\pa^{[1]}(z))^\star=
\pb^{[0]}(z)(\pb^{[0]}(z))^\star+\pb^{[1]}(z)(\pb^{[1]}(z))^\star=
1/2
\ee
and $\det(\cN_a(z))=0$.
By \eqref{PR} and $\det(\cN_a(z))=0$, we must have $\pb^{p,[0]}(z)\pb^{n,[1]}(z)-\pb^{p,[1]}(z)\pb^{n,[0]}(z)=0$. By Lemma~\ref{lem:det0}, there exist Laurent polynomials $\pp_1, \pp_2, \pp_3, \pp_4$ such that
\[
\left[ \begin{matrix}
\pb^{p,[0]}(z) &\pb^{n,[0]}(z)\\ \pb^{p,[1]}(z) &\pb^{n,[1]}(z)\end{matrix} \right]
=\left[ \begin{matrix} \pp_1(z)\\ \pp_2(z)\end{matrix}\right]
\left[\begin{matrix} \pp_3(z)  &\pp_4(z)\end{matrix}\right].
\]
Since $b_1=b$ and $b_2=0$,  now \eqref{PR} and \eqref{orth:eq} imply
\[
\left[ \begin{matrix} \pp_1(z)\\ \pp_2(z)\end{matrix}\right]
\left[\begin{matrix} \pp_3(z)  &\pp_4(z)\end{matrix} \right ] \left[ \begin{matrix} \pp_3^\star(z)\\ \pp_4^\star(z)\end{matrix} \right] \left[\begin{matrix} \pp_1^\star(z) &\pp_2^\star(z)\end{matrix}\right ]=
\left[ \begin{matrix} \pb^{[0]}(z)\\ \pb^{[1]}(z)\end{matrix} \right] \left[ \begin{matrix} (\pb^{[0]}(z))^\star &(\pb^{[1]}(z))^\star\end{matrix} \right].
\]
Multiplying $\left[\begin{matrix} \pb^{[0]}(z) &\pb^{[1]}(z)\end{matrix}\right]^\tp$ from the right on both sides of the above identity, by \eqref{orth:eq}, we see that
\be \label{orth:eq2}
\pq(z)\left[ \begin{matrix} \pp_1(z)\\ \pp_2(z)\end{matrix}\right]=
\left[ \begin{matrix} \pb^{[0]}(z)\\ \pb^{[1]}(z)\end{matrix} \right]
\quad \mbox{with}\quad \pq(z):=2[\pp_3(z) \pp_3^\star(z)+\pp_4(z)\pp_4^\star(z)]
[\pp_1^\star(z)\pb^{[0]}(z)+\pp_2^\star(z)\pb^{[1]}(z)].
\ee
Since $\gcd(\pb^{[0]}, \pb^{[1]})=1$ by \eqref{orth:eq}, $\pq$ must be a nontrivial monomial.
Consequently, without loss of any generality, we can assume that $\pp_1=\pb^{[0]}$ and $\pp_2=\pb^{[1]}$. Then it follows from \eqref{orth:eq} and \eqref{orth:eq2} that $\pq=1$ and $\pp_3(z)\pp_3^\star(z)+\pp_4(z)\pp_4^\star(z)=1$.
Consequently,
\[
\pU(z):=
\left[ \begin{matrix} \pp_3(z) &\pp_4(z)\\
-\pp_4^\star(z) &\pp_3^\star(z)\end{matrix} \right]
\]
is a paraunitary matrix and it is trivial to check that \eqref{b:U} is satisfied, since
\[
\left[ \begin{matrix} \pb_1^{[0]}(z) &\pb_2^{[0]}(z)\\
\pb_1^{[1]}(z) &\pb_2^{[1]}(z)\end{matrix} \right]\pU(z)
=\left[ \begin{matrix} \pb^{[0]}(z)\\ \pb^{[1]}(z)\end{matrix}\right]
\left[ \begin{matrix} 1 &0\end{matrix}\right] \pU(z)=
\left[ \begin{matrix} \pb^{[0]}(z)\\ \pb^{[1]}(z)\end{matrix}\right]
\left[\begin{matrix} \pp_3(z)  &\pp_4(z)\end{matrix}\right]
=\left[ \begin{matrix} \pb^{p,[0]}(z) &\pb^{n,[0]}(z)\\
\pb^{p,[1]}(z) &\pb^{n,[1]}(z)\end{matrix} \right].
\]
This proves item (ii) for the case that $a$ is an orthogonal filter.
\end{proof}

\section{Algorithms and Examples of Finitely Supported Complex Tight Framelet Filter Banks with Directionality}

In this section we shall propose an algorithm to construct finitely supported complex tight framelet filter banks $\{a; b^p, b^n\}$ with good frequency separation from any given finitely supported low-pass filter $a$ satisfying \eqref{tffb:cond}. Then we shall provide several examples to illustrate our algorithm.

For a finitely supported sequence $u=\{u(k)\}_{k\in \Z}$ such that $u(k)=0$ for all $k\in \Z \bs [m,n]$ and $u(m)u(n)\ne 0$, we define $\fs(u):=\fs(\pu):=[m,n]$ to be the filter support of $u$ and define $\len(u):=\len(\pu):=n-m$ to be the length of the filter $u$.

In order to employ Theorem~\ref{thm:tffb} to obtain all finitely supported tight framelet filter banks derived from a given low-pass filter, we now recall
an algorithm, which is a special case of \cite[Algorithm~4]{Han:symdf}, to construct all possible complex tight framelet filter banks $\{a; b_1, b_2\}$ having the shortest filter support, that is, $\max(\len(b_1),\len(b_2))\le \len(a)$.

\begin{algorithm} \label{alg:tffb}
Let $a \in \lp{0}$ be a finitely supported filter on $\Z$ satisfying \eqref{tffb:cond}.

\begin{enumerate}
\item[(S1)] Define $\pA(z):=1-\pa(z)\pa^\star(z)$, $\pB(z):=-\pa(z)\pa^\star(-z)$, and
$\pD(z^2):=1-\pa(z)\pa^\star(z)-\pa(-z)\pa^\star(-z)$;

\item[(S2)] Select $\eps, s_1, s_2 \in \{0,1\}$ and a polynomial $\pd$ satisfying
$\pd(z)\pd^\star(z)=\pD(z)$ with $\lceil \tfrac{s_1+s_2-1}{2} \rceil \le m_{\pd} \le n_{\pd}\le \lfloor \tfrac{s_1+s_2-1}{2}\rfloor+n_0+\eps$, where $[-n_0, n_0]:=\fs(\pA)$ and $[m_{\pd}, n_{\pd}]:=\fs(\pd)$;

\item[(S3)] Parameterize a filter $\pb_1$ by $\pb_1(z)=z^{s_1} \sum_{j=0}^{n_0+\eps} t_j z^j$. Find the unknown coefficients $\{t_0, \ldots, t_{n_0+\eps}\}$ by solving a system $X$ of linear equations induced by $\cR(z)\equiv 0$ and
\[
\mbox{coeff}(\pb_2^\star,z,j)=0, \quad j=s_1-n_0-2m_{\pd}-1, \ldots, s_2-1 \quad \mbox{and}\quad j=s_2+n_0+\eps+1,\ldots, s_1+2n_0-2n_{\pd}+\eps-1,
\]
where $\cR$ and $\pb_1^\star$ are uniquely determined by $\fs(\cR)\subseteq [2m_{\pd}, 2 n_{\pd}-1]$ and
\[
\pB(-z) \pb_1(z)- \pA(z) \pb_1(-z)=\pd(z^2) z \pb^\star_2(z)+\cR(z);
\]

\item[(S4)]
For any nontrivial solution to the system $X$ in (S3),
there must exist $\gl>0$ such that
\[
\gl \pd(z^2)=z^{-1}[\pb_1(z)\pb_2(-z)-\pb_1(-z)\pb_2(z)]
\]
holds. Replace $\pb_1, \pb_2$ by $\gl^{-1/2}\pb_1,\gl^{-1/2}\pb_2$, respectively;
\end{enumerate}
Then $\{\ta; \tb_1, \tb_2\}$ is a finitely supported tight framelet filter bank satisfying $\max(\len(b_1),\len(b_2))\le \len(a)+\eps$.
\end{algorithm}

We are now ready to present an algorithm to construct finitely supported complex tight framelet filter banks with frequency separation property.

\begin{algorithm}\label{alg:main}
Let $a\in \lp{0}$ be a finitely supported filter on $\Z$ satisfying \eqref{tffb:cond}.
\begin{enumerate}
\item[(S1)] Construct a finitely supported tight framelet filter
bank $\{a; b_1, b_2\}$ by Algorithm~\ref{alg:tffb};

\item[(S2)] Choose a suitable filter length $N\in \N\cup\{0\}$ and parameterize filters $u_1$ and $u_2$ by
\[
\pu_1(z) := c_0 + c_1 z + \cdots + c_N z^N, \qquad
\pu_2(z) := d_0 + d_1 z + \cdots + d_N z^N,
\]
where $c_0,  \ldots, c_N, d_0,  \ldots, d_N$ are complex numbers to be determined later.
We can further assume $c_0\in \R$ by normalizing the first filter $u_1$;

\item[(S3)] Define new high-pass filters $b^p$ and $b^n$ by
\[
\pb^p(z) := \pb_1(z) \pu_1(z^2)+ \pb_2(z) \pu_2(z^2), \qquad
\pb^n(z) := z^{2m}[\pb_2(z) \pu_1^\star(z^2)-\pb_1(z) \pu_2^\star(z^2)],
\]
where $m$ is an integer such that the centers of $\fs(\pb^p)$ and $\fs(\pb^n)$ are close to each other;

\item[(S4)] If in addition the given filter $a$ is real-valued, then we further require that the initial filters $b_1, b_2$ should be real-valued and $c_0, \ldots, c_N, d_0, \ldots, d_N\in \R$. Further replace the filters $\pb^p$ and $\pb^n$ in (S3) by  $[\pb^p(z)+i\pb^n(z)]/\sqrt{2}$ and $[\pb^p(z)-i\pb^n(z)]/\sqrt{2}$, respectively;

\item[(S5)] Find a solution $\{c_0, \ldots, c_N, d_0, \ldots, d_N\}$ of the following constrained optimization problem:
\[
\min_{u_1, u_2} \int_0^\pi [|\pb^p(-e^{-i\xi})|^2+|\pb^n(e^{-i\xi})|^2] d\xi
\]
under the constraint $|\pu_1(e^{-i\xi})|^2 + |\pu_2(e^{-i\xi})|^2 = 1$ for all $\xi\in \R$ (such constraint on $u_1, u_2$ can be rewritten as equations using $c_0, \ldots, c_N, d_0, \ldots, d_N$).
\end{enumerate}
Then $\{a; b^p, b^n\}$ is a tight framelet filter bank.
For a real-valued filter $a$, in addition we have $b^n=\ol{b^p}$.
\end{algorithm}

Using Algorithms~\ref{alg:tffb} and \ref{alg:main}, many examples of finitely supported complex tight framelet filter banks with good directionality can be easily constructed. Here we only present several examples to illustrate Algorithms~\ref{alg:tffb} and \ref{alg:main}. In order to see the improvement of directionality of a tight framelet filter bank $\{a; b^p, b^n\}$, we shall use the following quantities:
\be \label{direction}
d_{\R}:=\frac{1}{2} \int_0^\pi [2-|\wh{a}(\xi)|^2-|\wh{a}(\xi+\pi)|^2]d\xi,\quad
d_A:=\int_0^\pi A(\xi) d\xi, \quad
d_{B}:=\int_0^\pi [|\wh{b^p}(\xi+\pi)|^2+|\wh{b^n}(\xi)|^2]d\xi,
\ee
where the sharp theoretical lower bound frequency separation function $A$ is defined in \eqref{A} and the subscript $\R$ in $d_{\R}$ refers to the case of real-valued high-pass filters.
By Theorem~\ref{thm:lowerbound}, we always have $d_A\le d_B$. If both $b^p$ and $b^n$ are real-valued filters, by Theorem~\ref{thm:real} we always have $d_{\R}=d_B$.

\begin{example}\label{ex1} {\rm Let $\pa(z) =(z^{-1}+2+z)/4=\{\tfrac{1}{4}, \tfrac{1}{2}, \tfrac{1}{4}\}_{[-1,1]}$ be the B-spline filter of order $2$.
Using Algorithm~\ref{alg:tffb}, we obtain a tight framelet filter bank $\{a; b_1, b_2\}$ with
$\pb_1(z) = \frac{\sqrt{6}}{6}(1- z^{-1})$ and $\pb_2(z) =\frac{\sqrt{3}}{12} (1 - z^{-1})(1+3z)$.
Applying Algorithm~\ref{alg:main} with $N =0$, we have a finitely supported complex tight framelet filter bank $\{a; b^p, b^n\}$ with
$b^n=\ol{b^p}$ and
\begin{align*}
\pb^p(z):=\tfrac{1}{24}(1-z^{-1})[(-3\sqrt{2}+6i)z+(3\sqrt{2}+6i)].
\end{align*}
By calculation we have $d_{\R}=\frac{5}{8}\pi \approx 1.96349$, $d_A\approx 0.05339$, and $d_B\approx 0.549282$. 
If we take $N=2$, then
\begin{align*}
\pb^p(z)=&(-0.0296422357615+0.0245498453274i)z^{-3}+(0.0659915437767-0.0546545208555i)z^{-2}\\
&-(0.134097034665+0.310569363502i)z^{-1}-(0.199259492568+0.279133899130i)\\
&+(0.256396707846-0.0503651650867i)z+(0.00392785810334+0.00474261627250i)z^2\\
&+(0.0366826532674+0.0442917599692i)z^{3}.
\end{align*}
By calculation, we have $d_B\approx 0.329559$. 
See Figure~\ref{fig1} for the graphs of the eight tight framelet generators in the associated two-dimensional real-valued tight framelet for $L_2(\R^2)$ in \eqref{tp:ctf}.
} \end{example}

\begin{figure}[ht]
\begin{center}
\subfigure{
\includegraphics[width=1.5in,height=1.0in]
{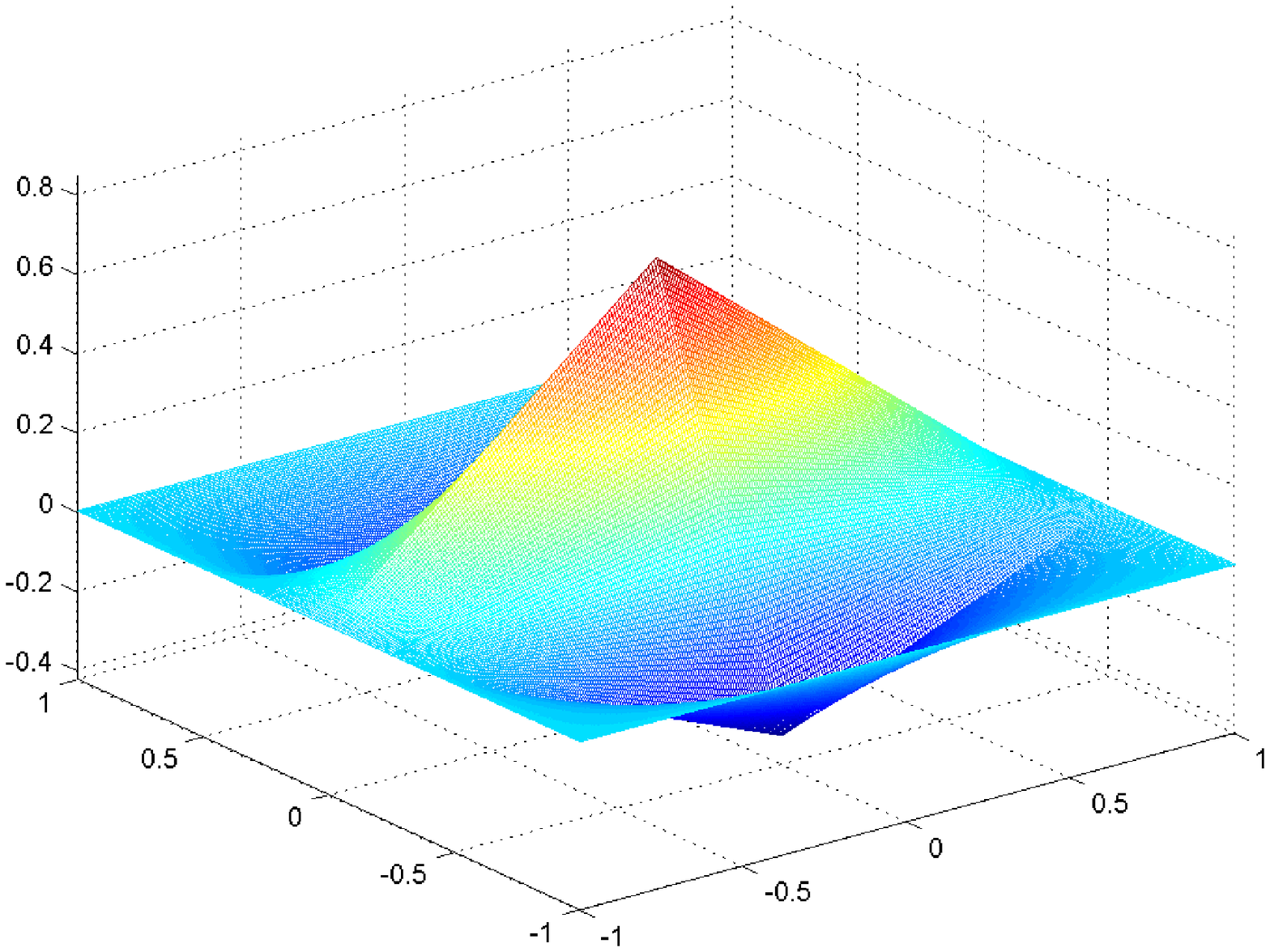}}
\subfigure{
\includegraphics[width=1.5in,height=1.0in]
{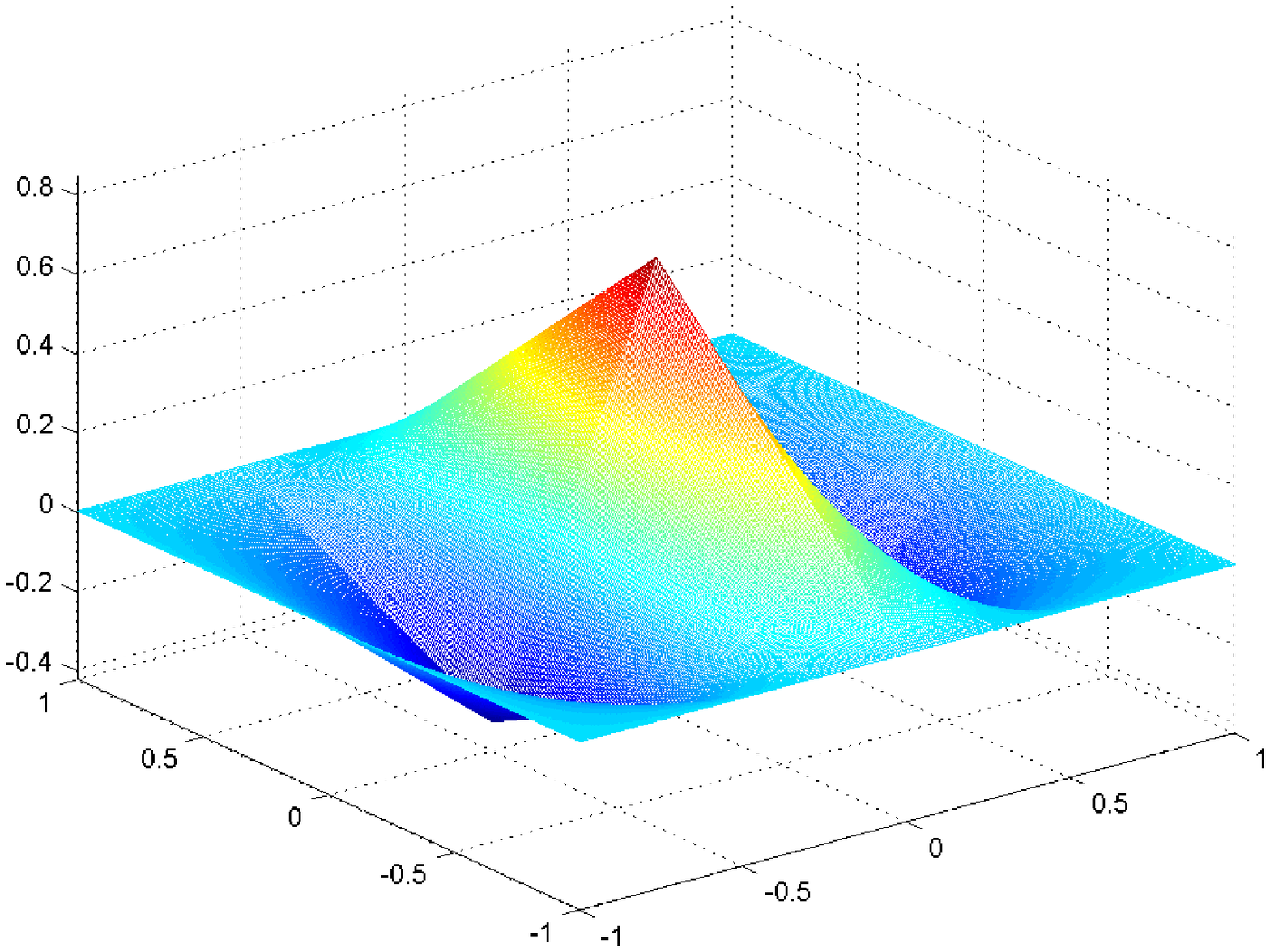}}
\subfigure{
\includegraphics[width=1.5in,height=1.0in]
{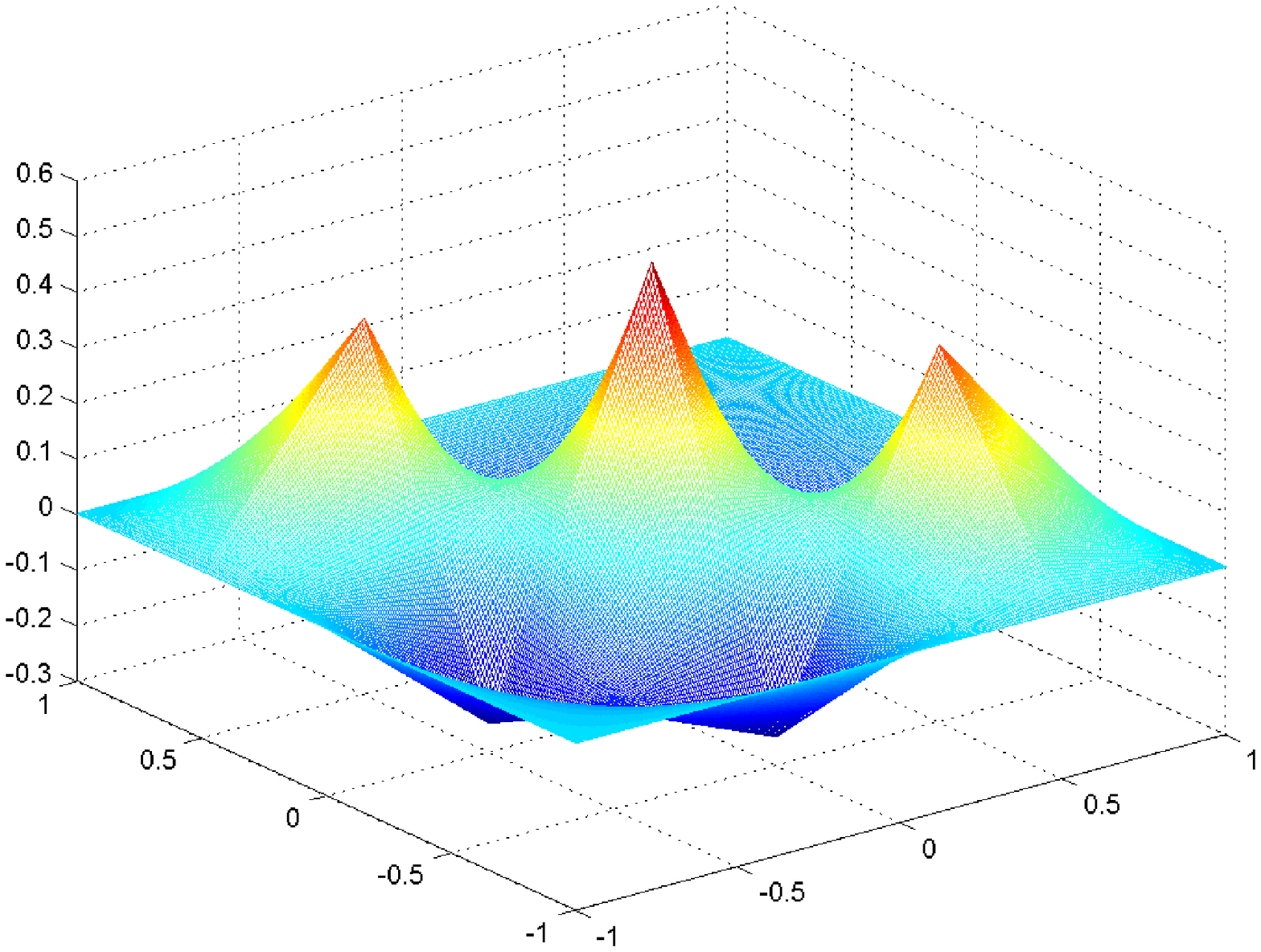}}
\subfigure{
\includegraphics[width=1.5in,height=1.0in]
{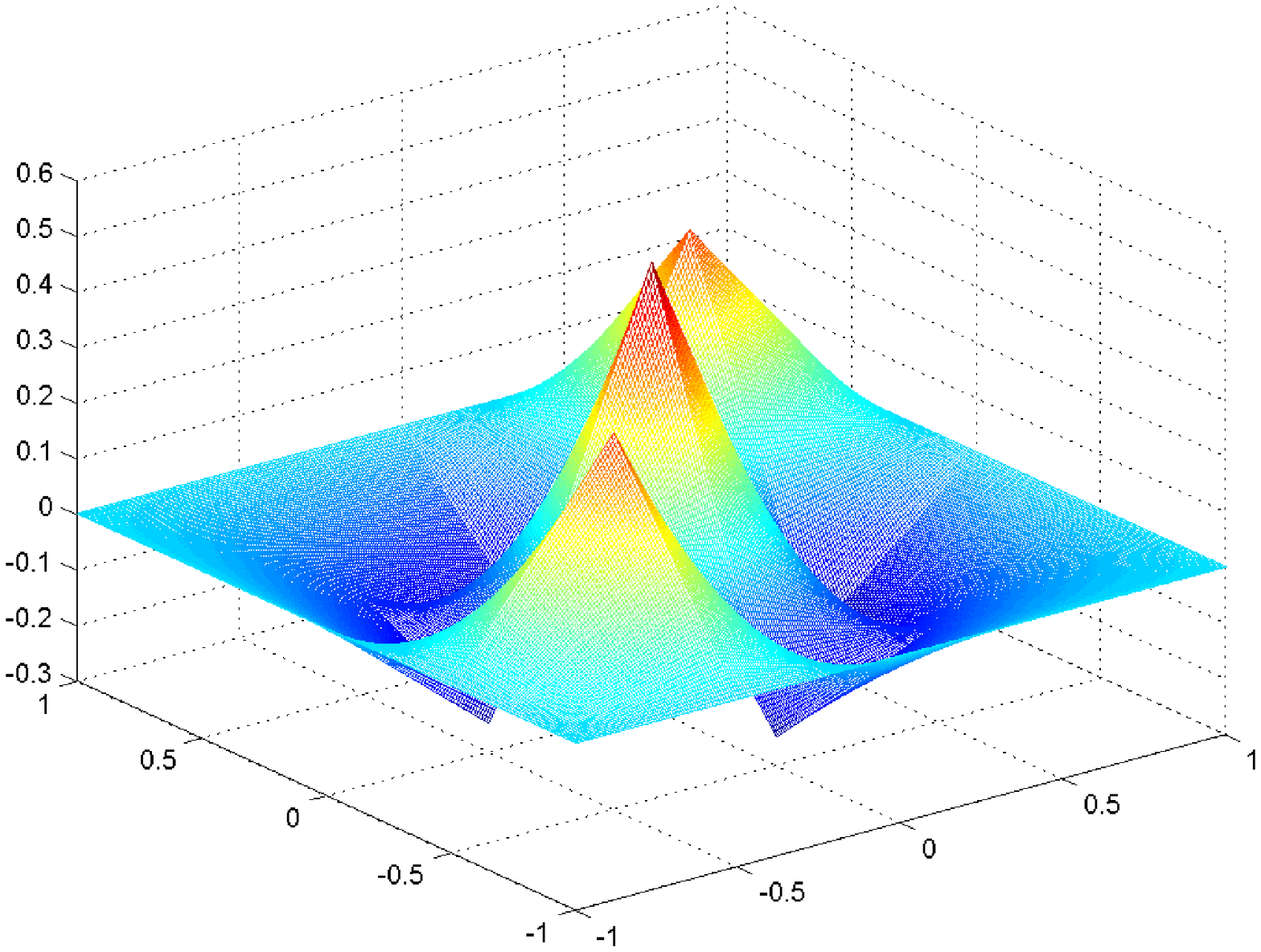}}\\
\subfigure{
\includegraphics[width=1.5in,height=1.0in]
{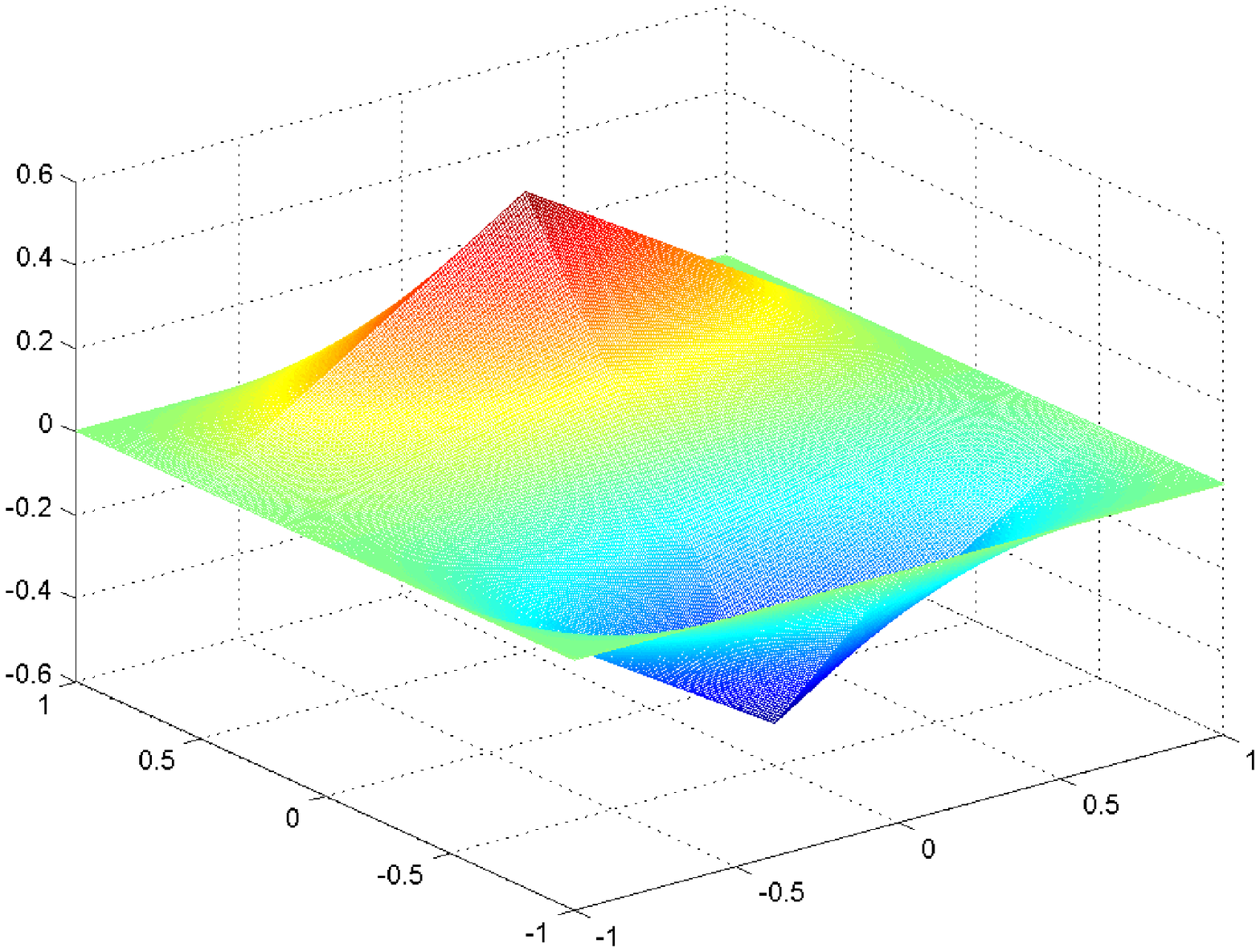}}
\subfigure{
\includegraphics[width=1.5in,height=1.0in]
{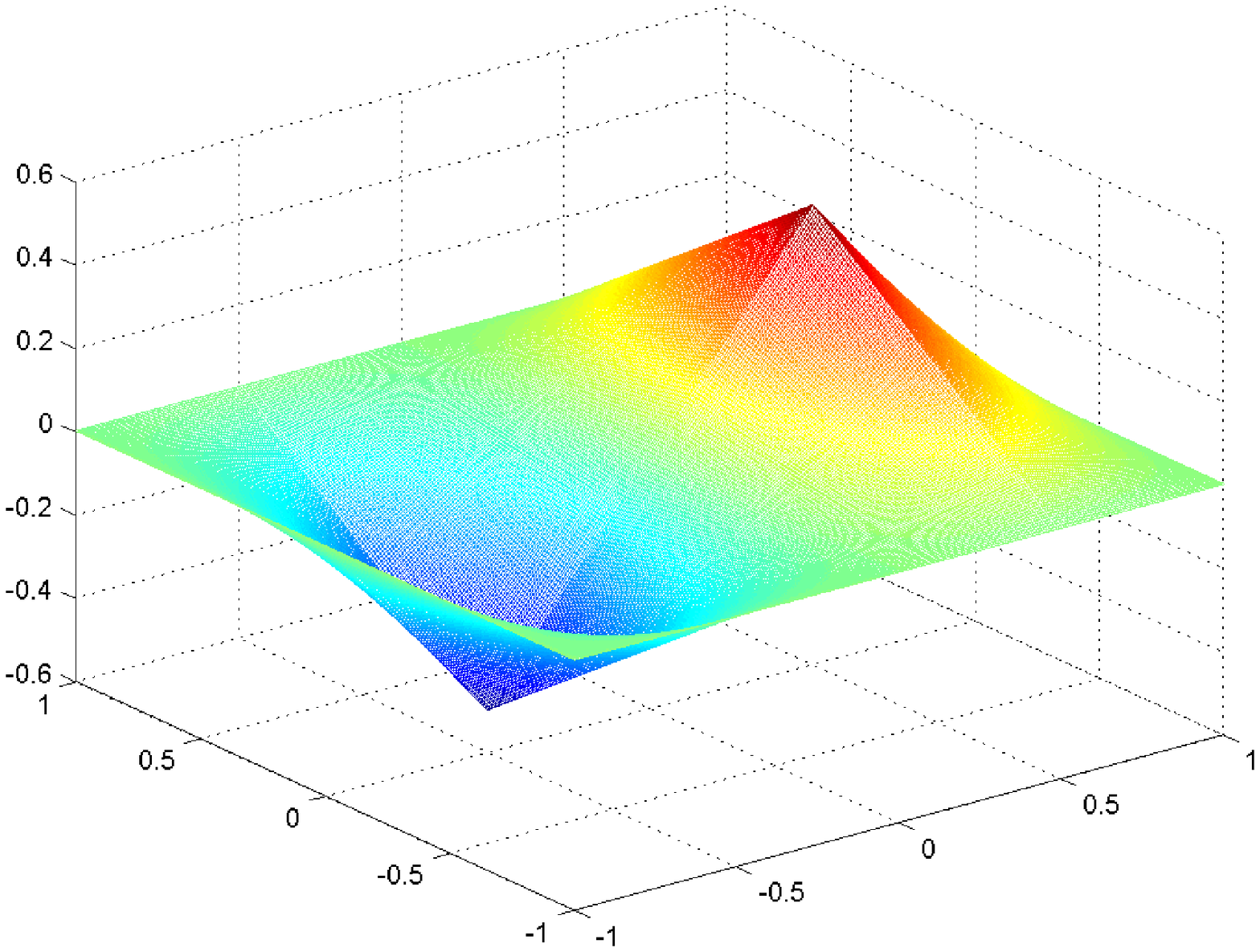}}
\subfigure{
\includegraphics[width=1.5in,height=1.0in]
{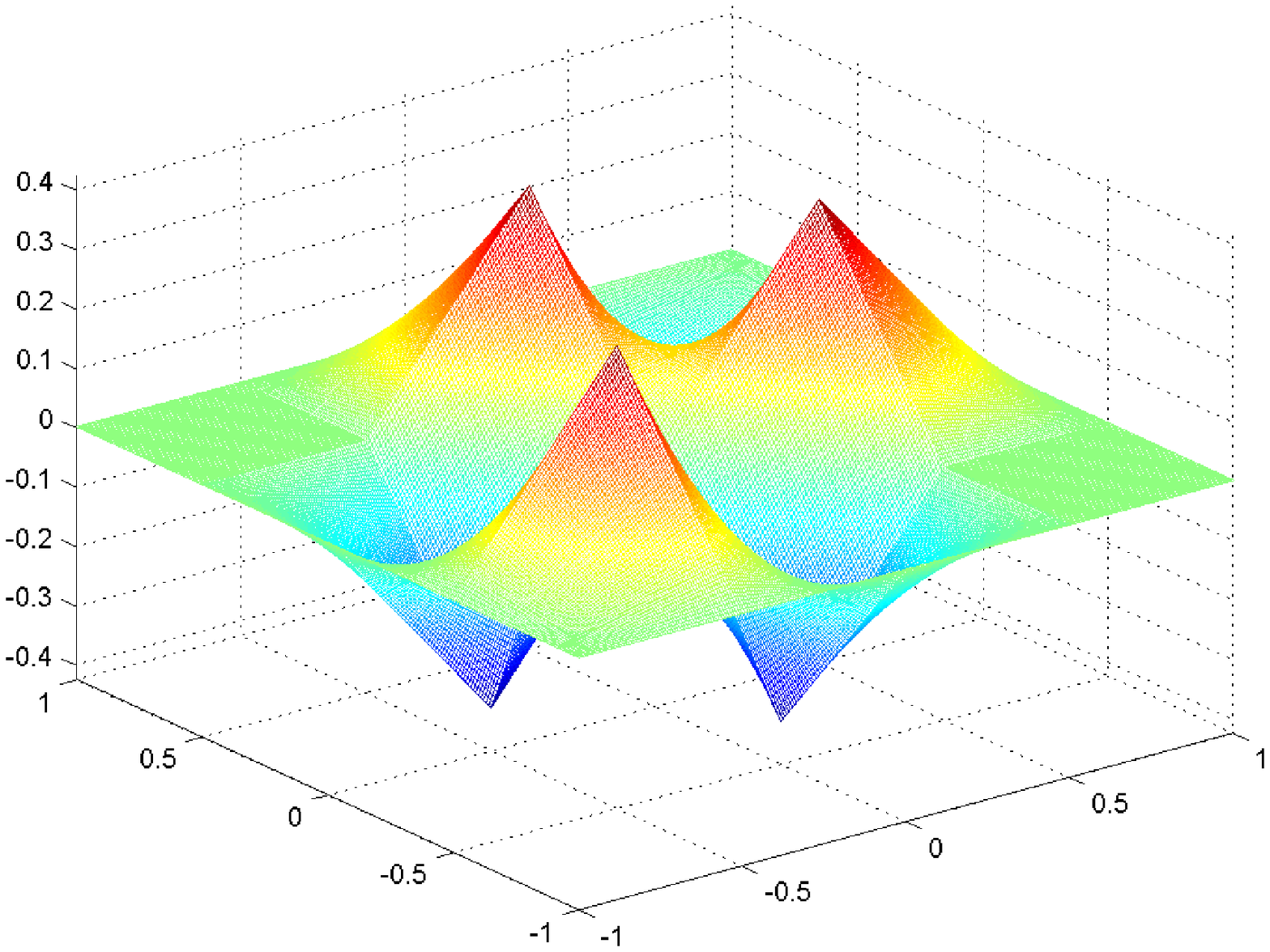}}
\subfigure{
\includegraphics[width=1.5in,height=1.0in]
{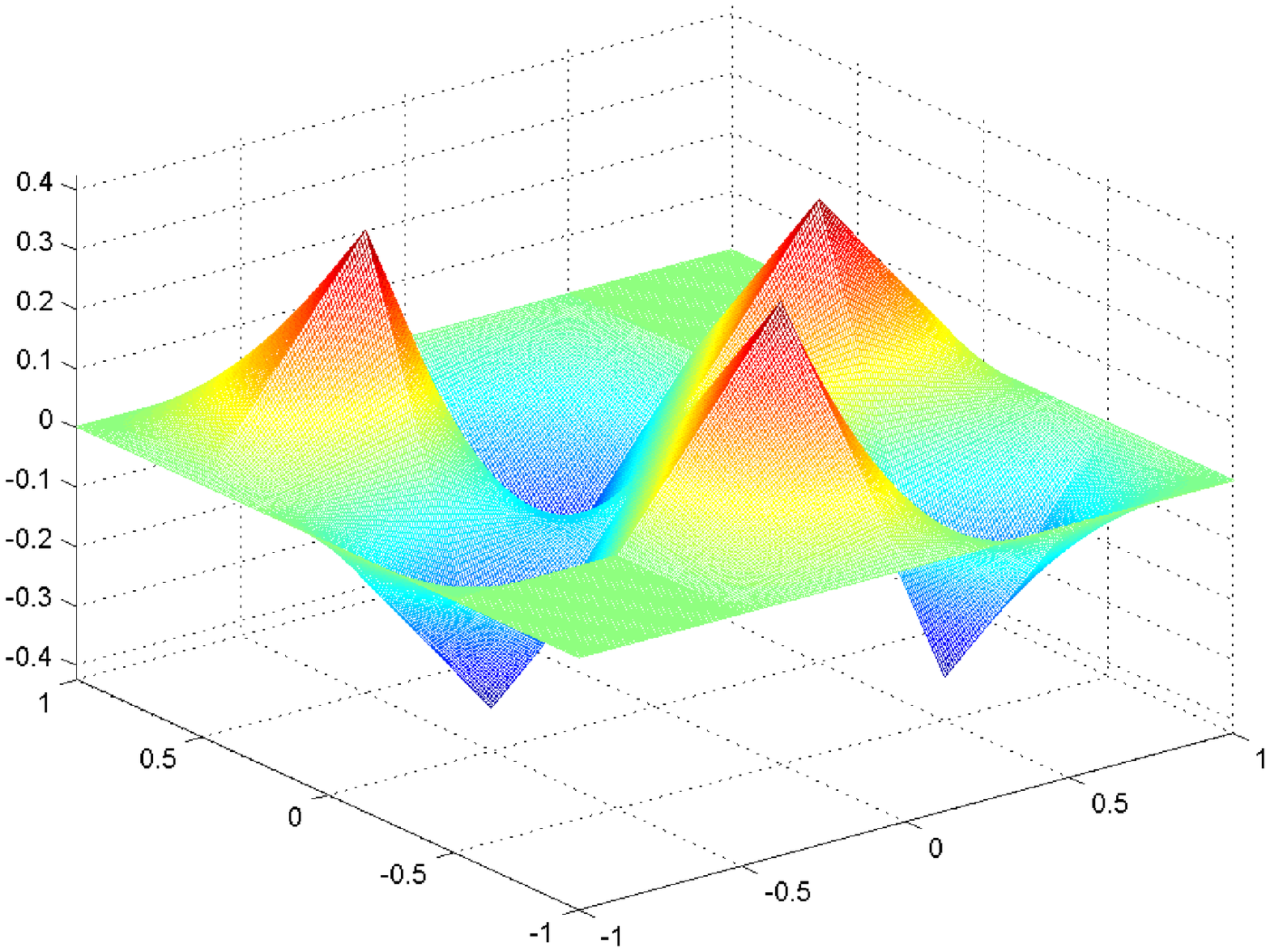}}\\
\subfigure{
\includegraphics[width=0.7in,height=0.7in]
{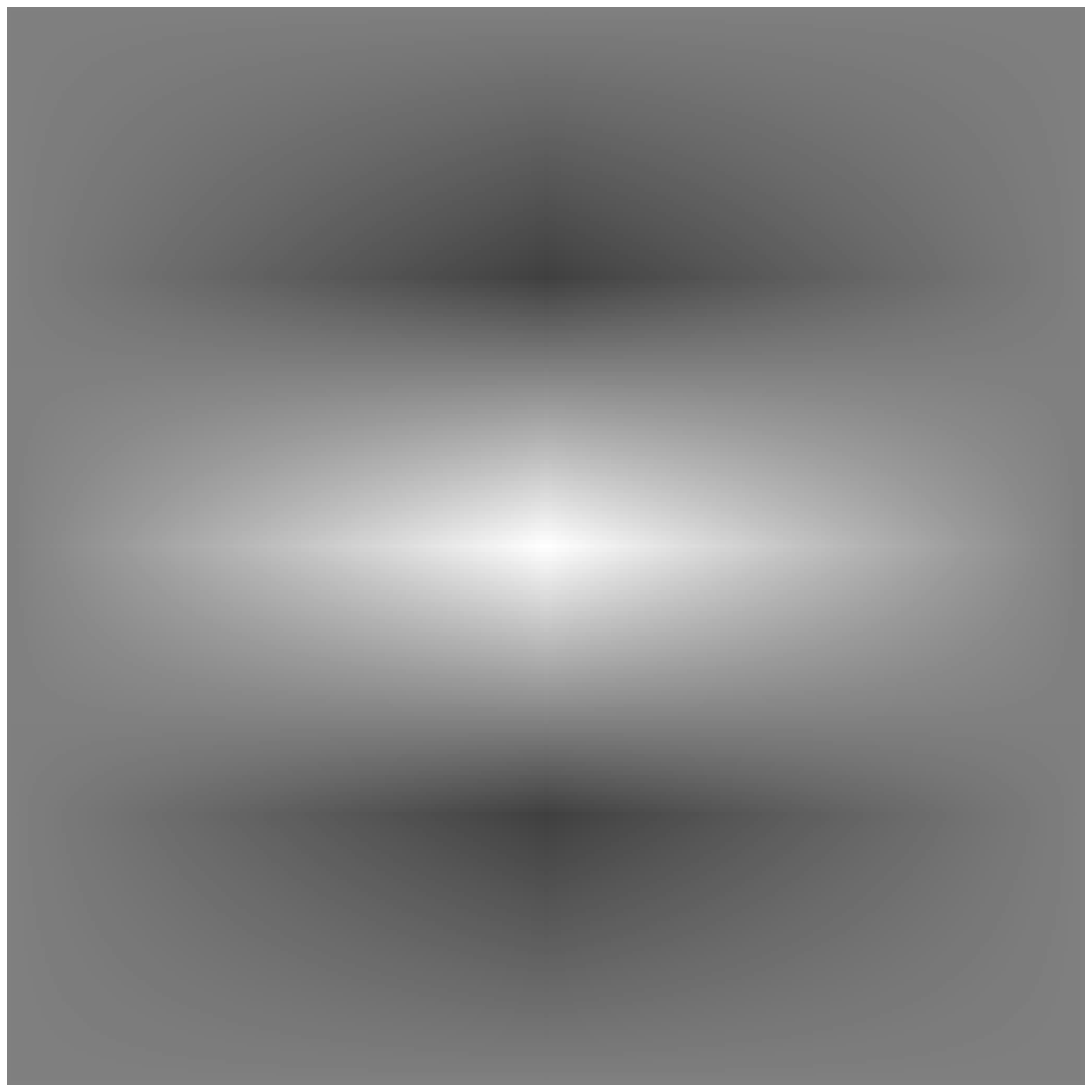}}
\subfigure{
\includegraphics[width=0.7in,height=0.7in]
{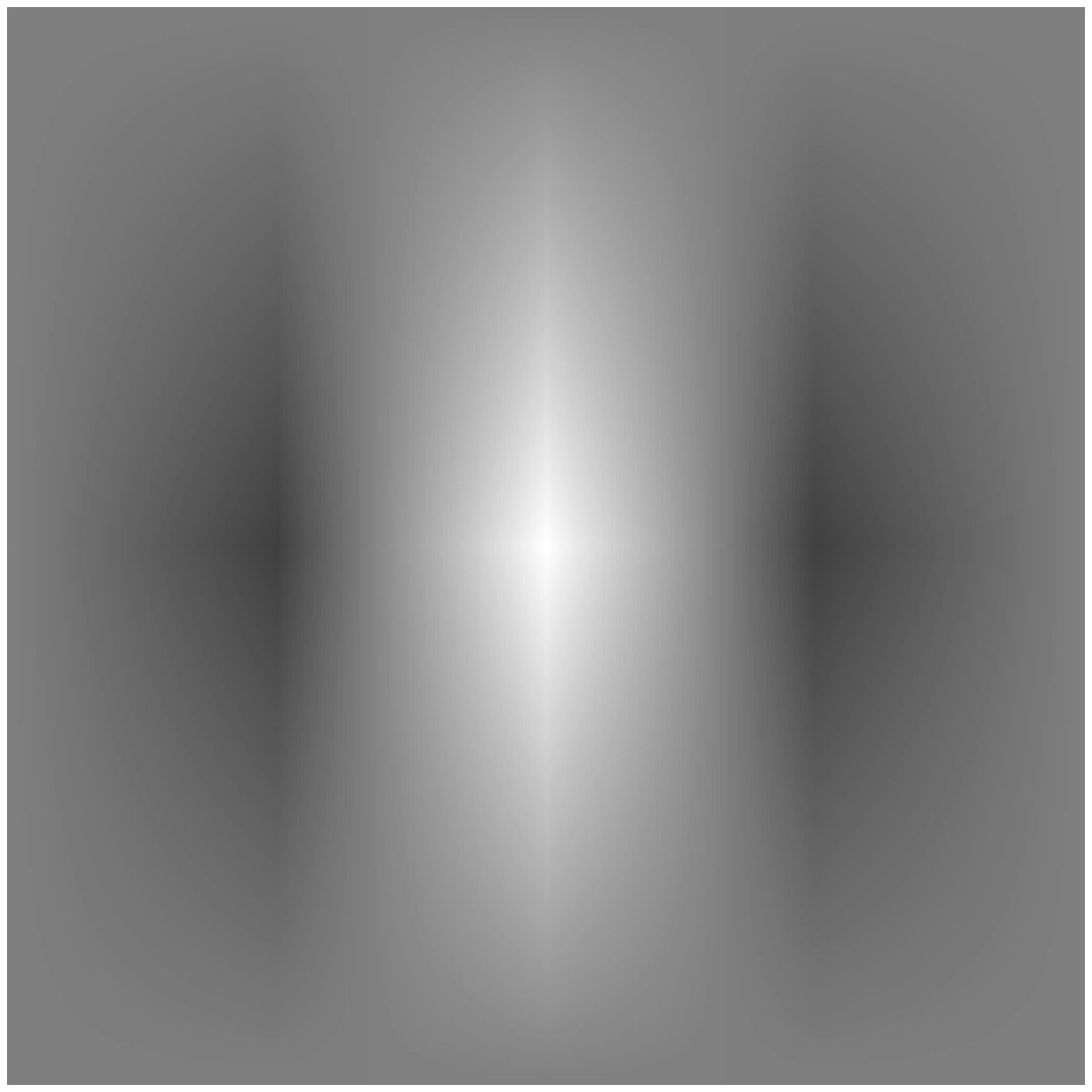}}
\subfigure{
\includegraphics[width=0.7in,height=0.7in]
{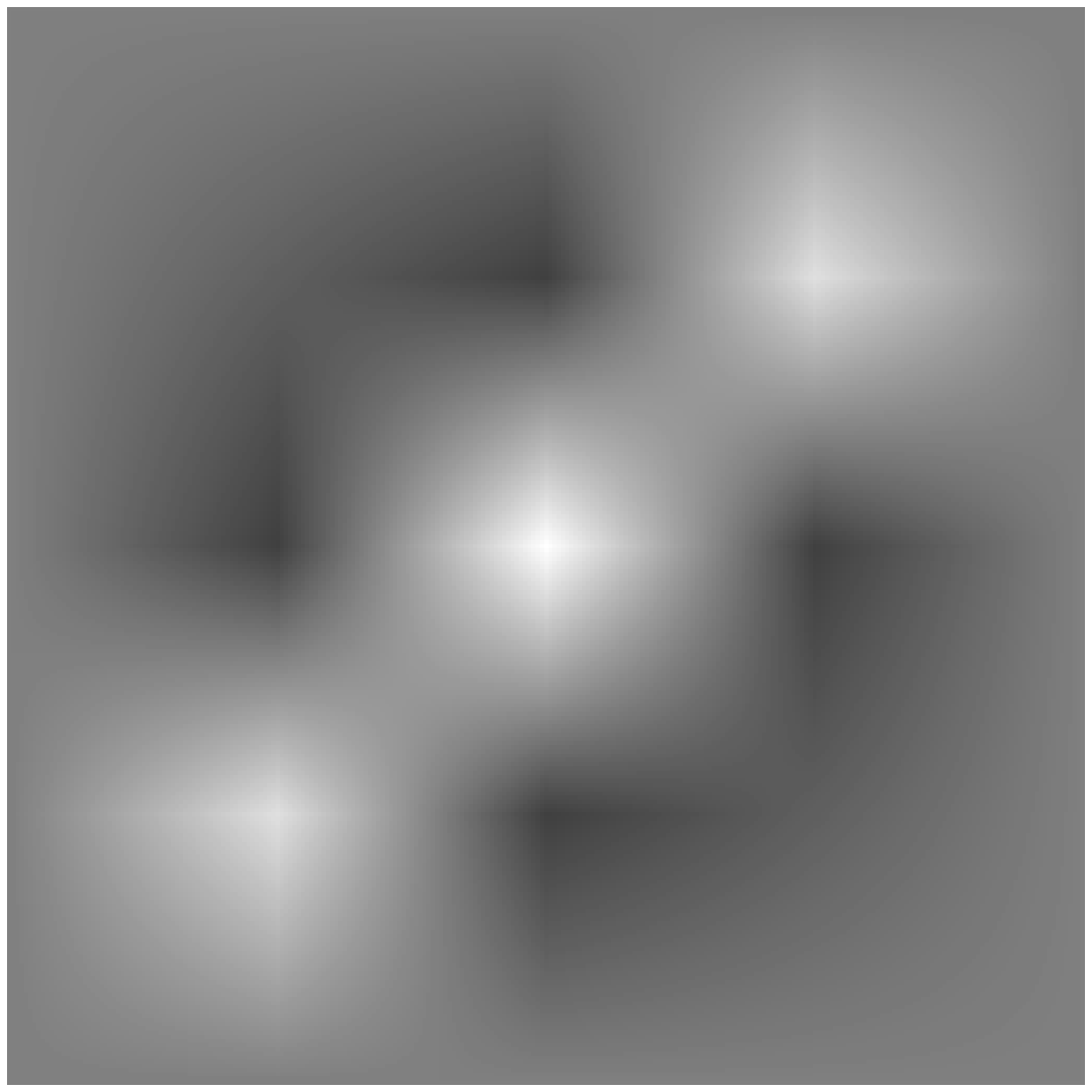}}
\subfigure{
\includegraphics[width=0.7in,height=0.7in]
{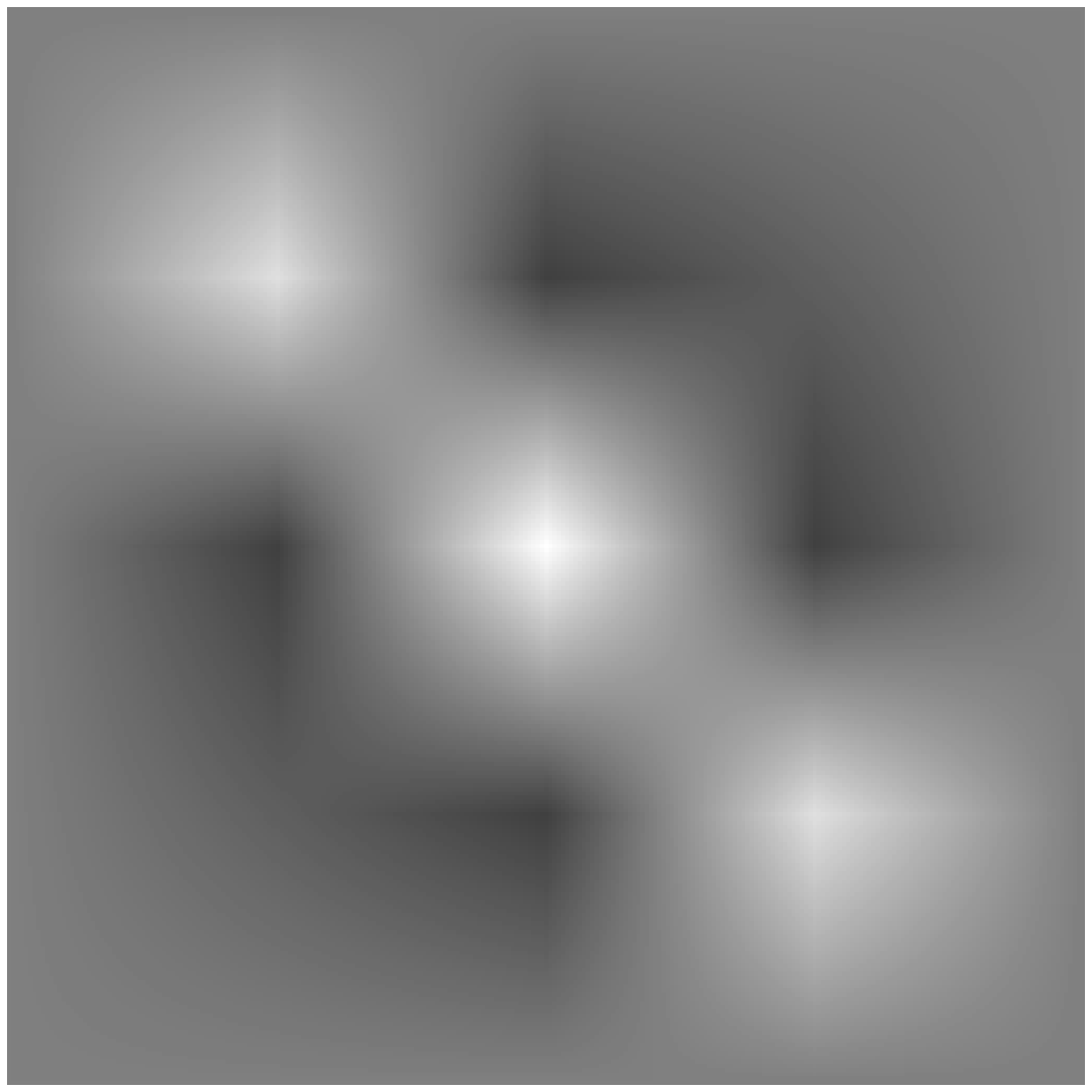}}
\subfigure{
\includegraphics[width=0.7in,height=0.7in]
{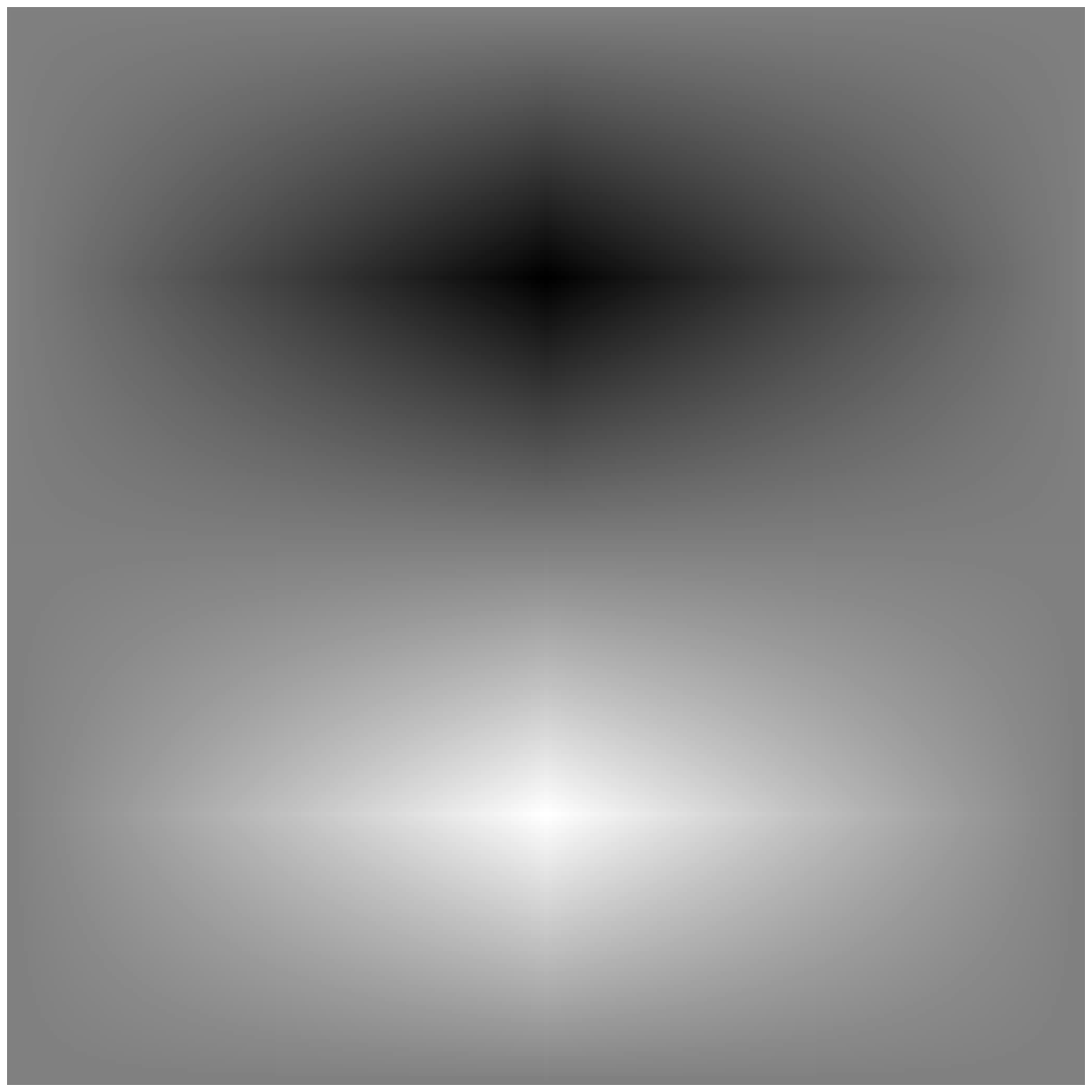}}
\subfigure{
\includegraphics[width=0.7in,height=0.7in]
{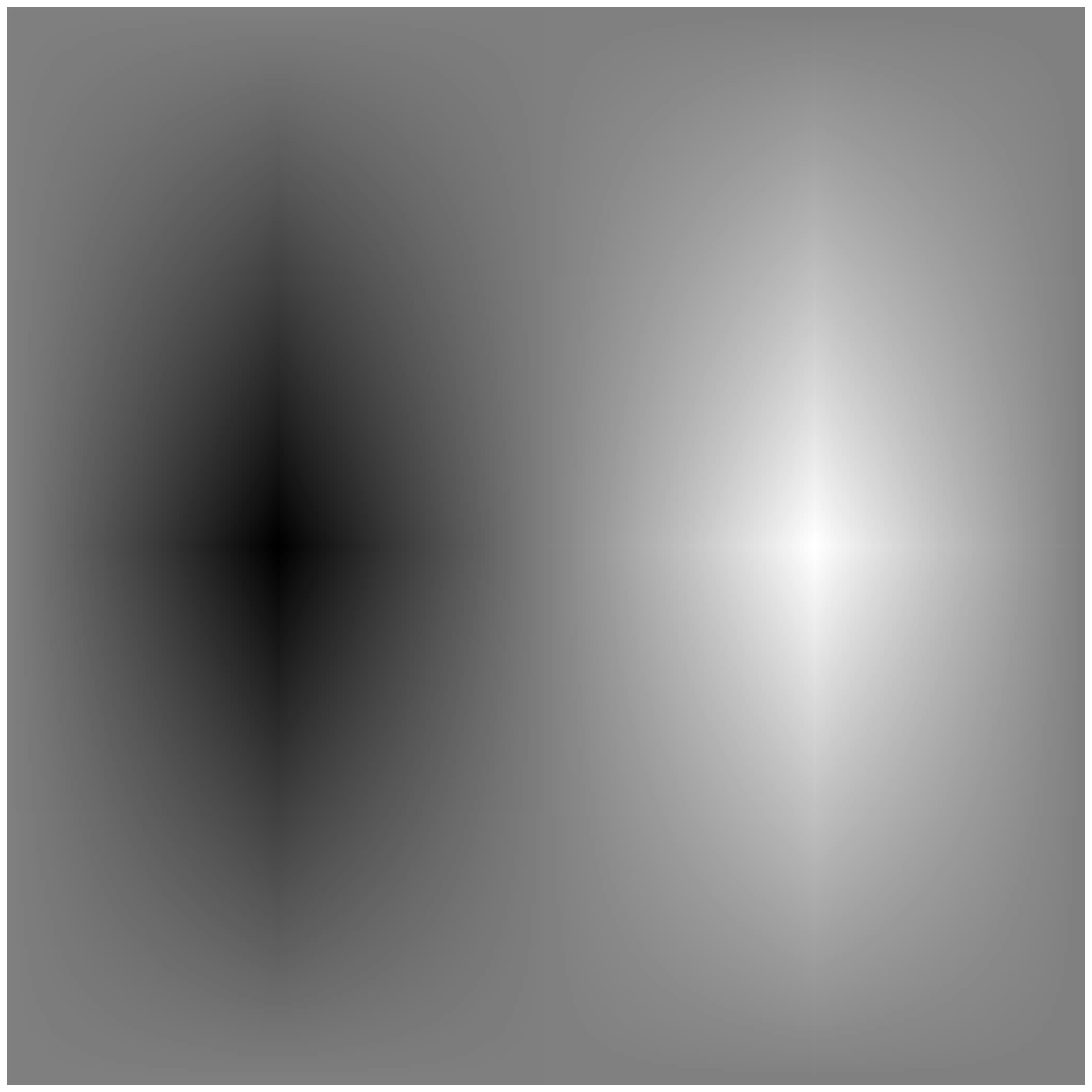}}
\subfigure{
\includegraphics[width=0.7in,height=0.7in]
{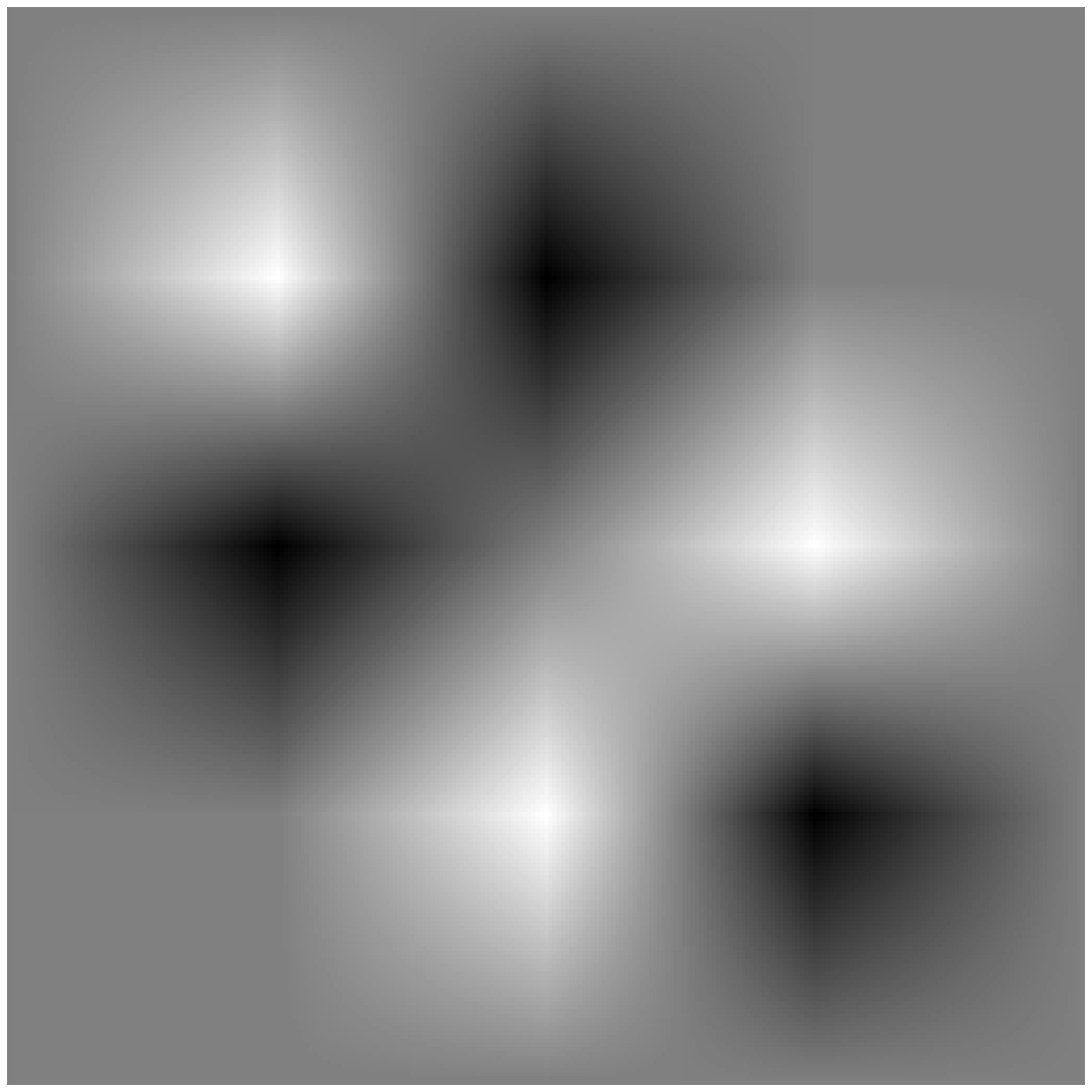}}
\subfigure{
\includegraphics[width=0.7in,height=0.7in]
{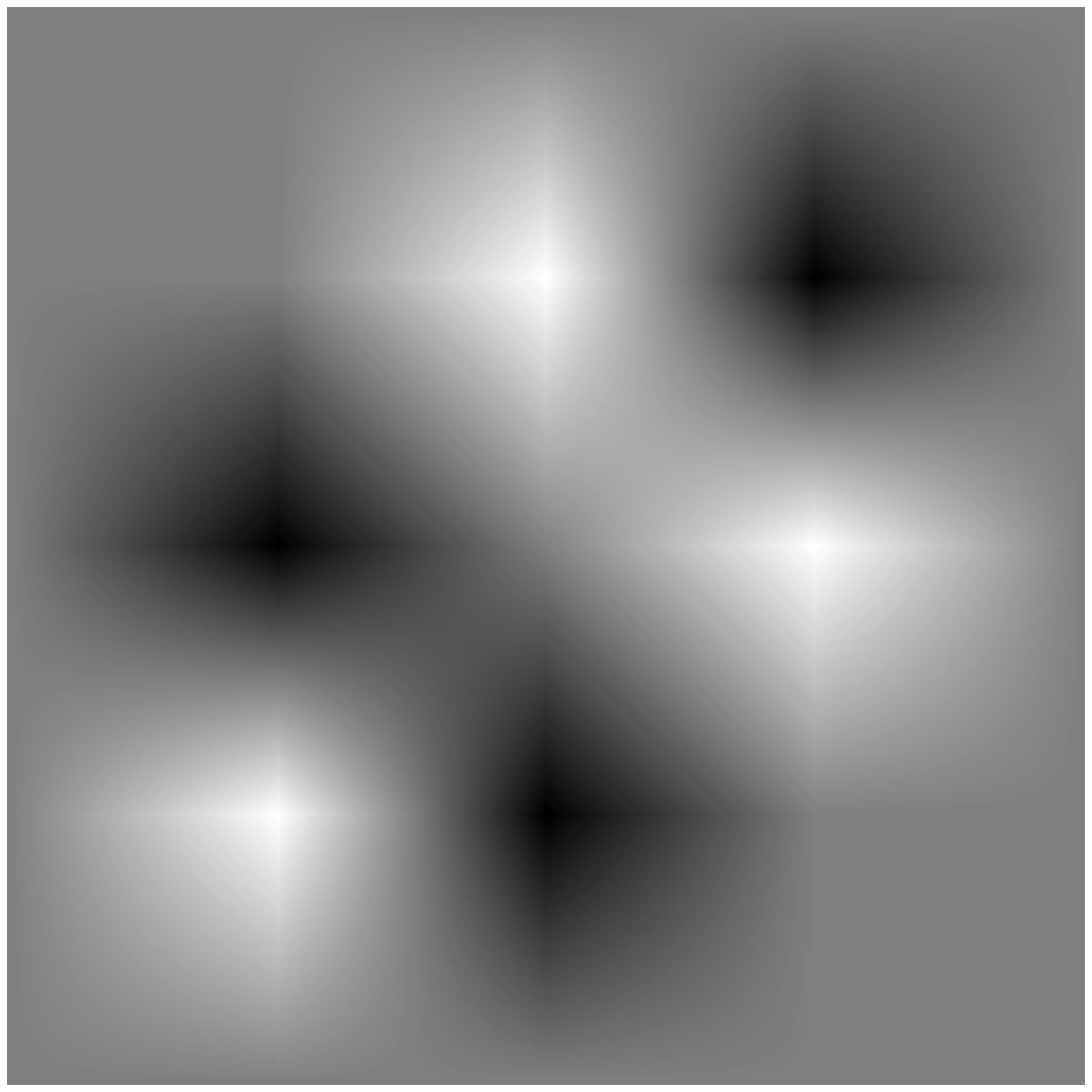}}
\begin{caption}{
The first row is for the real part and the second row is for the imaginary part of the tight framelet generators in Example~\ref{ex1} with $N=0$.
The third row is the greyscale image of the eight generators: the first four for real part and the last four for imaginary part.
} \label{fig1}
\end{caption}
\end{center}
\end{figure}

\begin{example}\label{ex2} {\rm Let $\pa(z) =z^{-2}(1 + z)^4/16=\{\tfrac{1}{16}, \tfrac{1}{4}, \tfrac{3}{8}, \tfrac{1}{4}, \tfrac{1}{16} \}_{[-2,2]}$ be the B-spline filter of order $4$.
Using Algorithm~\ref{alg:tffb}, we obtain a tight framelet filter bank $\{a; b_1, b_2\}$  with
\begin{align*}
&\pb_1(z) =\frac{\sqrt{34+8\sqrt{14}}(\sqrt{14}-4)}{2080}(1-z)[65z^3+(64\sqrt{14}+261)z^2+(40\sqrt{14}+155)z
+8\sqrt{14}+31],\\
&\pb_2(z) =\frac{\sqrt{34+8\sqrt{14}}(4\sqrt{14}-17)}{1300}(1-z) [ 10z^2-(5\sqrt{14}+15)z-\sqrt{14}-3].
\end{align*}
Applying Algorithm~\ref{alg:main} with $N =0$, we have a finitely supported complex tight framelet filter bank $\{a; b^p, b^n\}$ with
$b^n=\ol{b^p}$ and
\begin{align*}
\pb^p(z) =&(-0.00557113140380+0.0731731460340i)z^{-2}+(-0.0222840645179+0.292693813728i)z^{-1}\\
&-(0.318362332504+0.258768579113i)+(0.307215151326-0.0833740786820i)z\\
&+(0.0389934625526-0.0237234566220i)z^2.
\end{align*}
By calculation we have $d_{\R}=\frac{93}{128}\pi \approx 2.28256$, $d_A\approx 0.00187$, and $d_B\approx 0.762678$. 
If we take $N=2$, then
\begin{align*}
\pb^p(z)=&(0.0136421172460-0.00936826775525i)z^{-4}+(0.0545694833985-0.0374729096370i)z^{-3}\\
&-(0.117756260732-0.0384187816047i)z^{-2}+(0.176658675556-0.291095343052i)z^{-1}\\
&(0.215356267335+0.333766056656i)-(0.226650692255-0.0670707536790i)z\\
&-(0.0454230034494-0.00120115849369i)z^2-(0.0601885689225+0.0876476822545i)z^3\\
&-(0.0102063889665+0.0148634718020i)z^4.
\end{align*}
By calculation, we have $d_B\approx 0.283860$. 
See Figure~\ref{fig2} for the graphs of the eight tight framelet generators in the associated two-dimensional real-valued tight framelet for $L_2(\R^2)$ in \eqref{tp:ctf}.
} \end{example}

\begin{figure}[hb]
\begin{center}
\subfigure{
\includegraphics[width=1.5in,height=1.0in]
{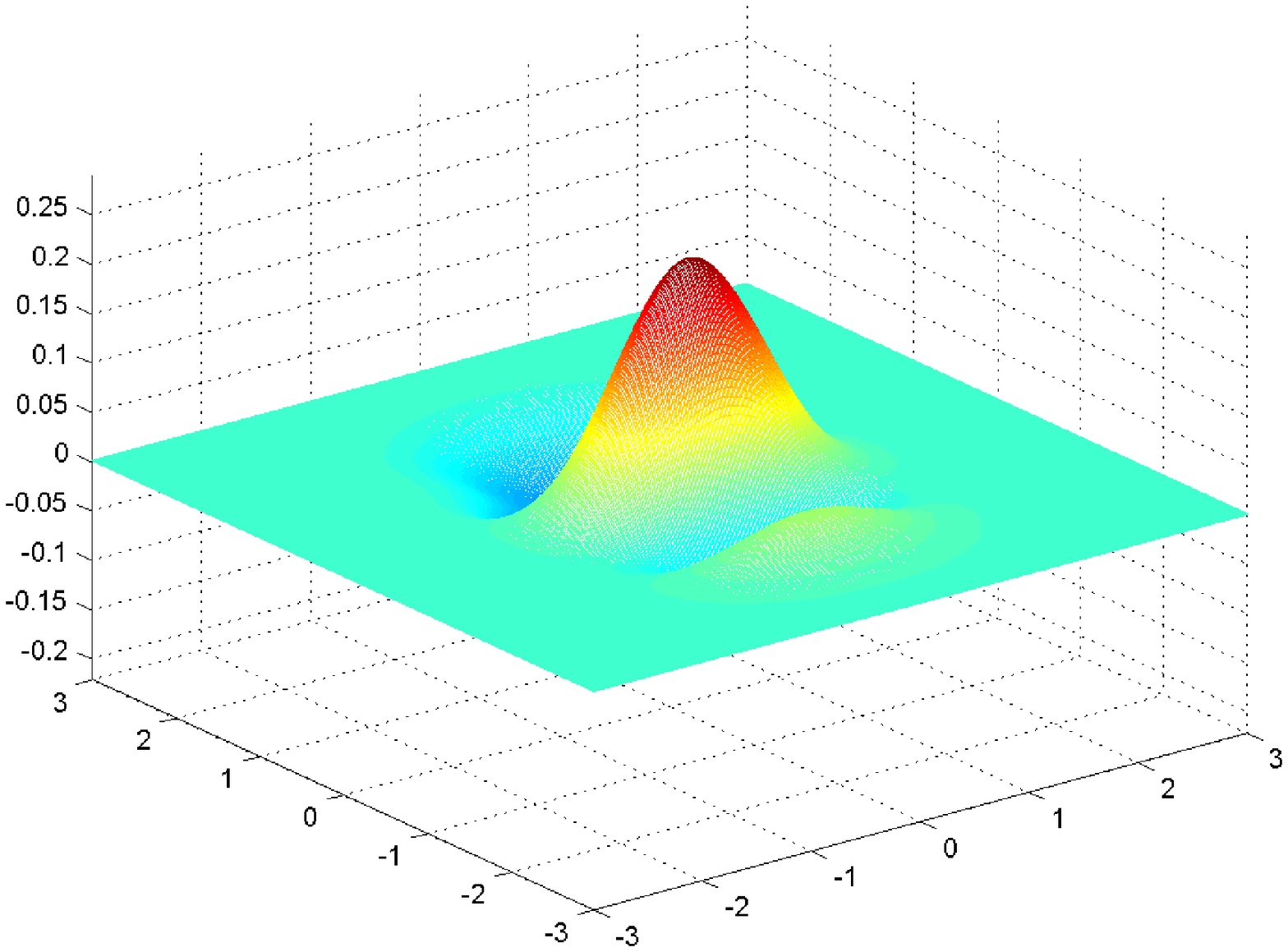}}
\subfigure{
\includegraphics[width=1.5in,height=1.0in]
{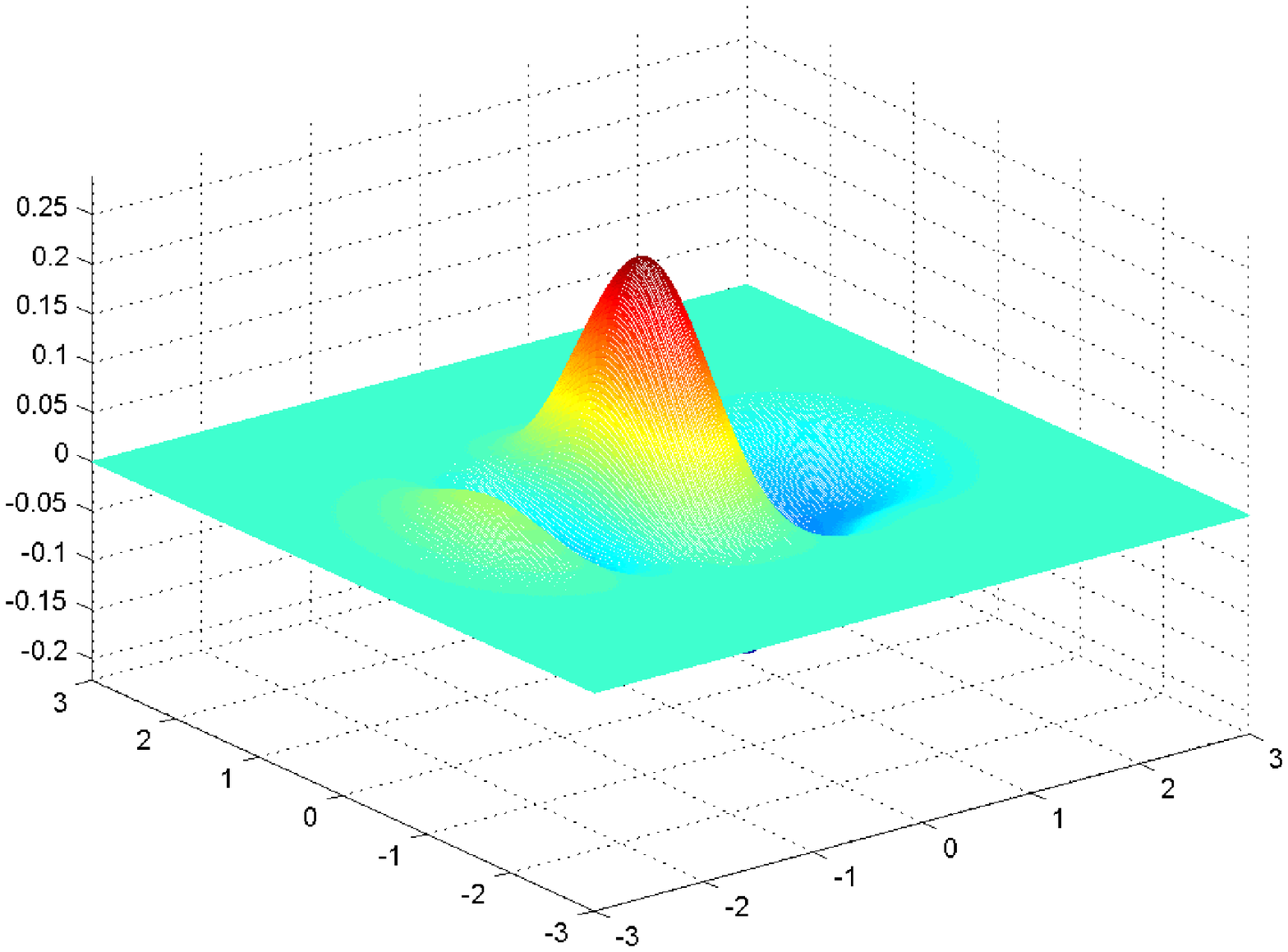}}
\subfigure{
\includegraphics[width=1.5in,height=1.0in]
{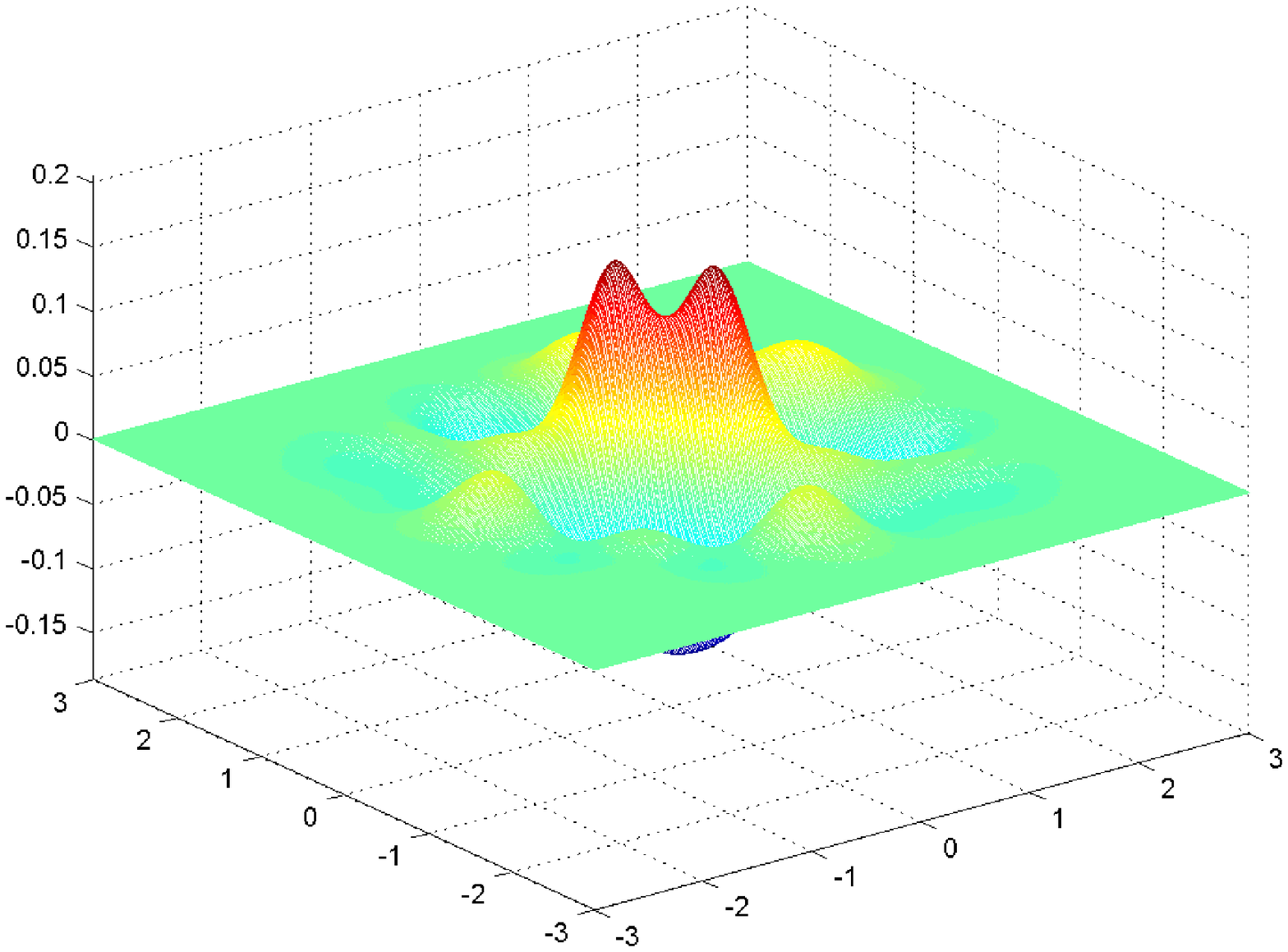}}
\subfigure{
\includegraphics[width=1.5in,height=1.0in]
{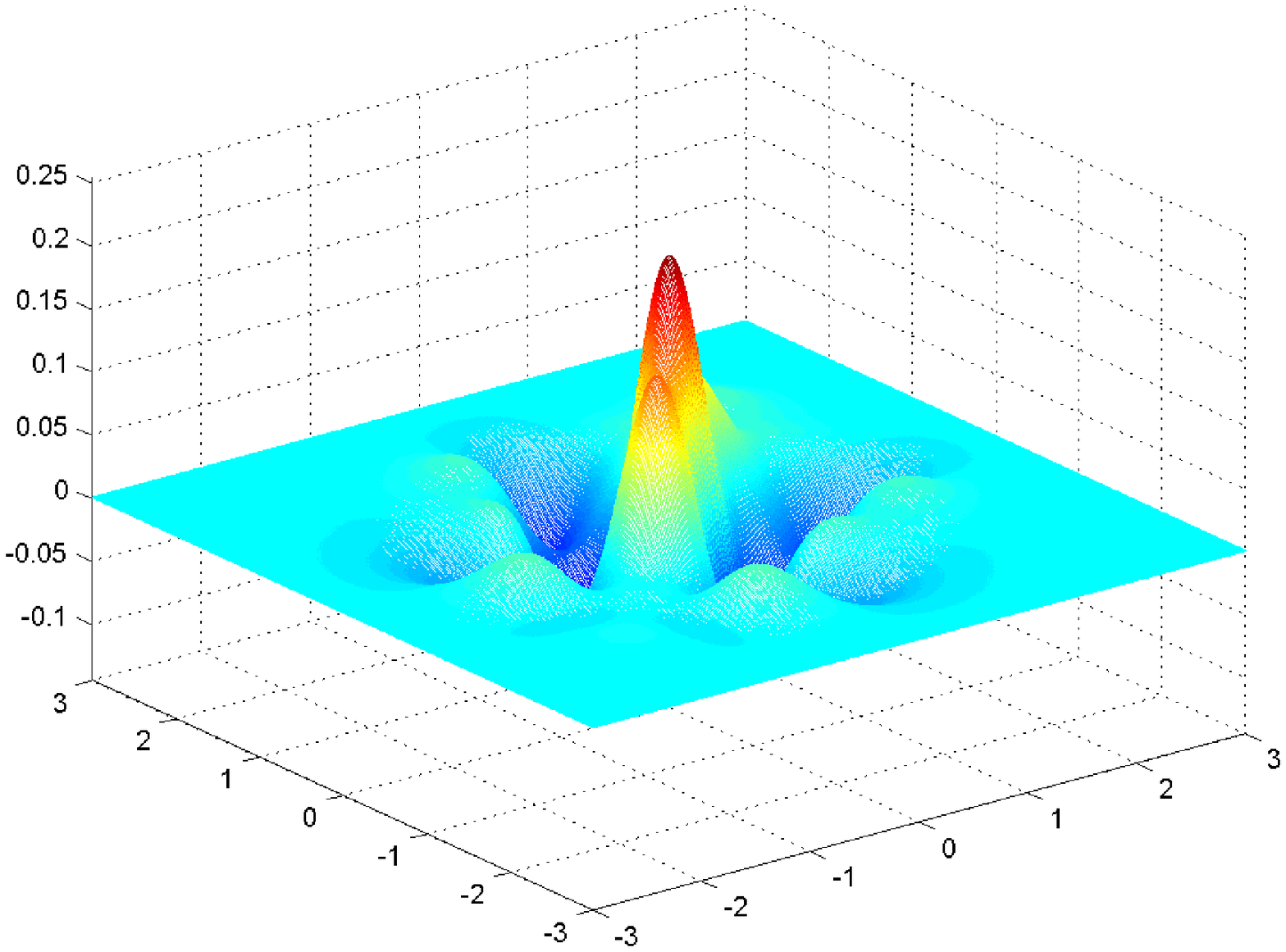}}\\
\subfigure{
\includegraphics[width=1.5in,height=1.0in]
{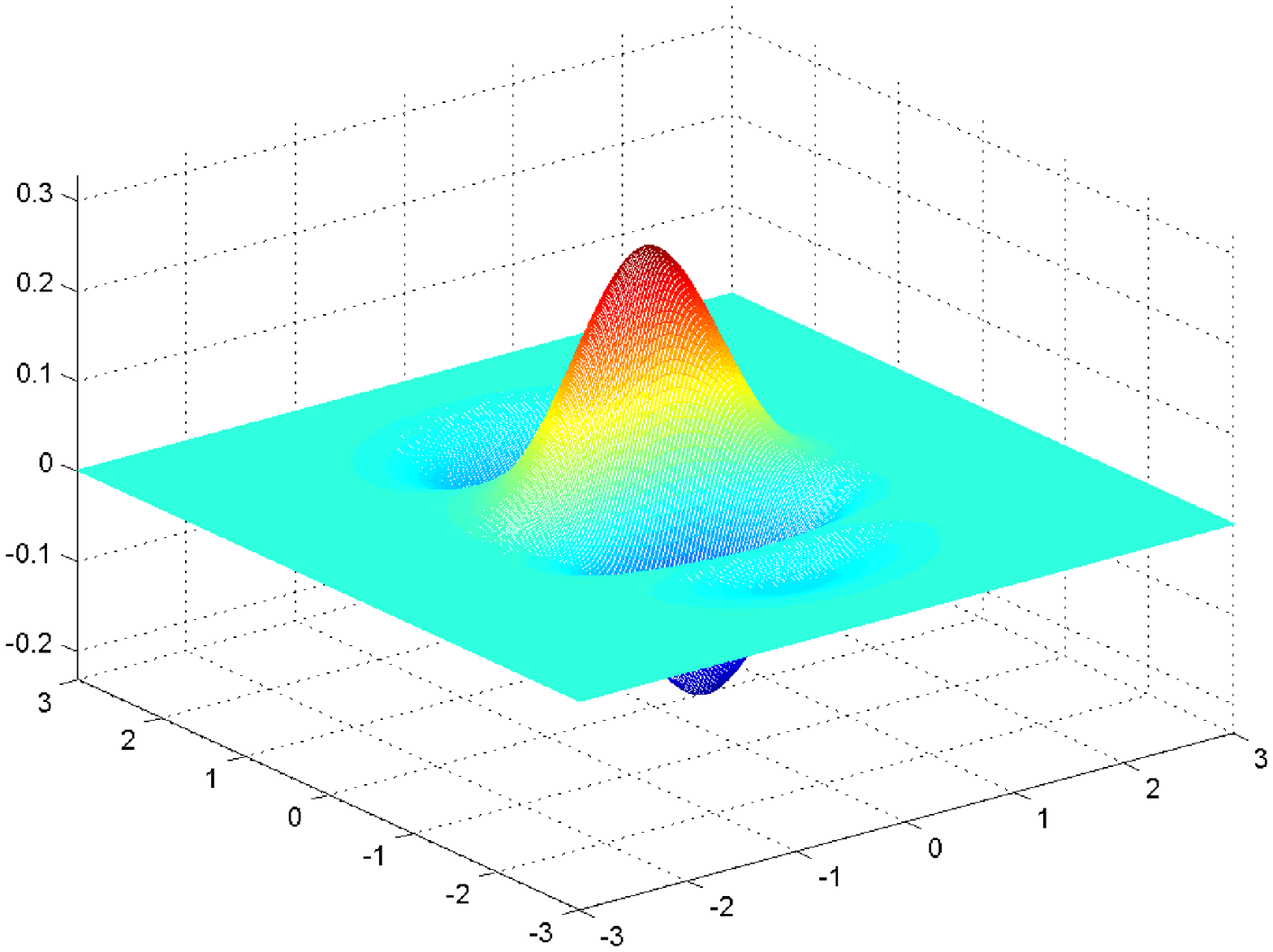}}
\subfigure{
\includegraphics[width=1.5in,height=1.0in]
{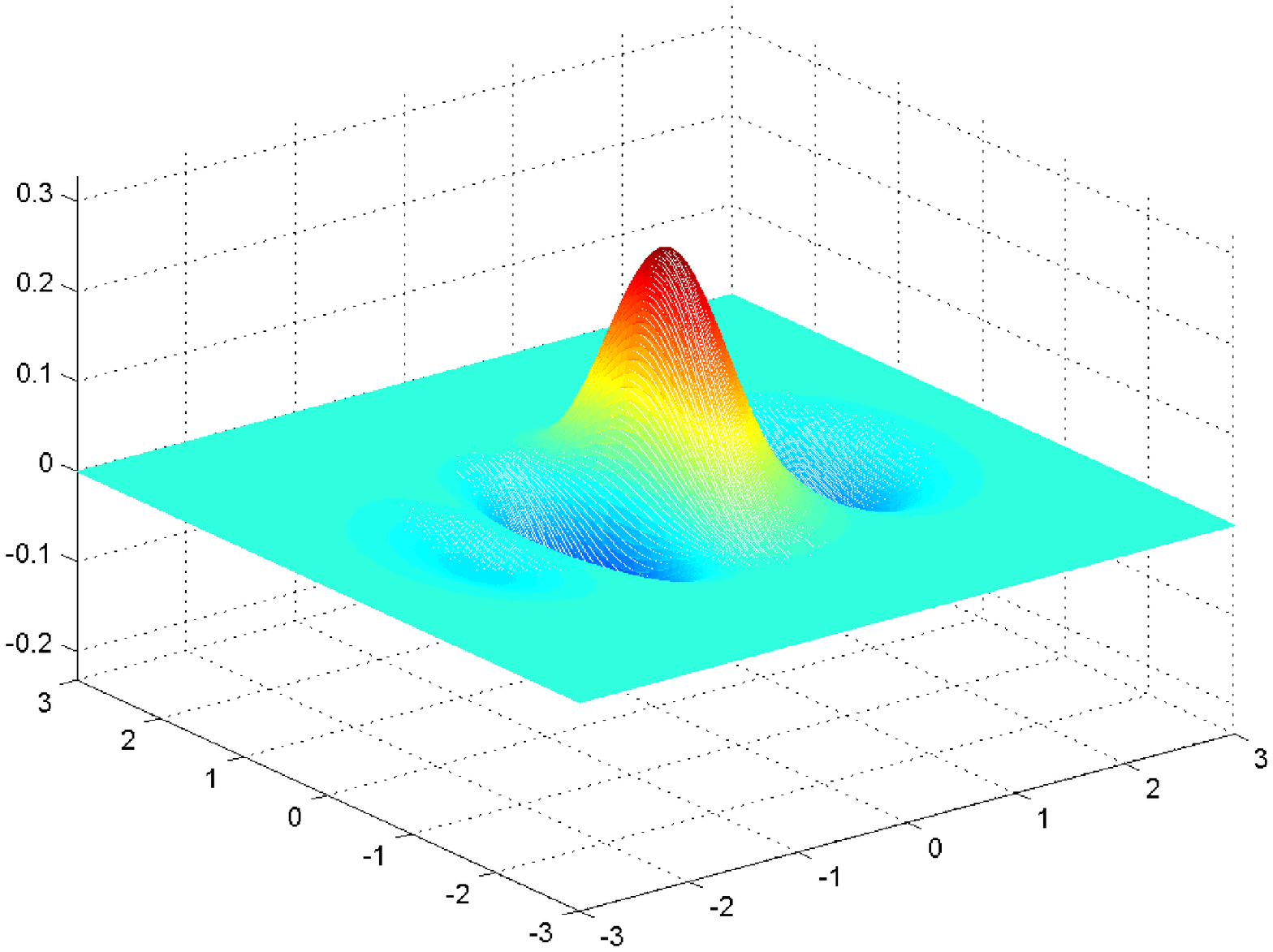}}
\subfigure{
\includegraphics[width=1.5in,height=1.0in]
{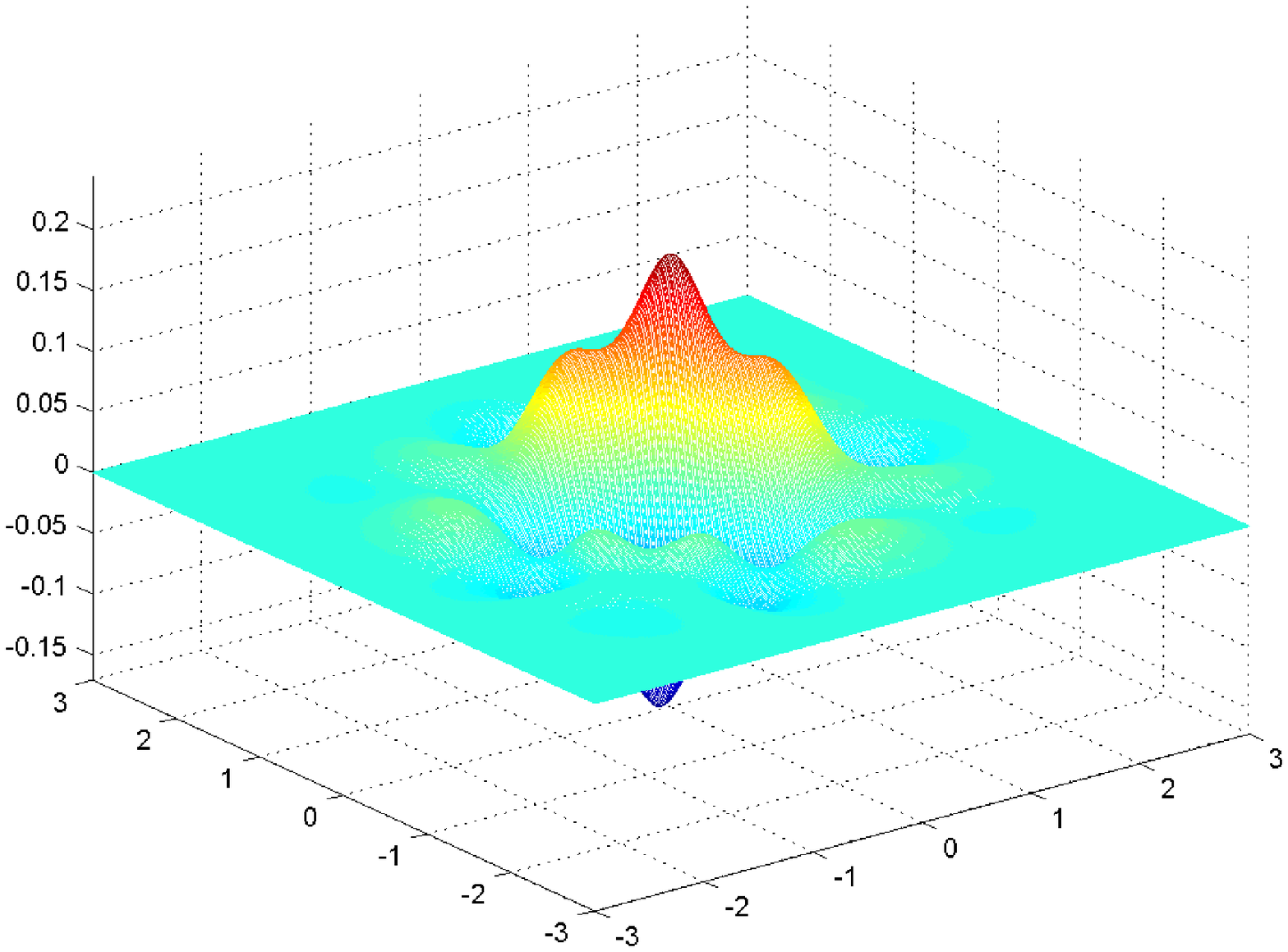}}
\subfigure{
\includegraphics[width=1.5in,height=1.0in]
{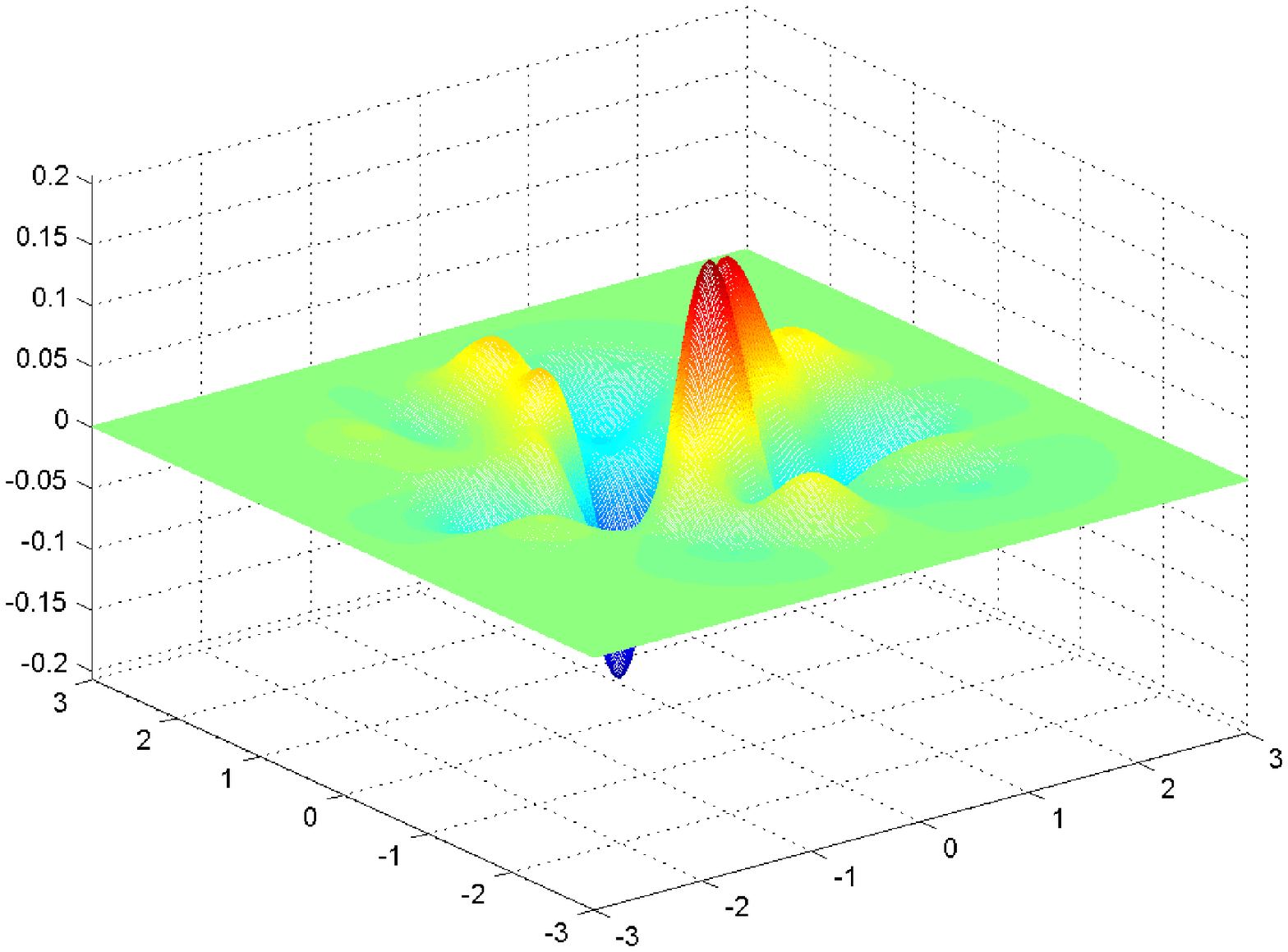}}\\
\subfigure{
\includegraphics[width=0.7in,height=0.7in]
{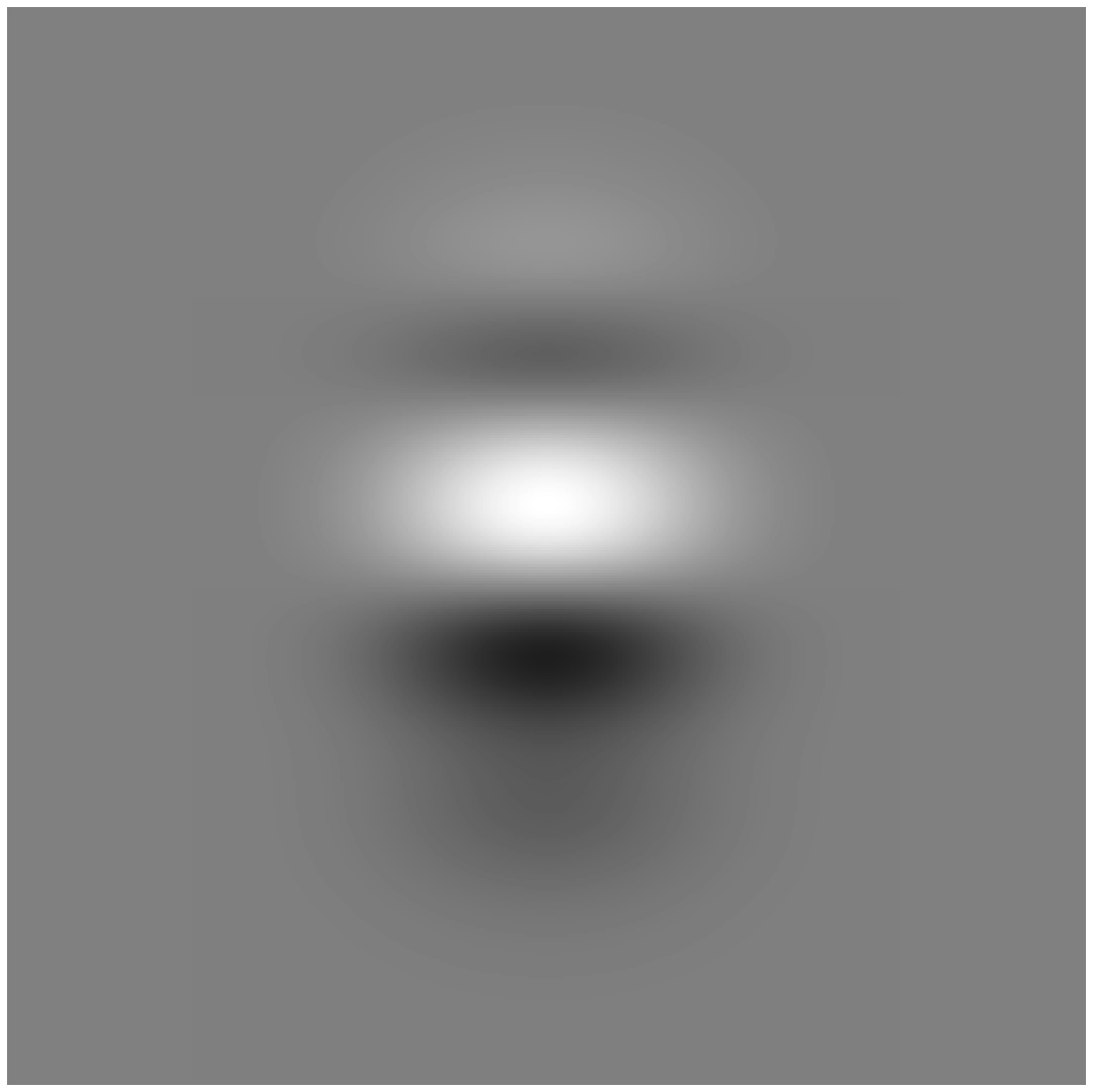}}
\subfigure{
\includegraphics[width=0.7in,height=0.7in]
{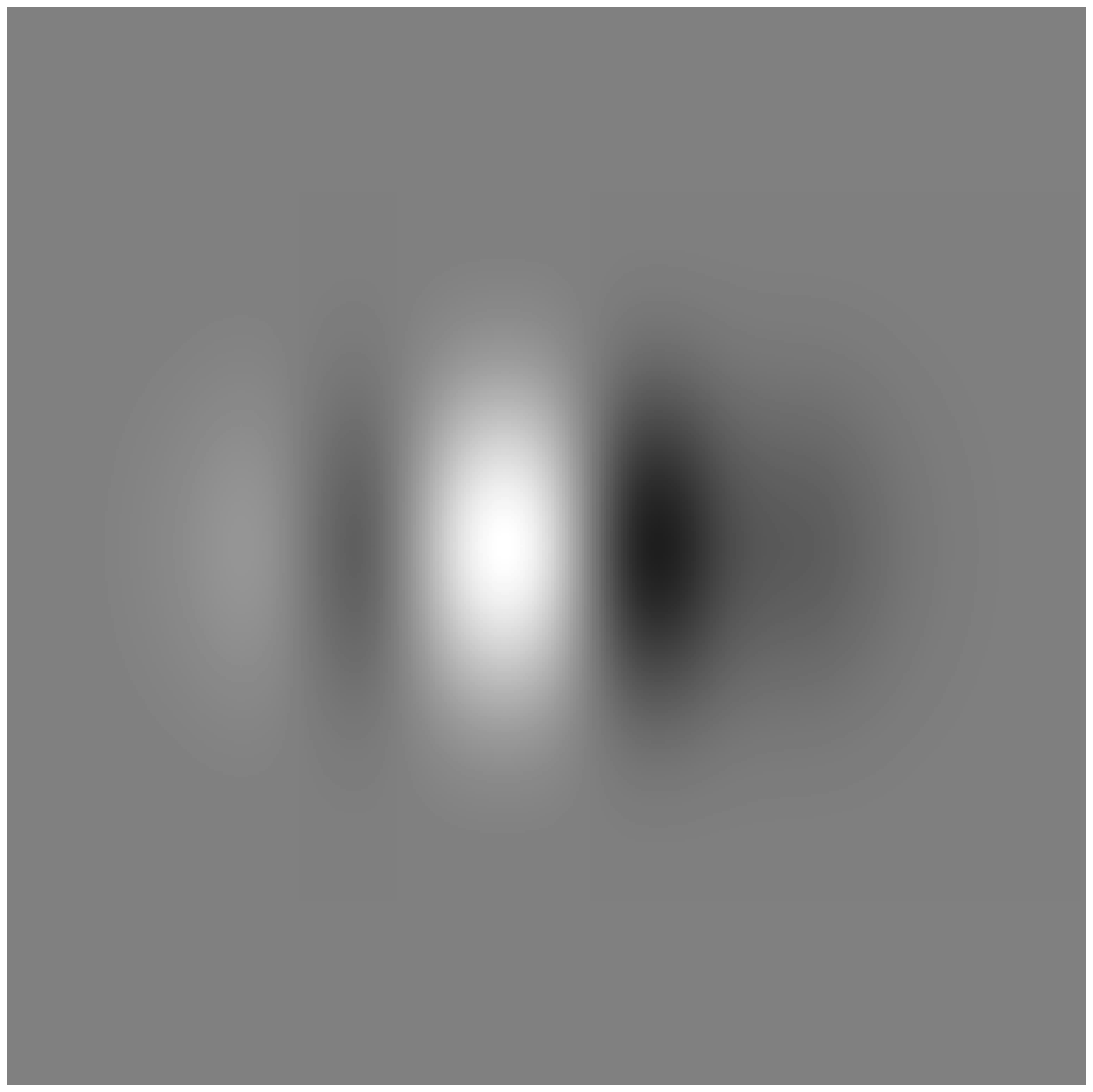}}
\subfigure{
\includegraphics[width=0.7in,height=0.7in]
{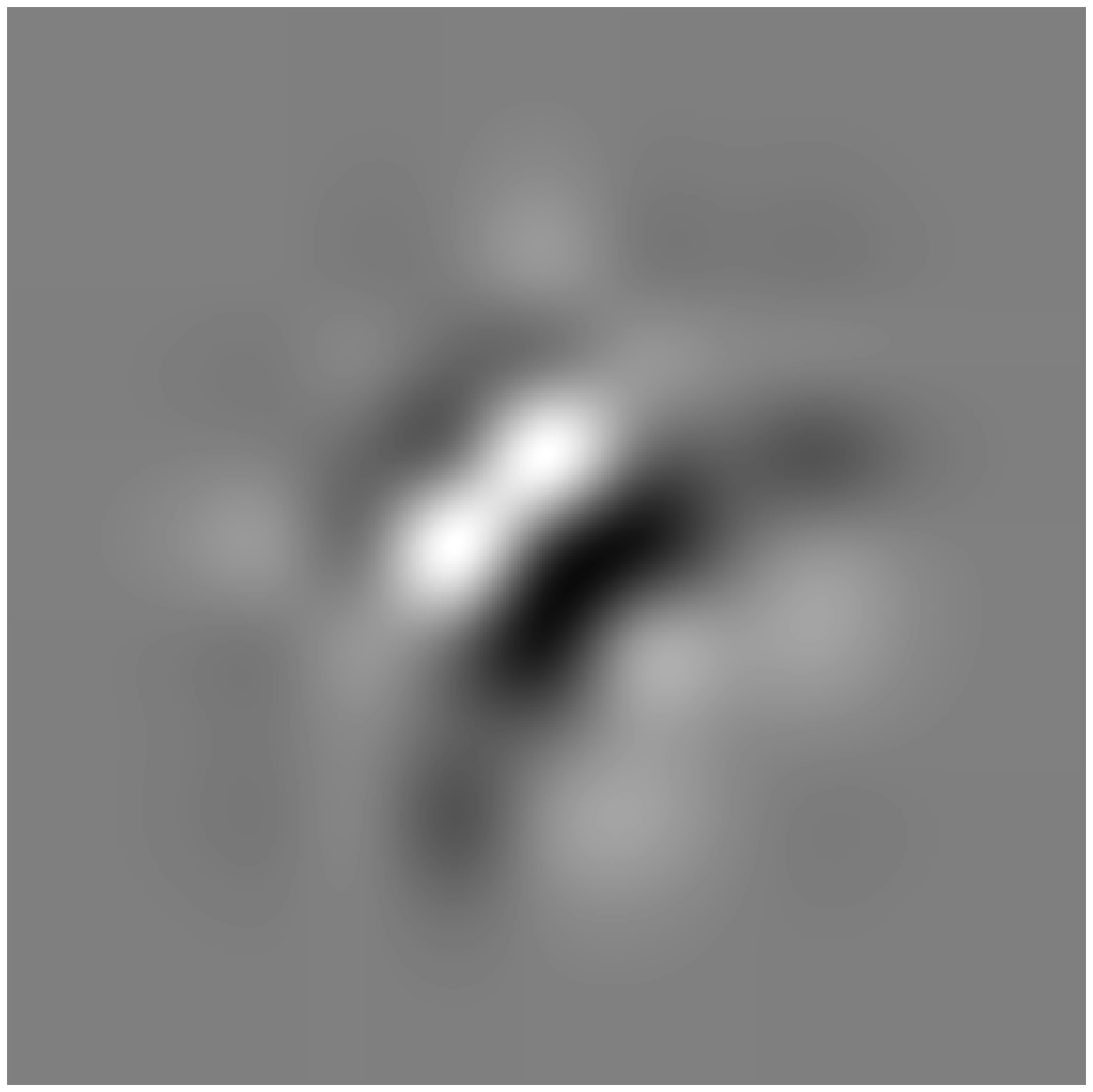}}
\subfigure{
\includegraphics[width=0.7in,height=0.7in]
{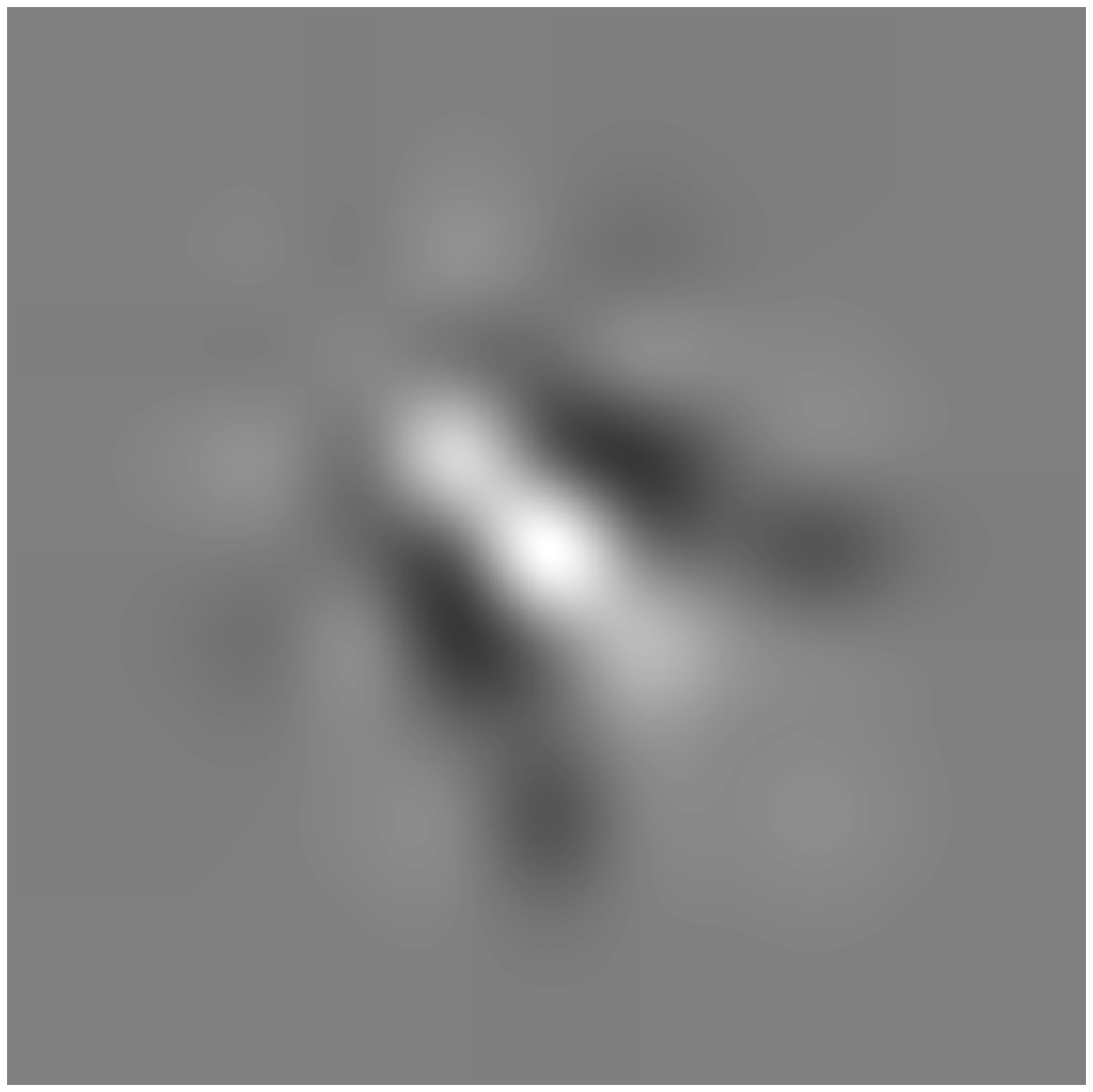}}
\subfigure{
\includegraphics[width=0.7in,height=0.7in]
{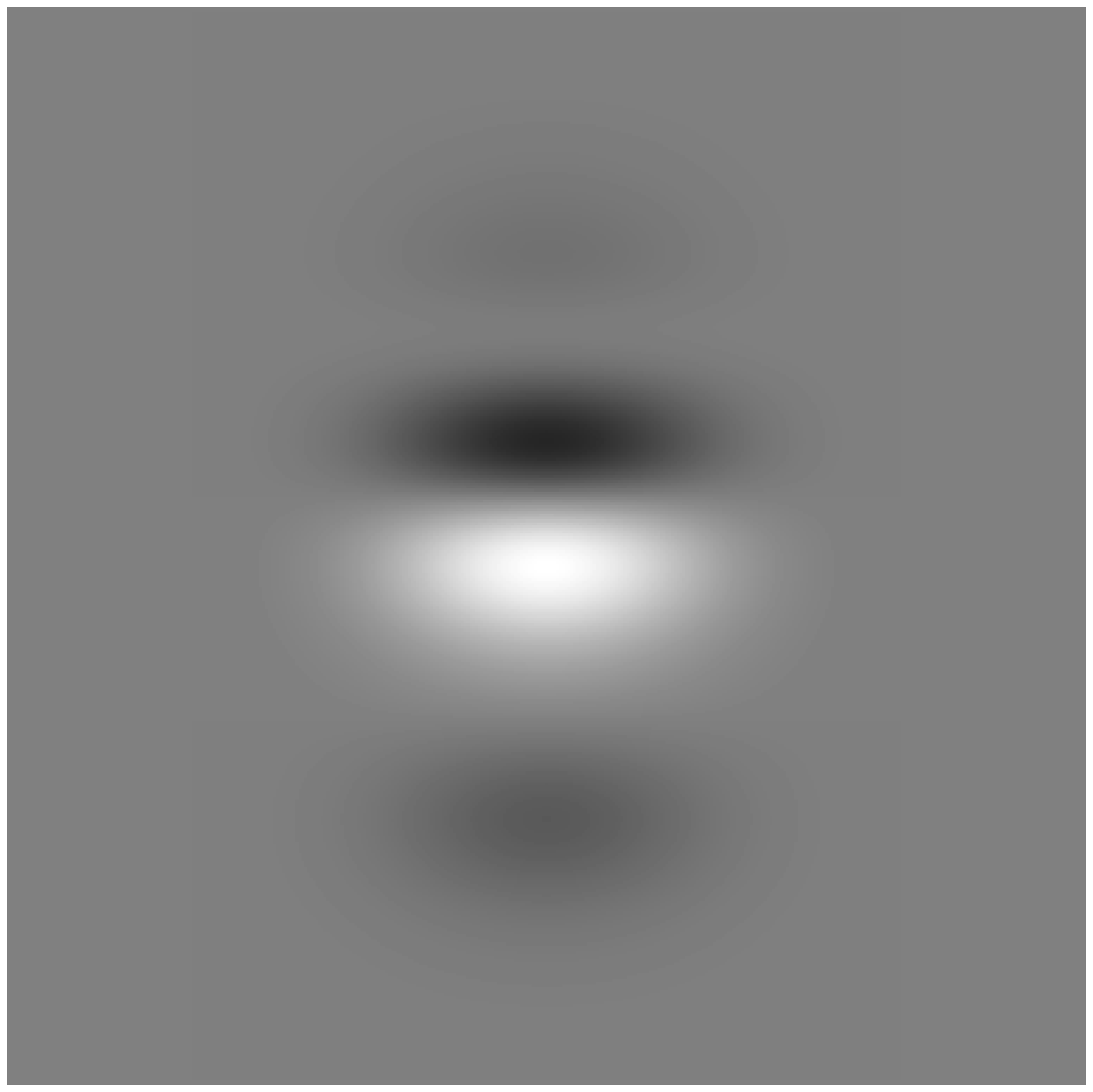}}
\subfigure{
\includegraphics[width=0.7in,height=0.7in]
{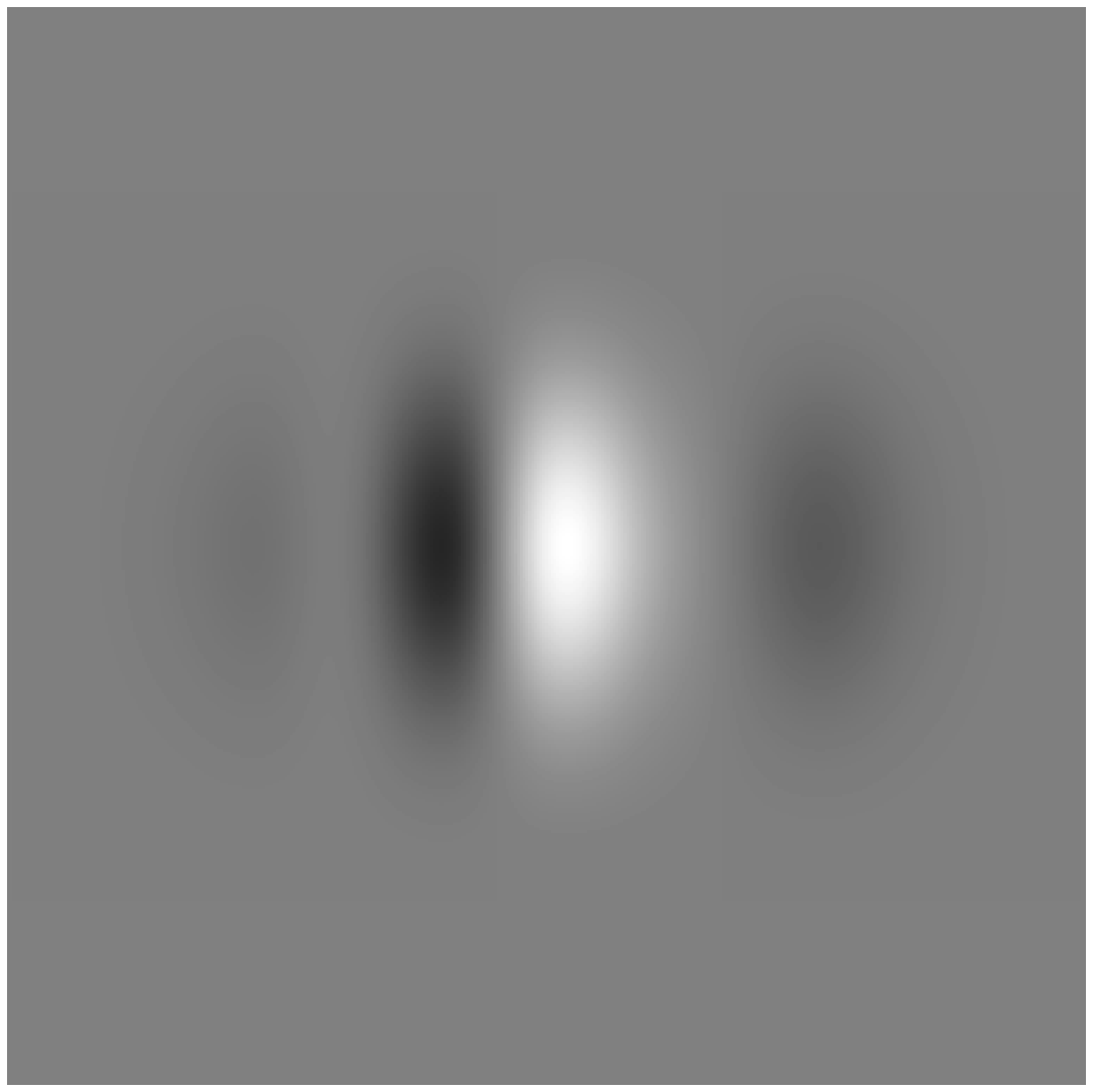}}
\subfigure{
\includegraphics[width=0.7in,height=0.7in]
{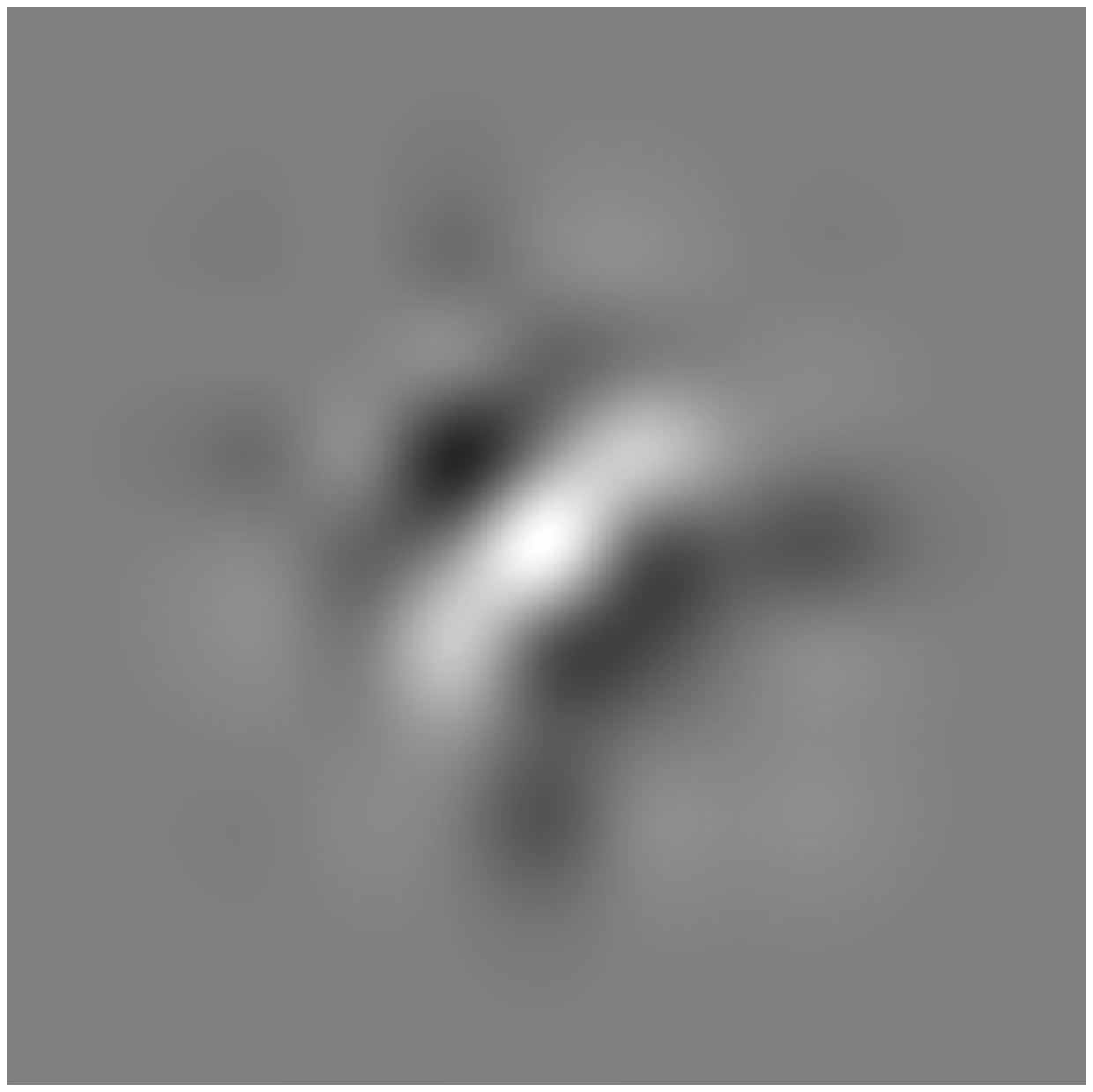}}
\subfigure{
\includegraphics[width=0.7in,height=0.7in]
{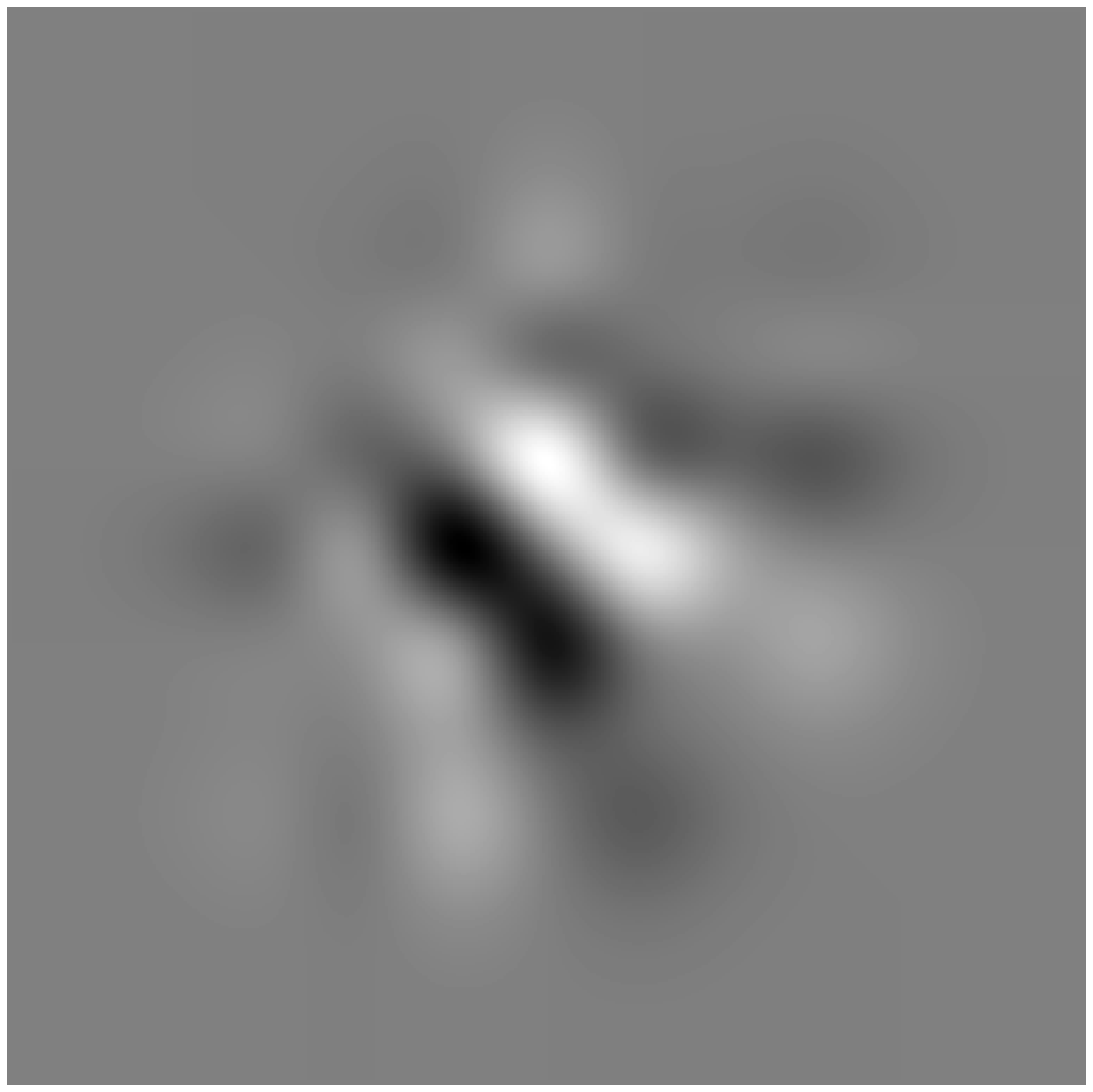}}
\begin{caption}{
The first row is for the real part and the second row is for the imaginary part of the tight framelet generators in Example~\ref{ex2} with $N=2$.
The third row is the greyscale image of the eight generators:
the first four for real part and the last four for imaginary part.
} \label{fig2}
\end{caption}
\end{center}
\end{figure}

\begin{example}\label{ex3} {\rm Let $\pa(z) =-\frac{1}{32}z^{-3}+\frac{9}{32}z^{-1}+\frac{1}{2}+\frac{9}{32}z-\frac{1}{32}z^3=\{-\tfrac{1}{32},0, \tfrac{9}{32}, \tfrac{1}{2}, \tfrac{9}{32},0, -\tfrac{1}{32} \}_{[-3,3]}$ be an interpolatory filter.
Using Algorithm~\ref{alg:tffb}, we obtain a tight framelet filter bank $\{a; b_1, b_2\}$ with
\begin{align*}
&\pb_1(z) =\frac{\sqrt{298527-142344\sqrt{3}}(72\sqrt{3}+151)}{458600736}z^{-3}(z-1)^2(z+2-\sqrt{3})[
1977z^3+(512+57\sqrt{3})z^2\\
&\qquad +(21+86\sqrt{3})z-86-7\sqrt{3}],\\
&\pb_2(z) =\frac{\sqrt{298527-142344\sqrt{3}}(2\sqrt{2}+\sqrt{6})}{173976}z^{-3}(z-1)^2(z+2-\sqrt{3})[
-44z^2+(\sqrt{3}-6)z+2\sqrt{3}-1].
\end{align*}
Applying Algorithm~\ref{alg:main} with $N =0$, we have a finitely supported complex tight framelet filter bank $\{a; b^p, b^n\}$ with
$b^n=\ol{b^p}$ and
\begin{align*}
\pb^p(z) =&(0.000765760176753+0.00404161855341i)z^{-3}-(0.0403653729400+0.0880450827053i)z^{-1}\\
&-(0.0122521628281+0.0646658968547i)+(0.267462323473+0.228631206605i)z\\
&-(0.341301227764-0.0646658968553i)z^2+(0.125690679881-0.144627742454i)z^3.
\end{align*}
By calculation we have $d_{\R}=\frac{151}{256}\pi \approx 1.85305$, $d_A\approx 0.03719$, and $d_B\approx 0.690756$. 
If we take $N=2$, then
\begin{align*}
\pb^p(z)=&(0.000127813163113+0.000468578346236i)z^{-5}-(0.0030678318507+0.0157028980677i)z^{-3}\\
&-(0.00204501060981+0.00749725353983i)z^{-2}+(-0.0374047192912+0.0481138677939i)z^{-1}\\
&-(0.0665960959764+0.172855502748i)+(0.350214784761+0.131605792364i)z\\
&-(0.245342403088-0.169559360298i)z^2-(0.0151368278980+0.148441081755i)z^3\\
&-(0.0395698809180e-0.0107933959919i)z^4+(0.0588201717066-0.0160442586839i)z^5.
\end{align*}
By calculation, we have $d_B\approx 0.307271$. 
See Figure~\ref{fig3} for the graphs of the eight tight framelet generators in the associated two-dimensional real-valued tight framelet for $L_2(\R^2)$ in \eqref{tp:ctf}.
} \end{example}

\begin{figure}[ht]
\begin{center}
\subfigure{
\includegraphics[width=1.5in,height=1.0in]
{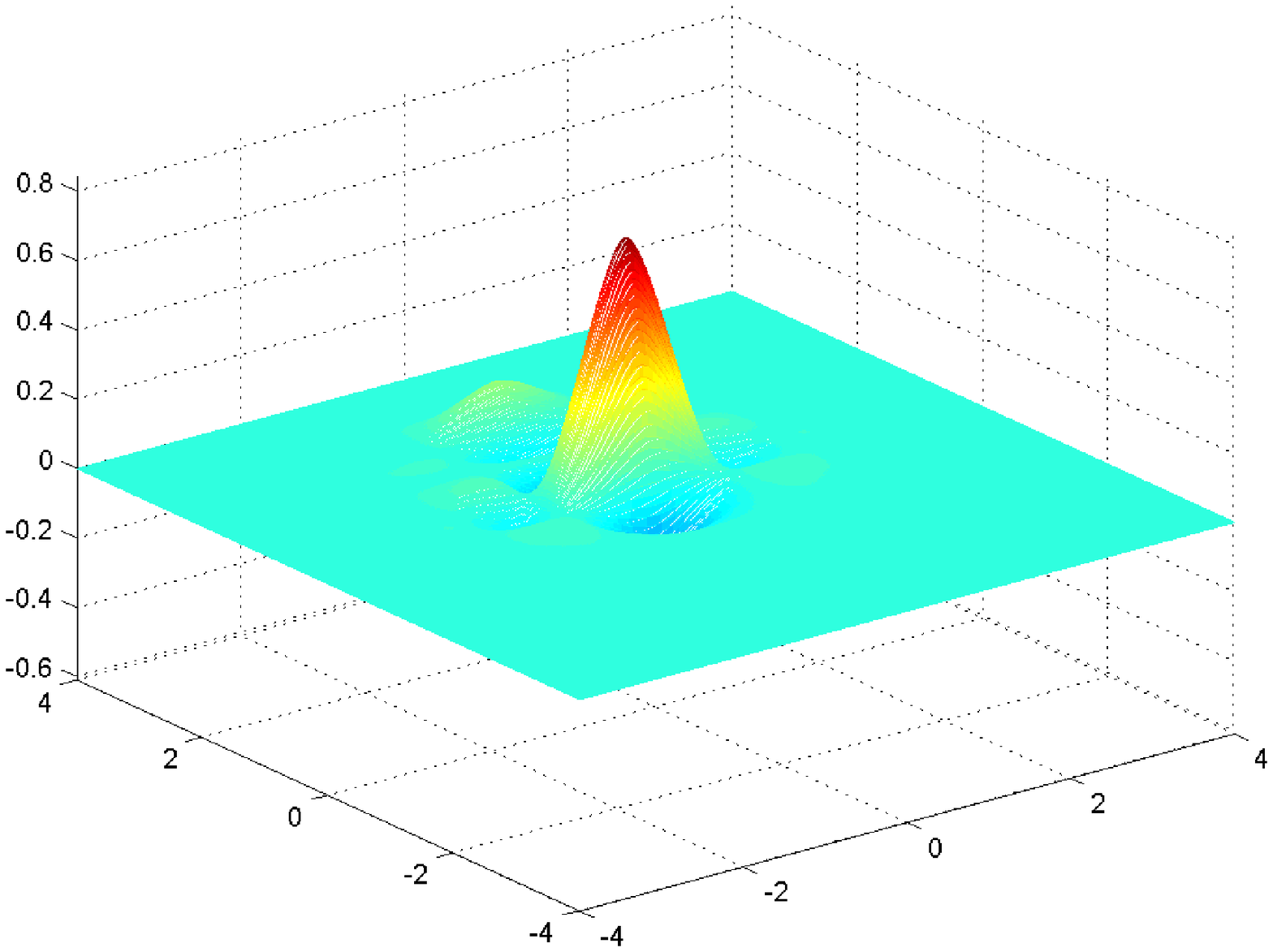}}
\subfigure{
\includegraphics[width=1.5in,height=1.0in]
{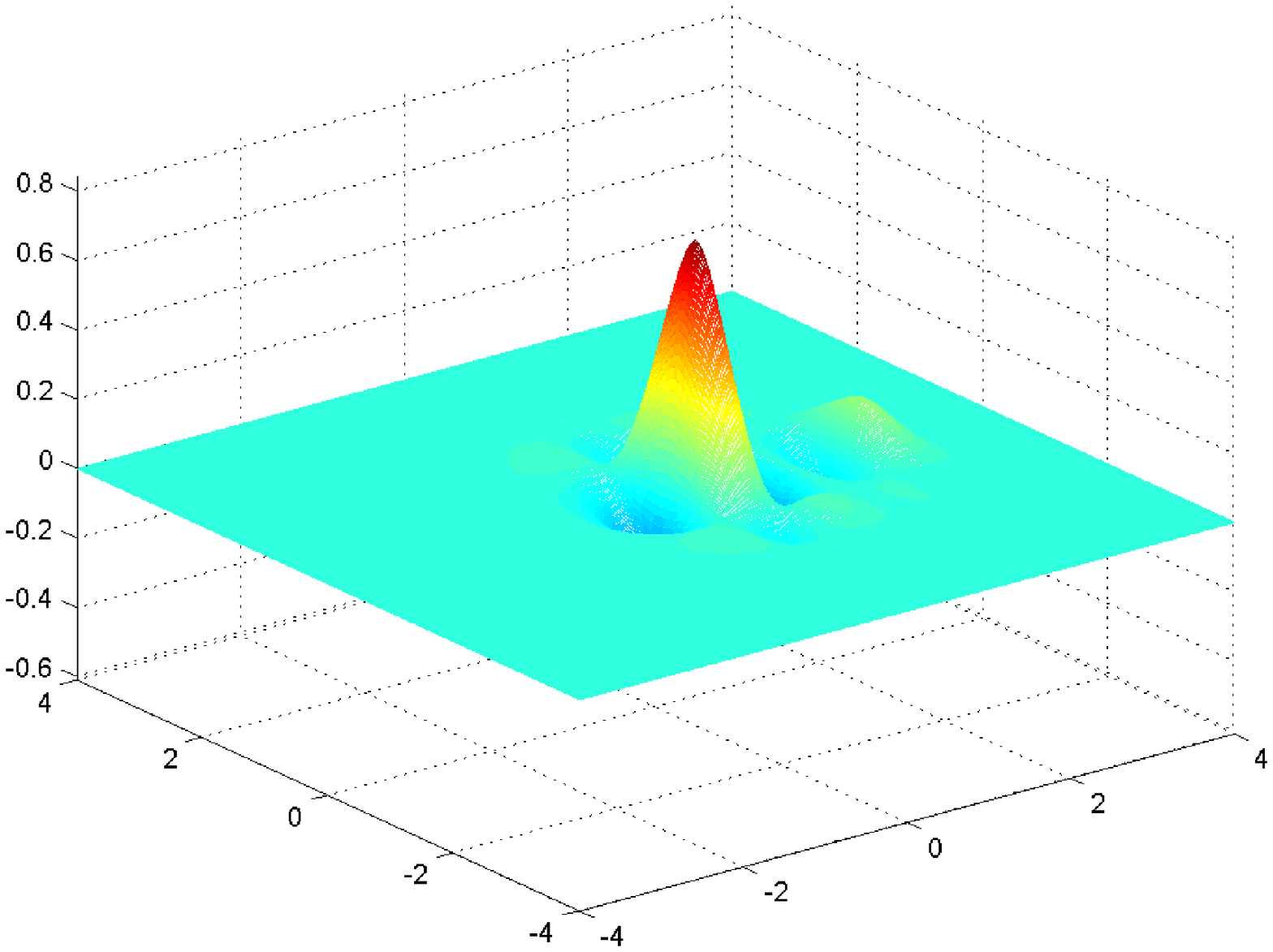}}
\subfigure{
\includegraphics[width=1.5in,height=1.0in]
{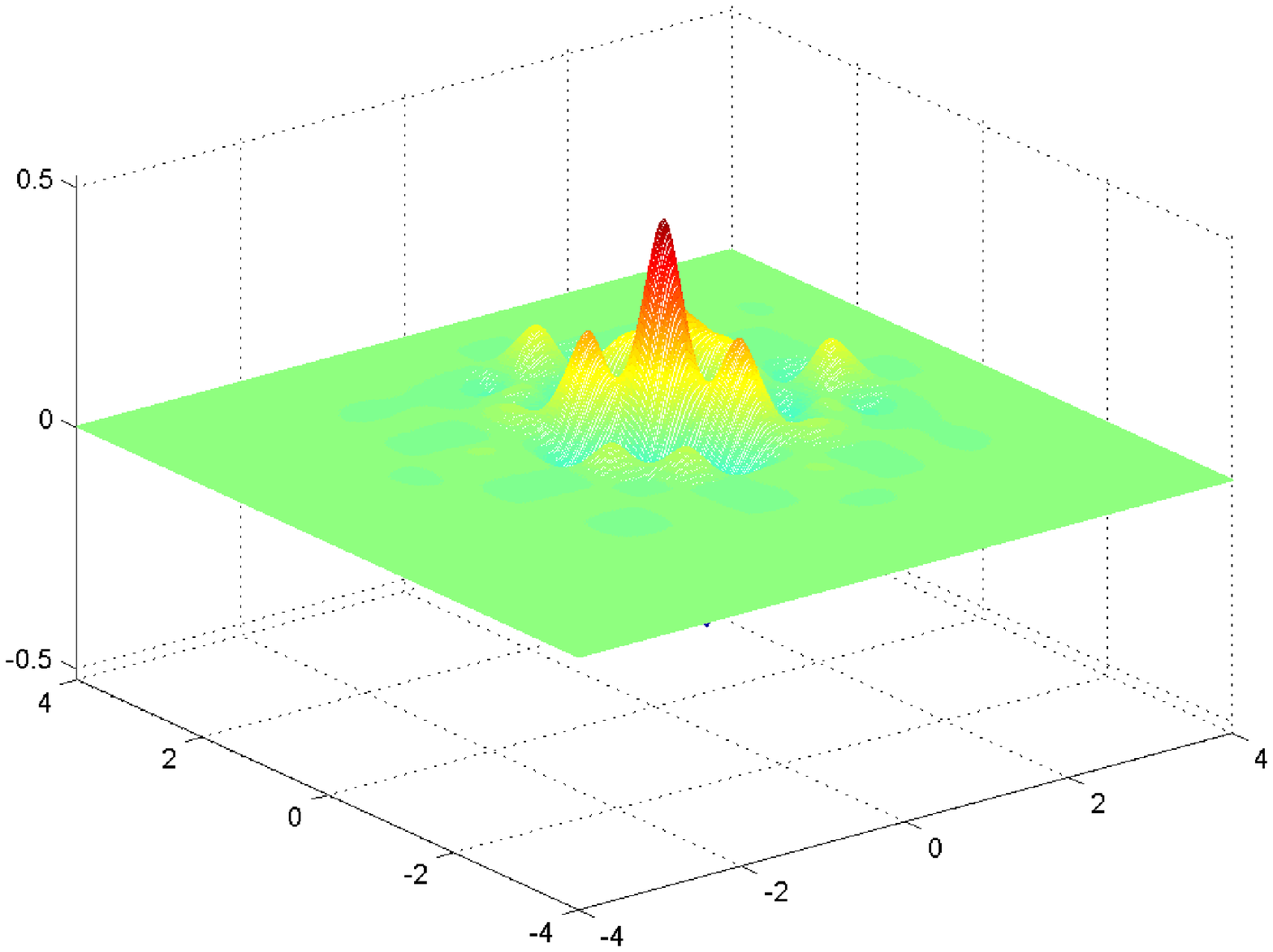}}
\subfigure{
\includegraphics[width=1.5in,height=1.0in]
{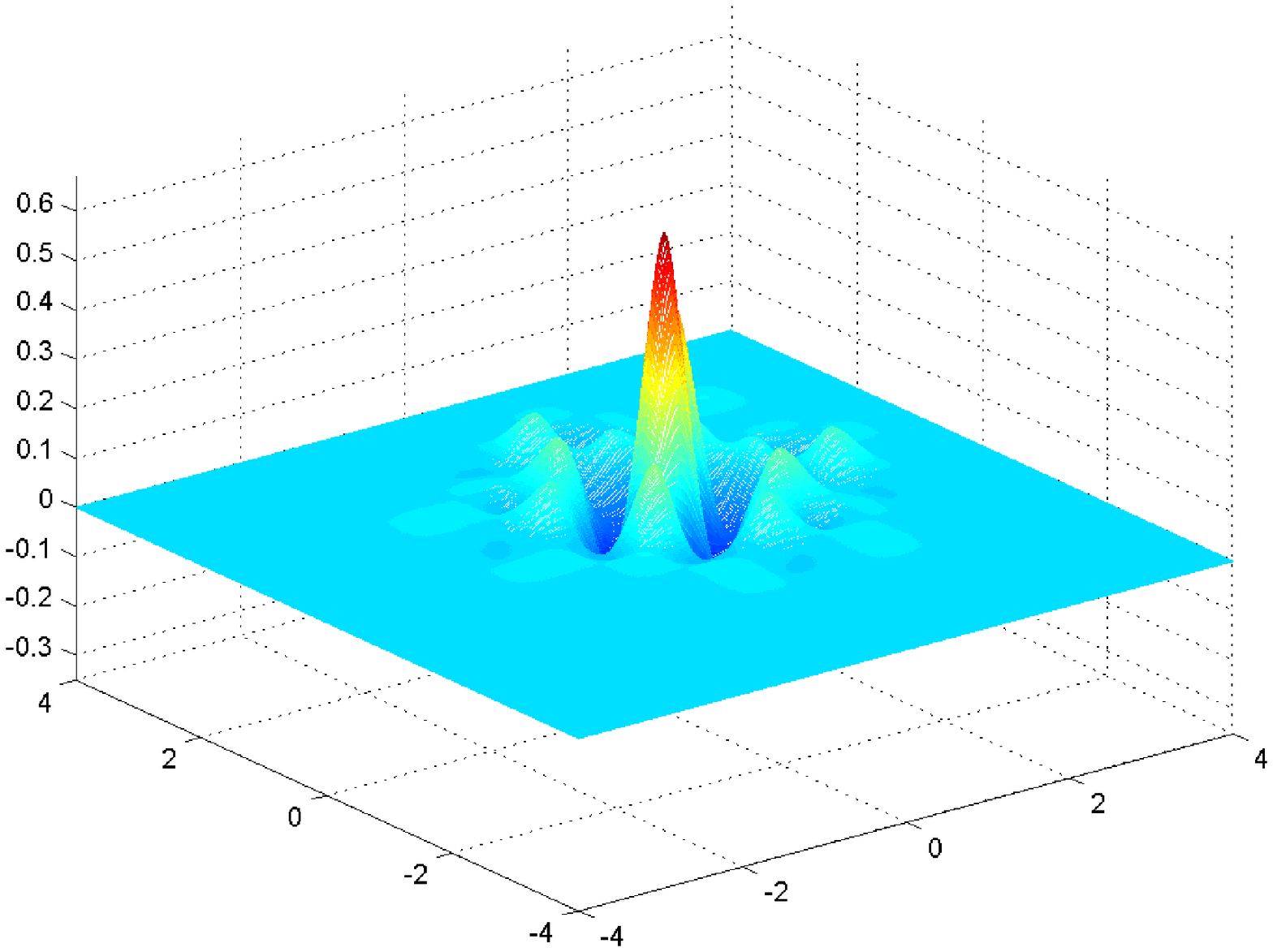}}\\
\subfigure{
\includegraphics[width=1.5in,height=1.0in]
{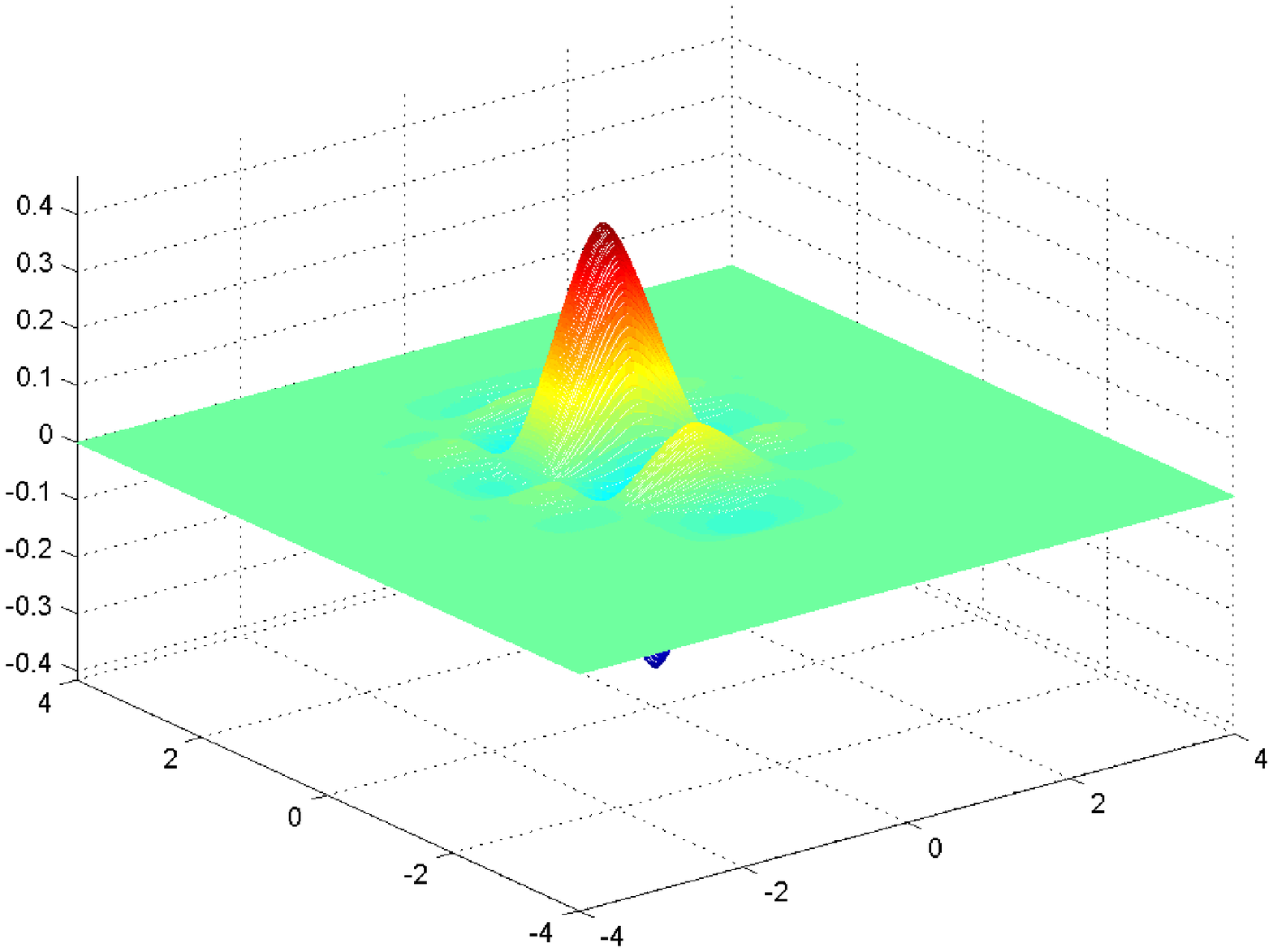}}
\subfigure{
\includegraphics[width=1.5in,height=1.0in]
{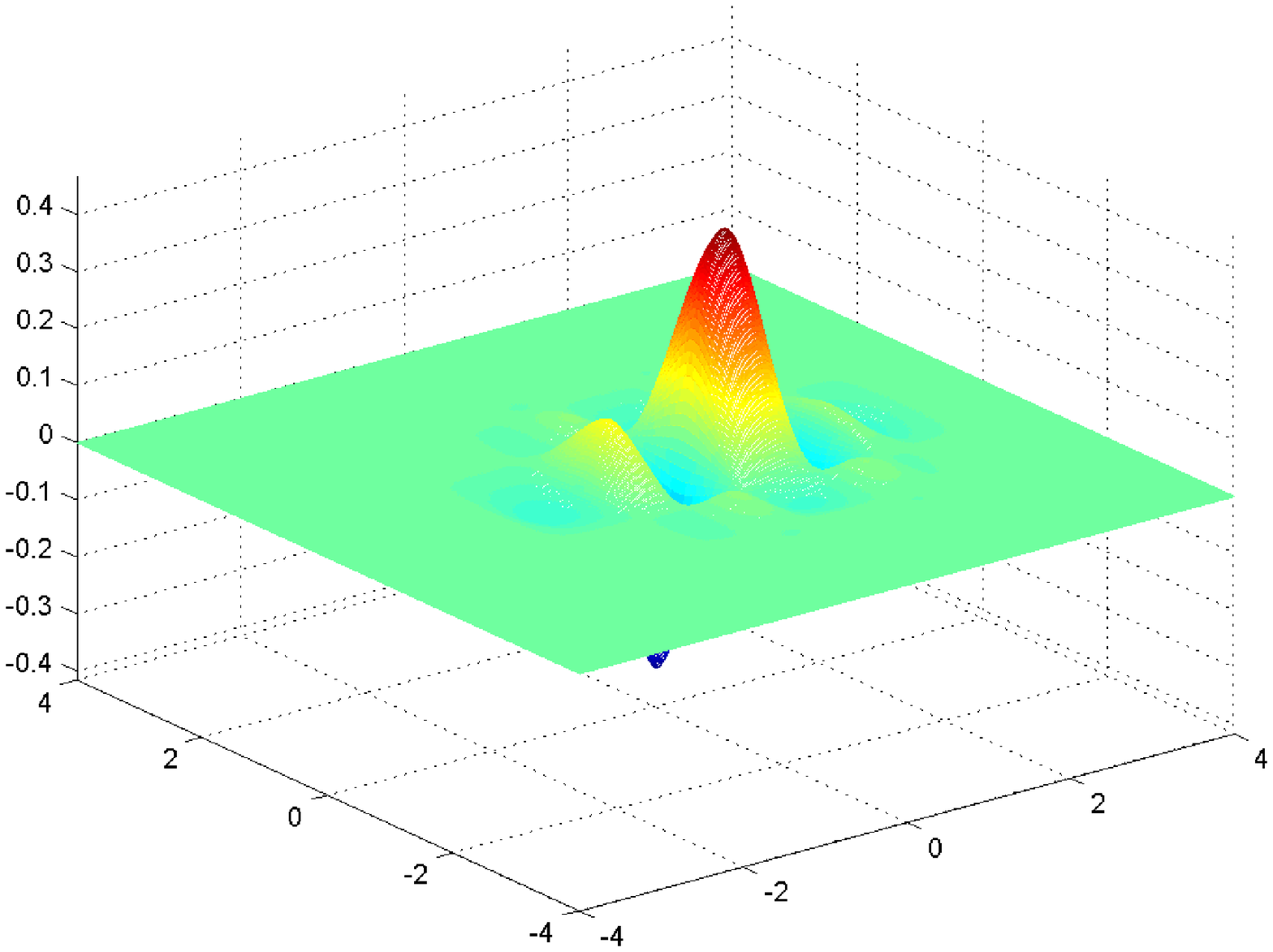}}
\subfigure{
\includegraphics[width=1.5in,height=1.0in]
{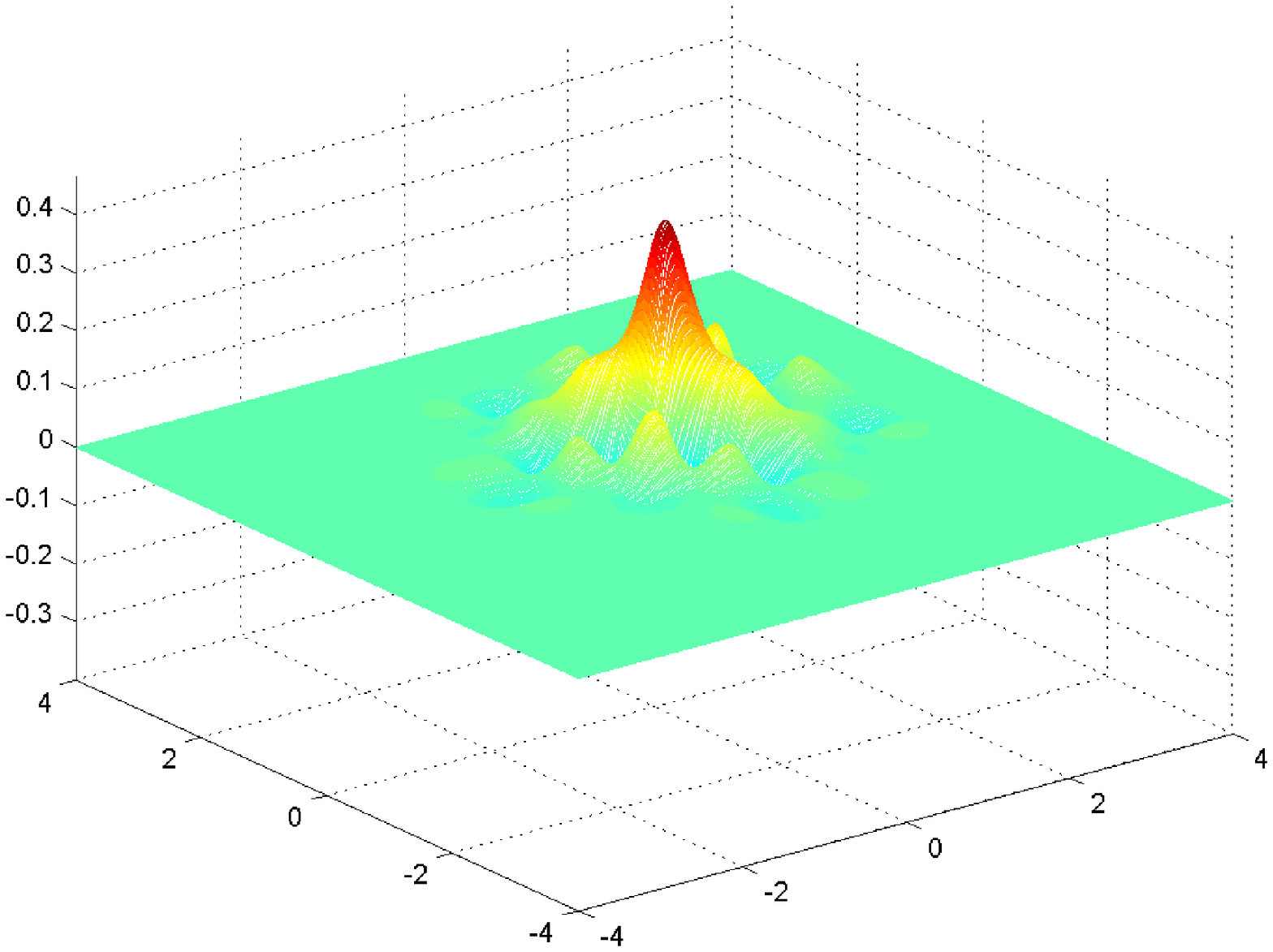}}
\subfigure{
\includegraphics[width=1.5in,height=1.0in]
{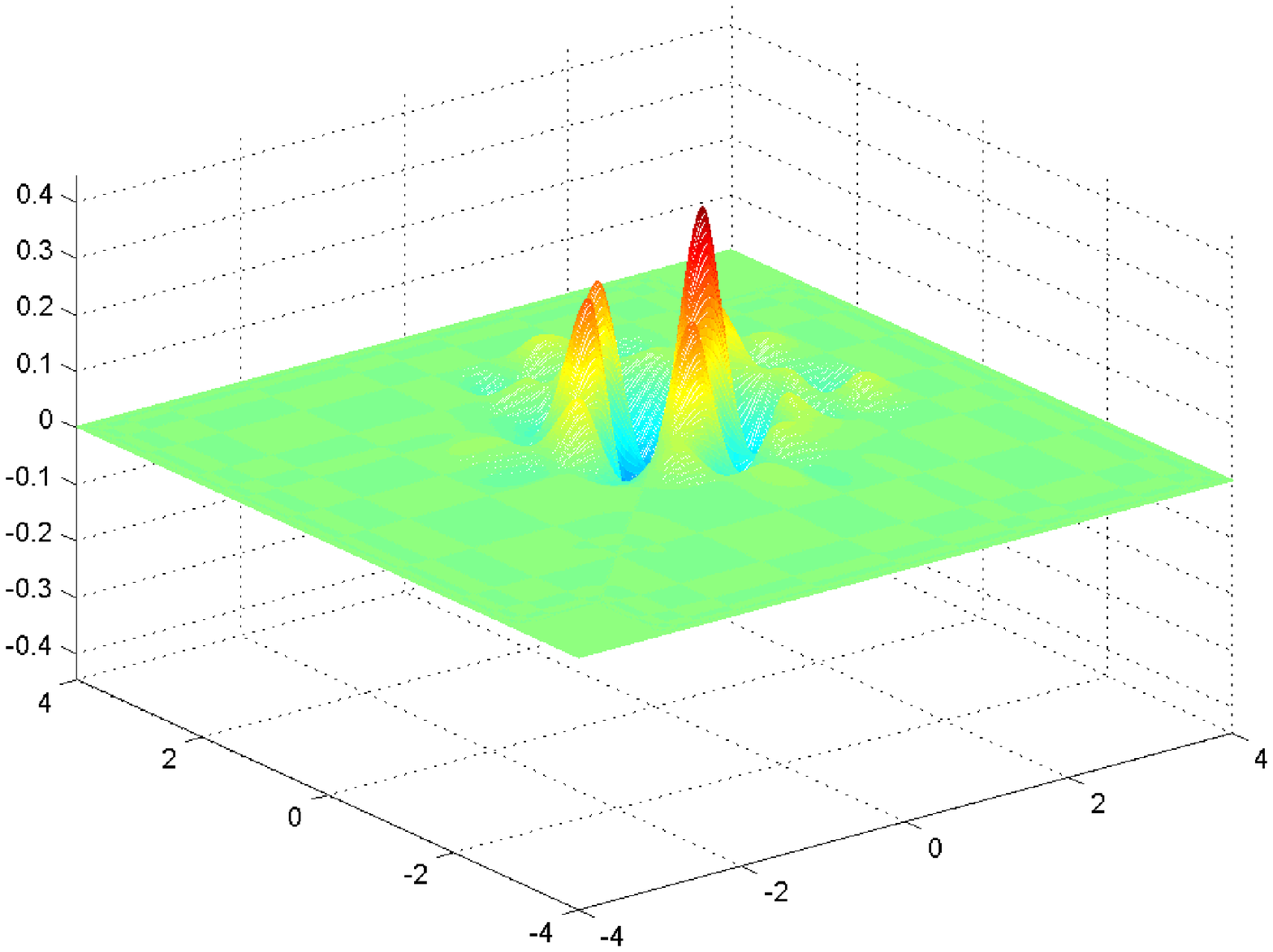}}\\
\subfigure{
\includegraphics[width=0.7in,height=0.7in]
{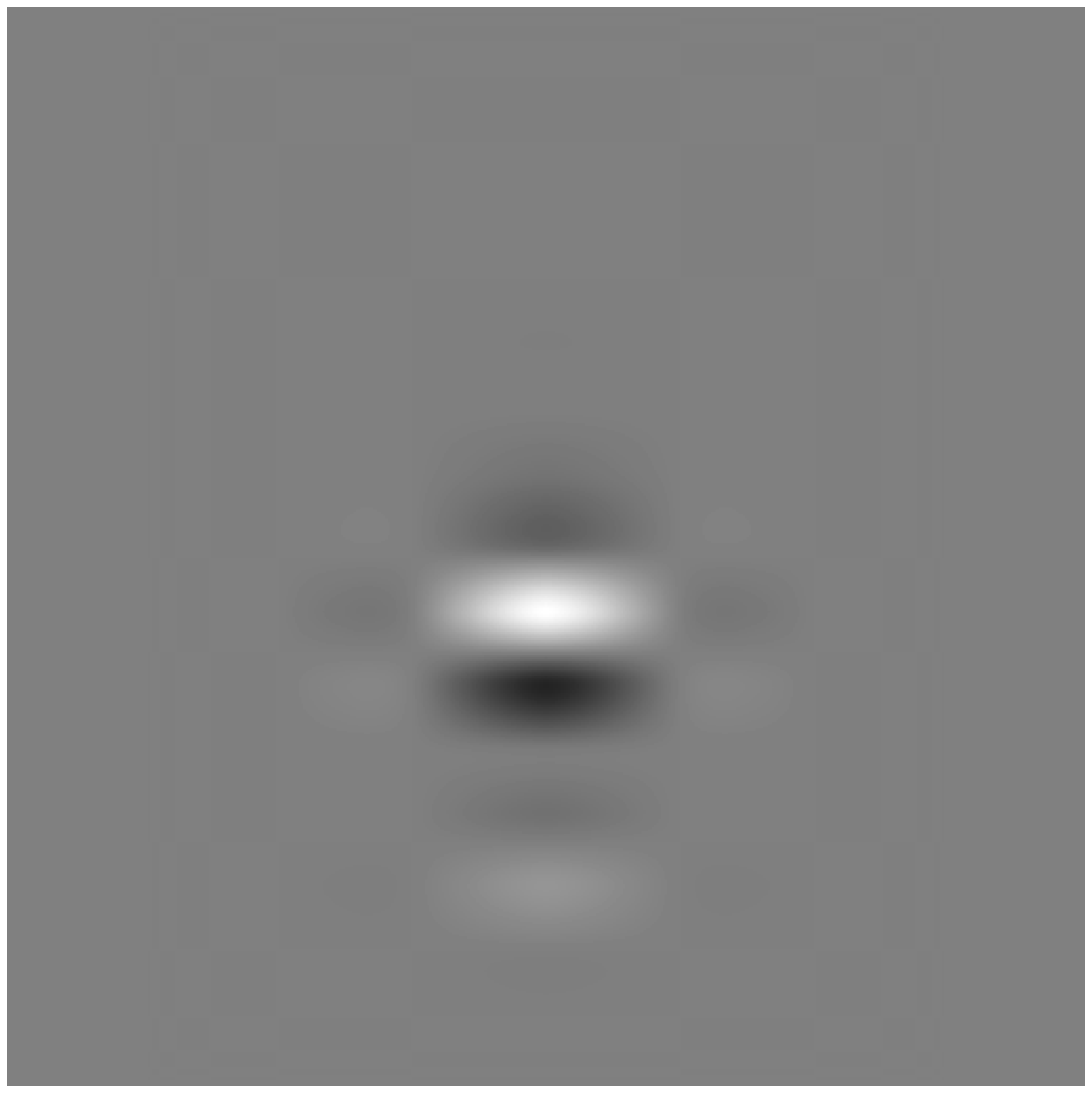}}
\subfigure{
\includegraphics[width=0.7in,height=0.7in]
{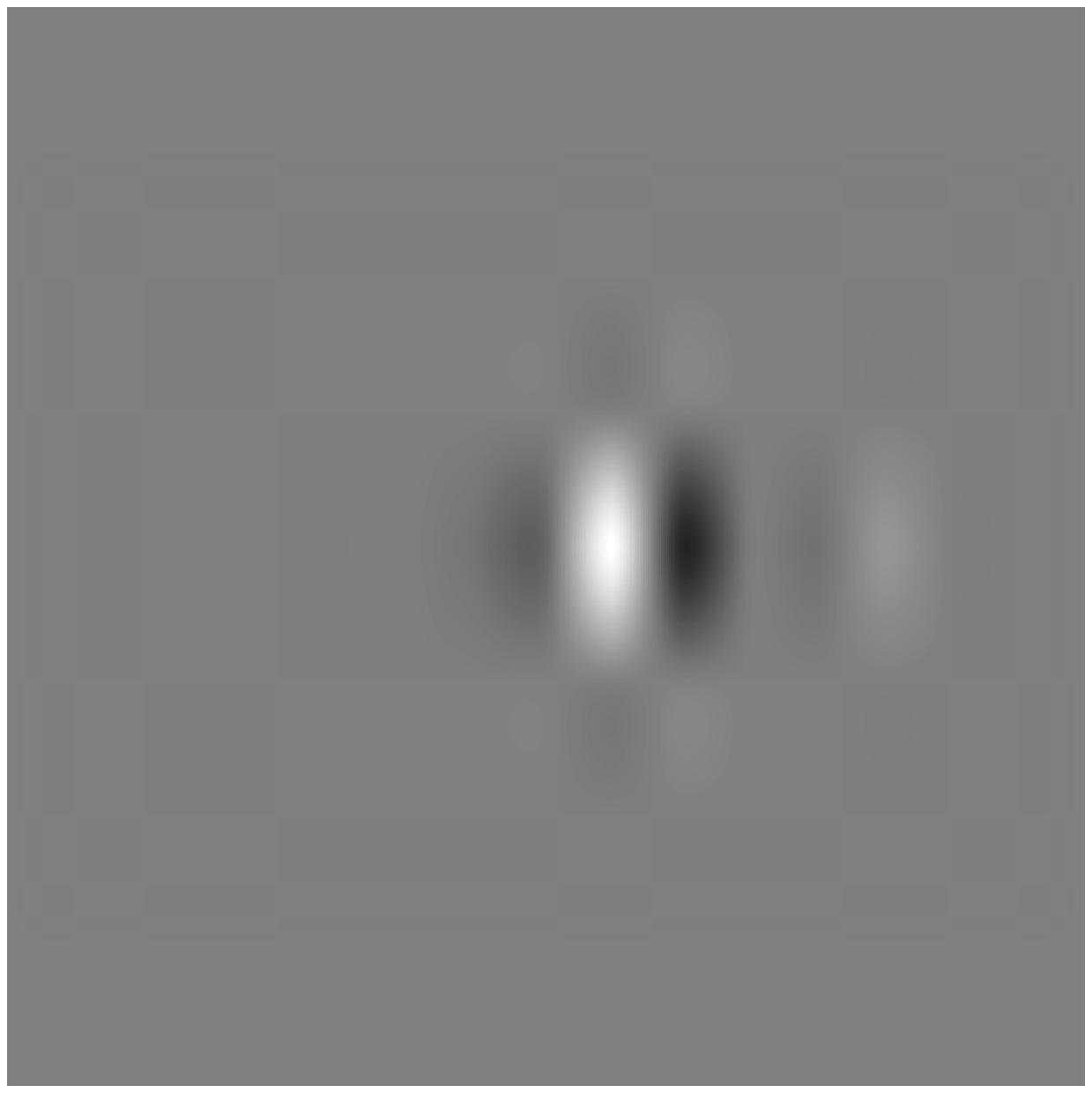}}
\subfigure{
\includegraphics[width=0.7in,height=0.7in]
{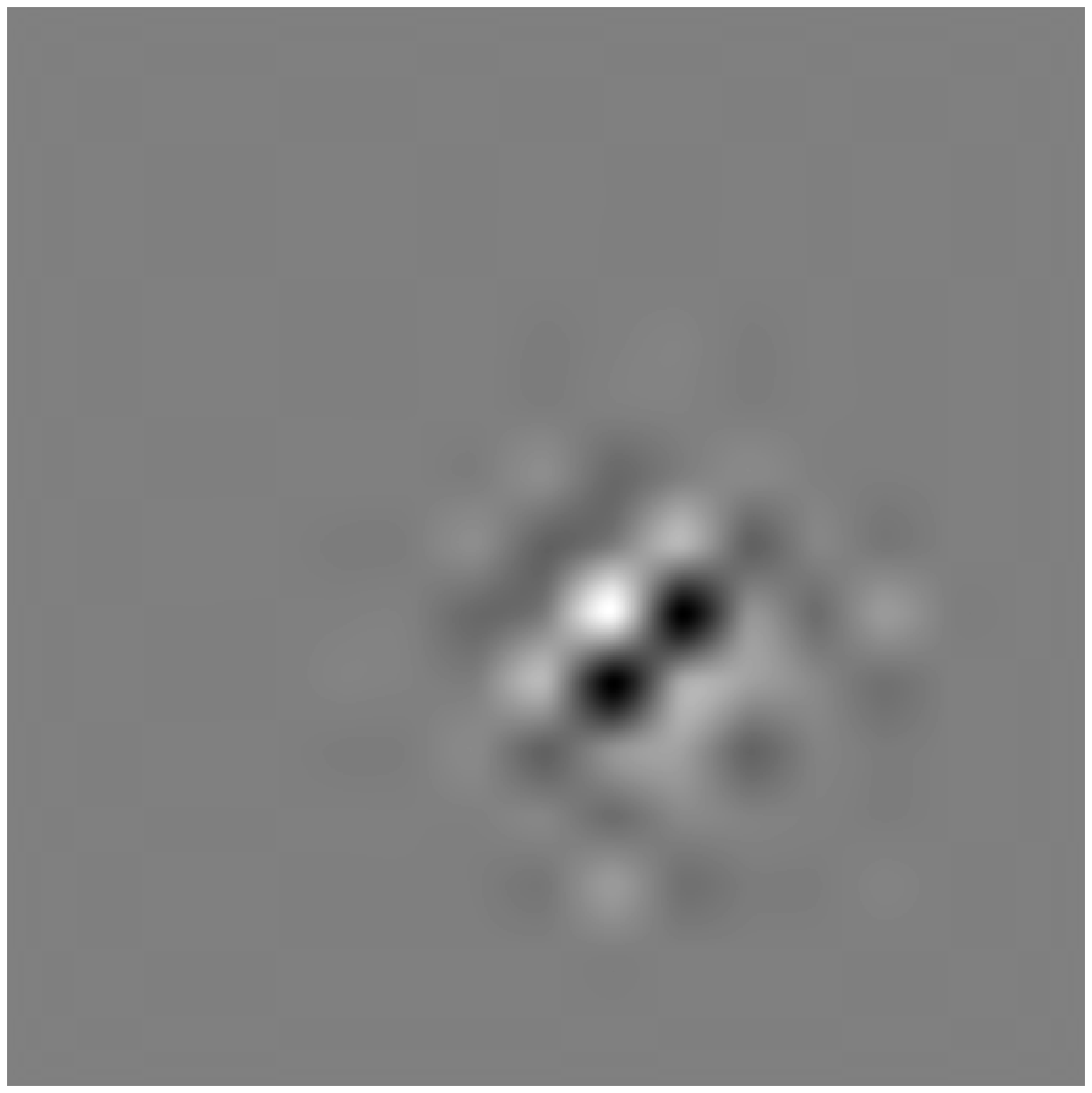}}
\subfigure{
\includegraphics[width=0.7in,height=0.7in]
{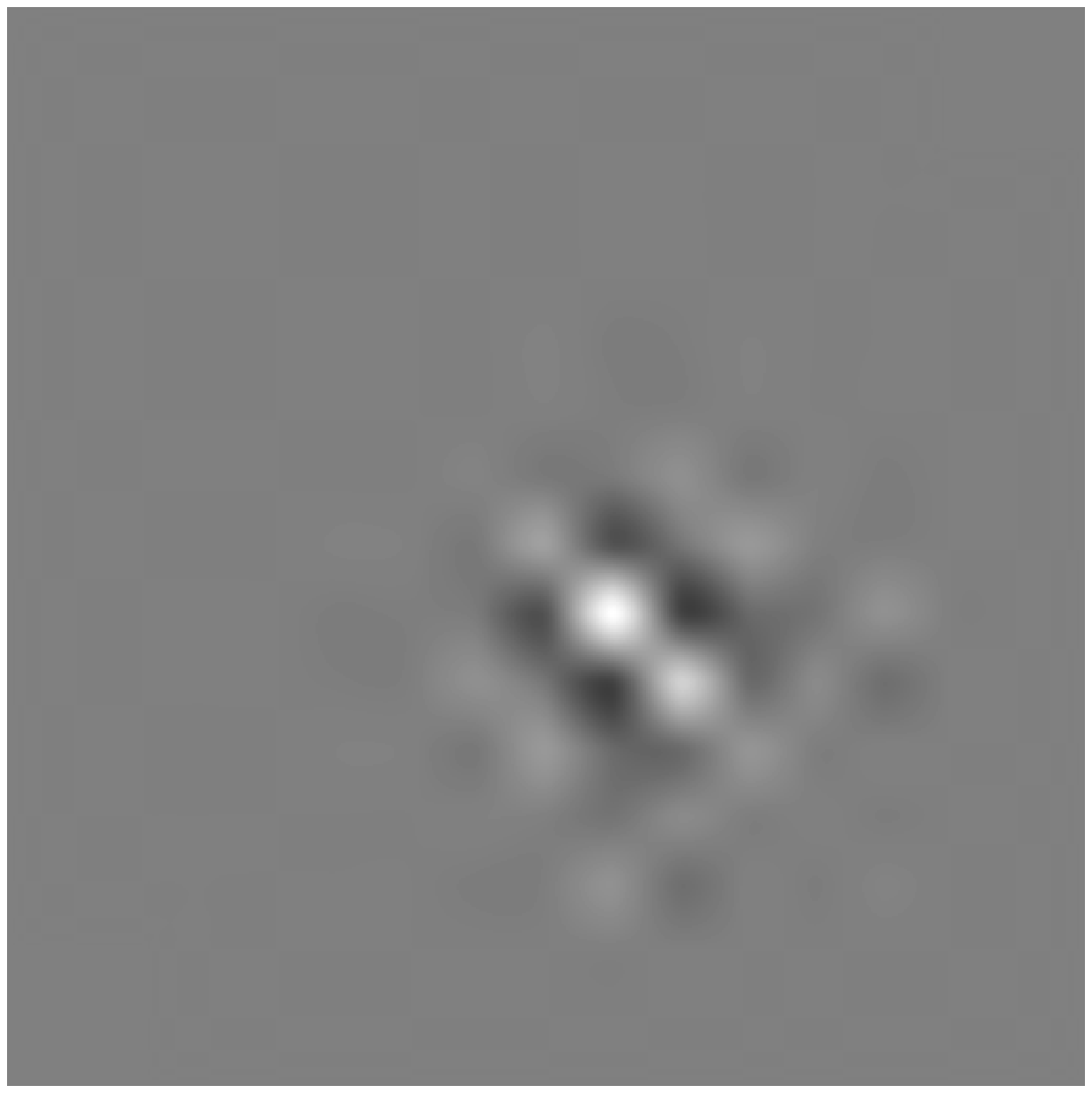}}
\subfigure{
\includegraphics[width=0.7in,height=0.7in]
{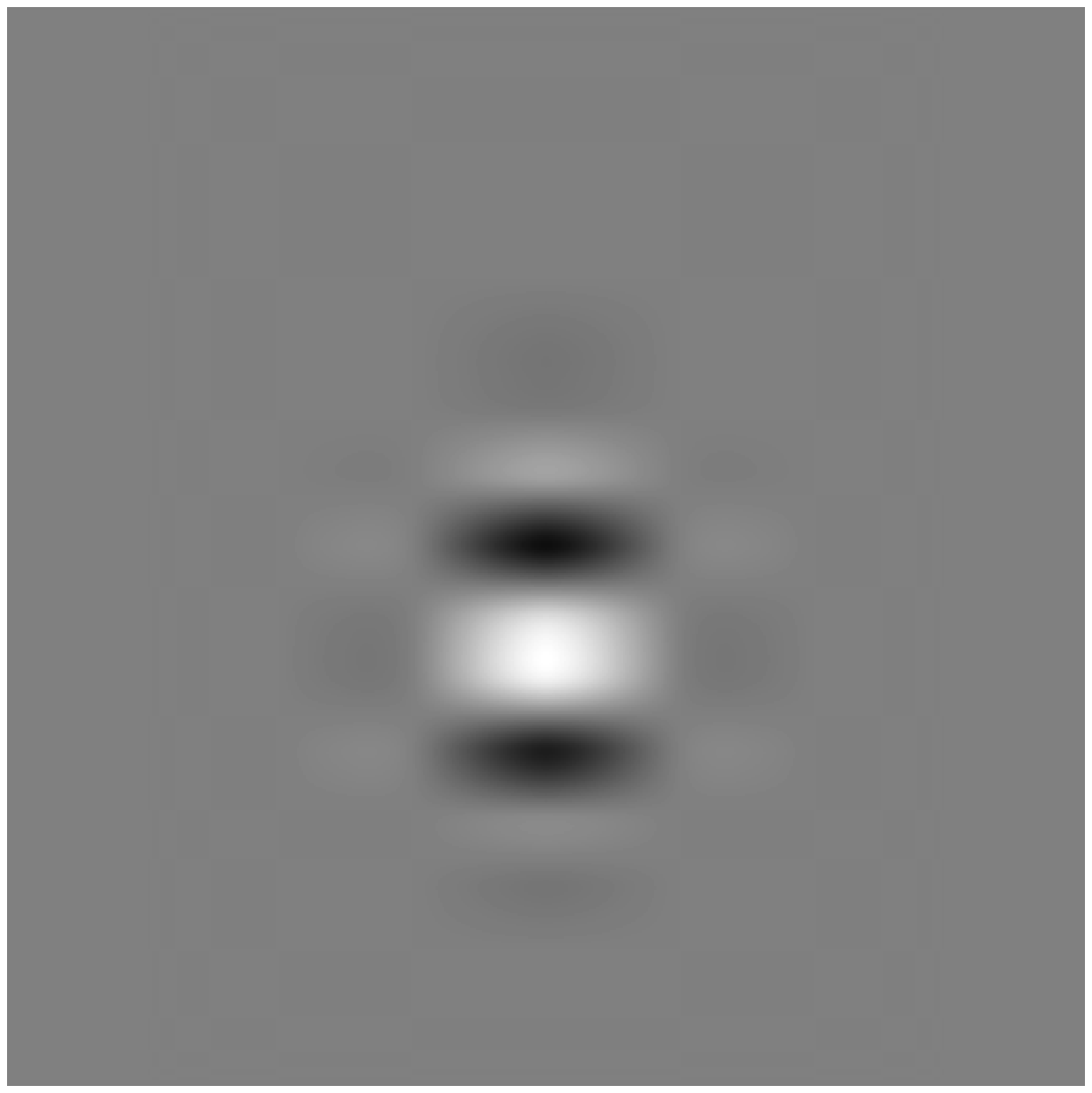}}
\subfigure{
\includegraphics[width=0.7in,height=0.7in]
{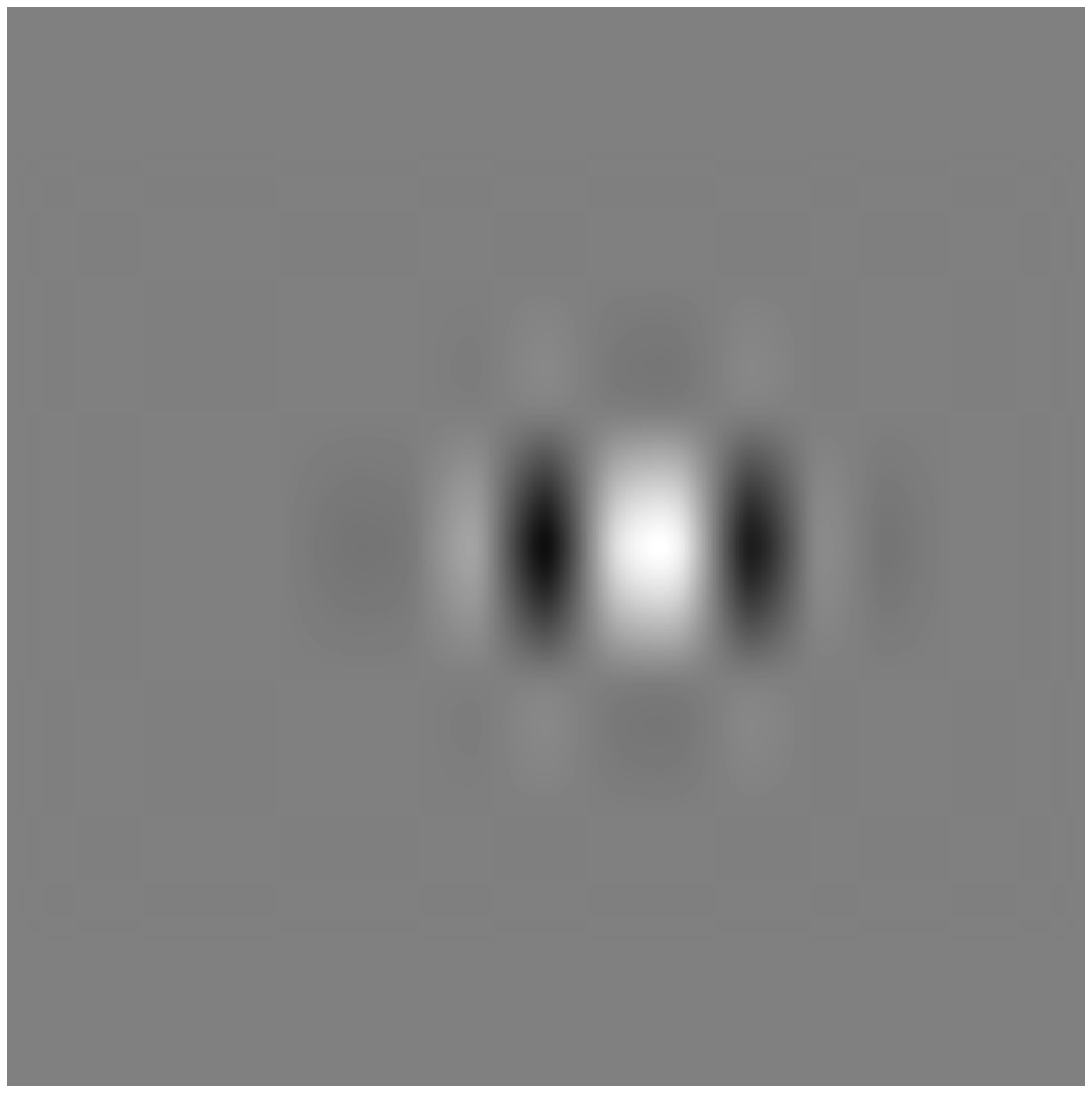}}
\subfigure{
\includegraphics[width=0.7in,height=0.7in]
{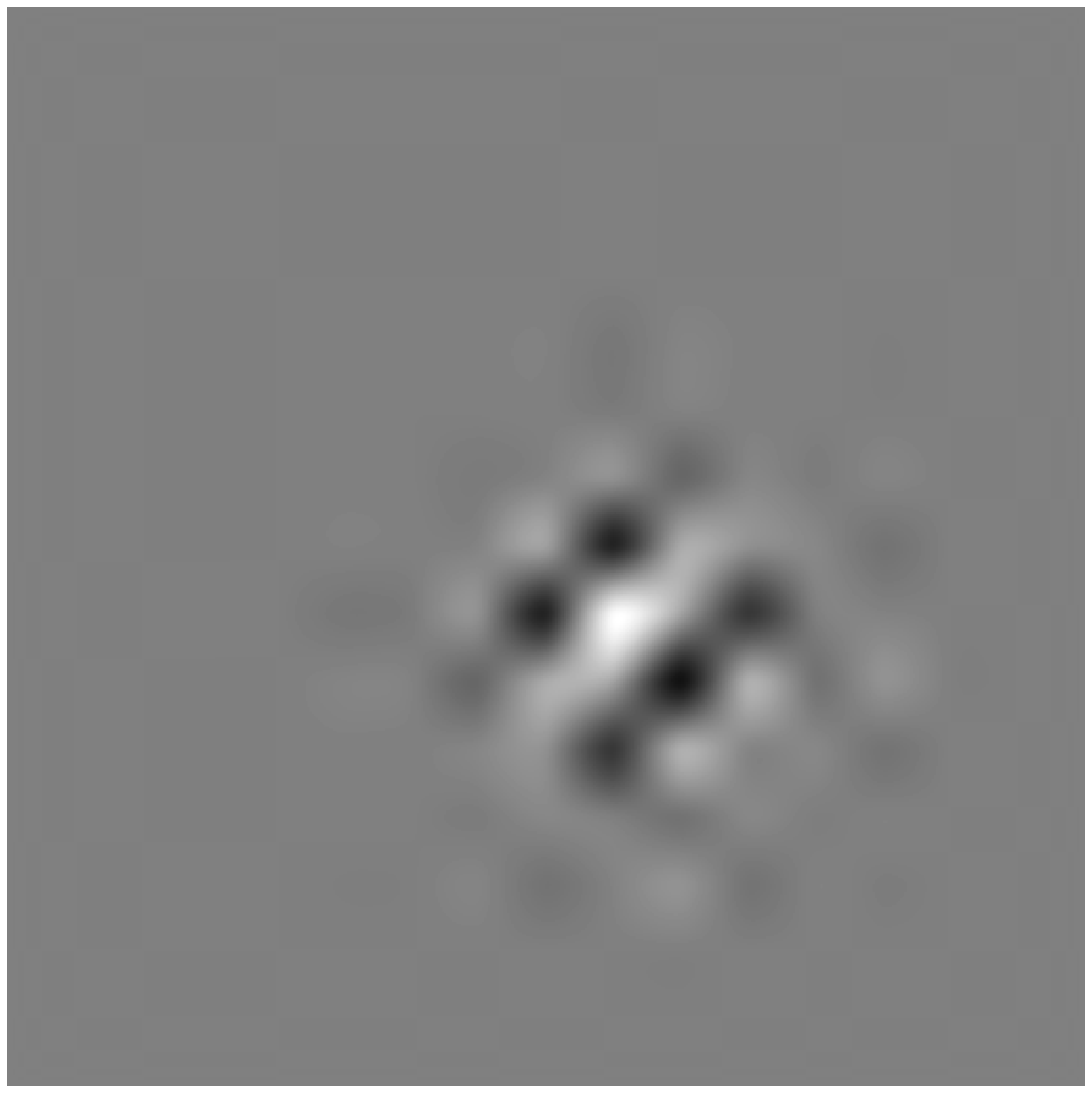}}
\subfigure{
\includegraphics[width=0.7in,height=0.7in]
{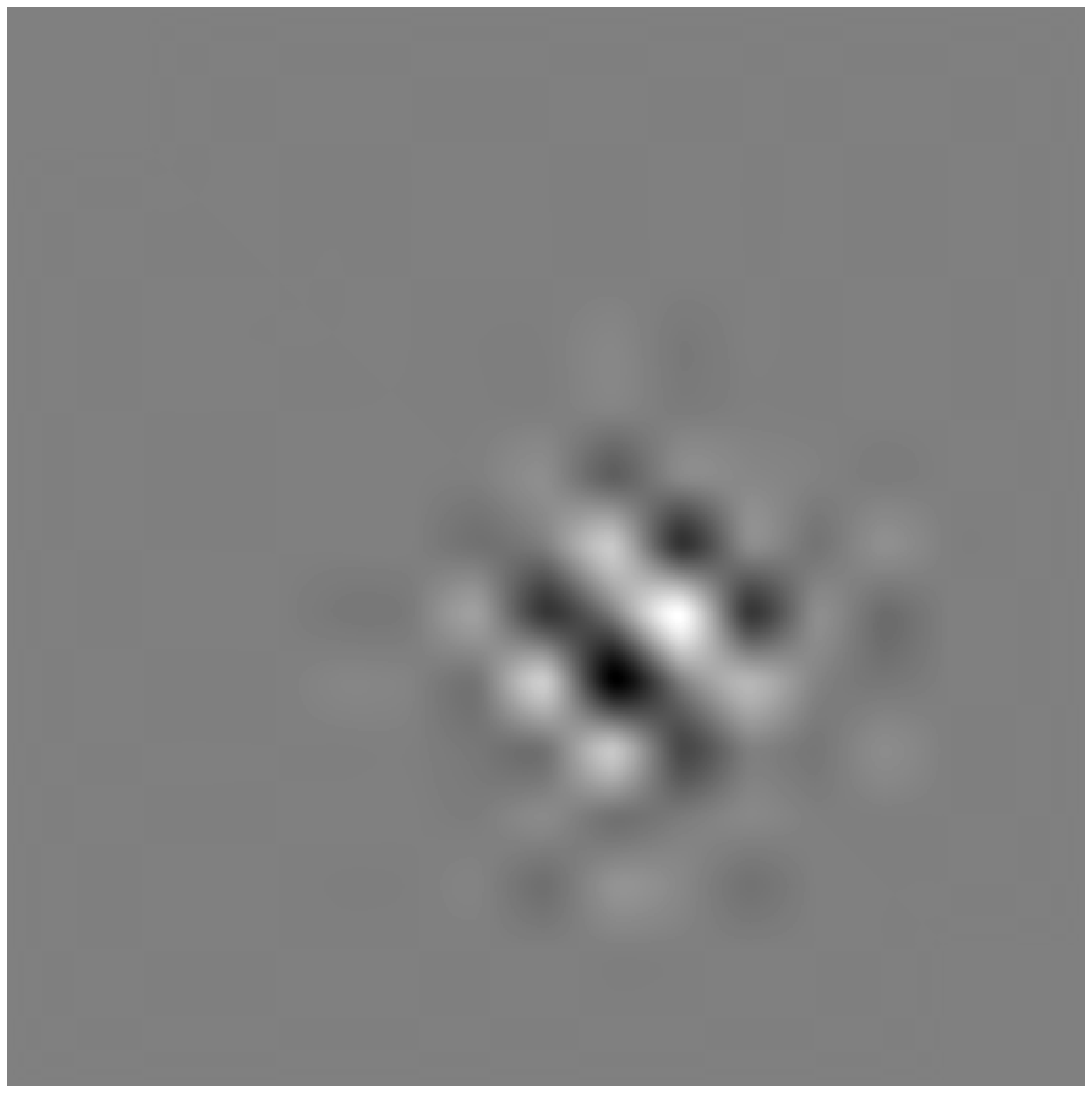}}
\begin{caption}{
The first row is for the real part and the second row is for the imaginary part of the tight framelet generators in Example~\ref{ex3} with $N=2$.
The third row is the greyscale image of the eight generators:
the first four for real part and the last four for imaginary part.
} \label{fig3}
\end{caption}
\end{center}
\end{figure}

\begin{example}\label{ex4} {\rm Let $\pa(z) =-\frac{3}{64}z^{-2}+\frac{5}{64}z^{-1}+\frac{15}{32}
+\frac{15}{32}z+\frac{5}{64}z^2-\frac{3}{64}z^3
=\{-\tfrac{3}{64}, \tfrac{5}{64}, \tfrac{15}{32}, \frac{15}{32}, \tfrac{5}{64}, -\tfrac{3}{64} \}_{[-2,3]}$.
Using Algorithm~\ref{alg:tffb}, we obtain a tight framelet filter bank $\{a; b_1, b_2\}$ with
\begin{align*}
\pb_1(z) =\frac{\sqrt{297879}}{6354752}z^{-2}(z-1)^2(3203z^3+1921z^2-31z-93),\quad
\pb_2(z)=-\frac{\sqrt{496465}}{794344}z^{-2}(z-1)^2(248z^2+z+3).
\end{align*}
Applying Algorithm~\ref{alg:main} with $N =0$, we have a finitely supported complex tight framelet filter bank $\{a; b^p, b^n\}$ with
$b^n=\ol{b^p}$ and
\begin{align*}
\pb^p(z) =&(-0.00427685553137+0.00414104756179i)z^{-2}
+(0.00712809255229-0.00690174593633i)z^{-1}\\
&-(0.0855371106277+0.173923997595i)+(0.256611331884+0.179445394344i)z\\
&-(0.263739424437-0.169782950034i)z^2+(0.0898139661592-0.172543648408i)z^3.
\end{align*}
By calculation we have $d_{\R}=\frac{557}{1024}\pi \approx 1.70885$, $d_A\approx 0.12595$, and $d_B\approx 0.444929$. 
If we take $N=2$, then
\begin{align*}
\pb^p(z)=&(0.000174962462944+0.000667428960698i)z^{-4}-(0.000291604104907+0.00111238160116i)z^{-3}\\
&(0.00604271655936+0.00470763073225i)z^{-2}-(0.0147368599441+0.0256441568388i)z^{-1}\\
&(0.119900001837+0.197463905830i)-(0.282016222613+0.153449185519i)z\\
&(0.207557346012-0.197627972773i)z^2+(-0.0335526030324+0.174187921034i)z^3\\
&(0.0198783637212-0.00521099275091i)z^4+(-0.0229561008971+0.00601780292596i)z^5.
\end{align*}
By calculation, we have $d_B\approx 0.387149$. 
See Figure~\ref{fig4} for the graphs of the eight tight framelet generators in the associated two-dimensional real-valued tight framelet for $L_2(\R^2)$ in \eqref{tp:ctf}.
} \end{example}

\begin{figure}[ht]
\begin{center}
\subfigure{
\includegraphics[width=1.5in,height=1.0in]
{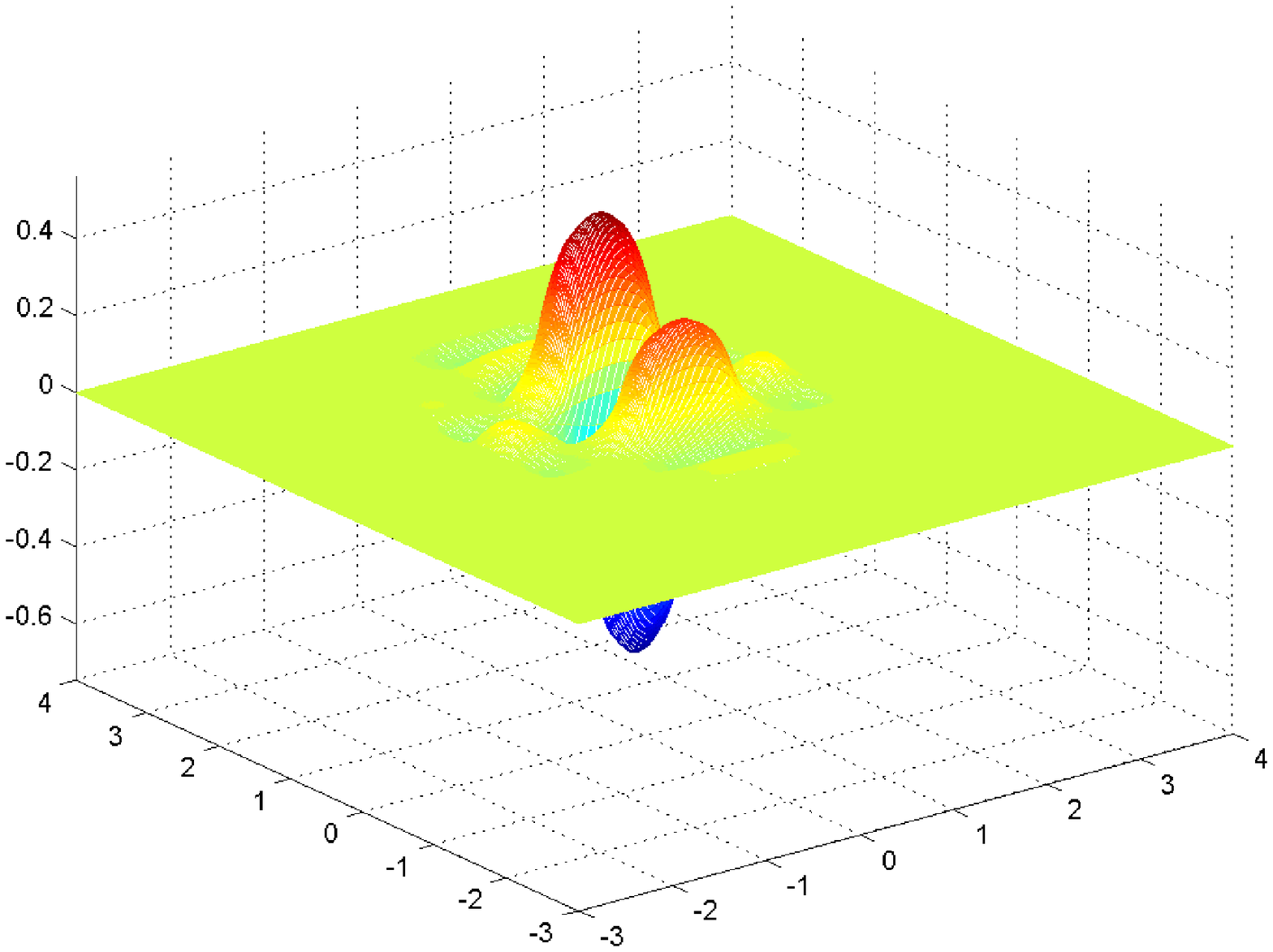}}
\subfigure{
\includegraphics[width=1.5in,height=1.0in]
{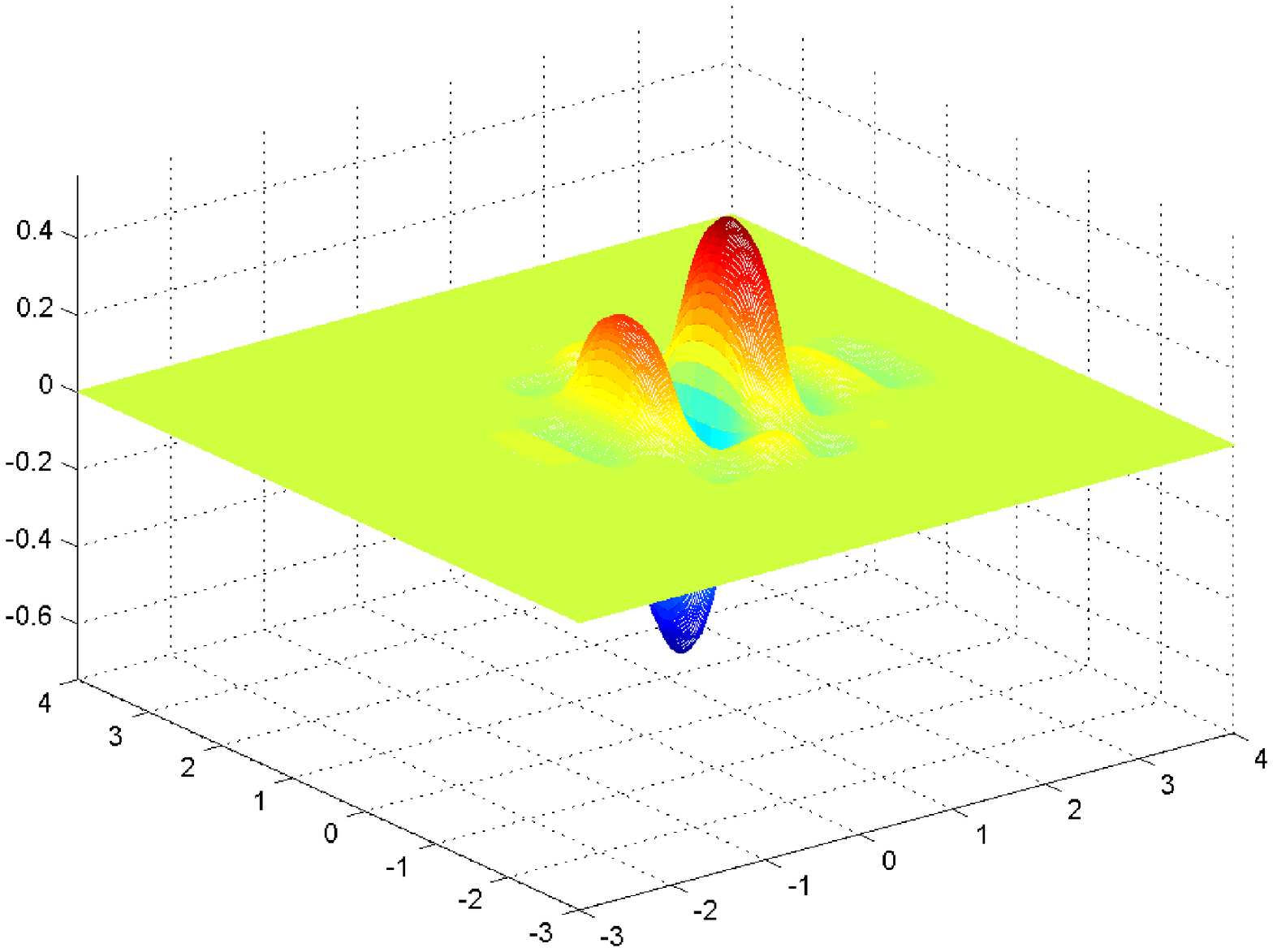}}
\subfigure{
\includegraphics[width=1.5in,height=1.0in]
{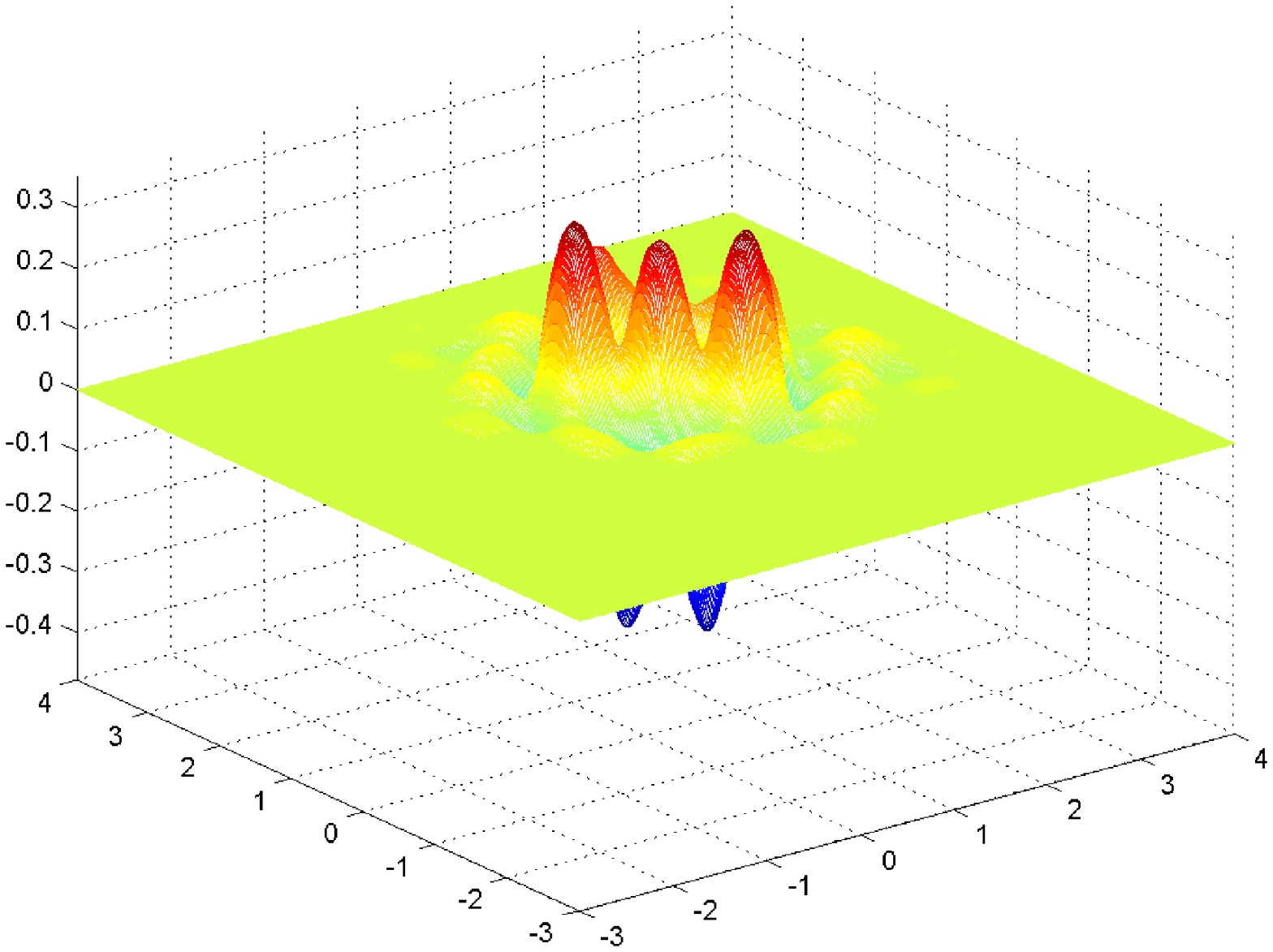}}
\subfigure{
\includegraphics[width=1.5in,height=1.0in]
{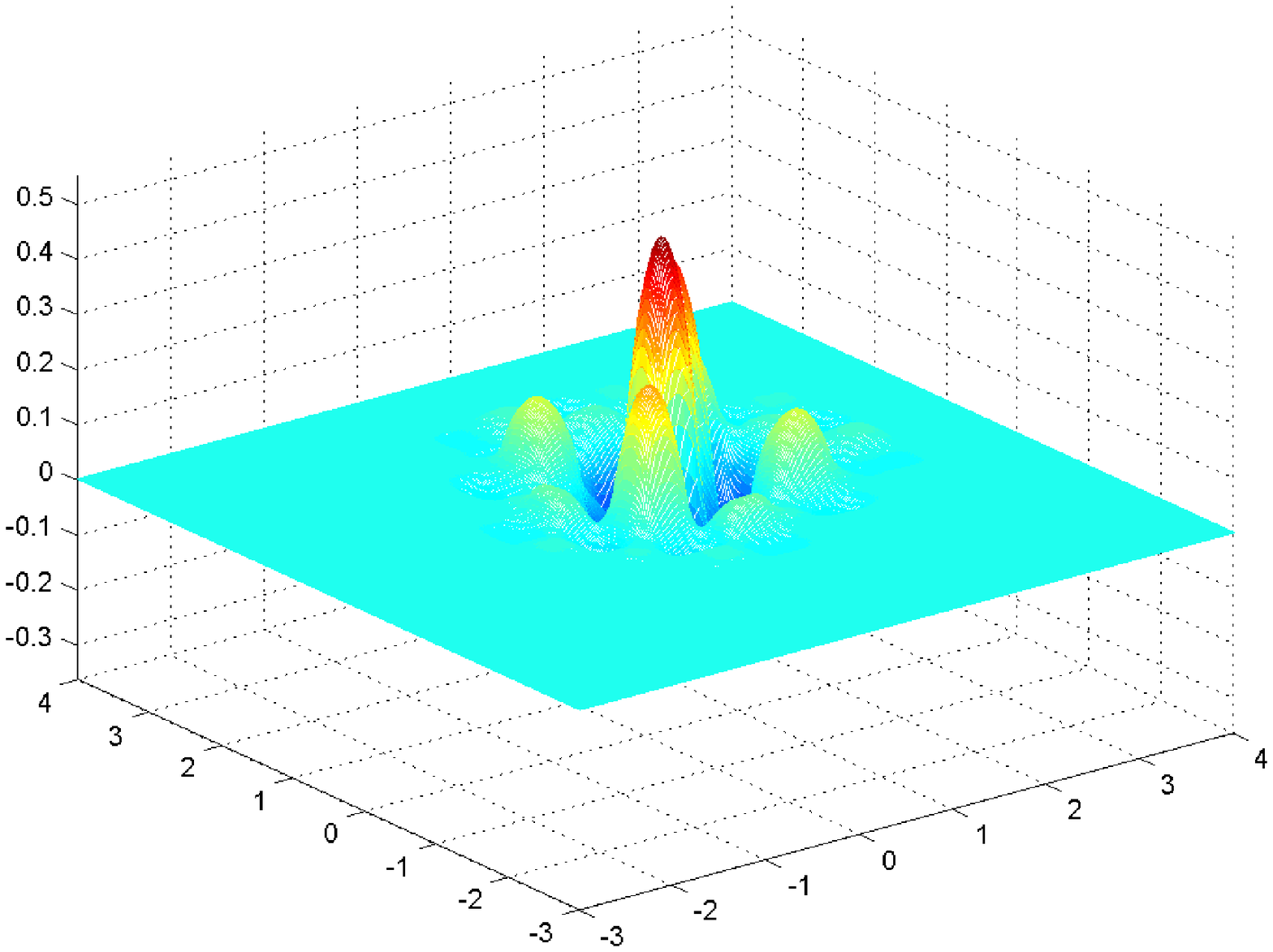}}\\
\subfigure{
\includegraphics[width=1.5in,height=1.0in]
{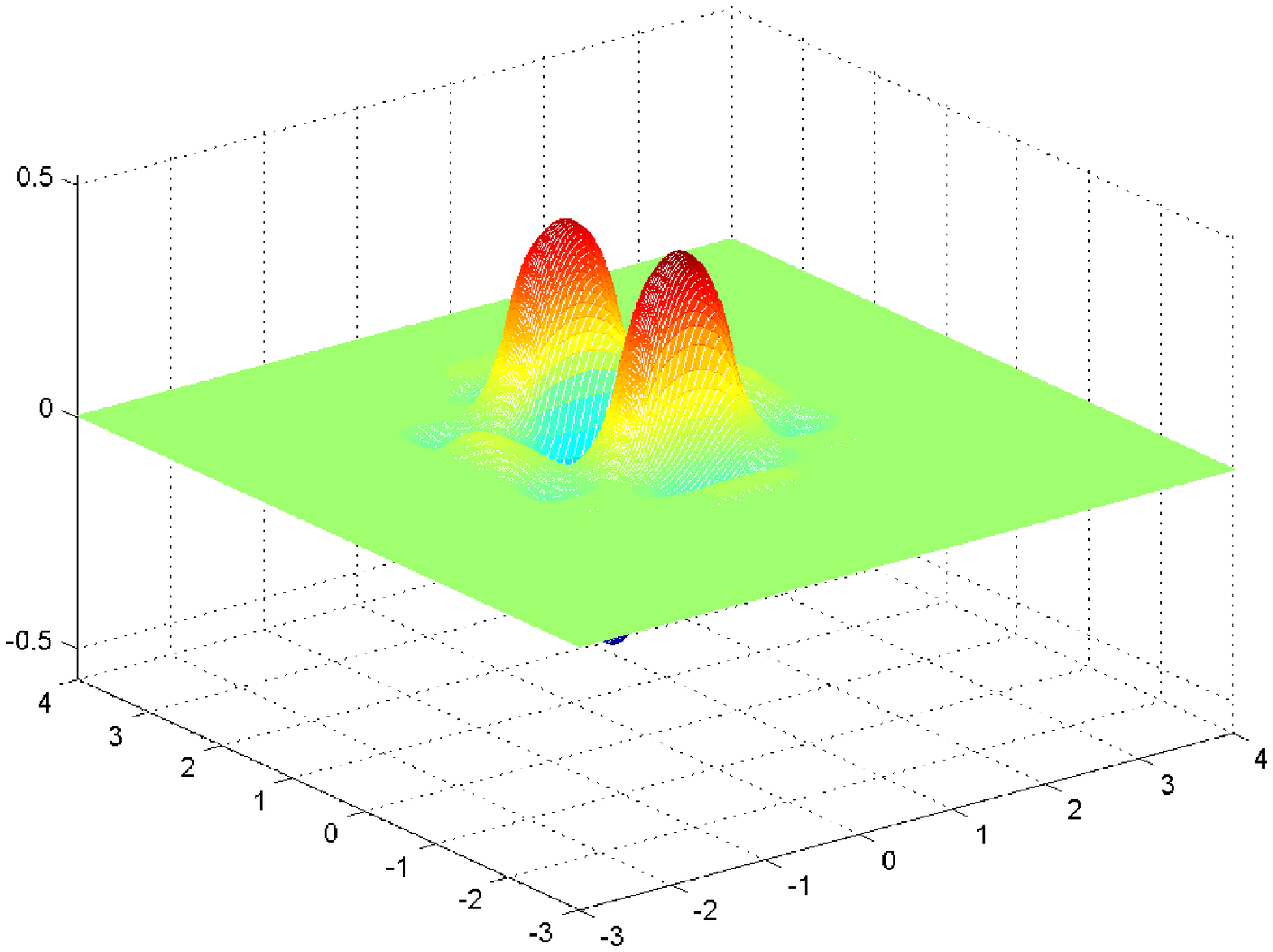}}
\subfigure{
\includegraphics[width=1.5in,height=1.0in]
{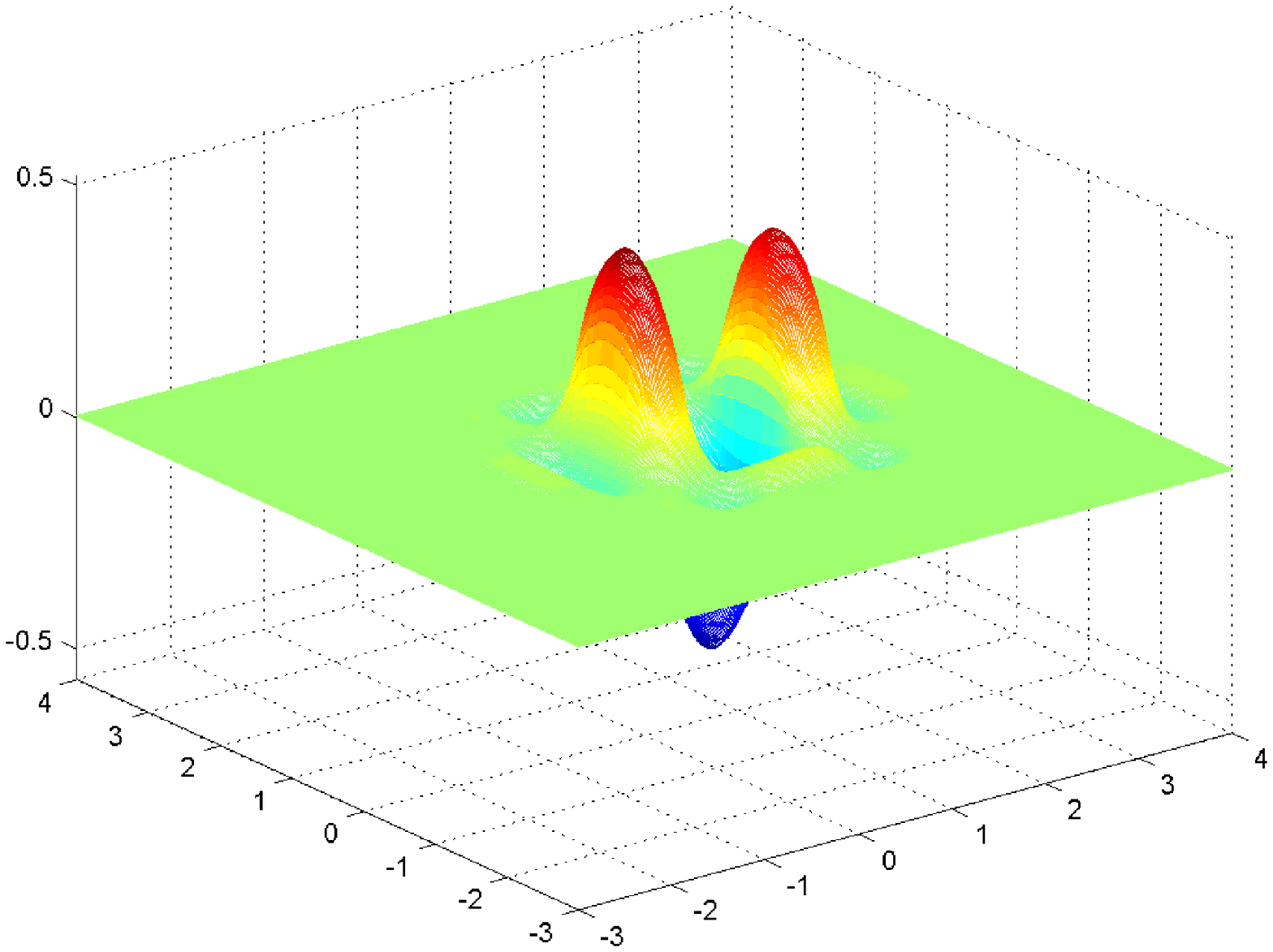}}
\subfigure{
\includegraphics[width=1.5in,height=1.0in]
{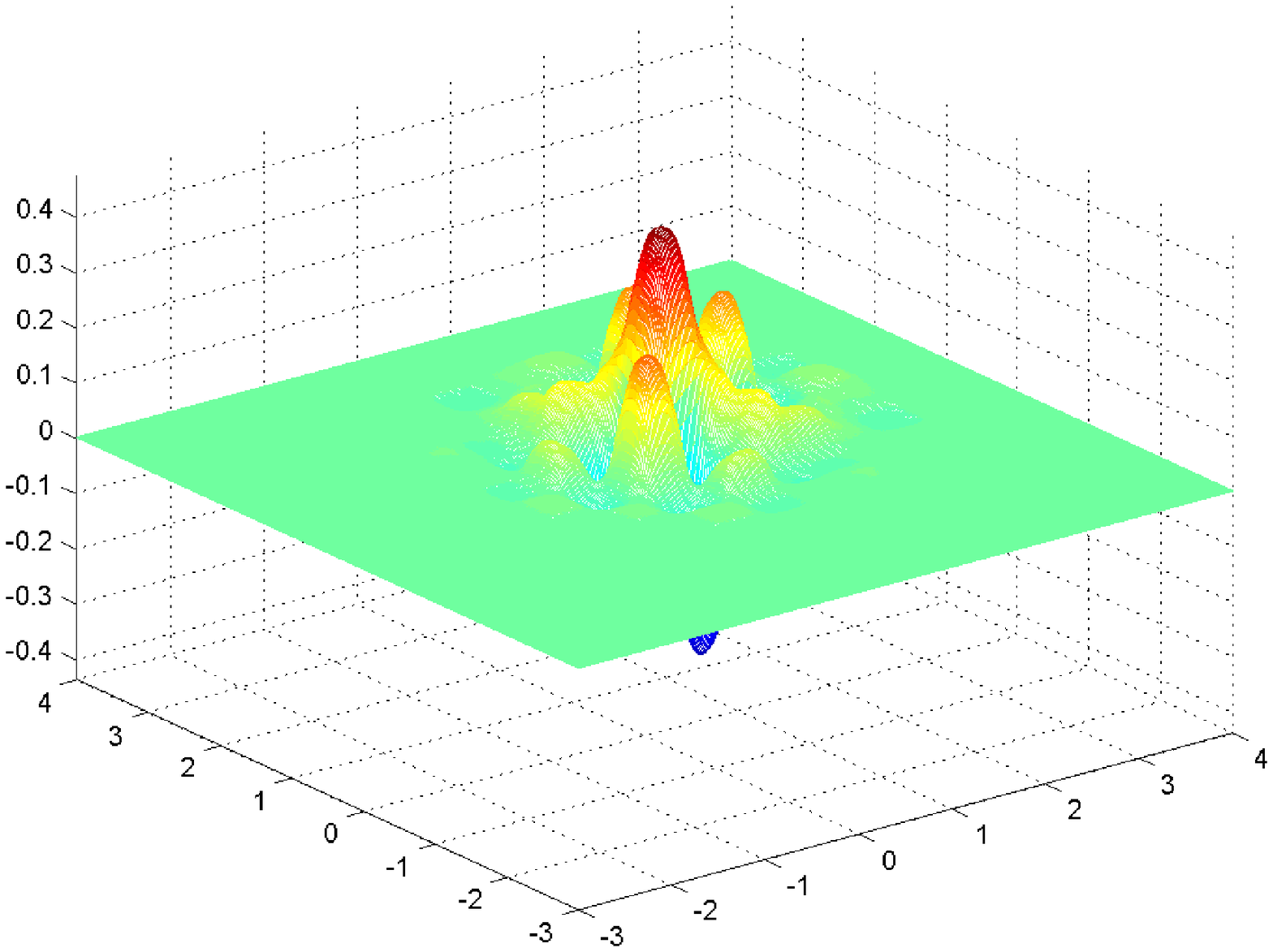}}
\subfigure{
\includegraphics[width=1.5in,height=1.0in]
{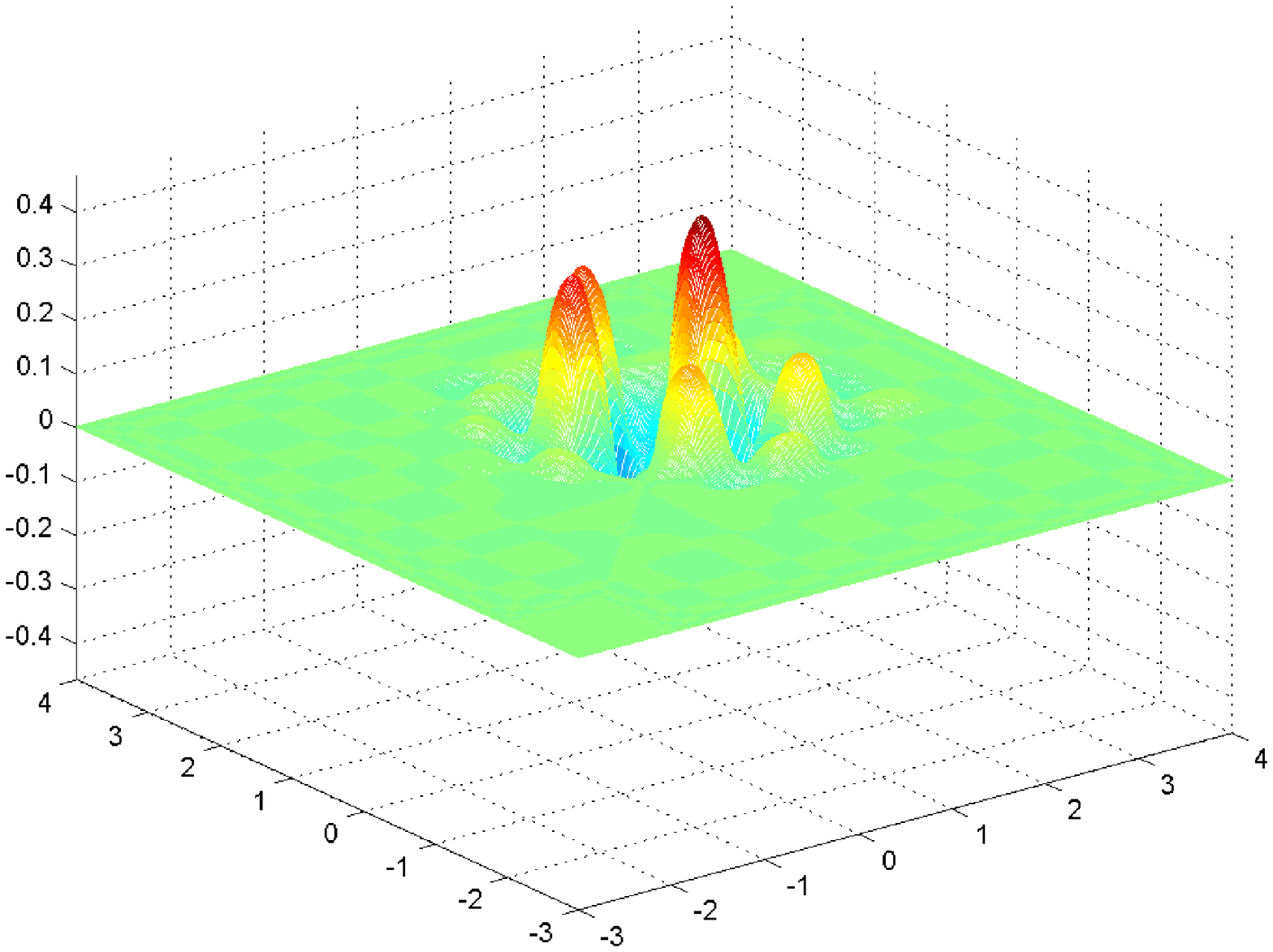}}\\
\subfigure{
\includegraphics[width=0.7in,height=0.7in]
{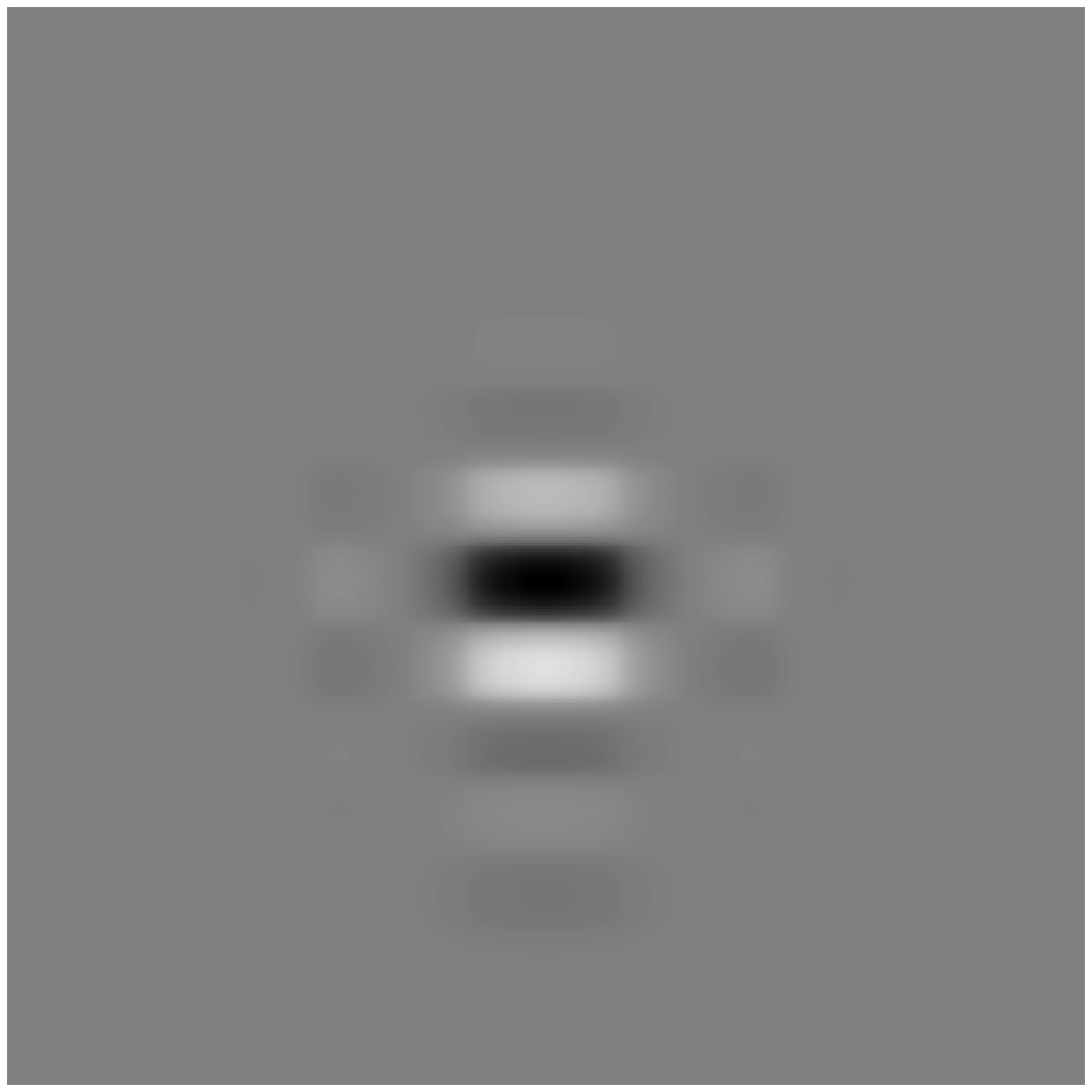}}
\subfigure{
\includegraphics[width=0.7in,height=0.7in]
{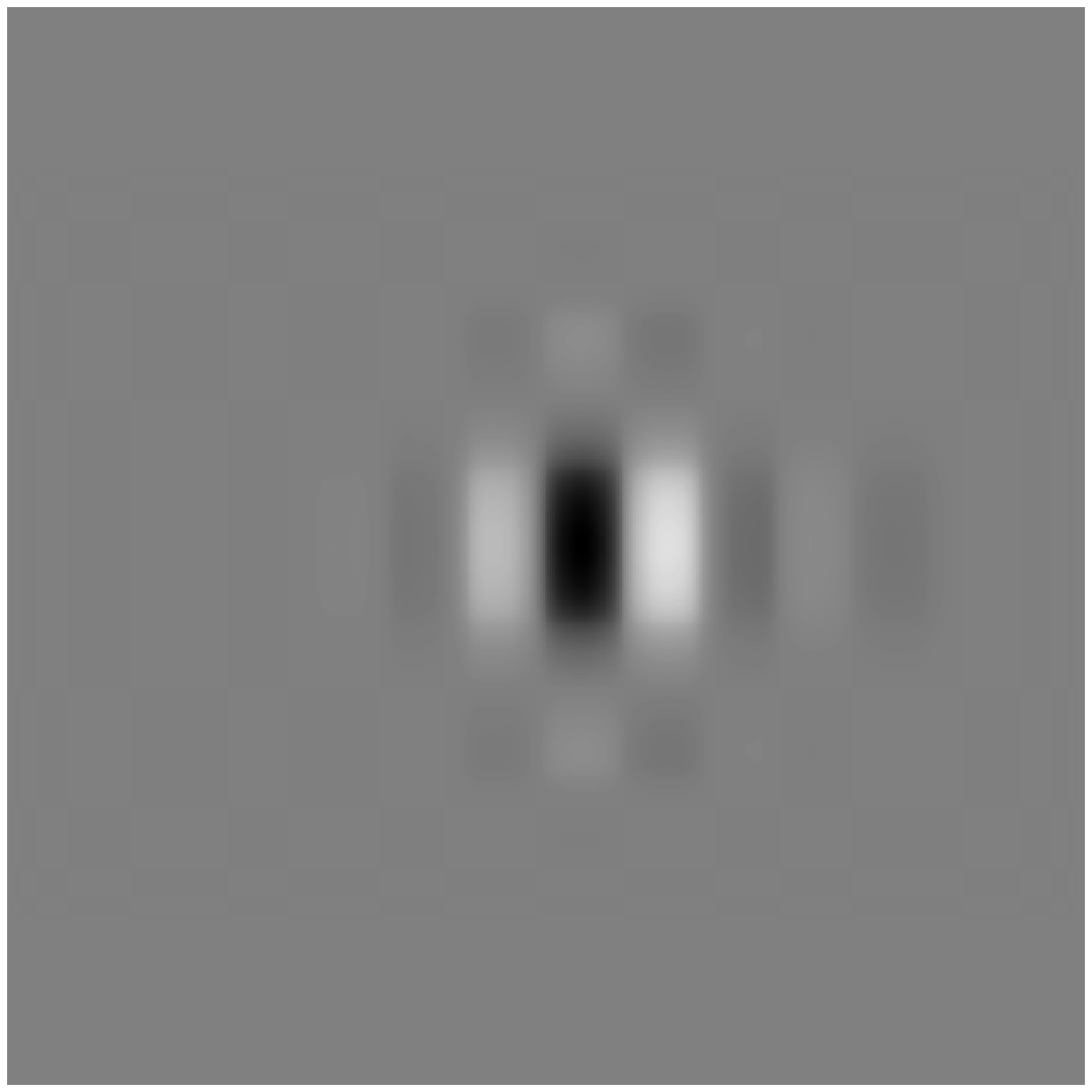}}
\subfigure{
\includegraphics[width=0.7in,height=0.7in]
{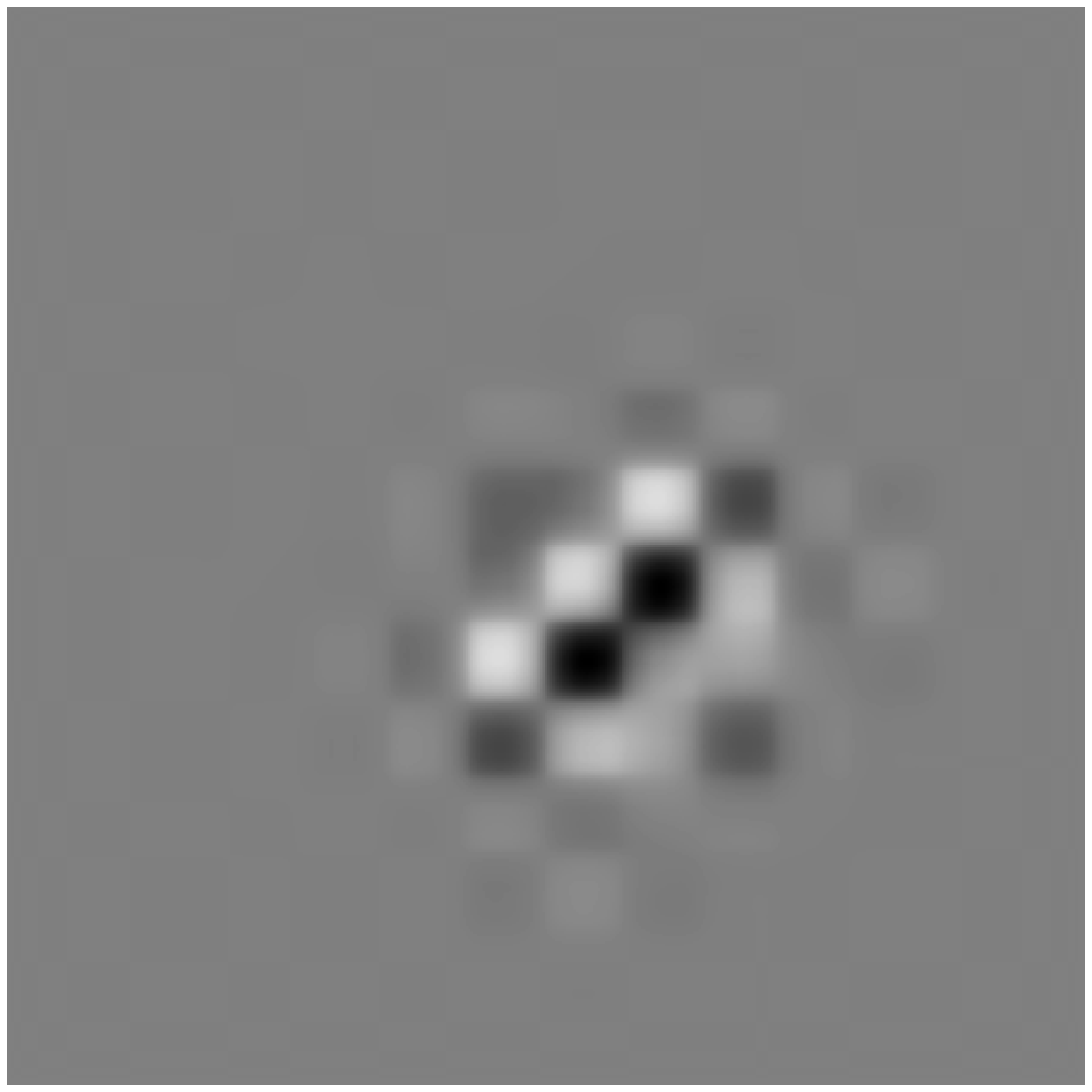}}
\subfigure{
\includegraphics[width=0.7in,height=0.7in]
{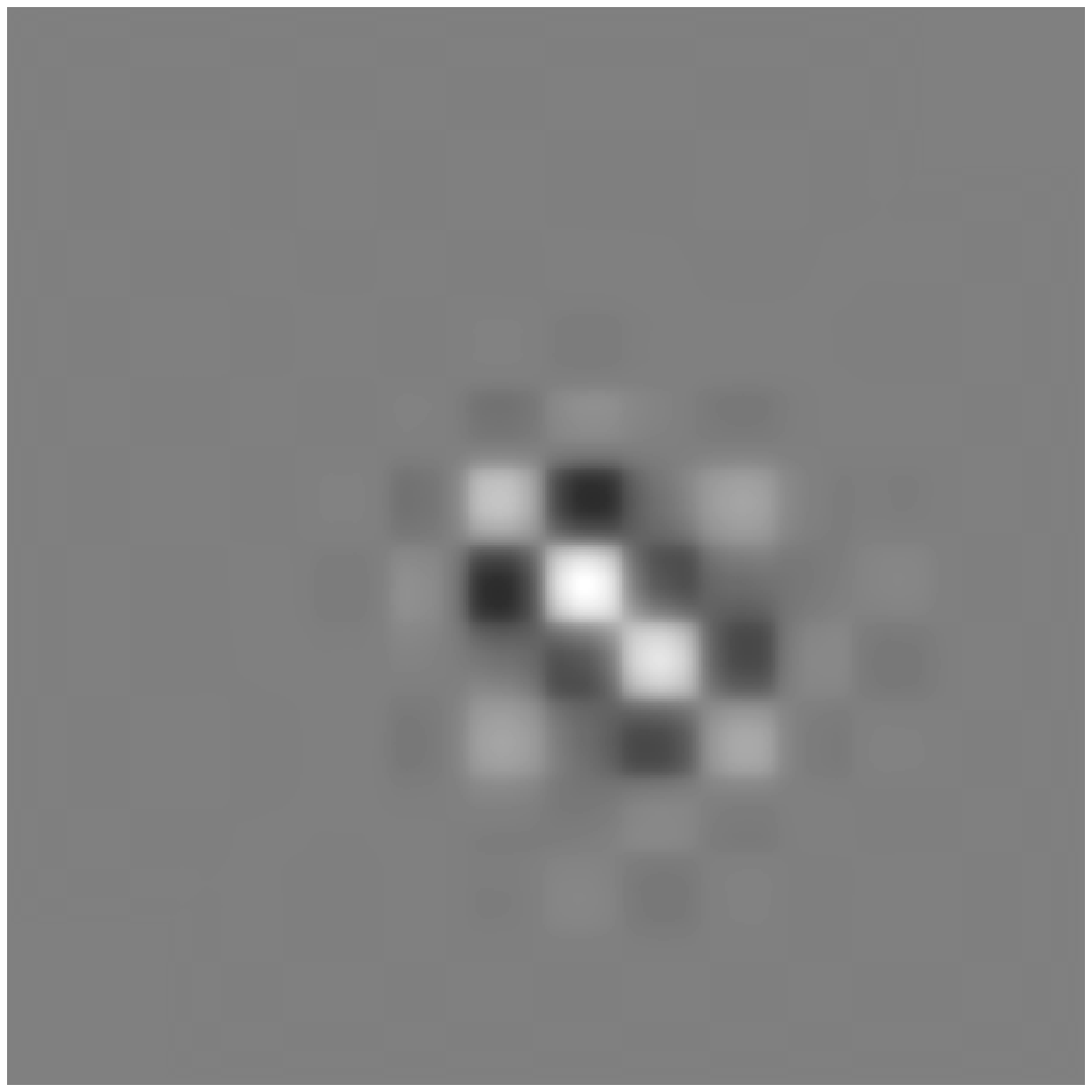}}
\subfigure{
\includegraphics[width=0.7in,height=0.7in]
{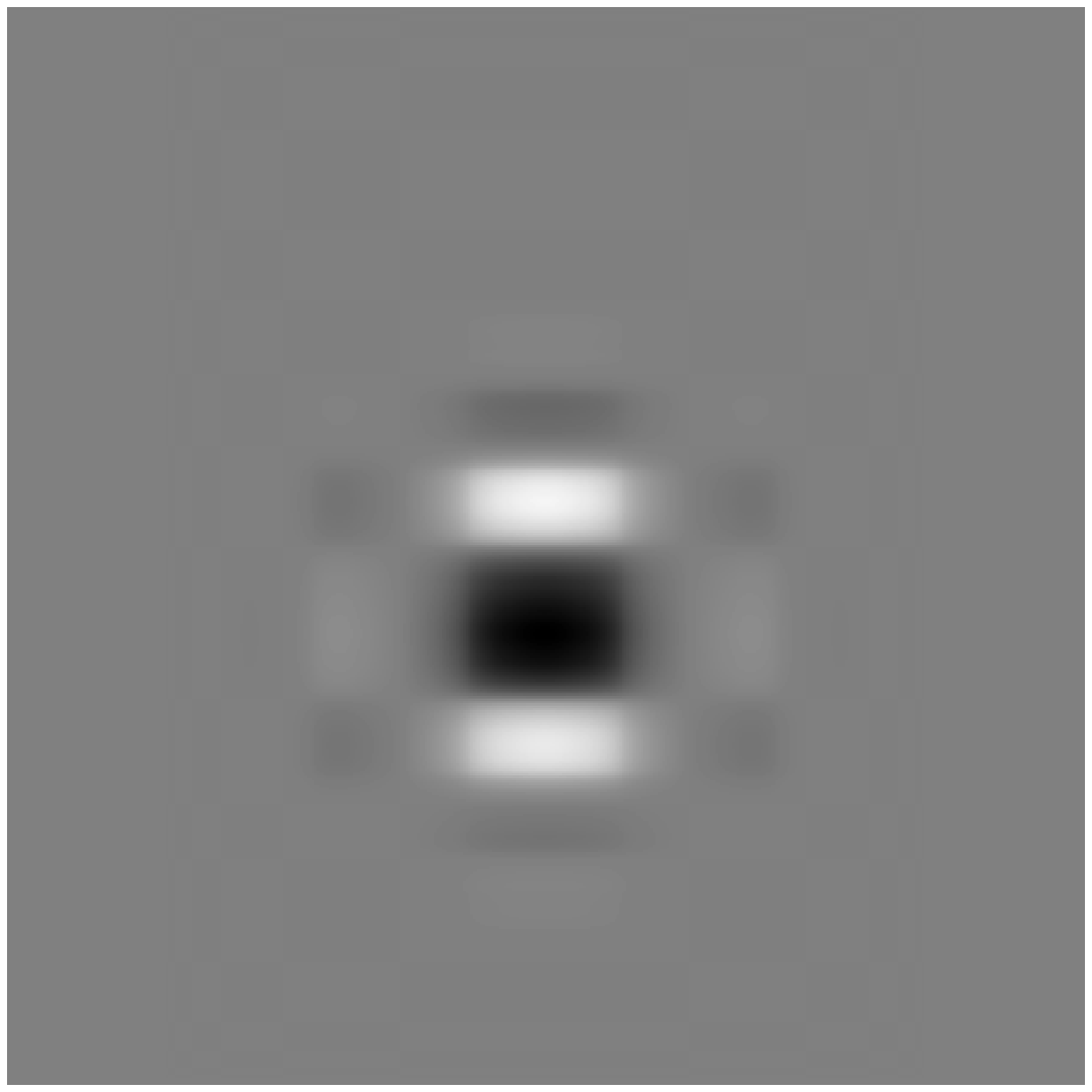}}
\subfigure{
\includegraphics[width=0.7in,height=0.7in]
{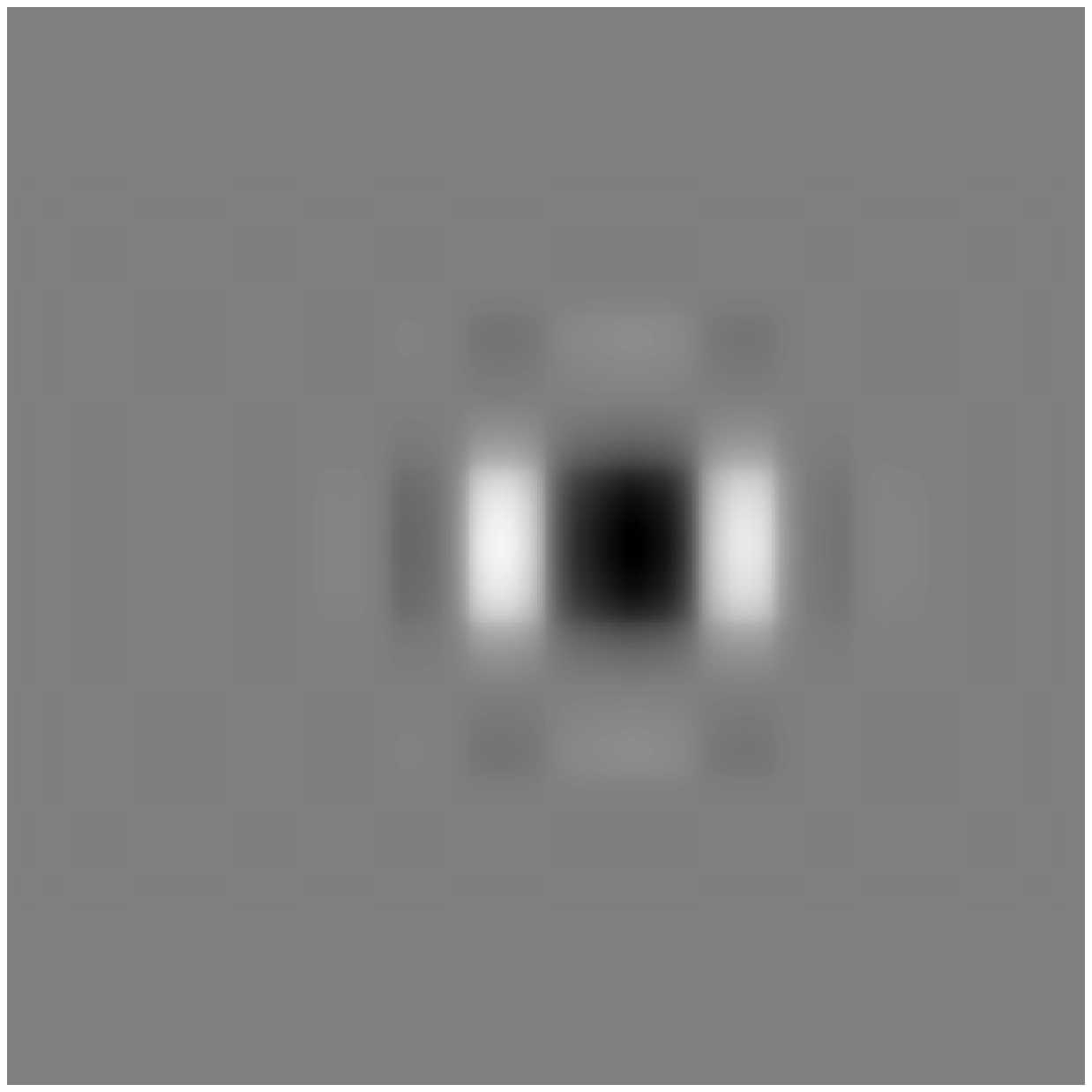}}
\subfigure{
\includegraphics[width=0.7in,height=0.7in]
{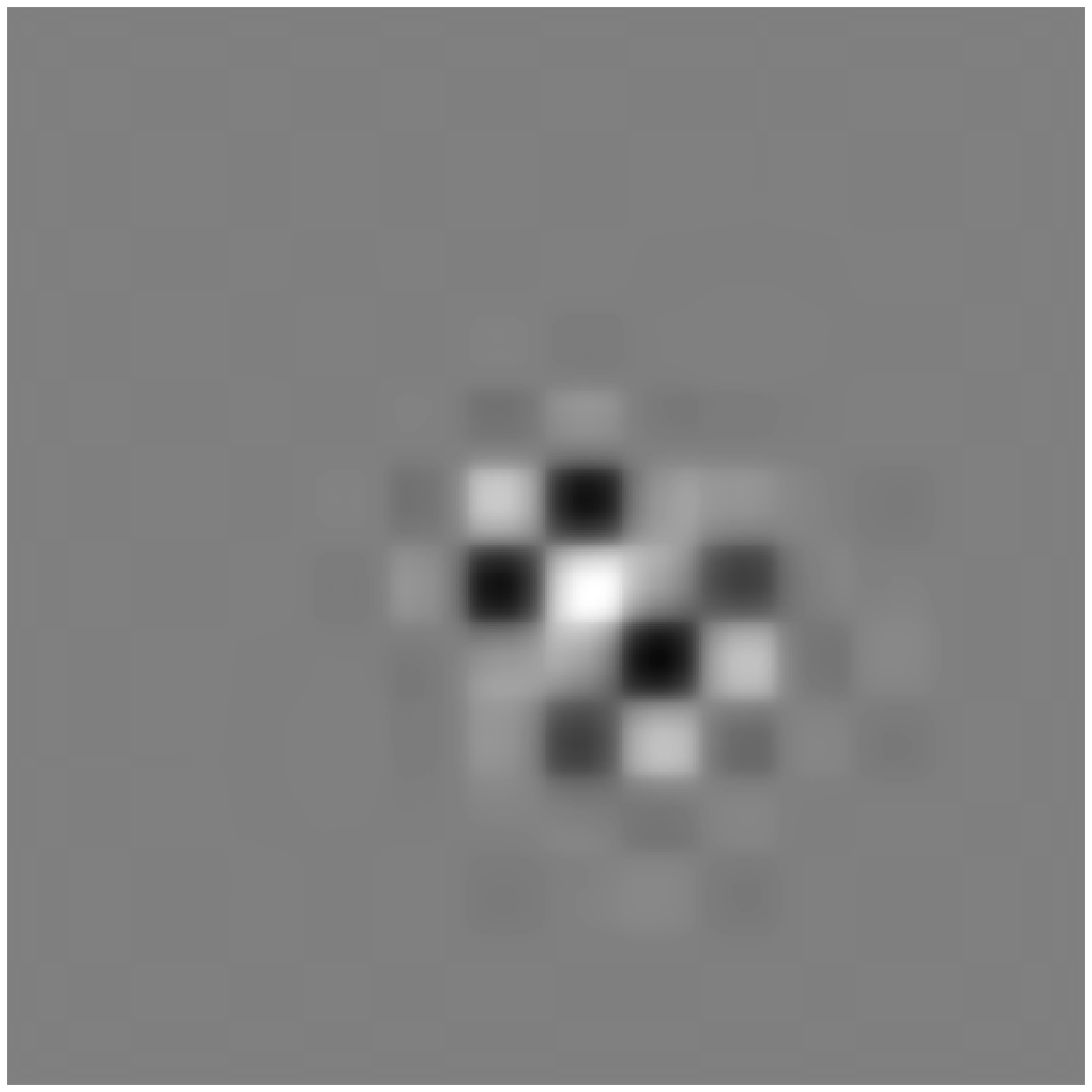}}
\subfigure{
\includegraphics[width=0.7in,height=0.7in]
{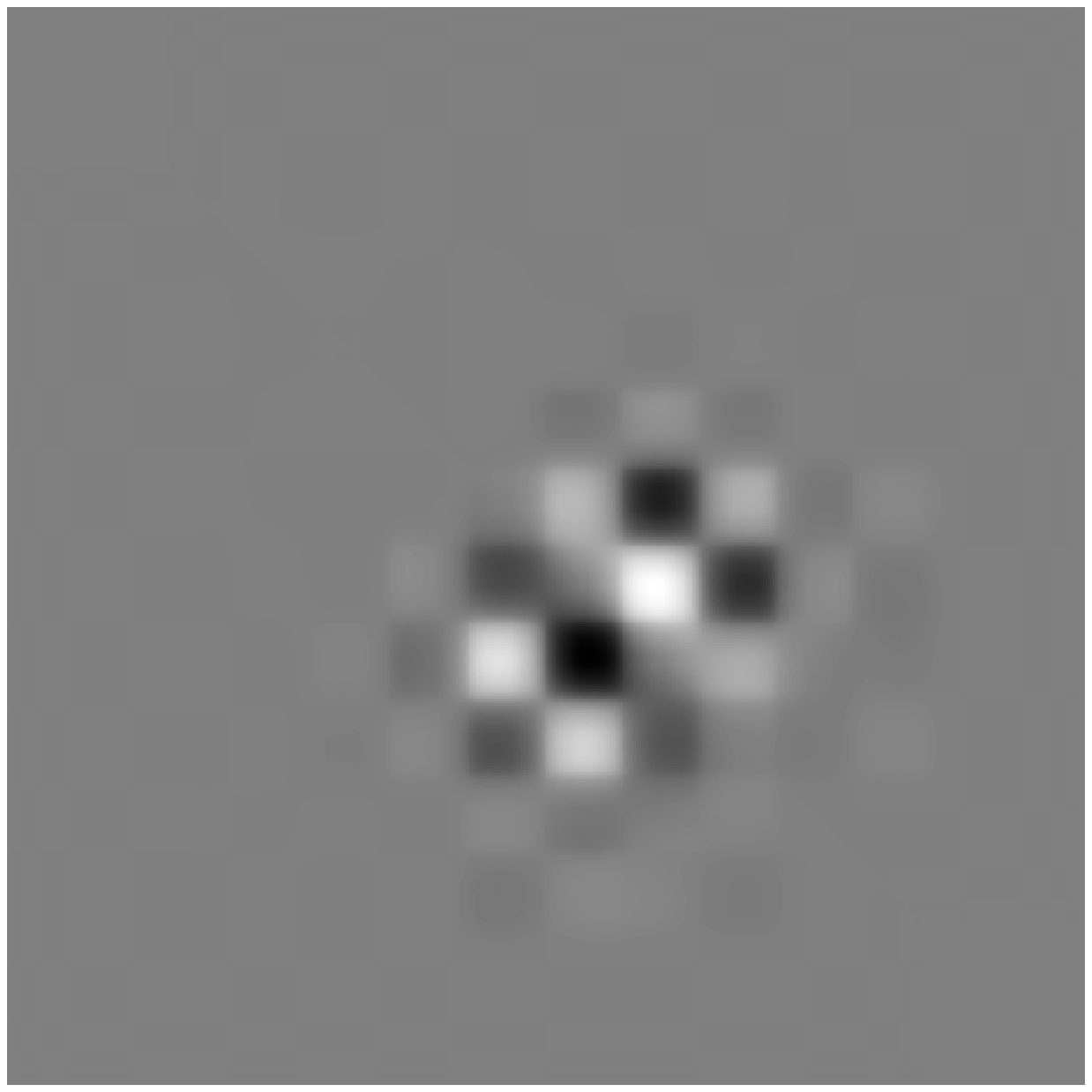}}
\begin{caption}{
The first row is for the real part and the second row is for the imaginary part of the tight framelet generators in Example~\ref{ex4} with $N=2$.
The third row is the greyscale image of the eight generators:
the first four for real part and the last four for imaginary part.
} \label{fig4}
\end{caption}
\end{center}
\end{figure}


\begin{thebibliography}{10}

\bibitem{CRSS}
R.~Chan, S.~D.~Riemenschneider, L.~Shen, and Z.~Shen, Tight Frame: An efficient way for high-resolution image reconstruction, \emph{Appl. Comput. Harmon. Anal.}, \textbf{17} (2004), 91--115.


\bibitem{CHS} C. K. Chui,  W. He and  J. St\"ockler,
Compactly supported tight and sibling frames with maximum vanishing
moments, {\em Appl. Comput. Harmon. Anal.} {\bf 13} (2002), 224--262.

\bibitem{Daub:book}
I.~Daubechies, Ten Lectures on Wavelets. CBMS-NSF Regional
Conference Series in Applied Mathematics, \textbf{61}, SIAM,
Philadelphia, PA, 1992.

\bibitem{DGM}
I.~Daubechies, A.~Grossmann, and Y.~Meyer, Painless nonorthogonal expansions, \emph{J. Math. Phys.} \textbf{27} (1986), 1271--1283.

\bibitem{DHRS}
I.~Daubechies, B. ~Han, A.~Ron, and Z.~Shen, Framelets: MRA-based
constructions of wavelet frames, {\em Appl. Comput. Harmon. Anal.}
{\bf 14} (2003), 1--46.


\bibitem{Han:frame}
B.~Han, On dual wavelet tight frames, {\em Appl. Comput. Harmon.
Anal.}, \textbf{4} (1997), 380--413.


\bibitem{Han:acha:2012}
B.~Han, Nonhomgeneous wavelet systems in high dimensions, \emph{Appl. Comput. Harmon. Anal.} \textbf{32} (2012), 169--196.

\bibitem{Han:MMNP:2013}
B.~Han, Properties of discrete framelet transforms, \emph{Math. Model. Nat. Phenom.}
\textbf{8} (2013), 18--47.

\bibitem{Han:acha:2013}
B.~Han, Matrix splitting with symmetry and symmetric tight framelet filter banks with two high-pass filters, \emph{Appl. Comput. Harmon. Anal.}, \textbf{35} (2013), 200--227.

\bibitem{Han:symdf}
B.~Han, Algorithm for constructing symmetric dual framelet filter banks, \emph{Math. Comp.}, (2012), to appear.

\bibitem{HGS}
B.~Han, G.~Kutyniok, and Z.~Shen, Adaptive multiresolution analysis structures and shearlet systems, \emph{SIAM J. Numer. Anal.}, \textbf{49} (2011), 1921--1946.

\bibitem{HanMo:SIMAA}
B.~Han and Q.~Mo, Splitting a matrix of Laurent polynomials with symmetry and its application to symmetric framelet filter banks, \emph{SIAM J. Matrix Anal. Appl.}, \textbf{26} (2004), 97--124.


\bibitem{HanZhao}
B.~Han and Z.~Zhao,
Image denoising using tensor product complex tight framelets with increasing directionality, preprint, (2013).


\bibitem{K2001}
N. G. Kingbury, Complex wavelets for shift invariant analysis and filtering of signals, \emph{Appl. Comput. Harmon. Anal.}, \textbf{10} (2001), 234--253.

\bibitem{MoZhuang:laa}
Q.~Mo and X.~Zhuang, Matrix splitting with symmetry and dyadic framelet filter banks over algebraic number fields, \emph{Linear Alg. Appl.}, \textbf{437} (2012), 2650--2679.

\bibitem{RonShen:twf}
A.~Ron and Z.~Shen, Affine systems in $L_2(\R^d)$: the analysis of the analysis operator, \emph{J. Funct. Anal.},  \textbf{148}  (1997),  408--447.

\bibitem{S:acha}
I.~W.~Selesnick, Smooth wavelet tight frames with zero moments, \emph{ Appl. Comput. Harmon. Anal.}, \textbf{10} (2001), 163--181.

\bibitem{SBK}
I. W. Selesnick, R. G. Baraniuk, and N. G. Kingsbury, The dual-tree
complex wavelet transform, \emph{IEEE Signal Process. Mag.} {\bf 22}
(6) (2005) 123--151.

\bibitem{Shen}
Z.~Shen, Wavelet frames and image restorations. Proceedings of the ICM 2010, New Delhi, Volume IV, (2010), 2834--2863.

\bibitem{Smith}
H. J. S. Smith,
On systems of linear indeterminate equations and congruences,
\emph{Phil. Trans. R. Soc. Lond.}
{\bf 1}(151) (1861) 293--326.
\end{thebibliography}
\end{document}